\DeclareMathOperator{\tr}{tr}
\DeclareMathOperator{\diag}{diag}
\DeclareMathOperator{\SL}{SL}
\DeclareMathOperator{\GL}{GL}
\DeclareMathOperator{\Ten}{Ten}
\DeclareMathOperator{\Herm}{Herm}
\DeclareMathOperator{\HWV}{HWV}
\DeclareMathOperator{\spec}{spec}
\DeclareMathOperator{\Det}{Det}
\DeclareMathOperator{\Sym}{Sym}
\DeclareMathOperator{\poly}{poly}
\DeclareMathOperator{\rk}{rank}
\DeclareMathOperator{\im}{im}
\DeclareMathOperator{\conv}{conv}
\DeclareMathOperator{\capacity}{capacity}
\newtheorem{thm}{Theorem}[section]\crefname{thm}{Theorem}{Theorems}
\newtheorem{lem}[thm]{Lemma}\crefname{lem}{Lemma}{Lemmas}
\newtheorem{prb}[thm]{Problem}\crefname{prb}{Problem}{Problems}
\newtheorem{rem}[thm]{Remark}\crefname{rem}{Remark}{Remarks}
\newtheorem{cor}[thm]{Corollary}\crefname{cor}{Corollary}{Corollaries}
\newtheorem{dfn}[thm]{Definition}\crefname{dfn}{Definition}{Definitions}
\newtheorem{prp}[thm]{Proposition}\crefname{prp}{Proposition}{Propositions}
\newtheorem{exa}[thm]{Example}\crefname{exa}{Example}{Examples}
\crefname{Algorithm}{Algorithm}{Algorithms}
\crefname{algn}{Algorithm}{Algorithms}
\renewcommand{\vec}{\bm}
\newcommand{\CC}{\mathbb C}
\newcommand{\RR}{\mathbb R}
\newcommand{\PP}{\mathbb P}
\newcommand{\ZZ}{\mathbb Z}
\newcommand{\ot}{\otimes}
\newcommand{\eps}{\varepsilon}
\newcommand{\expon}{\exp}
\newcommand{\norm}[1]{\lVert#1\rVert}
\newcommand{\Norm}[1]{\left\lVert#1\right\rVert}
\newcommand{\cP}{\mathsf{P}}
\begin{document}

\title{Efficient algorithms for tensor scaling, quantum marginals, and moment polytopes}

\author{
Peter B\"{u}rgisser\thanks{Institut f\"{u}r Mathematik, Technische Universit\"{a}t Berlin, email: pbuerg@math.tu-berlin.de.}
\and
Cole Franks\thanks{Department of Mathematics, Rutgers University, email: wcf17@math.rutgers.edu.}
\and
Ankit Garg \thanks{Microsoft Research New England, email: garga@microsoft.com.}
\and
Rafael Oliveira\thanks{Department of Computer Science, University of Toronto, email: rafael@cs.toronto.edu.}
\and
Michael Walter\thanks{QuSoft, Korteweg-de Vries Institute for Mathematics, Institute of Physics, and Institute for Logic, Language and Computation, University of Amsterdam, email: m.walter@uva.nl.}
\and
Avi Wigderson \thanks{Institute for Advanced Study, Princeton, email: avi@ias.edu.}
}
\pagenumbering{gobble}
\maketitle
\begin{abstract}

We present a polynomial time algorithm to approximately {\em scale} tensors of any format to {\em arbitrary} prescribed marginals (whenever possible).
This unifies and generalizes a sequence of past works on matrix, operator and tensor scaling.
Our algorithm provides an efficient weak membership oracle for the associated {\em moment polytopes}, an important family of implicitly-defined convex polytopes with exponentially many facets and a wide range of applications.
These include the entanglement polytopes from quantum information theory (in particular, we obtain an efficient solution to the notorious one-body quantum marginal problem) and the Kronecker polytopes from representation theory (which capture the asymptotic support of Kronecker coefficients).
Our algorithm can be applied to succinct descriptions of the input tensor whenever the marginals can be efficiently computed, as in the important case of matrix product states or tensor-train decompositions, widely used in computational physics and numerical mathematics.

Beyond these applications, the algorithm enriches the arsenal of ``numerical'' methods for classical problems in invariant theory
that are significantly faster than ``symbolic'' methods which explicitly compute invariants or covariants of the relevant action.
We stress that (like almost all past algorithms) our convergence rate is polynomial in the approximation parameter; it is an intriguing question to achieve exponential convergence rate, beating symbolic algorithms exponentially, and providing {\em strong} membership and separation oracles for the problems above.

We strengthen and generalize the alternating minimization approach of previous papers by introducing the theory of highest weight vectors from representation theory into the numerical optimization framework.
We show that highest weight vectors are natural potential functions for scaling algorithms and prove new bounds on their evaluations to obtain polynomial-time convergence.
Our techniques are general and we believe that they will be instrumental to obtain efficient algorithms for moment polytopes beyond the ones consider here, and more broadly, for other optimization problems
possessing natural symmetries.
\end{abstract}

\newpage
\tableofcontents
\newpage
\pagenumbering{arabic}
\section{Introduction and summary of results}\label{sec:intro}

\subsection{Moment polytopes}\label{subsec:moment polytopes}

As this paper is quite technical, with some non-standard material for computer scientists, we begin
with motivating the main object we study, as it is extremely natural from an optimization perspective,
the {\em moment polytope}. Consider first the following diverse set of problems, trying to pick up
common features among them (besides the obvious guess that they all are special cases of the framework we
consider in this paper).

\begin{enumerate}
\item\label{it:schur} \textbf{The Schur-Horn Theorem:}  Can a given Hermitian matrix be conjugated by
unitary matrices to achieve a given diagonal?

\item\label{it:horn} \textbf{Eigenvalues of sums of Hermitian matrices:} Do there exist
Hermitian $n\times n$ matrices $A$, $B$, $C$ with prescribed eigenvalues such that $A+B=C$?

\item\label{it:opt}  \textbf{Optimization:} Can a given non-negative matrix be converted to another with prescribed
row and column sums, by only reweighing its rows and columns?

\item\label{it:qit} \textbf{Quantum information:} Can multiple parties, each holding a particle of a
pure quantum state, locally transform their particles so that each particle is maximally entangled with
the others?

\item\label{it:brascamp} \textbf{Analytic inequalities:} Given $m$ linear maps $A_i:\RR^n \rightarrow \RR^{n_i}$ and $p_1,\ldots, p_m \ge 0$, does there
exist a finite constant $C$ such that for all integrable functions $f_i: \RR^{n_i} \rightarrow \RR_+$ we have
$$\int_{x\in \RR^n} \prod_{i=1}^m  f_i(A_i x) dx \,\, \leq C\, \prod_{i=1}^m  \norm{f_i}_{1/p_i}?$$
An important special case of such framework\footnote{These inequalities
are the celebrated Brascamp-Lieb inequalities, which capture many more important inequalities
such as H\"{o}lder's, Loomis-Whitney, and many others. See for instance~\cite{garg2017algorithmic}
for a more detailed discussion.} is Cauchy-Schwarz, with $p_1 = p_2 = 1/2, m=2, n=n_1=n_2=1, C=1, A_i= 1$.

\item\label{it:pit} \textbf{Algebraic complexity:} Given an arithmetic formula (with inversion gates) in
non-commuting variables, is it non-zero?

\item\label{it:newton} \textbf{Polynomial support:} Given oracle access to a homogeneous polynomial $p$ with non-negative
integer coefficients on $n$ variables, is a specified monomial (given as integer vector of exponents)
in the Newton polytope\footnote{Given a polynomial $p(x_1, \ldots, x_n)$, define its support as the
set of monomials whose coefficient in $p$ is nonzero. The Newton polytope of $p(x_1, \ldots, x_n)$ is
given by the convex hull of the exponent vectors of these monomials.} of $p$?

\end{enumerate}

Some of the problems above are in $\cP$ and for others, there are sufficient hints that they are in $\cP$ (see \cite{Horn54, LSW, burgisser2017alternating, garg2017algorithmic, garg2016deterministic, Gurvits05}). While they may seem non-linear in their inputs, convexity plays an important role in each of them, as they all reduce to solving linear programs (implicitly defined with large number of facets).
More specifically, each input to each problem defines a
point and a polytope, and the answer is yes iff the point is in the polytope.
These polytopes turn out to be special cases of {\em moment polytopes}.

This appearance of linearity and convexity is quite surprising, in some settings more so than others.
Indeed, moment polytopes arise (and are used to understand problems) in many diverse settings such as
symplectic geometry, algebraic geometry, lattice theory and others \cite{GS84, Silva08}. The snag is that these polytopes are
often defined by  a huge number of inequalities (e.g. see \cite{garg2017algorithmic}); typically the number is exponential or larger in the dimension of
the input.\footnote{However, in many of these areas even finiteness provides progress, as even
decidability may not be obvious.}
This motivates our efforts to develop efficient algorithms for them.

In order to explain the appearance of convex polytopes in these settings, we need to notice
another common aspect of all problems above: their answers remain invariant under some group action!
This is easy to see in some of the examples, which explicitly specify the groups. In the first,
for matrices of size $n$, it is $U(n)$, the group of transformations conjugating the input.
In the second, each of the three matrices may be conjugated by a unitary. In the 3rd, it is the product $\text{T}(n) \times \text{T}(n)$ of two (positive) diagonal invertible matrices which scale
(resp.) the rows and columns. In the 4th problem, as each party is allowed to perform quantum
measurements with post selection, the group representing each party's operations is
$\GL(n)$ if its particle has $n$ states, and so the full group is a direct product of these $\GL(n)$'s.
The 5th problem is invariant to basis changes in the host space $\RR^n$ and the other $m$ spaces.\footnote{This reveals the Brascamp-Lieb polytopes \cite{BCCT, garg2017algorithmic} as special cases of (slices of) moment polytopes, which have an efficient weak separation oracle.}
The 6th is much harder to guess without Cohn's characterization of the free skew field, but turns out
to be $\GL(n) \times \GL(n)$ acting on a different representation of the formulas. In the 7th, though it may not seem
useful at first sight, $\text{T}(n)$ acts by simply scaling every variable of the polynomial
by a nonzero constant factor.

Having mentioned the two common features of the problems above (convexity and the invariance under
a group action) we will now illustrate how one can use the structure of the group action in order to
obtain moment polytopes. Let $G$ be a ``nice" \footnote{The technical definition requires the group to be algebraic, reductive and connected and so on. But for the purpose of this paper, one can think of groups like $\GL(n)$, $\text{T}(n)$, their direct products etc.} group acting linearly and continuously on a vector
space $V$ and $v$ be a point in $V$. The {\em orbit} of a point $v \in V$ is the set of all vectors obtained by
the action of $G$ on $v$. The {\em orbit closure} of $v$ is simply the closure of its orbit in the Euclidean topology. As the previous paragraph observed, all of the problems above are questions
about the orbit closures, which suggests understanding orbit closures is a
fundamental task with many applications. A natural approach to study such orbit closures is by looking
at the infinitesimal action of the group on every point $v$.

This brings us to the {\em moment map}, denoted by $\mu_G(v)$, which is essentially a gradient of the log of the norm of $v$ along the group
action.\footnote{Indeed, the original name
was {\em momentum map}, and is inspired from Hamiltonian physics, in which momentum is proportional to the derivative
of position. Apparently moment maps are common in physics, where they are used to obtain conserved quantities (i.e. invariants)
from symmetries of the phase space of symplectic manifolds describing some Hamiltonian system. In the general setting, we have the action of continuous group
on a manifold, and the moment map provides a reverse map, from the manifold to the group (or more
precisely, to the dual of the Lie algebra of the group). }
More explicitly, for each point $v$ we can define the function $f_v(g) = \log \norm{g \cdot v}_2^2$, and
$\mu_G(v)$ will be the gradient of $f_v(g)$ evaluated at the identity element of $G$.
The moment map carries a lot of information about the group action, and one of its striking features
is that the set of possible spectra of the image of any orbit closure under the moment map is a rational convex polytope  \cite{Kostant73, Atiyah82, GS82_convexity, kirwan1984convexity}! That is a mouthful. So consider an example to see what we mean. Consider the action of $G = \GL(n)$ on some vector space $V$. Then the moment map maps $V$ to $\text{M}(n)$ (set of all $n \times n$ matrices). Then the collection $\spec(\mu_G(v))$, as $v$ varies over an orbit-closure forms a rational convex polytope. Here $\spec(M)$ denotes the vector of eigenvalues of $M$ arranged in decreasing order. Note that $\mu_G(v)$ is a quadratic function of $v$, so the appearance of convexity is extremely surprising and non trivial.
This polytope, which we will more explicitly see in the next section, is the so called
{\em moment polytope} of the group action $G$ on the orbit of $v$.

In the matrix scaling case (Problem \ref{it:opt}), it turns out that the moment map applied to a certain matrix $A$
gives us precisely the marginals of $A$ (that is, the vector of row sums and column sums normalized to sum
$1$).\footnote{There is a slight technicality here and the moment map is actually the absolute values squared of
the entries of $A$.}
Thus, testing whether $A$ can be scaled to another matrix with prescribed row and column sums is
equivalent to testing whether the prescribed vector of row and column sums belongs to the moment polytope of the
action of $\text{T}(n)\times \text{T}(n)$ on $A$. Similarly, all of the seven problems listed above fit into this
framework (membership in moment polytope) for a suitable choice of group and representation.\footnote{For some
of the problems mentioned above, it is non-trivial to phrase them as moment polytopes.}

The reader might notice the dual nature of the problems above. They are both of algebraic as well as analytic nature. This phenomenon is extremely general and crucial for our paper. The analytic nature helps in designing algorithms, making the problem amenable to general optimization techniques, while the algebraic helps with analysis of these analytic algorithms and provides potential functions to track progress made by these algorithms. We will see that this will be the case for us as well.

\subsection{Our setting}

In this paper, we will be concerned with the moment polytopes of a natural ``basis" change group action on tensors, which are of interest for several reasons. The moment polytopes in this setting capture fundamental problems in quantum many-body physics - the so called {\em one-body quantum marginal problem}. They also capture fundamental problems in representation theory related to Kronecker coefficients, which are central objects of study in algebraic combinatorics and play an important role in geometric complexity theory. Moreover, as we will see, these moment polytopes generalize many of the settings described above and we believe that their complexity is representative of the complexity of general moment polytopes.

These moment polytopes (and their related problems) are most natural to state from the point of view of {\em quantum systems} and their {\em quantum marginals} \footnote{These generalize the classical notion of marginals of a probability distribution on several variables.}, so we start by defining them first.
But before we define quantum systems some brief notation must be established.

Let $\Ten(n_0;n_1,\dots,n_d)=\CC^{n_0}\ot\CC^{n_1}\ot\dots\ot\CC^{n_d}$ denote the space of $d+1$ dimensional
tensors of format $n_0\times n_1\times\dots\times n_d$, and let $X$ be a tensor in
$\Ten(n_0; n_1, \ldots, n_d)$. If we regard $X$ as a vector, with $X^\dagger$ being it's conjugate transpose,
then $\rho_X = XX^\dagger$ is a Hermitian positive semidefinite (PSD) operator on $\Ten(n_0;n_1,\dots,n_d)$. We will denote by $\norm{X} = \tr[\rho_X]^{1/2}$ the $\ell_2$ norm of $X$ (when viewed as a vector).
With this notation in mind, we then define a {\em quantum system} with $d+1$ subsystems as a PSD operator
on $\Ten(n_0;n_1,\dots,n_d)$ with unit trace  \footnote{A reader not familiar with the basics of quantum systems may want to skip a couple of paragraphs ahead.}.

Given a quantum system $\rho$ on~$\Ten(n_0;n_1,\dots,n_d)$ and a subset $I\subseteq\{0,1,\dots,d\}$, we
define its \emph{(quantum) marginals} or \emph{reduced density matrices} by~$\rho^{(I)} = \tr_{I^c}[\rho]$,
where $\tr_{I^c}$ denotes the partial trace over tensor factors~$I^c = \{0,\dots,d\} \setminus I$.
In the same way that $\rho$ describes the state of the entire quantum system, $\rho^{(I)}$ characterizes
the state of the subsystems labeled by~$I$ (in an analogous way to the classical marginal of a probability
distribution). For~$I=\{i\}$, we write~$\rho^{(i)}$; these operators are known as the
\emph{one-body marginals} or \emph{one-body reduced density matrices} of $\rho$. Each $\rho^{(i)}$ is
uniquely characterized by the property that
\begin{align}\label{eq:marginal}
  \tr[\rho^{(i)} A^{(i)}] =
  \tr[\rho ( I_{n_0} \ot I_{n_1}\ot\dots\ot I_{n_{i-1}}\ot A^{(i)}\ot I_{n_{i+1}}\ot\dots\ot I_{n_d})]
\end{align}
for all operators $A^{(i)}$ on $\CC^{n_i}$.

For a tensor $X \in$ $\Ten(n_0;n_1,\dots,n_d)$ and a given subset $I\subseteq\{0,1,\dots,d\}$ of the subsystems,
the marginals of~$\rho_X$ with respect to
$I$ have a particularly simple description: using the standard basis, identify~$X$ with
a matrix $M_X^{(I)} \in L(\CC^{n_I}, \CC^{n_{I^c}})$, where we denote $n_I := \prod_{i\in I} n_i$.
The matrix $M_X^{(I)}$ is known as a flattening, unfolding, or
matricization~\cite{landsberg2012tensors,hackbusch2012tensor} of the tensor $X$.
Then, $\rho_X^{(I)} = M_X^{(I)} (M_X^{(I)})^\dagger$ is its Gram matrix.

Given a Hermitian operator $\sigma$ on~$\CC^n$ (i.e., an $n \times n$ Hermitian matrix), we
write~$\spec(\sigma)=(s_1,\dots,s_n)$ for the vector of eigenvalues of $\sigma$, ordered non-increasingly.
If $\sigma$ is PSD with unit trace then its eigenvalues form a probability distribution, so $\spec(\sigma)$
is an element of $P_+(n) := \{ (s_1,\dots,s_n) : s_1\geq\dots\geq s_n\geq0 : \sum_j s_j=1 \}$.
We also abbreviate $P_+(n_1,\dots,n_d) := P_+(n_1) \times \dots \times P_+(n_d)$.
We will be particularly interested in characterizing the eigenvalues of the one-body marginals, motivated by
the following fundamental problem in quantum mechanics~\cite{klyachko2006quantum}:

\begin{prb}[One-body quantum marginal problem]\label{prb:qmp}
  Given $\vec p\in P_+(n_1,\dots,n_d)$, decide if there exists a tensor $Y\in\Ten(1;n_1,\dots,n_d)$ such that
  $\spec(\rho_Y^{(i)}) = \vec p^{(i)}$~for all $i=1,\dots,d$.
\end{prb}

\begin{rem} Note that the above problem is equivalent to the following, given density matrices $($PSD matrices
with unit trace $)$ $\rho^{(1)},\ldots, \rho^{(d)}$, determine if there exists a tensor (pure state)
$Y\in\Ten(1;n_1,\dots,n_d)$ such that $\rho_Y^{(i)} = \rho_i$~for all $i=1,\dots,d$. Since a unitary change
of basis comes for free on each subsystem, only the eigenvalues of $\rho^{(1)},\ldots, \rho^{(d)}$ are relevant.
\end{rem}

The above problem is extremely fundamental from the point of view of quantum many-body physics. It is a special case of the more general quantum marginal problem, which puts constraints on the marginals of multiple systems and is known to be QMA-complete (for growing $d$) \cite{Liu06}.

We note that the normalization to trace one is natural; since $\tr[\rho_Y^{(i)}]=\lVert Y\rVert^2$ for all
$i$, we can simultaneously rescale all marginals simply by rescaling the tensor.

Now we discuss how \cref{prb:qmp} can be phrased as a question about moment polytopes~\cite{ness1984stratification,brion1987image,klyachko2006quantum,ressayre2010geometric,vergne2014inequalities,walter2014multipartite}.
Let $G=\GL(n_1)\times\dots\times\GL(n_d)$, where $\GL(n)$ denotes the group of invertible $n\times n$-matrices.
Then $G$ acts on~$V=\Ten(n_0;n_1,\dots,n_d)$ by
\begin{align*}
  (g^{(1)},\dots,g^{(d)}) \cdot X := (I_{n_0}\ot g^{(1)}\ot\dots\ot g^{(d)}) X.
\end{align*}
As the group acts by rescaling slices of the tensor, we will call any $Y\in G\cdot X$ a \emph{tensor scaling} of
$X$.\footnote{The extra coordinate with dimension $n_0$ can be equivalently thought of as enumerating an
$n_0$-tuple of tensors in $\Ten(n_1,\dots,n_d)$ and the group $G$ acts simultaneously on all the tensors in
the tuple. Much of the theory remains similar if one sets $n_0 = 1$ and that can be done mentally on a
first reading. In the quantum language, it is the difference between acting on pure states $(n_0 = 1)$ vs acting
on mixed states $(n_0 > 1)$.}

What is the moment map in this setting? It turns out that the moment captures exactly the notion of one-body
quantum marginals. It is more convenient to define the moment map on the projective space (namely restrict
ourselves to tensors of unit norm), since we don't care about the scalar multiples. We will denote the projective
space corresponding to $V$ by $\PP(V)$ and identify it with the set of rank-one trace-one
PSD operators on $V$, $\PP(V) = \{ \rho = [X] = XX^\dagger/X^\dagger X : 0\neq X\in V\}$. Then the moment map can
be written as\footnote{After identifying the Lie algebra of~$K$ with its dual.}
\begin{align}\label{eq:moment map}
  \mu\colon\PP(V)\to\Herm(n_1)\times\dots\times\Herm(n_d), \quad
  \rho \mapsto (\rho^{(1)},\dots,\rho^{(d)}),
\end{align}
where~$\Herm(n)$ denotes the space of Hermitian $n\times n$-matrices.
Now consider a projective subvariety $\mathcal X$ of $\PP(V)$ such as $\mathcal X = \PP(V)$ or an orbit-closure\footnote{Here, the closure can be taken either in the Euclidean or in the Zariski topology.} i.e. $\mathcal X = \overline{G \cdot [X]}$ for some given tensor $X\in V$.\footnote{In general $\mathcal X$ can be any $G$-stable irreducible projective subvariety of $\PP(V)$.} Let us look at the collection of marginal eigenvalues when restricted to tensors in~$\mathcal X$:
\begin{align}\label{eq:moment polytope}
  \Delta(\mathcal X) := \{(\spec(\rho^{(1)}),\dots,\spec(\rho^{(d)})) : \rho\in\mathcal X\} \subseteq P_+(n_1,\dots,n_d).
\end{align}
We emphasize again the amazing, surprising and non-trivial fact that $\Delta(\mathcal X)$ is a rational convex polytope~\cite{ness1984stratification,kirwan1984cohomology,kirwan1984convexity,brion1987image} -- known as the \emph{moment polytope} or \emph{Kirwan polytope} of $\mathcal X$.\footnote{Note that we have identified $\PP(V)$ with the set of rank $1$ density matrices and hence it is far from being a convex set - yet the spectrum of its image under the moment map is convex.}
This means that $\Delta(\mathcal X)$ can in principle be given in terms of finitely many affine inequalities in eigenvalues of the one-body marginals~\cite{klyachko2006quantum,ressayre2010geometric,vergne2014inequalities}.
In particular, the preceding applies to $\mathcal X=\PP(V)$, so we can rephrase \cref{prb:qmp} as follows:
\emph{Given~$\vec p \in P_+(n_1,\dots,n_d)$, is it a point in~$\Delta(1;n_1,\dots,n_d):=\Delta(\PP(V))$?}
More generally, we can consider the following decision problem:

\begin{prb}[General moment polytope]\label{prb:general}
  Given $\vec p\in P_+(n_1,\dots,n_d)$, decide if there exists a tensor $[Y]\in\mathcal X$ such that $\spec(\rho_Y^{(i)}) = \vec p^{(i)}$~for all $i=1,\dots,d$.
\end{prb}

When $\mathcal X = \overline{G \cdot [X]}$ is the \emph{orbit closure} of some given tensor $X\in V$, we will abbreviate the moment polytope by $\Delta(X) := \Delta(\overline{G \cdot[X]})$.
In quantum information theory, moment polytopes of orbit closures have been called \emph{entanglement polytopes} as they characterize the multipartite entanglement from the perspective of the one-body marginals~\cite{walter2013entanglement,sawicki2014convexity}.
But, along with the corresponding invariant-theoretic multiplicities, they are also of interest in algebraic and geometric complexity theory~\cite{burgisser2011overview,burgisser2011geometric,christandl2012computing,christandl2017universal}.
The corresponding decision problem is the following:

\begin{prb}[Moment polytope of orbit closure]\label{prb:orbit closure}
  Given $X\in\Ten(n_0;n_1,\dots,n_d)$ and $\vec p\in P_+(n_1,\dots,n_d)$, decide if there exists $Y\in\overline{G\cdot X}$ such that $\spec(\rho_Y^{(i)}) = \vec p^{(i)}$~for all $i=1,\dots,d$.
\end{prb}

That is, \cref{prb:orbit closure} asks whether $\vec p=(\vec p^{(1)},\dots,\vec p^{(d)})$ is a point in~$\Delta(X)$.\footnote{When $n_0=1$, there is a  physical interpretation of the orbit-closure. $Y\in\overline{G\cdot X}$ means that $Y$ can be obtained to arbitrary precision from $X$ (which is naturally understood as a $d$-partite quantum state) by a class of quantum operations known as \emph{stochastic local operations and classical communication (SLOCC)}~\cite{bennett2000exact}.
SLOCC can be intuitively understood as follows: we imagine that different parties hold the different systems of a quantum state; SLOCC then corresponds to a sequence of local quantum operations and measurements, where we allow for \emph{post-selection} on specific measurement outcomes. \cref{prb:orbit closure} then asks if given a tensor $X \in \Ten(n_1,\dots,n_d)$, does there exist a $Y$ obtainable by a sequence of SLOCC operations from $X$ s.t. $\spec(\rho_Y^{(i)}) = \vec p^{(i)}$ for all $i$. This is a generalization of the SLOCC entanglement distillation question where $\vec p^{(i)}$ is the uniform distribution for all $i$.}

One can show that \cref{prb:orbit closure} is intimately related to \cref{prb:qmp,prb:general}:
$\vec p\in\Delta(\mathcal X)$ iff $\vec p\in\Delta(X)$ for a generic $X\in\mathcal X$ (\cref{cor:generic is enough}).
We will explain this in \cref{sec:git}.
We will therefore focus our attention on \cref{prb:orbit closure}.

It is natural to go beyond the decision problem and look for an algorithm that finds a tensor~$Y$ with the desired marginals, as well as the group element that transforms~$X$ into~$Y$.
Since such an $Y$ will be in the orbit through~$X$, we demand only that the marginals are correct up to some target accuracy.

\begin{dfn}[$\eps$-close]\label{dfn:eps-close}
The marginals of $Y\in\Ten(n_0;n_1,\dots,n_d)$ are said to be \emph{$\eps$-close} to $\vec p\in P_+(n_1,\dots,n_d)$ if
$\lVert \spec(\rho_Y^{(i)}) - \vec p^{(i)} \rVert_1 \leq \eps$ for $i=1,\dots,d$.
Here, $\lVert \vec x\rVert_1 = \sum_j \lvert x_j\rvert$ is the $\ell^1$-norm.
\end{dfn}

\begin{prb}[Tensor scaling]\label{prb:scaling}
  Given $X\in\Ten(n_0;n_1,\dots,n_d)$, $\vec p \in \Delta(X)$, and $\eps>0$, find $g_{\eps}\in G$ such that $Y=g_{\eps}\cdot X$ has marginals that are $\eps$-close to~$\vec p$.
\end{prb}

While it may not be immediately clear, there exist scalings as in \cref{prb:scaling} for any $\eps>0$ if and only if the answer to \cref{prb:orbit closure} is yes, i.e., if and only if $\vec p\in\Delta(X)$.

The polytopes $\Delta(\mathcal X)$ admit alternative characterization in terms of invariant theory~\cite{ness1984stratification}.
We explain this connection in \cref{sec:git}, as it is central to the analysis of our algorithms.
For now, we only mention an important special case.
Let $g(\vec\lambda,\vec\mu,\vec\nu)$ denote the \emph{Kronecker coefficients}, which are fundamental objects in the classical representation theory of the symmetric and general linear groups~\cite{fulton2013representation,stanley2000positivity}.
They also feature in geometric complexity theory as a potential way of creating representation theoretic obstructions~\cite{mulmuley2007geometric,burgisser2011overview}.
For example, $g(\vec\lambda,\vec\mu,\vec\nu)$ can be defined as the multiplicity of the irreducible $S_k$-representation $[\vec\lambda]$ in the tensor product $[\vec\mu]\ot[\vec\nu]$.
Here, $\vec\lambda$, $\vec\mu$, and $\vec\nu$ are partitions, which we may think of nonincreasing vectors in $\ZZ^n_{\geq0}$ with $\sum_j \lambda_j = \sum_j \mu_j = \sum_j \nu_j =: k$.
Then,
\begin{align}\label{eq:kronecker equivalence}
  \exists \: \text{integer} \: s\ge 1: g(s\vec\lambda,s\vec\mu,s\vec\nu) > 0
\quad\Leftrightarrow\quad
  \frac1k(\vec\lambda,\vec\mu,\vec\nu) \in \Delta(1;n,n,n),
\end{align}
so the solution to the one-body quantum marginal problem captures precisely the \emph{asymptotic support} of the Kronecker coefficients~\cite{christandl2006spectra,klyachko2006quantum,christandl2007nonzero}. We note that the problem of deciding whether $g(\vec\lambda,\vec\mu,\vec\nu) > 0$ is known to be NP-hard \cite{ikenmeyer2017vanishing}. However since the asymptotic vanishing of Kronecker coefficients is captured by the quantum marginal problem, it has been conjectured that it should have a polynomial time algorithm and we make progress towards this question.\footnote{We note that the closely related Littlewood-Richardson coefficients (which capture the same problem for the representations of the general linear group) satisfy the so called saturation property: $c(\vec\lambda,\vec\mu,\vec\nu) > 0$ iff $c(s\vec\lambda,s\vec\mu,s\vec\nu) > 0$ \cite{KT99}. Hence the asymptotic support is the same as support for this case and this is also a key ingredient in the polynomial time algorithms for testing if $c(\vec\lambda,\vec\mu,\vec\nu) > 0$ \cite{BI12}.} Since Kronecker coefficients are so poorly understood, understanding their asymptotic support would also go a long way in understanding them.

\subsection{Prior work}\label{subsec:prior}

As mentioned above, \cref{prb:qmp} can be approached by first computing (the defining inequalities of) the moment polytope~$\Delta(n_0;n_1,\dots,n_d)$.
The problem of computing moment polytopes has a long history in mathematics (e.g., \cite{atiyah1982convexity,guillemin1982convexity,guillemin1984convexity,kirwan1984cohomology,kirwan1984convexity,brion1999general,berenstein2000coadjoint,ressayre2010geometric,belkale2010tangent}).
That the one-body quantum marginal problem falls into this framework was first noticed by Klyachko~\cite{klyachko2004quantum}, who gave a complete description of the polytopes in terms of finite lists of inequalities (cf.\ \cite{daftuar2005quantum,klyachko2006quantum,altunbulak2008pauli}).
Before that, only low-dimensional special cases were known~\cite{higuchi2003one,bravyi2003requirements,franz2002moment}.
Further developments include the minimal complete description from~\cite{ressayre2010geometric} and the cohomology-free variant~\cite{vergne2014inequalities}.
Yet, all these descriptions in terms of inequalities are largely computationally infeasible;
explicit descriptions are known up to formats $3\times3\times9$~\cite{klyachko2006quantum} and $4\times4\times4$~\cite{vergne2014inequalities}, and when all dimensions are two~\cite{higuchi2003one}.

\Cref{prb:orbit closure,prb:general} can in principle be approached using classical computational invariant theory (e.g., \cite{sturmfels2008algorithms,derksen2015computational,walter2013entanglement}), based on the invariant-theoretic description of~$\Delta(\mathcal X)$ and degree bounds (we recall both in~\cref{sec:git}).
In practice, however, this is completely infeasible except for very small dimensions.
The problem of describing~$\Delta(\mathcal X)$ also falls into the framework of~\cite{ressayre2010geometric}, but it is not clear how to turn this into an algorithm. In summary, all the methods described above are computationally expensive and take time at least exponential in the input size.

None of the preceding algebraic methods can be used to solve \cref{prb:scaling}, since they only decide membership but do \emph{not} produce the transformation that produces a tensor with the desired target spectra.
This calls for the development of numerical algorithms for \cref{prb:scaling}. Curiously this development came stemmed from motivations in algebraic complexity and the PIT problem. The first such algorithm was proposed in \cite{gurvits2004classical}. Its complexity analysis, that brought on the connection to invariant theory (and other fields, some mentioned above) was achieved in \cite{garg2016deterministic}. In the language we use here, it deals with $d=2$ (operator scaling) and uniform marginals, and results in polynomial time algorithms for problems in diverse areas discussed there.\footnote{The underlying algebraic problem associated with operator scaling, namely
non-commutative singularity and rank of symbolic matrices found a different, algebraic algorithm in the works
of~\cite{ivanyos2017constructive, derksen2015}} The operator scaling problem was then extended in two directions, which we mention next:
one direction being general values of $d$ (tensor scaling) and the other being $d=2$ and arbitrary marginals.

For general~$d$, a deterministic algorithm was given in~\cite{burgisser2017alternating} (based on a proposal in~\cite{verstraete2003normal} for $n_0=1$). Very recently, a randomized polynomial time algorithm for operator scaling to general marginals was given in~\cite{franks2018operator}.
 The two papers \cite{burgisser2017alternating, franks2018operator} study two different generalizations of the operator scaling problem in \cite{garg2016deterministic}. The present paper completes a natural square by studying a common generalization of the problems studied in \cite{burgisser2017alternating, franks2018operator}. All these algorithms can be seen as noncommutative generalizations of the Sinkhorn-Knopp algorithm for `matrix scaling`~\cite{Sink, sinkhorn1967concerning}.

It was shown recently known that \cref{prb:qmp} is in NP$\cap$coNP~\cite{burgisser2017membership}.
In view of \cref{eq:kronecker equivalence}, this should be contrasted with the NP-hardness of deciding whether a single Kronecker coefficient is zero or not~\cite{ikenmeyer2017vanishing}.

\subsection{Summary of results}\label{subsec:results}
Our main result in this paper is a randomized algorithm for tensor scaling to general marginals (\cref{prb:scaling}). As a consequence, we obtain algorithms for all other problems.

\begin{thm}\label{thm:main} There is a randomized algorithm running in time $\poly(N, 1/\eps)$, that takes as input $X\in\Ten(n_0;n_1,\dots,n_d)$ with Gaussian integer entries (specified as a list of real and complex parts, each encoded in binary, with bit size $\le b$) and
$\vec p\in P_+(n_1,\dots,n_d)$ with rational entries (specified as a list of numerators and denominators, each encoded in binary, with bit size $\le b$). The algorithm either correctly identifies that $\vec p \notin \Delta(X)$, or it outputs a scaling $g \in G$ such that the marginals of $g \cdot X$ are $\eps$-close to the target spectra~$\vec p$. Here $N$ is the total bit-size of the input, $N = 2 n_0 n_1 \cdots n_d b + 2 (n_1 + \cdots n_d) b$.
\end{thm}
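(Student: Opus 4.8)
The plan is to realize the scaling procedure as an alternating-minimization (Sinkhorn-type) algorithm on the group $G=\GL(n_1)\times\dots\times\GL(n_d)$, using highest weight vectors as potential functions, and to control the number of iterations by a gap/boundedness argument. Concretely, one restricts attention to a suitable "geodesically convex" parametrization: write $g^{(i)} = h^{(i)} u^{(i)}$ with $u^{(i)}$ unitary (irrelevant, since it does not change the spectra of the marginals) and $h^{(i)}$ positive definite, so the relevant search space is the product of symmetric spaces $\GL(n_i)/U(n_i)$. The natural objective to decrease is a "shifted" Kempf--Ness type function $F_{\vec p}(g) = \log\norm{g\cdot X}^2 - \sum_i \braket{\log(\text{diag of }h^{(i)})\,,\,\vec p^{(i)}}$ (more precisely, the function whose gradient at the identity is $\mu(g\cdot[X]) - \operatorname{diag}(\vec p)$ after sorting); its critical points are exactly the scalings whose marginals have spectrum $\vec p$. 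The alternating step fixes all but one tensor factor $i$ and replaces $g^{(i)}$ by the unique positive matrix that makes $\rho_{g\cdot X}^{(i)}$ equal to the target (in the eigenbasis); this is a closed-form update, exactly as in matrix/operator scaling, and by a one-variable convexity computation it decreases $F_{\vec p}$ by an amount lower-bounded in terms of how far the current marginal is from $\vec p$ in $\ell^1$.

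The three quantitative ingredients, to be imported from the invariant-theoretic part of the paper (\cref{sec:git}) and from the theory of highest weight vectors advertised in the introduction, are: (i) a \emph{lower bound on the initial value} of the potential: since $X$ has Gaussian integer entries of bit size $\le b$, $\norm X^2\ge 1$ and the "capacity'' type quantity $\log\norm X^2$ is bounded below by a $\poly(N)$ quantity; (ii) an \emph{upper bound on $F_{\vec p}$ along the whole trajectory}, equivalently a lower bound on $\norm{g\cdot X}$ whenever $\vec p\in\Delta(X)$ --- this is where highest weight vectors enter, as one evaluates an appropriate HWV covariant on $g\cdot X$, uses that it is nonzero on the orbit (because $\vec p\in\Delta(X)$ forces a corresponding representation-theoretic multiplicity to be nonzero), and bounds its value from above by $\poly$ of the weights times $\norm{g\cdot X}^{\deg}$, yielding $\log\norm{g\cdot X}^2 \ge -\poly(N)\cdot\!\big(\text{weight size}\big)$; and (iii) a \emph{rationality/denominator bound} on $\Delta(X)$, so that if $\vec p\notin\Delta(X)$ then $\vec p$ is at distance at least $1/\poly(N,2^b)$ from the polytope, which lets the algorithm \emph{detect} infeasibility: if after the prescribed number of iterations the marginals are still far from $\vec p$, then $\vec p\notin\Delta(X)$. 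Combining (i)--(ii) bounds the total potential drop by $\poly(N,b)$, and combining this with the per-step decrease bound $\Omega(\eps^2)$ (or $\Omega(\eps^2/\text{something }\poly)$) gives $\poly(N,1/\eps)$ iterations; each iteration performs eigendecompositions and tensor contractions and so costs $\poly(N)$ arithmetic operations, which one finishes with a standard bit-complexity analysis showing $\poly(N)$-bit approximations of all intermediate quantities suffice (truncating the $h^{(i)}$, controlling how rounding perturbs $F_{\vec p}$).

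The randomization is needed only to reduce \cref{prb:general}/\cref{prb:scaling} for an arbitrary orbit closure to the generic situation and to put the input tensor into a sufficiently generic position: by \cref{cor:generic is enough}, $\vec p\in\Delta(\mathcal X)$ iff $\vec p\in\Delta(X)$ for generic $X$, and a random $\GL$-perturbation (or random restriction, or random change of the auxiliary $n_0$-coordinate) with $\poly(N)$-bit Gaussian-integer entries lands in the good locus with high probability; one argues this via a Schwartz--Zippel-type bound against the non-vanishing locus of the relevant covariant, whose degree is controlled by the same weight bounds used in (ii). Finally, one assembles everything: run the alternating-minimization loop for $T=\poly(N,1/\eps)$ steps on the (perturbed) input; if at some step the marginals become $\eps$-close to $\vec p$, output the accumulated $g\in G$; otherwise, invoking (iii), report $\vec p\notin\Delta(X)$. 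Correctness of the "otherwise'' branch follows because feasibility together with the potential-drop bookkeeping guarantees that an $\eps$-close iterate must appear within $T$ steps.

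The main obstacle is ingredient (ii): obtaining a \emph{polynomial} (rather than merely finite) upper bound on the potential along the trajectory, i.e. a polynomial lower bound on $\norm{g\cdot X}$ over all relevant $g$ when $\vec p$ is feasible. This is precisely the place where previous works needed new ideas (capacity lower bounds for matrix/operator scaling), and here it requires the new bounds on evaluations of highest weight vectors announced in the introduction --- one must exhibit an explicit HWV covariant of $\poly$-bounded weight that is nonzero on $\overline{G\cdot X}$ under the feasibility hypothesis, and bound below how small it can be on the orbit, converting a representation-theoretic non-vanishing statement into a metric estimate. Everything else (the closed-form alternating update, the per-step decrease, the bit-complexity, the genericity reduction, and the detection of infeasibility via rationality of $\Delta(X)$) is, by comparison, a careful adaptation of techniques already developed for operator and tensor scaling.
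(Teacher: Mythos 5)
Your plan has the right skeleton --- a Sinkhorn-type alternating iteration, a three-step potential argument (initial lower bound, per-step progress, global upper bound), highest weight vectors as potential functions, a Schwartz--Zippel randomization to reach a generic starting point, and degree bounds from geometric invariant theory --- but the core update step is wrong and would break the potential analysis. You propose the polar parametrization $g^{(i)}=h^{(i)}u^{(i)}$ with $h^{(i)}\succ 0$ and search in the symmetric space $\GL(n_i)/U(n_i)$, so each step replaces $g^{(i)}$ by a \emph{Hermitian} square root. This works only for uniform marginals, as in the prior tensor-scaling work. For general $\vec p$, the update must be $g^{(i)}\leftarrow\diag(\vec p_\uparrow^{(i)})^{1/2}(R^{(i)})^{-1}g^{(i)}$ with $R^{(i)}$ the \emph{Cholesky} (upper-triangular) factor, so that the iterates stay inside a single Borel orbit $B\cdot X$. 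The reason is precisely the potential function you want to use: a highest weight vector $P\in\HWV_{\vec\lambda^*}(\CC[V]_{(k)})$ transforms by a character \emph{only} under the Borel subgroup, $P(R\cdot X)=\chi_{\vec\lambda^*}(R^{-1})P(X)$ for $R\in B$ (\cref{eq:hwv poly}), and it is this multiplicativity, combined with the generalized Pinsker bound of \cref{lem:generalized lsw}, that yields the per-step increase in \cref{prp:progress}. For a positive-definite, non-triangular factor $R^{(i)}$ there is no such transformation law (the positive cone is not even a group), so there is no progress guarantee, and the "Kempf--Ness type" function you sketch, which involves $\log\lvert\chi_{\vec\lambda^*}(g)\rvert$, is only well-defined and additive along $B$. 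The randomization you correctly identify is needed, but for a reason your framework does not explain: restricting to $B$ can disconnect the orbit (the paper's diagonal unit tensor example), so a generic $\GL$-rotation is required to ensure that some highest weight vector is nonvanishing on the resulting $B$-orbit (\cref{prp:borel}, \cref{cor:borel}).

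Your ingredient (iii) is also quantitatively off: the distance lower bound of \cref{thm:minimal gap} is \emph{exponentially} small, $\gamma=\exp\!\big(-O\big((n_1+\cdots+n_d)\log(\ell\max_j n_j)\big)\big)$, not $1/\poly(N,2^b)$, and the paper explicitly flags improving the $\eps$-dependence to $\poly(\log 1/\eps)$ --- which would allow exploiting such a gap --- as an open problem. Luckily, (iii) is not needed to prove \cref{thm:main}: the "correctly identifies $\vec p\notin\Delta(X)$" conclusion follows by contraposition from the potential argument itself (if $\vec p\in\Delta(X)$ and the randomization succeeded, the three potential inequalities force termination with an $\eps$-close scaling within $T$ steps), not from any metric separation of $\vec p$ from the polytope; the distance bound is needed only for the promise-membership corollaries, not for the scaling theorem.
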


As a consequence of \cref{thm:main}, we obtain a randomized algorithm for a promise version of the membership~\cref{prb:orbit closure} (and hence for \cref{prb:qmp}, see \cref{cor:generic is enough}).

\begin{cor}\label{cor:promise_mem_orbit_closure} There is a randomized algorithm running in time $\poly(N, 1/\eps)$, that takes as input $X\in\Ten(n_0;n_1,\dots,n_d)$ with Gaussian integer entries (specified as a list of real and complex parts, each encoded in binary, with bit size $\le b$) and
$\vec p\in P_+(n_1,\dots,n_d)$ with rational entries (specified as a list of numerators and denominators, each encoded in binary, with bit size $\le b$). The algorithm distinguishes between the following two cases:
\begin{enumerate}
\item $\vec p \in \Delta(X)$.
\item $\vec p$ is $\eps$-far (in $\ell_1$ norm) from any point $\vec q \in \Delta(X)$.
\end{enumerate}
 Here $N$ is the total bit-size of the input, $N = 2 n_0 n_1 \cdots n_d b + 2 (n_1 + \cdots n_d) b$.
\end{cor}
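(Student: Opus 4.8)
The plan is to reduce directly to \cref{thm:main}, treating it as a black box. Given an instance $(X,\vec p)$ of the promise problem and the separation parameter $\eps>0$, I would run the algorithm of \cref{thm:main} on input $(X,\vec p,\eps')$ with $\eps' := \eps/(2d)$; since $d\le N$, its running time is $\poly(N,1/\eps') = \poly(N,1/\eps)$, as required. The algorithm returns one of two kinds of output. If it reports $\vec p\notin\Delta(X)$, then by the correctness guarantee of \cref{thm:main} this is only ever returned when $\vec p\notin\Delta(X)$ actually holds, so we cannot be in case~1; under the promise we are therefore in case~2, and we output ``case~2''.

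If instead the algorithm returns a scaling $g\in G$, then $Y:=g\cdot X$ has marginals $\eps'$-close to $\vec p$, i.e.\ $\lVert \spec(\rho_Y^{(i)})-\vec p^{(i)}\rVert_1 \le \eps'$ for every $i$. Set $\vec q := (\spec(\rho_Y^{(1)}),\dots,\spec(\rho_Y^{(d)}))$. Because $Y\in G\cdot X\subseteq\overline{G\cdot X}$, the definition of $\Delta(X)=\Delta(\overline{G\cdot[X]})$ in \cref{eq:moment polytope} gives $\vec q\in\Delta(X)$. Then $\vec q$ is a point of $\Delta(X)$ within $\ell^1$-distance $\sum_i\lVert\vec p^{(i)}-\vec q^{(i)}\rVert_1 \le d\eps' = \eps/2 < \eps$ of $\vec p$ (and a fortiori within $\eps$ in the per-block sense of \cref{dfn:eps-close}), so $\vec p$ is not $\eps$-far from $\Delta(X)$; this rules out case~2, and under the promise we output ``case~1''. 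This exhausts the outputs, so the reduction is correct. The factor $1/(2d)$ in $\eps'$ is the only point requiring any care: it ensures a returned scaling certifies a point of $\Delta(X)$ strictly inside the forbidden $\eps$-ball of case~2 regardless of whether ``$\eps$-far'' is read as a sum over the $d$ blocks or a maximum; with the per-block convention of \cref{dfn:eps-close}, $\eps'=\eps/2$ already suffices. If one prefers not to trust the internal guarantee of \cref{thm:main}, the returned $g$ can be independently checked by computing the eigenvalues of each $\rho_Y^{(i)}$ to precision $\eps'/2$ in time $\poly(N,1/\eps)$ and verifying closeness directly.

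Finally, since \cref{thm:main} is randomized, \cref{cor:promise_mem_orbit_closure} inherits its error probability, which can be pushed below any $\delta>0$ by a majority vote over $O(\log(1/\delta))$ independent repetitions. I do not expect any genuine obstacle here: the entire mathematical content lies in \cref{thm:main}, and what remains is the routine observation that its \emph{search} guarantee --- producing an explicit group element --- upgrades for free to a \emph{gapped decision} procedure once the accuracy is set a $\Theta(1/d)$ factor below the promised separation, with the only subtlety being the choice of that factor so that a produced scaling witnesses membership strictly inside the forbidden gap.
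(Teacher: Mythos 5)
Your approach is exactly what the paper intends: the paper states \cref{cor:promise_mem_orbit_closure} as an immediate consequence of \cref{thm:main} without giving details, and your reduction fills those details in correctly. The key observation --- shrink the target accuracy so that a returned scaling produces a point of $\Delta(X)$ strictly within $\eps$ of $\vec p$ (using that $\vec q := (\spec(\rho_Y^{(1)}),\dots,\spec(\rho_Y^{(d)}))\in\Delta(X)$ because $[Y]\in G\cdot[X]\subseteq\overline{G\cdot[X]}$), thereby ruling out case~2 --- is the right idea, and your choice $\eps'=\eps/(2d)$ safely covers the mismatch between the per-block notion of $\eps$-close in \cref{dfn:eps-close} and the aggregate $\ell^1$ distance in the corollary.

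One imprecision worth correcting: you write that ``by the correctness guarantee of \cref{thm:main} [the report $\vec p\notin\Delta(X)$] is only ever returned when $\vec p\notin\Delta(X)$ actually holds.'' That is not what the underlying algorithm guarantees. As \cref{thm:scaling} makes explicit, the success probability of \cref{alg:scaling} is only $\geq 1/2$: if the random choice in step~\ref{it:randomize} is unlucky, the algorithm can terminate at step~\ref{it:give up} and report $\vec p\notin\Delta(X)$ even when $\vec p\in\Delta(X)$. So the ``reports $\vec p\notin\Delta(X)$'' branch is precisely where the error probability lives, not an always-correct certificate. The other branch \emph{is} unconditionally sound, since step~\ref{it:borel scale} only outputs $g$ after directly verifying $\lVert\rho_Y^{(i)}-\diag(\vec p_\uparrow^{(i)})\rVert_{\tr}\le\eps'$. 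Your final conclusion survives once this is stated accurately: in case~2 the algorithm can \emph{never} output a scaling (as you argue), so it always reports $\vec p\notin\Delta(X)$ and your decision is deterministically correct there; in case~1 it outputs a scaling with probability $\geq 1/2$, and the residual probability $<1/2$ is the error you amplify away. The argument is sound; just reattribute where the randomness enters.
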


This yields the following corollary.

\begin{cor}\label{cor:promise_qmp}  There is a randomized algorithm running in time $\poly(n_0 n_1 \cdots n_d, b, 1/\eps)$, that takes as input $\vec p\in P_+(n_1,\dots,n_d)$ with rational entries (specified as a list of numerators and denominators, each encoded in binary, with bit size $\le b$). The algorithm distinguishes between the following two cases:
\begin{enumerate}
\item $\vec p \in \Delta(1; n_1,\ldots, n_d)$ i.e. there exists $Y\in\Ten(n_1,\ldots, n_d)$ such that $\spec(\rho_Y^{(i)}) = \vec p^{(i)}$ for all $i$.
\item $\vec p$ is $\eps$-far (in $\ell_1$ norm) from any point $\vec q \in \Delta(1; n_1,\ldots, n_d)$.
\end{enumerate}
\end{cor}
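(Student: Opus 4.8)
The plan is to deduce \cref{cor:promise_qmp} from \cref{cor:promise_mem_orbit_closure} by replacing the variety $\PP(V)$, where $V=\Ten(1;n_1,\dots,n_d)$, with the orbit closure of a single, randomly chosen tensor. One always has $\Delta(X)=\Delta(\overline{G\cdot[X]})\subseteq\Delta(\PP(V))=\Delta(1;n_1,\dots,n_d)$, and by \cref{cor:generic is enough} equality holds for a \emph{generic} $X$: there is a nonempty Zariski-open subset $U\subseteq V$ (possibly depending on $\vec p$) such that $\vec p\in\Delta(1;n_1,\dots,n_d)$ if and only if $\vec p\in\Delta(X)$ for every $X\in U$. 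So, on input $\vec p$, the algorithm first samples a tensor $X$ with small Gaussian-integer entries, then runs the algorithm of \cref{cor:promise_mem_orbit_closure} on $(X,\vec p,\eps)$, and returns its output.

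The genericity is ensured by the Schwartz--Zippel lemma: if $V\setminus U$ lies in the zero set of a nonzero polynomial of degree $D$, then choosing the real and imaginary parts of each entry of $X$ independently and uniformly from $\{0,1,\dots,3D-1\}$ places $X\in U$ with probability at least $2/3$. Such a degree bound $D$ should fall out of the invariant-theoretic description of the generic locus underlying \cref{cor:generic is enough} (e.g.\ non-vanishing of highest weight vector covariants of controlled degree, plus a genericity condition on the moment-map image). Each entry of $X$ then has bit size $O(\log D)$, so we may run \cref{cor:promise_mem_orbit_closure} on $(X,\vec p,\eps)$ with size parameter $b'=\max\{b,O(\log D)\}$ and input size $N=\poly(n_1\cdots n_d,b')$ (recall $n_0=1$ here). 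Conditioned on $X\in U$ we have $\Delta(X)=\Delta(1;n_1,\dots,n_d)$, so the promise of \cref{cor:promise_qmp} is exactly the promise of \cref{cor:promise_mem_orbit_closure}: in case~(1), $\vec p\in\Delta(X)$ and the subroutine reports case~(1); in case~(2), $\vec p$ is $\eps$-far from $\Delta(X)$ and it reports case~(2). As long as $D$ is at most singly exponential in $n_1\cdots n_d$ and $b$ -- which it surely is -- we get $\log D=\poly(n_1\cdots n_d,b)$, hence $N=\poly(n_1\cdots n_d,b)$ and total running time $\poly(N,1/\eps)=\poly(n_1\cdots n_d,b,1/\eps)$; the overall failure probability (sampling error plus subroutine error) is a constant, suppressed by independent repetition.

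The only ingredient beyond routine bookkeeping is an \emph{effective} form of \cref{cor:generic is enough} supplying the degree bound $D$, and I would pin that down first. Since the running time is insensitive to whether $D$ is polynomial or merely singly exponential in the input size, I do not expect a serious obstacle here -- a minor point to watch is that $U$ (and hence $D$) may depend on $\vec p$, so one should check the bound is uniform enough over the admissible $\vec p$, but since $D$ enters only logarithmically this is harmless. The genuine difficulty in the paper resides in \cref{thm:main} and in \cref{cor:generic is enough}, of which \cref{cor:promise_qmp} is a direct consequence.
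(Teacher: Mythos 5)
Your reduction is correct and is essentially the paper's intended route: sample a random tensor $X$ with bounded Gaussian-integer entries, invoke \cref{cor:generic is enough} to conclude $\Delta(X)=\Delta(1;n_1,\dots,n_d)$ generically, and run the orbit-closure algorithm of \cref{cor:promise_mem_orbit_closure} on $(X,\vec p,\eps)$. The one ingredient you flag as "to be pinned down" --- an effective, quantitative form of \cref{cor:generic is enough} --- is exactly what \cref{prp:effective mumford} (Derksen's degree bound, giving $K=(\ell d\max_i n_i)^{d\max_i n_i^2}$) together with \cref{cor:random is generic} supplies, so there is no gap; and you correctly observe that since $\log K=\poly(\max_i n_i,d,b)$ it only enters the bit size of $X$, not the iteration count. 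The paper actually packages this cleanly as the ``good parametrization'' $\Phi=\mathrm{id}$ in \cref{exa:projective space}, which plugs into the general \cref{alg:general scaling} and \cref{thm:general main}; that formulation sidesteps the mild redundancy in your version (sampling $X$ at random and then additionally sampling $g$ inside \cref{alg:scaling}), since one random tensor already serves both purposes --- but this is cosmetic. Two small points worth tightening: first, in promise case (2) one has $\Delta(X)\subseteq\Delta(1;n_1,\dots,n_d)$ for \emph{every} $X$, so $\vec p$ remains $\eps$-far from $\Delta(X)$ even when the sampling of $X$ fails; the only one-sided failure is in case (1), which makes the constant-probability boosting by independent repetition even more elementary than you indicate. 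Second, the genericity condition reduces purely to non-vanishing of a single highest weight vector (of controlled degree) --- the ``genericity condition on the moment-map image'' you mention is not a separate requirement.
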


As described before, \cref{prb:qmp} captures the asymptotic vanishing of Kronecker coefficients. Hence we get the following corollary which describes a randomized polynomial time algorithm for a promise version of the asymptotic Kronecker problem.

\begin{cor}\label{cor:promise_Kronecker} There is a randomized algorithm running in time $\poly(n, b, 1/\eps)$, that takes as input three partitions $\vec \lambda, \vec \mu, \vec \nu \in \ZZ^n_{\geq0}$ with entries described in binary with bit-size at most $b$. The algorithm distinguishes between the following two cases:
\begin{enumerate}
\item  There exists and integer $s\ge 1$ s.t. $g(s\vec\lambda,s\vec\mu,s\vec\nu) > 0$.
\item For all $\vec \lambda', \vec \mu', \vec \nu'$ s.t. $g\left(\vec \lambda', \vec \mu', \vec \nu'\right) > 0$, it holds that $\left( \vec \lambda', \vec \mu', \vec \nu'\right)/\left| \left(\vec \lambda', \vec \mu', \vec \nu'\right) \right|$ is $\eps$-far (in $\ell_1$-norm) from $(\vec \lambda, \vec \mu, \vec \nu)/|(\vec \lambda, \vec \mu, \vec \nu)|$.
\end{enumerate}
Here $g$ denotes the Kronecker coefficient and $ |(\vec \lambda, \vec \mu, \vec \nu)| = \sum_j \lambda_j = \sum_j \mu_j = \sum_j \nu_j$.
\end{cor}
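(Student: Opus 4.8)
The plan is to reduce \cref{cor:promise_Kronecker} directly to \cref{cor:promise_qmp} via the equivalence~\eqref{eq:kronecker equivalence}. First I would normalize: by hypothesis (and the notation $|(\vec\lambda,\vec\mu,\vec\nu)|$) the three partitions have a common size $k := |(\vec\lambda,\vec\mu,\vec\nu)|$, which we may assume is $\geq 1$ (if $k=0$ all three are empty, $g(0,0,0)=1>0$, and we are in the first case). Set $\vec p := \tfrac1k(\vec\lambda,\vec\mu,\vec\nu) \in P_+(n,n,n)$. Since $k \leq n\cdot 2^b$, each entry of $\vec p$ is a rational of bit-size $O(b+\log n)$, so running the algorithm of \cref{cor:promise_qmp} on $\vec p$ with $d=3$, $n_1=n_2=n_3=n$, and accuracy parameter $\eps/2$ costs $\poly(n^3,\,b+\log n,\,2/\eps)=\poly(n,b,1/\eps)$; the algorithm for \cref{cor:promise_Kronecker} outputs exactly what that call returns. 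It remains to verify correctness under the two promises.

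In the first case there is an integer $s\geq1$ with $g(s\vec\lambda,s\vec\mu,s\vec\nu)>0$, which by~\eqref{eq:kronecker equivalence} is exactly the statement $\vec p\in\Delta(1;n,n,n)$; hence \cref{cor:promise_qmp} is in its first case and reports membership, as desired.

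For the second case I would argue the contrapositive: if $\vec p$ is \emph{not} $(\eps/2)$-far from $\Delta(1;n,n,n)$, then the second-case hypothesis of \cref{cor:promise_Kronecker} fails. Indeed, pick $\vec q\in\Delta(1;n,n,n)$ with $\norm{\vec q^{(i)}-\vec p^{(i)}}_1\leq\eps/2$ for $i=1,2,3$. Since $\Delta(1;n,n,n)$ is a \emph{rational} polytope, its rational points are dense in it, so there is a rational $\vec q'\in\Delta(1;n,n,n)$ with $\norm{\vec q'^{(i)}-\vec q^{(i)}}_1\leq\eps/2$, hence $\norm{\vec q'^{(i)}-\vec p^{(i)}}_1\leq\eps$, for all $i$; clearing denominators, write $\vec q'=\tfrac1{k'}(\vec\lambda',\vec\mu',\vec\nu')$ with $\vec\lambda',\vec\mu',\vec\nu'$ partitions of a common size $k'\geq1$ (each with at most $n$ parts). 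Applying~\eqref{eq:kronecker equivalence} to $(\vec\lambda',\vec\mu',\vec\nu')$ gives an integer $s\geq1$ with $g(s\vec\lambda',s\vec\mu',s\vec\nu')>0$, and the normalization of this triple is $\tfrac1{sk'}(s\vec\lambda',s\vec\mu',s\vec\nu')=\vec q'$, which is $\eps$-close to $(\vec\lambda,\vec\mu,\vec\nu)/k=\vec p$. Thus a triple with positive Kronecker coefficient has normalization within $\eps$ of the input, contradicting the second-case hypothesis. So in the second case $\vec p$ really is $(\eps/2)$-far from $\Delta(1;n,n,n)$, \cref{cor:promise_qmp} is in its second case, and the algorithm reports non-membership correctly.

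The only delicate point is this last step, where one must bridge the gap between the \emph{honest} support $\{(\vec\lambda',\vec\mu',\vec\nu') : g(\vec\lambda',\vec\mu',\vec\nu')>0\}$ (which, unlike the Littlewood--Richardson coefficients, does \emph{not} satisfy saturation) and the \emph{asymptotic} support, which~\eqref{eq:kronecker equivalence} identifies with the rational points of $\Delta(1;n,n,n)$. What makes the gap harmless is that every asymptotic-support point is the normalization of some honest-support triple, namely a sufficiently large dilate; so after normalization the two supports have the same closure, $\Delta(1;n,n,n)$, and density of rational points in a rational polytope then delivers the needed distance comparison. The remaining ingredients --- normalization, bit-size bookkeeping, and invoking the two black boxes~\eqref{eq:kronecker equivalence} and \cref{cor:promise_qmp} --- are routine (up to harmless constant factors in the accuracy parameter, depending on how one spreads $\ell_1$-distance over the three blocks).
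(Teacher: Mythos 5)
Your proof is correct, and it takes essentially the same route the paper intends: the paper gives no explicit proof of this corollary, stating only that it follows from \cref{cor:promise_qmp} and the equivalence~\eqref{eq:kronecker equivalence}, and your write-up is the natural (and correct) elaboration of that one-line derivation. The one place where you supply a genuinely substantive observation is the density-of-rational-points step: to pass from ``some $\vec q\in\Delta(1;n,n,n)$ is close to $\vec p$'' to ``some normalized honest-support triple is close to $\vec p$,'' you need that rational points of the rational polytope $\Delta(1;n,n,n)$ are dense in it, and that (by~\eqref{eq:kronecker equivalence}) every rational point of $\Delta(1;n,n,n)$ is the normalization of an honest-support triple, namely a suitable dilate; you spell this out correctly, and it is exactly the right bridge over the failure of Kronecker saturation. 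Your acknowledgment of the benign constant-factor bookkeeping between per-block and concatenated $\ell_1$ distances is also appropriate and does not affect correctness.
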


In many applications, the tensor~$X$ can be more succinctly represented than by its $n_0 n_1\cdots n_d$ many coordinates.
If the representation is preserved by scalings and allows for efficient computation of the marginals, then this yields a useful optimization of \cref{alg:scaling}.
A prime example of which are the so called \emph{matrix-product states} or \emph{tensor-train decompositions} with polynomial bond dimension~\cite{verstraete2003normal,orus2014practical}.
We won't define these states here (see \cref{sec:extensions} for a formal definition) but we will just say that these have much smaller (exponentially smaller in $d$) descriptions than specifying all the $n_0 n_1 \cdots n_d$ coordinates of the tensors.
This class includes the \emph{unit tensors} and the \emph{matrix multiplication tensors}, which are central objects in algebraic complexity theory~\cite{burgisser2011geometric,burgisser2013algebraic} and whose moment polytopes are \emph{not} known!

\begin{thm}[Informal]\label{thm:succinct} There is a randomized algorithm running in time $\poly(N, b, 1/\eps)$, that takes as input a matrix-product state $X\in\Ten(n_0;n_1,\dots,n_d)$ with input size $N$ and
$\vec p\in P_+(n_1,\dots,n_d)$ with rational entries (specified as a list of numerators and denominators, each encoded in binary, with bit size $\le b$). The algorithm either correctly identifies that $\vec p \notin \Delta(X)$, or it outputs a scaling $g \in G$ such that the marginals of $g \cdot X$ are $\eps$-close to the target spectra~$\vec p$.
\end{thm}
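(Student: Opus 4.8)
The plan is to run the alternating-minimization procedure of \cref{thm:main} (that is, \cref{alg:scaling}) verbatim, but to keep the current iterate in matrix-product-state (MPS) form rather than as an explicit list of $n_0 n_1\cdots n_d$ coordinates. The point is that each iteration of \cref{alg:scaling} touches the iterate $Y=g\cdot X$ only through two operations: it multiplies one leg by a matrix $g^{(i)}\in\GL(n_i)$, and it reads off the one-body marginals $\rho_Y^{(1)},\dots,\rho_Y^{(d)}$ in order to form the next multiplicative update. Both are cheap on an MPS. Acting by $g^{(i)}$ amounts to contracting $g^{(i)}$ into the $i$-th core tensor, which leaves all bond dimensions (hence the description size) unchanged, so the MPS structure is preserved throughout the run. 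Computing $\rho_Y^{(i)}=M_Y^{(\{i\})}(M_Y^{(\{i\})})^\dagger$ is a contraction of two copies of the network over all legs except the $i$-th; sweeping in ``environment'' (transfer) matrices from the two ends and contracting the open $i$-th leg last, this costs $\poly(N)$ arithmetic operations, where $N$ is the MPS description size, and $\norm{Y}^2=\tr\rho_Y^{(i)}$ is obtained as a by-product. Thus each iteration runs in time $\poly(N,b,1/\eps)$, replacing the $\poly(n_0n_1\cdots n_d,\dots)$ cost of the dense implementation.

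What remains is to check that the \emph{number} of iterations and the working precision are still $\poly(N,b,1/\eps)$. Both are governed by the convergence analysis underlying \cref{thm:main}, and the claim is that, read carefully, that analysis depends on the input tensor only through $\log\norm{X}$, the bit size of its entries, $d$, $\max_i n_i$, and the bit size of $\vec p$ -- never through the ambient dimension $n_0n_1\cdots n_d$, which enters the dense running time purely via the cost of storing and scanning the iterate. For instance, the iteration count is controlled by the gap between the initial value and the infimum of a Kempf--Ness / relative-entropy type potential built from $\log\norm{Y}$, together with the per-step decrease, and ultimately by a lower bound on the $\vec p$-deformed capacity $\capacity_{\vec p}(X)$ valid when $\vec p\in\Delta(X)$, which in turn rests on evaluating a suitable highest weight vector $f$ at $X$ and the degree bounds recalled in \cref{sec:git}. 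For an MPS all of these quantities are $\poly(N,b)$: after clearing denominators the cores are Gaussian integers of bit size $\le b$, so $X$ itself has Gaussian-integer entries and $\log\norm{X}=\sum_i O(b+\log D_i)=\poly(N)$, hence $f(X)$ is a nonzero Gaussian integer whenever $\capacity_{\vec p}(X)>0$, giving $\lvert f(X)\rvert\ge1$, while the matching upper bound $\lvert f(X)\rvert\le 2^{\poly(N,b,1/\eps)}$ follows from $\log\norm{X}=\poly(N)$ and the degree/coefficient bounds. Likewise $\poly(N,b,1/\eps)$ bits of precision suffice: each update multiplies $g^{(i)}$ by an $n_i\times n_i$ matrix assembled from $(\rho_Y^{(i)})^{\pm1/2}$ and $\vec p^{(i)}$, of bit size $\poly(n_i,b)$, so after $\poly(N,b,1/\eps)$ iterations the accumulated $g=(g^{(1)},\dots,g^{(d)})$ still has $\poly(N,b,1/\eps)$ bit size, and the environment contractions are numerically stable to that accuracy.

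Putting the two together, the algorithm is: run \cref{alg:scaling}, keep the iterate as an MPS, compute marginals by network contraction, and finally output the accumulated group element $g\in G$ (each $g^{(i)}$ an explicit $n_i\times n_i$ matrix). Correctness is inherited directly from \cref{thm:main} because the sequence of iterates, and hence the output, is unchanged; only the representation and the way marginals are computed differ. The total running time is $\poly(N,b,1/\eps)$.

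I expect the main obstacle to be the bookkeeping of the second paragraph: confirming that every quantity entering the convergence and precision analysis of \cref{thm:main} depends polynomially only on the succinct parameters $N$, $b$, $1/\eps$, and never on the ambient dimension $n_0n_1\cdots n_d$. The genuinely delicate point is the lower bound on $\capacity_{\vec p}(X)$ -- one must verify that the highest weight vector $f$ used there can be chosen with degree controlled by the succinct parameters (if necessary after a harmless rounding of $\vec p$), and that its value at $X$, while still a nonzero integer since an integer MPS contracts to an integer tensor, is at most $2^{\poly(N,b,1/\eps)}$ via $\log\norm{X}=\poly(N)$. The remaining ingredients -- efficiency and numerical stability of the network-contraction primitives and preservation of the MPS form under scaling -- are routine given the setup in \cref{sec:extensions}.
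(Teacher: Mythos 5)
Your proof takes essentially the same approach as the paper. Section~\ref{sec:extensions} formalizes this via the ``good parametrization'' framework: \cref{thm:general main} runs \cref{alg:general scaling} with a parametrization $\Phi$, which for $\mathcal X=\overline{G\cdot[X]}$ is $\Phi_X(A)=A\cdot X$ (\cref{exa:orbit closure}); when $X$ is given in MPS form, $\Phi_X$ preserves that form, and \cref{exa:matrix product states} explicitly notes that the algorithm can be ``implemented by working solely in the small parameter space.'' Your first paragraph fills in the routine transfer-matrix bookkeeping that the paper leaves implicit, and your second paragraph correctly observes that the iteration bound $T=O(\frac1{\eps^2}(\sum_i\log n_i+b+d\log M))$ and the precision bounds of \cref{subsec:bit_complexity} scale only with $d$, $\max_i n_i$, $b$, $\log\ell$ --- all polynomial in the succinct input size $N$.

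The concern you flag at the end --- that one must verify the highest weight vector $P$ ``can be chosen with degree controlled by the succinct parameters (if necessary after a harmless rounding of $\vec p$)'' --- is not actually an obstacle, and no rounding of $\vec p$ is needed. The degree $k\le K=(\ell d\max_i n_i)^{d\max_i n_i^2}$ can be exponential in the input size, but $k$ cancels in the three-step argument of \cref{thm:scaling:proof}: the lower bound $|P(Y[1])|\ge 2^{-k\cdot O(\sum_i\log n_i+b+d\log M)}$, the per-step growth $2^{k\eps^2/(32\ln2)}$, and the upper bound $|P(Y[t])|\le 2^{k\sum_i\log n_i}$ all carry the factor $k$ in the exponent, so dividing through eliminates it. The degree enters only through $\log K$ (in the choice of $M$ and the randomness budget), which is polynomial. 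Also a small imprecision worth noting: the potential tracked in \cref{thm:scaling:proof} is the highest-weight evaluation $|P(Y)|$ of \cref{prp:hwv eval}, not a Kempf--Ness quantity $\log\norm Y$, and the per-step growth is established via Pinsker's inequality (\cref{lem:generalized lsw}) applied to the Cholesky factor; this does not affect the correctness of your argument but is a different bookkeeping than you describe.
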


It is a very exciting open problem to improve the running time dependence on $\eps$ in~\cref{cor:promise_mem_orbit_closure}, \cref{cor:promise_qmp} and \cref{cor:promise_Kronecker} to $\poly(\log 1/\eps)$. This would yield randomized polynomial time algorithms for~\cref{prb:qmp}, \cref{prb:orbit closure} and the asymptotic Kronecker problem due to the following theorem that we prove in \cref{sec:distance}.

\begin{restatable}[Minimal gap]{thm}{minimalgap}\label{thm:minimal gap}
  Let $X\in\Ten(n_0;n_1,\dots,n_d)$ be a nonzero tensor.
  If $[Y]\in\overline{G \cdot [X]}$ is a scaling with marginals that are $\gamma(n_1,\dots,n_d,\ell)$-close to $\vec p$, then $\vec p\in\Delta(\mathcal X)$. Here $\gamma(n_1,\dots,n_d,\ell) = \expon\left( -O\left((n_1 + \cdots + n_d) \log(\ell \max_j n_j)\right) \right)$ and $\ell$ is the minimal integer s.t. $\ell \vec p$ has integral entries.
\end{restatable}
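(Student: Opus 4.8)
The plan is to separate the statement into a trivial reduction and a quantitative convex-geometry fact about the polytope $\Delta(\mathcal X)$, where throughout $\mathcal X=\overline{G\cdot[X]}$ so that $\Delta(\mathcal X)=\Delta(X)$ (if one prefers $\mathcal X$ to be a larger $G$-stable subvariety, then $\Delta(\overline{G\cdot[X]})\subseteq\Delta(\mathcal X)$ and it suffices to treat $\overline{G\cdot[X]}$). The reduction is immediate from the definitions: for any $[Y]\in\mathcal X$, the vector $\vec q:=(\spec(\rho_Y^{(1)}),\dots,\spec(\rho_Y^{(d)}))$ lies in $\Delta(\mathcal X)$ by~\eqref{eq:moment polytope}, and by \cref{dfn:eps-close} it satisfies $\lVert\vec q-\vec p\rVert_1\le d\,\gamma$. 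Hence it suffices to prove the contrapositive, purely geometric assertion: \emph{if $\vec p\in P_+(n_1,\dots,n_d)$ has $\ell\vec p$ integral and $\vec p\notin\Delta(\mathcal X)$, then $\lVert\vec p-\vec q\rVert_1>d\,\gamma$ for every $\vec q\in\Delta(\mathcal X)$} --- that is, denominator-$\ell$ lattice points lying outside $\Delta(\mathcal X)$ are quantitatively bounded away from it.

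For this I would combine the invariant-theoretic description of $\Delta(\mathcal X)$ recalled in \cref{sec:git} with the degree bounds recalled there. The point is that $\Delta(\mathcal X)$ is a rational polytope whose defining affine inequalities $\langle a,\vec x\rangle\le b$ can be chosen with $a\in\ZZ^{n_1+\dots+n_d}$ and $b\in\mathbb Q$ of \emph{controlled arithmetic complexity}: the normals $a$ are of combinatorial origin (permuted $0/1$-type combinations of weight vectors, hence $\lVert a\rVert_\infty\le\poly(\max_j n_j)$), while each offset $b$ is governed by the degree $\le D_0$ at which a corresponding highest weight vector fails to vanish on $\overline{G\cdot X}$, so $b$ has denominator at most $D_0$, where the degree bound gives $D_0\le(\max_j n_j)^{O(\sum_j n_j)}$ (independently of $\ell$). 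Granting this, suppose $\vec p\notin\Delta(\mathcal X)$; then $\langle a,\vec p\rangle>b$ for one of these inequalities. Since $a$ is integral and $\ell\vec p$ is integral, $\langle a,\vec p\rangle$ has denominator dividing $\ell$, so the positive rational $\langle a,\vec p\rangle-b$ has denominator dividing $\ell\cdot(\text{denominator of }b)\le\ell D_0$, whence $\langle a,\vec p\rangle-b\ge 1/(\ell D_0)$. For every $\vec q\in\Delta(\mathcal X)$ we have $\langle a,\vec q\rangle\le b$, so $\langle a,\vec p-\vec q\rangle\ge 1/(\ell D_0)$ and therefore
\[
  \lVert\vec p-\vec q\rVert_1\ \ge\ \frac{\langle a,\vec p-\vec q\rangle}{\lVert a\rVert_\infty}\ \ge\ \frac{1}{\ell\,D_0\,\lVert a\rVert_\infty}\ \ge\ \expon\!\bigl(-O\bigl(\log\ell+(n_1+\dots+n_d)\log\max_j n_j\bigr)\bigr).
\]
Taking $\gamma(n_1,\dots,n_d,\ell)$ to be a suitable constant fraction of this final bound (which absorbs the factor $d$ from the reduction) gives $\lVert\vec p-\vec q\rVert_1>d\,\gamma$, and with the first paragraph this proves the theorem.

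The real content lies entirely in the middle step. First, one must read off from \cref{sec:git} the effective degree bound $D_0\le(\max_j n_j)^{O(\sum_j n_j)}$ on the degrees at which the highest weight vectors cutting out the faces of $\Delta(\mathcal X)$ become nonzero on $\overline{G\cdot X}$; this is the heart of the matter, and it is exactly where the $\sum_j n_j$ and $\log\max_j n_j$ in the exponent of $\gamma$ enter (the $\ell$ enters only through the integrality argument yielding the $1/\ell$ gap). Second, one must keep the facet normals small. If instead one argues through the finitely many generators $(k_j,\lambda_j)$ of the moment semigroup (with $k_j,\lVert\lambda_j\rVert_\infty\le D_0$) and extracts the separating functional via Farkas' lemma and Cramer's rule, the subdeterminant estimate in dimension $1+\sum_j n_j$ a priori costs an extra factor of $\sum_j n_j$ in the exponent; matching the stated $\gamma$ therefore genuinely depends on the combinatorial smallness of the facet normals of $\Delta(\mathcal X)$ (or on a polynomial-in-dimension degree bound). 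The remaining ingredients --- the reduction, Farkas/LP duality, and the integrality argument --- are routine.
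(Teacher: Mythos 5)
Your reduction to the contrapositive, and the integrality argument that follows, are correct. But the middle step — the one you yourself flag as ``the real content'' — is genuinely a gap, not a detail. You rely on two claims about the facet description of $\Delta(\mathcal X)$: (a) the normals can be chosen with $\lVert a\rVert_\infty\le\poly(\max_j n_j)$, and (b) the offsets $b$ have denominators bounded by an $\ell$-\emph{independent} degree constant $D_0\le(\max_j n_j)^{O(\sum_j n_j)}$. Neither is established in the paper nor easily read off from what it recalls. The only degree bound available is $K=(\ell d\max_i n_i)^{d\max_i n_i^2}$ from \cref{prp:effective mumford}, which depends on $\ell$ and concerns the degree at which some highest weight vector is nonvanishing for a \emph{given} $\vec p$ — it is not a bound on the arithmetic complexity of facets of $\Delta(\mathcal X)$. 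For the full polytope $\Delta(\PP(V))$ one can read a permuted-$0/1$ normal structure from the Klyachko--Ressayre inequalities, but importing that machinery (and extending it to orbit closures and to the offsets) is exactly the heavy input the paper is careful \emph{not} to use.

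The paper's proof takes a different, self-contained route that avoids the facet structure of $\Delta(\mathcal X)$ entirely. By the shifting trick (\cref{lem:geometric shift}), the question becomes one about the moment map of $W=\Sym^\ell(V)\otimes V_{\vec\lambda^*}$ evaluated on $Z=X^{\otimes\ell}\otimes U\cdot v_{\vec\lambda^*}$. If $\vec p\notin\Delta(\mathcal X)$ then $Z$ lies in the null cone, so Hilbert--Mumford supplies a destabilizing one-parameter subgroup, which shows $0\notin\conv(S)$, where $S$ is the set of weights of $W$ on which $Z$ is supported. The diagonal of $\mu_W([Z])$ is a convex combination of those weights, so its $\ell_2$-norm is at least $\mathrm{dist}(0,\conv(S))$, and the latter is lower-bounded by a Farkas/Cramer argument (\cref{lem:lb_distance_0}) in terms of the bit-complexity of the weights of $W$ — which is trivial to control, since each weight entry lies in $[-\ell,\ell\max_j n_j]$ (\cref{lem:gap_constant_lb}). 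In effect the paper trades ``facets of the moment polytope'' (whose arithmetic is hard to control) for ``weights of a representation'' (whose arithmetic is immediate). If you could actually supply proofs of (a) and (b), your approach would go through and would in fact give a sharper $\ell$-dependence than the paper's $\exp(-O((n_1+\cdots+n_d)\log(\ell\max_j n_j)))$; as written, however, the crux is asserted rather than proved.
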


An analogous result for the full moment polytope $\Delta(\PP(V))$ was proven in \cite{burgisser2017membership}. We believe that the inverse exponential bound in the above theorem cannot be improved to an inverse polynomial bound. Therefore developing scaling algorithms with runtime dependence $\poly(\log 1/\eps)$ is of paramount importance.

Before describing our algorithm and high level intuition for its analysis, let us describe the algorithm and analysis for a rather special case of matrix scaling, which turns out to very enlightening.

\subsection{Simple example: matrix scaling}\label{sec:matrix_scaling}

The matrix scaling problem (Problem \ref{it:opt} in \cref{subsec:moment polytopes}) provides us with a template for what is to come, and understanding
the evolution of a particular algorithm for this problem will give us intuition on how to solve the
more general tensor scaling problem, and how invariant theory naturally appears.

If one wants to scale a given $n \times n$ matrix $A$ to a doubly stochastic matrix (that is, one whose rows and
columns each sum to $1$), a natural algorithm (first proposed in \cite{Sink}) arises from the fact that the group is a Cartesian product. We can alternately use scalings of the form $(R, I) \in \text{T}(n) \times \text{T}(n)$ to normalize the row sums
of $A$ and scalings of the form $(I, C) \in \text{T}(n) \times \text{T}(n)$ to normalize the column sums of $A$.

To this end, set $R(A)$ to be a diagonal matrix having $R(A)_{i,i}$
to be the inverse of the sum of the elements of the $i^{\text{th}}$ row of $A$, and define $C(A)$ in a similar
way for the columns of $A$. The algorithm can be described as follows:
repeatedly (for a polynomial number of iterations) apply the following steps:
\begin{itemize}
	\item Normalize the rows of $A$. That is, $A \leftarrow R(A) \cdot A$
	\item Normalize the columns of $A$. That is, $A \leftarrow A \cdot C(A)$.
\end{itemize}
If, throughout this process, matrix $A$ never gets sufficiently close to a doubly stochastic matrix
(in $\ell_2$ distance), then we will conclude that $A$ cannot be scaled to doubly stochastic; otherwise we
can conclude that $A$ can be scaled to doubly stochastic. The process also gives us a way to \emph{obtain} the scalings that approach doubly stochastic - while there are multiple algorithms for the decision problem (which turns out to be the bipartite perfect matching problem), not all help find the scalings!

The analysis of this algorithm (from \cite{LSW}; also see \cite{GurYianilos} for a different potential function) is extremely simple, and follows a three step approach based on
a progress measure $P(A) = \text{Perm}(A)$.\\

The following two properties of the potential function will be useful for us.
\begin{enumerate}
\item If $A$ is scalable to doubly stochastic, then $P(A) > 0$.
\item $P(A) \le 1$ if $A$ row or column normalized.
\end{enumerate}
The three step approach then is the following:
\begin{enumerate}
\item{} [\textbf{Lower bound}]: Initially $P(A) > 2^{-\poly(n)}$ (wlog we assume $A$ is row normalized) \footnote{There is some dependency on the bit complexity of the input that we are ignoring.}.
\item{} [\textbf{Progress per step}]: If $A$ is row or column normalized and sufficiently far from being doubly stochastic, then normalizing $A$ increases $P(A)$. One can explicitly bound the increase using a robust version of the AM-GM inequality.
\item{} [\textbf{Upper bound}]: $P(A)$ is bounded by $1$ if $A$ is row or column normalized.
\end{enumerate}

This three-step analysis shows that the scaling algorithm is able to solve the doubly stochastic scaling
problem in polynomial time. The difficult part of the analysis is coming up with
a potential function satisfying the properties above. This is the role played by invariant theory
later. A source of good potential functions will turn out to be {\em highest weight vectors},
which are (informally speaking) ``eigenvectors'' of the action of certain subgroups of the main group action.
Note that the permanent is an eigenvector of the action of
$\text{T}(n) \times \text{T}(n)$ since $\text{Perm}(RXC)$ equals $\left(\prod_i R_{i,i} \cdot \prod_j C_{j,j} \right) \cdot \text{Perm}(X)$ for $(R,C) \in \text{T}(n) \times \text{T}(n)$.

If we want to solve the more general scaling problem, where we are given
a prescribed value for the row and column sums, say as an non-negative integer vector
$(r,c) = (r_1, \ldots r_n, c_1, \ldots, c_n)$, the same natural algorithm can be applied. The only change
one needs to make in the algorithm above is that we will now normalize the rows of $A$ to have sums
$(r_1, \ldots, r_n)$ and the columns to have sum $(c_1, \ldots, c_n)$. The analysis is also quite similar: one can choose the
potential function, for example, to be the permanent of matrix $B$ obtained from $A$ by repeating $i^{\text{th}}$ row $r_i$ times and
$j^{\text{th}}$ column $c_j$ times. However, the distinction between the uniform and the non-uniform versions of the problems is much starker in our higher dimensional non-commutative setting, as we will see next.

\subsection{Techniques and proof overview}

Our algorithm and its analysis generalize two recent works \cite{burgisser2017alternating, franks2018operator}, which in turn generalize the analysis of matrix scaling in \cref{sec:matrix_scaling}. The paper \cite{burgisser2017alternating} studies the special case when $\vec p^{(i)}$ is the uniform distribution (over a set of size $n_i$) for all $i$ while the paper \cite{franks2018operator} studies the special case $d=2$. Our algorithm is a natural common generalization of the algorithms in \cite{franks2018operator, burgisser2017alternating} while our analysis generalizes the analysis in \cite{burgisser2017alternating} replacing the use of invariants with highest weight vectors (we will explain what these are later).

Let us develop some intuition for the algorithm. It is usually the case with scaling problems, as we saw with matrix scaling, and more generally in the framework of alternating minimization, that one of the constraints is easy to satisfy by scaling. The same is true for the problem we have at hand. We are given a tensor $X\in\Ten(n_0;n_1,\dots,n_d)$. Suppose we want $\spec\left(\rho_X^{(i)}\right) = \vec p^{(i)}$. With the shorthand $\vec p_\uparrow^{(i)}:=(p_{n_i}^{(i)}, \dots, p_1^{(i)})$, we act on $X$ by $g = \left(I,I,\ldots, \diag\left(\vec p_\uparrow^{(i)}\right)^{1/2} \left(\rho_X^{(i)}\right)^{-1/2} ,\ldots, I\right)$, where the non-trivial element is in the $i^{\text{th}}$ location. This is will satisfy the $i^{\text{th}}$ constraint. Or indeed, one can choose any matrix $R$ s.t. $R R^{\dagger} = \rho_X^{(i)}$ and act on $X$ by $g = \left(I,I,\ldots, \diag\left(\vec p_\uparrow^{(i)}\right)^{1/2} R^{-1} ,\ldots, I\right)$. This will also satisfy the $i^{\text{th}}$ constraint. By choosing each time to fix the index which is ``farthest'' from its target spectrum, we have defined an iterative algorithm (up to the choice of $R$ at each step) that keeps on alternately fixing the constraints. It turns out that this algorithm works (for any choice of $R$ at each step!) when $\vec p^{(i)}$'s are all uniform and converges in a polynomial number of iterations \cite{burgisser2017alternating}.

Interestingly, the choice of $R$ that works for general $\vec p^{(i)}$'s is that of upper triangular matrices!\footnote{This choice works for all $\vec p^{(i)}$'s. We don't know if this choice of upper-triangularity is necessary. There is also a nice interpolation between the case of uniform $\vec p^{(i)}$'s and $\vec p^{(i)}$'s with distinct entries. See \cref{subsec:parabolic scaling}.} This was the choice made in \cite{franks2018operator} as well. This restriction on scaling factors will make the analysis more complicated as we shall soon see. One intuitive reason for the difference between the uniform and the general case is the following: in the general case, we made an arbitrary decision to try to scale $X$ to have marginals $\diag\left(\vec p^{(i)}\right)$ while we could have chosen to scale it to any $\rho^{(i)}$ s.t. $\spec (\rho^{(i)}) = \diag\left(\vec p^{(i)}\right)$. This choice of basis is not present in the uniform case since all bases are the same!

This restriction on scaling factors creates another problem: it disconnects the orbit space (see example below). Thus, we need to initialize the algorithm with a random basis change of the given input, and only then resume the restricted scaling. This idea is used as well in \cite{franks2018operator}. We explain, via an example, why this random basis change (or at least a ``clever'' basis change) is needed at the start of the algorithm. Consider the diagonal unit tensor $X \in \Ten(1;2,2,2)$, where $X_{j,k,\ell} = 1$ iff $j = k = \ell$. It is easy to see that without the initial randomization, the algorithm (which chooses an upper triangular $R$ at each step) would only produce diagonal tensors $Y$ ($Y_{j,k,\ell} \neq 0$ iff $j = k = \ell$). And the marginals of any such tensor are isospectral. On the other hand, the $G$-orbit of~$X$ is dense in $\Ten(1;2,2,2)$ and so $\Delta(X)=\Delta(1;2,2,2)$. In particular, $X$ can be scaled to tensors with non-isospectral marginals.

The algorithm is described as \cref{alg:scaling}. The following is the main theorem regarding the analysis of \cref{alg:scaling} from which \cref{thm:main} follows up to an analysis of the bit complexity of \cref{alg:scaling}.
\begin{Algorithm}[th!]
\textbf{Input}:
$X\in\Ten(n_0;n_1,\dots,n_d)$ with Gaussian integer entries (specified as a list of real and complex parts, each encoded in binary, with bit size $\le b$) and
$\vec p\in P_+(n_1,\dots,n_d)$ with rational entries (specified as a list of numerators and denominators, each encoded in binary, with bit size $\le b$) such that $p^{(i)}_{n_i}>0$~for all $i=1,\dots,d$.
\\[.1ex]

\textbf{Output:}
Either the algorithm correctly identifies that $\vec p \notin \Delta(X)$, or it outputs $g\in G$ such that the marginals of $Y := g \cdot X$ satisfy \cref{eq:final marginals}; in particular the marginals are $\eps$-close to the target spectra~$\vec p$.
\\[.1ex]

\textbf{Algorithm:}\vspace{-.2cm}
\begin{enumerate}
\item \label{it:randomize}
Let $\ell>0$ be the least integer such that $\ell \vec p^{(i)}$ has integer entries~for all $i=1,\dots,d$ (i.e. $\ell$ is the common denominator of all $p^{(i)}_j$).
Let $g = (g^{(1)},\dots,g^{(d)})$ denote the tuple of matrices ($g^{(i)}$ is $n_i \times n_i$) whose entries are chosen independently uniformly at random from $\{1,\dots,M\}$, where
$M := 2 d K$ and $K := (\ell d \textstyle\max_{i=1}^d n_i )^{d \max_{i=1}^d n_i^2}$.
\item\label{it:full rank}
For $i=1,\dots,d$, if the marginal~$\rho_{g \cdot X}^{(i)}$ is singular then output $\vec p\not\in\Delta(X)$ and halt. \\
Otherwise, update $g^{(1)} \leftarrow g^{(1)}/\Norm{g \cdot X}$.
\item\label{it:borel scale}
For $t=1,\dots,T := \bigg\lceil\textstyle\frac{32\ln2} {\eps^2} \left( 3\sum_{i=0}^d \log_2(n_i) + b + d\log_2(M) \right)\bigg\rceil$, repeat the following:
\begin{itemize}
\item Compute $Y := g \cdot X$ and, for $i=1,\dots,d$, the one-body marginals~$\rho_Y^{(i)}$ and the distances~$\eps^{(i)} := \lVert \rho_Y^{(i)} - \diag(\vec p_\uparrow^{(i)}) \rVert_{\tr}$.
\item Select an index $i\in\{1,\dots,d\}$ for which~$\eps^{(i)}$ is largest. If~$\eps^{(i)}\leq\eps$, output $g$ and halt.
\item Compute the Cholesky decomposition\footnotemark $\rho_Y^{(i)} = R^{(i)} (R^{(i)})^\dagger$, where $R^{(i)}$ is an upper-triangular matrix.
Update $g^{(i)} \leftarrow \diag(\vec p_\uparrow^{(i)})^{1/2} (R^{(i)})^{-1} g^{(i)}$.
\end{itemize}
\item\label{it:give up} Output $\vec p\not\in\Delta(X)$.
\end{enumerate}
\caption{Scaling algorithm for \cref{thm:scaling}}\label{alg:scaling}
\end{Algorithm}
\footnotetext{Usually the Cholesky decomposition refers to $\rho = L L^{\dagger}$ where $L$ is lower triangular. However using such a decomposition for a different matrix, one can easily obtain $\rho = R R^{\dagger}$, where $R$ is upper triangular. Simply set $R = P L P$ where $P$ is a permutation matrix which swaps $i$ and $n-i$ and $P \rho P = L L^{\dagger}$, where $L$ is lower triangular.}

\begin{restatable}[Tensor scaling]{thm}{mainthm}\label{thm:scaling}
Let $X \in \Ten(n_0;n_1,\dots,n_d)$ be a (nonzero) tensor whose entries are Gaussian integers of bitsize no more than~$b$.
Also, let $\vec p \in P_+(n_1,\dots,n_d)$ with rational entries of bitsize no more than~$b$ such that $p^{(i)}_{n_i}>0$~for all $i=1,\dots,d$.
Finally, let $\eps>0$.

Then, with probability at least 1/2, \cref{alg:scaling} either correctly identifies that $\vec p\not\in\Delta(X)$, or it outputs $g\in G$ such that the marginals of $Y=g\cdot X$ are $\eps$-close to~$\vec p$. In fact, we have
\begin{align}\label{eq:final marginals}
  \lVert \rho_Y^{(i)} - \diag(\vec p_\uparrow^{(i)}) \rVert_{\tr} \leq \eps \quad \text{ for } i=1,\dots,d
\end{align}
in the latter case, where $\lVert A\rVert_{\tr} = \tr[\sqrt{A^\dagger A}]$ is the \emph{trace norm}.
\end{restatable}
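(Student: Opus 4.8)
The plan is to combine the invariant-theoretic description of moment polytopes from \cref{sec:git} with the three-step potential argument sketched for matrix scaling in \cref{sec:matrix_scaling}, using the absolute value of a suitable highest weight vector as the potential (in place of the permanent), and adding two ingredients forced by the restriction to upper-triangular scalings: a Schwartz--Zippel analysis of the random basis change in Step~1, and a direct check that the two ``$\vec p\notin\Delta(X)$'' outputs are correct except on a bad event of probability $\le1/2$. Two bookkeeping facts are used throughout. After Step~2 one has $\lVert g\cdot X\rVert=1$, and this persists through every iteration of Step~3, because each update resets $\rho^{(i)}_{g\cdot X}$ to $\diag(\vec p_\uparrow^{(i)})$, which has trace $\sum_j p^{(i)}_j=1$; and each Step-3 update acts on the $i$-th tensor factor by the upper-triangular matrix $\diag(\vec p_\uparrow^{(i)})^{1/2}(R^{(i)})^{-1}$ (identities elsewhere), i.e.\ by an element of a product of Borel subgroups of $G=\GL(n_1)\times\dots\times\GL(n_d)$.

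\emph{The potential and its bounds.} Assume first $\vec p\in\Delta(X)$ and that $\ell\vec p$ is integral. The invariant-theoretic characterization of $\Delta(\overline{G\cdot X})$ (\cref{sec:git}) yields an integer $s\ge1$, bounded in terms of $\ell$, $\max_i n_i$ and $\sum_i n_i^2=\dim G$ (this is the origin of the quantity $K$), together with a highest weight vector $f$ of weight $s\ell\vec p$ under $G$ (in whichever dominance/reversal convention makes a product of Borels of upper-triangular matrices act on $f$ by the product of its diagonal entries raised to the corresponding weight coordinates), with $f$ not vanishing on $\overline{G\cdot X}$; moreover $f$ may be taken with integer coefficients and of total degree $D:=s\ell$ in the tensor entries (forced since the centres of the $\GL(n_i)$ act on degree-$D$ forms through the $(-D)$-th power). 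Normalizing $f$ in an inner product on degree-$D$ forms for which $\lvert f(v)\rvert\le\lVert v\rVert^D$ (e.g.\ the Fischer/Bombieri inner product), take as potential $\Phi_t:=\lvert f(g_t\cdot X)\rvert^{1/D}$, where $g_t$ is the scaling entering iteration $t$ of Step~3. \textbf{Upper bound:} $\Phi_t\le\lVert g_t\cdot X\rVert^D=1$ for all $t\ge1$. \textbf{Progress per step:} in iteration $t$ the chosen index $i$ has $\eps^{(i)}=\lVert\rho^{(i)}_{g_t\cdot X}-\diag(\vec p_\uparrow^{(i)})\rVert_{\tr}>\eps$, and since $f$ is a highest weight vector the update multiplies $f$ by the character value of the triangular update matrix $b$, so $\Phi_{t+1}/\Phi_t=\prod_j\lvert b_{jj}\rvert^{\lambda^{(i)}_j/D}$ is a geometric mean of the diagonal entries of $b$ weighted by the probability vector $\vec p^{(i)}$ (in some order). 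By construction of the Cholesky factor $R^{(i)}$, the $b_{jj}$ are determined by $\rho^{(i)}_{g_t\cdot X}$ through ratios of its nested principal minors against those of $\diag(\vec p_\uparrow^{(i)})$, and a robust AM--GM / Pinsker-type inequality---exactly the progress lemmas of \cite{burgisser2017alternating,franks2018operator}---gives $\Phi_{t+1}\ge e^{\eps^2/32}\,\Phi_t$; the point of the upper-triangular (parabolic) scaling is precisely that it makes this \emph{weighted} statement available, rather than only the unweighted $\log\det$ bound of the uniform case.

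\emph{Randomization, lower bound, and correctness of the reject branches.} As a polynomial in the entries of $g=(g^{(1)},\dots,g^{(d)})$, $f(g\cdot X)$ has degree $O(D)$ and is not identically zero (it is nonzero at $g=I$ since $f(X)\ne0$), as are each $\det g^{(i)}$ and each $\det\rho^{(i)}_{g\cdot X}$; choosing $M=2dK$ large enough relative to these degrees, Schwartz--Zippel and a union bound give that with probability $\ge1/2$ the random $g$ of Step~1 makes all of them nonzero simultaneously. On this event $f(g\cdot X)$ is a nonzero Gaussian integer, hence $\lvert f(g\cdot X)\rvert\ge1$; after $g^{(1)}\leftarrow g^{(1)}/\lVert g\cdot X\rVert$ the value of $f$ is divided by $\lVert g\cdot X\rVert^D\le\bigl((n_0\cdots n_d)^{3/2}M^d2^b\bigr)^D$, so $\Phi_1\ge(\poly(n)\,M^d\,2^b)^{-1}$, i.e.\ $-\log_2\Phi_1=O(3\sum_i\log_2 n_i+b+d\log_2 M)$, exactly the quantity against which $T$ is calibrated. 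Now, if $\vec p\notin\Delta(X)$ every run is correct, since the algorithm emits a scaling only when \cref{eq:final marginals} holds. If $\vec p\in\Delta(X)$, then $p^{(i)}_{n_i}>0$ forces some $Y\in\overline{G\cdot X}$ to have full-rank $i$-th flattening, and since that rank is constant on $G\cdot X$ and can only drop on the closure, $\rho^{(i)}_X$ is nonsingular for every $i$; hence on the Schwartz--Zippel event Step~2 does not reject and $\Phi_1\ge(\poly(n)M^d2^b)^{-1}$. Were the loop to run all $T$ iterations with $\eps^{(i)}>\eps$ throughout, we would get $1\ge\Phi_{T+1}\ge\Phi_1 e^{T\eps^2/32}$, contradicting this lower bound together with $T=\lceil\tfrac{32\ln2}{\eps^2}(3\sum_i\log_2 n_i+b+d\log_2 M)\rceil$; so the algorithm halts inside Step~3, outputting $g$ with $\eps^{(i)}\le\eps$ for the largest---hence all---$i$, which is \cref{eq:final marginals}, and $\lVert\rho^{(i)}_Y-\diag(\vec p_\uparrow^{(i)})\rVert_{\tr}\le\eps$ implies $\eps$-closeness by Mirsky's inequality $\lVert\spec(\rho^{(i)}_Y)-\vec p^{(i)}\rVert_1\le\lVert\rho^{(i)}_Y-\diag(\vec p_\uparrow^{(i)})\rVert_{\tr}$. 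Thus Step~4 is reached only when $\vec p\notin\Delta(X)$, except on the $\le1/2$-probability bad event, which completes the argument.

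\emph{Main obstacle.} The crux is the lower-bound step: producing a highest weight vector that is nonzero on $\overline{G\cdot X}$ exactly when $\vec p\in\Delta(X)$, of controlled degree $D=s\ell$ with an explicit bound on $s$, and of controlled coefficient size---the invariant-theoretic package of \cref{sec:git} that plays the role of ``$\mathrm{Perm}(A)>0$ for scalable $A$'' in \cref{sec:matrix_scaling}, and the place where highest weight vectors (rather than honest invariants) are genuinely needed. Secondary but essential is the per-step progress bound in the parabolic regime: one must verify the $\vec p^{(i)}$-weighted robust AM--GM above, and also account for the fact that restricting to upper-triangular scalings disconnects the reachable set---which is exactly why the random basis change in Step~1 is indispensable and must be analyzed jointly with the lower bound on $\Phi_1$.
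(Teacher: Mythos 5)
Your overall architecture matches the paper's: potential function $\lvert P(g\cdot X)\rvert^{1/k}$ with $P$ a highest weight vector of weight $\vec\lambda^*$, Schwartz--Zippel for the random starting point, unit norm preserved through Step~3, progress per step via a $\vec p$-weighted Pinsker-type inequality (\cref{prp:progress} / \cref{lem:generalized lsw}), and the three-term contradiction against $T$. But there is one genuine gap.

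\textbf{The evaluation bound for the highest weight vector.} Your upper bound rests on ``normalizing $f$ in an inner product\dots\ for which $\lvert f(v)\rvert\le\lVert v\rVert^D$ (e.g.\ the Fischer/Bombieri inner product),'' and your lower bound rests on ``$f(g\cdot X)$ is a nonzero Gaussian integer, hence $\lvert f(g\cdot X)\rvert\ge1$.'' These are incompatible for the same $f$: dividing an integer-coefficient polynomial by its Bombieri norm destroys the integrality used in the lower bound, and an integer-coefficient highest weight vector need not have Bombieri norm $\le1$. What the argument actually needs is a single $f$ with integer coefficients whose supremum on the unit ball is at most $C^D$ with $C=\poly(n_1,\dots,n_d)$; then $\Phi_t\le C$ and the additive overhead in the $\log$-budget is $\sum_i\log_2 n_i$, exactly the extra summand appearing in the paper's $T$. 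Establishing this bound is precisely \cref{prp:hwv eval}: the paper constructs the highest weight vectors explicitly as contractions with products of the $\Det_{\vec\mu^*,\pi}$, observes they have integer coefficients, and then proves $\lvert P(X)\rvert\le(n_1\cdots n_d)^k\lVert X\rVert^k$ by counting the summands. The paper flags this as new and emphasizes that naive linear-algebra bounds would give only a doubly exponential constant, which would wreck the analysis. Your proposal does not supply this step, and the Bombieri-norm device does not replace it.

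\textbf{A smaller slip.} You justify that $g\mapsto f(g\cdot X)$ is not the zero polynomial by ``it is nonzero at $g=I$ since $f(X)\ne0$.'' That is not guaranteed: \cref{prp:effective mumford} gives a highest weight vector nonvanishing somewhere on $\overline{G\cdot X}$, not at $X$ itself. The correct justification (as in \cref{cor:random is generic}) is that if $g\mapsto P(g\cdot X)$ were identically zero, $P$ would vanish on the Zariski-dense set $G\cdot X$ and hence on all of $\overline{G\cdot X}$, a contradiction. Your Schwartz--Zippel union bound over $\det g^{(i)}$ and $\det\rho^{(i)}_{g\cdot X}$ is also not needed: once $P(g\cdot X)\neq0$ one gets $\vec p\in\Delta(g\cdot X)$, and rank invariance of marginals along the orbit (and its closure) already forces nonsingular $\rho^{(i)}_{g\cdot X}$ and nonsingular $g$, as the paper argues without enlarging $M$.
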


\begin{rem} Note that the condition
$$
\lVert \rho_Y^{(i)} - \diag\left(\vec p_\uparrow^{(i)}\right) \rVert_{\tr} \leq \eps
$$
implies that
$$
\lVert \spec\left(\rho_Y^{(i)} \right) - \diag\left(\vec p^{(i)}\right) \rVert_{1} \leq \eps
$$
See \cref{lem:Bhatia}.
\end{rem}

To analyze our algorithm and prove \cref{thm:scaling}, we follow a three-step argument similar to the analysis in \cref{sec:matrix_scaling}. This has been used to great effect for operator scaling and tensor scaling in~\cite{gurvits2004classical,garg2016deterministic, burgisser2017alternating, franks2018operator} after identifying the appropriate potential function.

As we described in \cref{sec:matrix_scaling}, the appropriate potential functions to choose are the ones which are eigenvectors of an appropriate group action. In the matrix scaling case, we were acting by $\text{T}(n) \times \text{T}(n)$ and hence we chose the potential function to be permanent which is an eigenvector for this group action. In our algorithm, we are acting by the group corresponding to (direct products of) upper triangular matrices (this is known as the Borel subgroup). So for us, the right potential functions to consider are functions which are eigenvectors for the action of (tuples of) upper triangular matrices. One such class of functions are the so called highest weight vectors from representation theory\footnote{Here we restrict our attention to the action on polynomials because that is what we need to describe the intuition for the analysis of the algorithm. But the discussion of weight vectors applies to arbitrary (rational) representations of the group $G$, see \cref{subsec:hwtheory}.}, which we come to next.

What are highest weight vectors? We have the action of $G$ on $V = \Ten(n_0;n_1,\dots,n_d)$. Let us consider the space of degree $k$ polynomial functions on $V$, denoted by $\CC[V]_{k}$. The action of $G$ on $V$ induces an action of $G$ on $\CC[V]_{k}$ given by $(g \cdot P)(v) = P\left( g^{-1} \cdot v\right)$. Consider a tuple of vectors $\vec \lambda = \left( \vec \lambda^{(1)},\ldots, \vec \lambda^{(d)}\right)$, $\vec \lambda^{(i)} \in \ZZ^{n_i}$. Then we say that $P$ is a highest weight vector with weight $\vec \lambda$ if
$$
g \cdot P = \prod_{i=1}^d \prod_{j=1}^{n_i} \left(g^{(i)}_{j,j}\right)^{\lambda^{(i)}_j} P
$$
for all $g = \left( g^{(1)},\ldots, g^{(d)}\right)$ such that $g^{(i)}$ is an upper triangular matrix for each $i$. Note that this necessitates $\sum_{j=1}^{n_i} \lambda^{(i)}_j = -k$ for each $i$. This also necessitates (not trivial to see why) that for all~$i$, $\lambda^{(i)}_1 \ge \cdots \ge \lambda^{(i)}_{n_i}$.

The following two properties of highest weight vectors will be crucial for our analysis:

\begin{enumerate}
\item{} [\cite{ness1984stratification}, see \cref{thm:mumford}]: Let $\vec p \in P_+(n_1,\dots,n_d)$ be a rational vector. Then $\vec p \in \Delta(X)$ iff there exists an integer $k \ge 1$ s.t. $\vec \lambda = k \vec p$ has integer entries and there exists a highest weight vector $P$ with weight $\vec \lambda^*$ s.t. $(g \cdot P) \left( X\right) \neq 0$ for some $g \in G$. Here $\vec \lambda^* = \left( \left( -\lambda^{(1)}_{n_1},\ldots, -\lambda^{(1)}_{1}\right), \ldots, \left( -\lambda^{(d)}_{n_d},\ldots, -\lambda^{(d)}_{1}\right)\right)$. This extends a fact used in previous papers: the uniform vector is is $\Delta(X)$ iff some invariant polynomial does not vanish on $X$.
\item{} [\cref{prp:hwv eval}] The space of highest weight vectors with weight $\vec \lambda^*$ is spanned by polynomials $P$ with integer coefficients that satisfy the following bound
\begin{align}
|P(X)| \le (n_1 \cdots n_d)^k \norm{X}^k \label{eqnintro:1}
\end{align}
This extends an identical bound in past papers from invariant polynomials to highest weight vectors.
\end{enumerate}

 We use classical constructions of highest weight vectors \cite{procesi2007lie,burgisser2013explicit,burgisser2017alternating} to derive the second fact. These constructions are only semi-explicit (e.g. it is not clear if they can be evaluated efficiently), however they suffice for us because we only need a bound on their evaluations for their use as a potential function. We note that such bounds on their evaluations haven't been observed before in the invariant theory literature (except in \cite{burgisser2017alternating} for the special case of invariants) whereas for us they are extremely crucial! We also emphasize that it is crucial for us that the bound is singly exponential in $k$. Some naive strategies of using solution sizes for linear systems only yield bounds that are doubly exponential in $k$.

The potential function we use is $\Phi(g) = |P(g \cdot X)|^{1/k}$. Here $P$ is some highest weight vector of degree $k$ (for some $k$), integer coefficients and weight $\lambda^*$ that satisfies $(g \cdot P) \left( X\right) \neq 0$ as well as \cref{eqnintro:1}. Such a $P$ exists by the discussion above. Using these properties, a three-step analysis, similar to the one in \cref{sec:matrix_scaling}, follows the following outline.

\begin{enumerate}
\item{} [\textbf{Lower bound}]: Since $(g \cdot P) \left( X\right) \neq 0$ for some $g$, therefore for a random choice of $g$, $|P(g \cdot X)| \neq 0$. Furthermore, since we choose $g$ to have integer coefficients, $|P(g \cdot X)| \ge 1$. After the normalization in Step (2), we get $\Phi(g) \ge 1/f(n_0,\ldots, n_d, d, b, M)$. It is not hard to see that $f(n_0,\ldots, n_d, b, M) \le 2^b M^d \left(n_0 n_1 \cdots n_d \right)^2$.
\item{} [\textbf{Progress per step}]: $\Phi(g)$ increases at each step. Furthermore, if the current spectrum are ``far" from the target spectrum, then one can explicitly bound the increase. Here the highest weight vector property of $P$ as well as Pinsker's inequality from information theory play an important role.
\item{} [\textbf{Upper bound}]: $\Phi(g) \le n_1 \cdots n_d$ always. This follows from \cref{eqnintro:1} and the fact that we maintain the unit norm property of $g  \cdot X$ after the normalization in Step (2) of the algorithm.
\end{enumerate}

These three steps imply that in a polynomial number of iterations, one should get close to the target spectrum. A complete analysis is presented in \cref{thm:scaling:proof}. Note that to ensure that we only use a polynomial amount of random bits for the initial randomization, we need the highest weight vectors to have degree at most exponential in the input parameters. This is achieved by relying on Derksen's degree bounds \cite{derksen2001polynomial} (see \cref{prp:effective mumford}).

\subsection{Additional discussion}

We would like to point out two important distinctions between the analysis for matrix scaling in \cref{sec:matrix_scaling} and our analysis here. First is that, as we have seen, there is a major difference between the uniform and the non-uniform versions of our problem - while this was not the case for matrix scaling. This phenomenon is general and is a distinction between commutative and non-commutative group actions. It has to do with the fact that all irreducible representations of commutative groups are one-dimensional, whereas for non-commutative groups they are not. Secondly, in the matrix scaling analysis, the upper bound was easy to obtain as well. Whereas for us, the upper bound step is the hardest and requires the use of deep results in representation theory. The upper bound steps were the cause of main difficulty in the papers \cite{garg2016deterministic, burgisser2017alternating, franks2018operator} as well \footnote{In some of the papers, lower bound is the hard step, due to the use of a dual kind of potential function.} and this is one key point of distinction between commutative and non-commutative group actions.

We believe that our framework of using the highest weight vectors as potential functions for the analysis of analytic algorithms is the right way to approach moment polytope problems - even beyond the cases that we consider in this paper.

 The approach taken in \cite{franks2018operator} (for the case of $d=2$) is one of reducing the non-uniform version of the problem to the uniform version, which was solved in \cite{garg2016deterministic} for the case of $d=2$ (the reduction in \cite{garg2017algorithmic} is a simple special case of the reduction in \cite{franks2018operator}). The reduction is complicated and a bit ad hoc. We generalize this reduction to our setting ($d > 2$) in \cref{sec:reductions}, and providing a somewhat more principled view of the reduction along the way. However, it still seems rather specialized and mysterious compared to the general reduction in geometric invariant theory from the ``non-uniform'' to the ``uniform'' case (also known as the \emph{shifting trick}, see \cref{subsec:shifting trick}).

We also note that applying the results of \cite{burgisser2017alternating} to the reduction in \cref{sec:reductions} in a black-box manner does not yield our main theorem (\cref{thm:main}) - the number of iterations would be exponential in the \emph{bit-complexity} of $\vec p$, and we would even require an exponential number of bits for the randomization step! To remedy these issues with the reduction in \cref{sec:reductions} one must delve in to the relationship between the reduction and the invariant polynomials. We will see, by fairly involved calculation, that invariant polynomials evaluated on the reduction will result in the same construction of highest weight vectors anyway. This teaches us two lessons:
\begin{enumerate}
\item Highest weight vectors are the only suitable potential functions in sight. Though it may have other conceptual benefits, the reduction in \cref{sec:reductions} is no better than the shifting trick for the purpose of obtaining potential functions!
\item We had to look at the construction of highest weight vectors in \cref{subsec:hwv} \emph{before} calculating them from the reduction - the calculation might not have been so easy a priori! Again, the classical construction of highest weight vectors saves the day.
\end{enumerate}

It is interesting to discuss some of the salient features and possible variations of \cref{alg:scaling} (we expand on these points in the main text):

\begin{itemize}
\item \textbf{Iterations and randomness.} The algorithm terminates after at most \linebreak $T=\poly\left(\max_{i=0}^d n_i,d,b, 1/\eps\right)$ iterations and uses $\log_2(M)=\poly(\max_{i=0}^d n_i,d, b)$ bits of randomness.
For fixed or even inverse polynomial $\eps>0$, this is polynomial in the input size. In fact, this is better than the number of iterations in \cite{franks2018operator}: there, the number of iterations also depended on $\left(p^{(i)}_{n_i} \right)^{-1}$. 
\item \textbf{Bit complexity}: To get an algorithm with truly polynomial run time, one needs to truncate the group elements $g^{(i)}$'s up to polynomial number of bits after the decimal point. We provide an explanation on why this doesn't affect the performance of the algorithm in \cref{subsec:bit_complexity}.
\item \textbf{Degenerate spectra.} If $\lambda^{(i)}$ is degenerate, i.e. $\lambda^{(i)}_j =\lambda^{(i)}_k$ for some $j \neq k$, then we may replace the Cholesky decomposition in step~\ref{it:borel scale} by into two block upper triangular matrices, where the block sizes are the degeneracies - the set of such matrices is a so-called \emph{parabolic subgroup} of the general linear group (\cref{subsec:parabolic scaling}).
Moreover, the random matrix~$g$ need only be generic up to action of the parabolic subgroup.
In particular, when scaling to uniform spectra then no randomization is required and we can use Hermitian square roots, so \cref{alg:scaling} reduces to the uniform tensor scaling algorithm of~\cite{burgisser2017alternating}.
\item \textbf{Singular spectra.} As written, \cref{it:borel scale} of \cref{alg:scaling} fails if the spectra are singular, that is if for some $i$ we have $r_i:=\rk\diag (\vec p^{(i)}) < n_i$. However, in this case, one may first pass to a smaller tensor tensor $X_+$ obtained by restricting the $i^{th}$ index to the last $r_i$ coordinates. We'll show in \cref{sec:singular_spectra} that $X_+$ is scalable by upper triangulars to marginals $\diag(p^{(i)}_{r_i}, \dots, p^{(i)}_{1})$, $i \in [d]$ if and only if $X$ is scalable by upper triangulars to $\diag(0,\ldots,0, p^{(i)}_{r_i}, \dots, p^{(i)}_{1})$, $i \in [d]$.

\end{itemize}

\noindent We discuss extensions of \cref{alg:scaling} for more general varieties with ``good'' parametrizations in \cref{sec:extensions}.

\subsection{Conclusions and open problems}

We provide an efficient weak membership oracle for moment polytopes corresponding to a natural class of group actions on tensors. This generalizes recent works on operator and tensor scaling and also yields efficient algorithms for promise versions of the one-body quantum marginal problem and the asymptotic support of Kronecker coefficients. Our work leaves open several interesting questions some of which we state below.

\begin{itemize}
\item Improve the dependency on error $\eps$ in the running time of \cref{alg:scaling} to $\poly(\log(1/\eps))$. As discussed, this will immediately yield polynomial time algorithms for the one-body quantum marginal problem. This is open even for the uniform version of the problem. Here the notion of geodesic convexity of certain ``capacity" optimization problems should play a key role (e.g. see \cite{AGLOW18}).
\item Extend the weak membership oracle we develop to moment polytopes of other group actions, using Kirwan's gradient flow~\cite{kirwan1984cohomology} as proposed in~\cite{walter2014multipartite}.
The quantitative tools developed in this paper naturally extend to this setup and will elaborate on this in forthcoming work.
\item Develop separation oracles for moment polytopes. A related question is: can we optimize over moment
polytopes? This will have algorithmic applications on the problem of computing quantum functionals, as described
in~\cite{christandl2017universal}. In this paper, Strassen's support functionals are generalized
to quantum functionals, which are defined by convex optimization over
the entanglement polytope. Thus, separation oracles for moment polytopes could lead to efficient algorithms for
computing quantum functionals, which are important for comparing tensor powers (see~\cite{S86, S88}).
\item Find natural instances of combinatorial optimization problems which can be encoded as moment polytopes. Some examples can be found in \cite{garg2017algorithmic}.
\end{itemize}

\subsection{Roadmap of the paper}

In Section~\ref{sec:git}, we present results from geometric invariant theory and explain how they can be made quantitative.
We use this in Section~\ref{sec:analysis}, where we analyze the proposed tensor scaling algorithm. In

Section~\ref{sec:reductions}, we explain how the reduction in \cite{franks2018operator} can be naturally understood in the framework of this paper. In Section~\ref{sec:distance},
we show a lower bound on the distance to the moment polytope of any rational point not contained in it. This lower
bound depends only on the description of the rational point and the dimension of our tensor space $V$, and it allows us to solve membership problems by using the tensor scaling algorithm.
In Section~\ref{sec:extensions}, we extend our algorithm to general varieties and degenerate spectra. In Appendix~\ref{sec:polytope} we
discuss the Borel polytope, providing an alternate proof that it is in fact a rational polytope.

\section*{Acknowledgements}
We would like to thank Shalev Ben David, Robin Kothari, Anand Natarajan, Frank Verstraete, John Watrous, John Wright, and Jeroen Zuiddam for interesting discussions.

PB is partially supported by DFG grant BU 1371 2-2.
CF is supported in part by Simons Foundation award 332622.
MW acknowledges financial support by the NWO through Veni grant no.~680-47-459.
AW is partially supported by NSF grant CCF-1412958.

\section{Geometric invariant theory}\label{sec:git}

In this section, we present some results from geometric invariant theory that will feature centrally in the analysis of our algorithm in \cref{sec:analysis}.
While stated for tensors, all results in this section can easily be extended to arbitrary rational representations of connected complex reductive algebraic groups. Most of the results are well known and only some are new. All previously known results will be cited with references and we will make sure to highlight the new components. \cref{subsec:hwtheory} discusses basics of the highest weight theory. \cref{subsec:shifting trick} gives a formal definition of the moment map and also discusses the so called ``shifting trick" that reduces the problem of membership in moment polytopes to a null cone problem. \cref{subsec:effective mumford} considers degree bounds for highest weight vectors which are used to bound the initial randomness used in \cref{alg:scaling}. \cref{subsec:hwv} recalls a classical construction of highest weight vectors and uses this construction to prove bounds on their evaluations (crucial in the analysis of \cref{alg:scaling}). \cref{subsec:borel} develops a necessary and sufficient condition for Borel scalability (i.e., scaling using tuples of upper-triangular matrices).

As before, let $G=\GL(n_1)\times\dots\times\GL(n_d)$, $K=U(n_1)\times\dots\times U(n_d)$, $V=\Ten(n_0;n_1,\dots,n_d)=\CC^{n_0}\ot\CC^{n_1}\ot\dots\ot\CC^{n_d}$, and $\mathcal X\subseteq\PP(V)$ a $G$-stable irreducible projective subvariety (e.g., an orbit closure).

\subsection{Highest weight theory}\label{subsec:hwtheory}
We first recall the representation theory of~$\GL(n)$ (see, e.g., \cite{fulton2013representation} for an introduction).
Let~$W$ be a finite-dimensional $\GL(n)$-representation, equipped with a $U(n)$-invariant inner product.
Let $T(n) \subseteq \GL(n)$ denote the subgroup consisting of invertible diagonal matrices, called the \emph{maximal torus} of~$\GL(n)$.
Since $T(n)$ is commutative, its action can be jointly diagonalized.
Thus, any finite-dimensional $\GL(n)$-representation~$W$ can be written as a direct sum of so-called \emph{weight spaces}, $W = \bigoplus_\omega W_{(\vec\omega)}$, where $T(n)$ acts on any vector $w\in W_{(\vec\omega)}$ as $T \cdot w = \chi_{\vec\omega}(T) w$ for all $T\in T(n)$.
Here, $\omega$ is an integer vector and $\chi_{\vec\omega}(T) = \prod_{j=1}^n T_{j,j}^{\omega_j}$.
We write $\Omega(W)$ for the set of all weights that occur in~$W$.
Now let $B(n) \subseteq \GL(n)$ denote the \emph{Borel subgroup} of invertible upper-triangular matrices, which contains~$T(n)$.
A \emph{highest weight vector} is a vector~$w\in W$ that is an eigenvector of the $B(n)$-action.
Let~$\vec\lambda$ denote its weight, which is now called \emph{highest weight}.
Necessarily, $\lambda_1\geq\dots\geq\lambda_n$, i.e., $\vec\lambda$ is ordered non-increasingly, and we have that $R \cdot w = \chi_{\vec\lambda}(R) w$ for all $R\in B(n)$, where $\chi_{\vec\lambda}(R) = \prod_{j=1}^n R_{j,j}^{\lambda_j}$.
We denote by $\HWV_{\vec\lambda}(W)$ the space of highest weight vectors in~$W$ with highest weight~$\vec\lambda$.
The \emph{irreducible} representations of $\GL(n)$ contain a unique (up to scalar multiple) highest weight vector and are characterized by its highest weight.
We write $V_{\vec\lambda}$ for the irreducible representation (which we always equip with a $K$-invariant inner product, denoted~$\braket{-,-}$) and $v_{\vec\lambda}$ for a highest weight vector (which we choose to be of unit norm).
Thus, $\HWV_{\vec\mu}(V_{\vec\lambda}) = \CC v_{\vec\lambda}$ if $\vec\lambda=\vec\mu$, and zero otherwise.
It is known that $\partial_{t=0} \braket{v_{\vec\lambda}, \exp(At) \cdot v_{\vec\lambda}} = \tr[A \diag(\vec\lambda)]$ for all $n\times n$-matrices~$A$.
It can also be verified that $\GL(n) \cdot [v_{\vec\lambda}] = U(n) \cdot [v_{\vec\lambda}]$ (in particular, this $G$-orbit is closed). The dual of an irreducible representation is also irreducible with highest weight $\vec\lambda^*=- \vec \lambda_\uparrow$, so that $V_{\vec\lambda}^* \cong V_{\vec\lambda^*}$.

\bigskip

We now consider the group $G=\GL(n_1)\times\dots\times\GL(n_d)$.
All the preceding notions generalize immediately by considering tuples or tensor products of the relevant objects, and we shall use similar notation.
Thus, the \emph{maximal torus} is $T=T(n_1)\times\dots\times T(n_d)$, the \emph{Borel subgroup} is $B=B(n_1)\times\dots\times B(n_d)$.
\emph{Highest weight vectors} satisfy
\begin{align}\label{eq:hwv}
  R \cdot w = \chi_{\vec\lambda}(R) w,
\quad\text{where}\quad
\chi_{\vec\lambda}(R) = \prod_{i=1}^d \prod_{j=1}^{n_i} (R^{(i)}_{j,j})^{\lambda^{(i)}_j}
\end{align}
for all tuples $R=(R^{(1)},\dots,R^{(d)})\in B$, and \emph{weight vectors} satisfy the same relation restricted to $T\subseteq B$.
\emph{Weights} and \emph{highest weight} are now tuples $\vec\lambda=(\vec\lambda^{(1)},\dots,\vec\lambda^{(d)})$ of integer vectors as before.
The sums~$\sum_{j=1}^{n_i} \lambda^{(i)}_j$ are necessarily equal for $i=1,\dots,d$, and we will denote them by~$\lvert\vec\lambda\rvert$.
Thus, $\vec\lambda/\lvert\vec\lambda\rvert \in P_+(n_1,\dots,n_d)$.
We denote by $\HWV_{\vec\lambda}(W)$ the space of highest weight vectors in a $G$-representation~$W$.
The irreducible representations of~$G$ are again labeled by their highest weight and denoted by $V_{\vec\lambda}$.
Indeed, they are simply given by tensor products of the corresponding $\GL(n_i)$-representations, i.e., $V_{\vec\lambda} = V_{\vec\lambda^{(1)}}\ot\dots\ot V_{\vec\lambda^{(d)}}$; the same holds for their highest weight vectors.
For every tuple of matrices $A=(A^{(1)},\dots,A^{(d)})$ ($A^{(i)}$ is $n_i\times n_i$), we have that
\begin{align}\label{eq:moment map coadjoint orbit}
  \partial_{t=0} \braket{v_{\vec\lambda}, \exp(At) \cdot v_{\vec\lambda}} = \sum_{i=1}^n \tr[A^{(i)} \diag(\vec\lambda^{(i)})],
\end{align}
where $\exp(At) := \exp(A^{(1)}t)\ot\dots\ot\exp(A^{(d)}t)$.
As before, we write $\vec\lambda^*=((\vec\lambda^{(1)})^*,\dots,(\vec\lambda^{(1)})^*)$, so that $V_{\vec\lambda}^*\cong V_{\vec\lambda^*}$.

\subsection{Moment map and shifting trick}\label{subsec:shifting trick}

Let $W$ be a~$G$-representation. The associated \emph{moment map} is defined as
\begin{align}\label{eq:general moment map}
  \mu_W\colon \PP(W) \to  \Herm(n_1)\times\dots\times\Herm(n_d), \quad
  [w] \mapsto \mu_W([w]) = (\mu_W^{(1)},\dots,\mu_W^{(d)})
\end{align}
by the property that
\begin{align*}
  \sum_{i=1}^d \tr[\mu^{(i)}_W([w]) A^{(i)}] = \partial_{t=0} \frac{\braket{w, \exp(At) \cdot w}}{\braket{w, w}}
\end{align*}
for all tuples of matrices $A=(A^{(1)},\dots,A^{(d)})$ ($A^{(i)}$ is $n_i\times n_i$).
Note that $\mu_W$ is $K$-equivariant, i.e., $\mu_W^{(i)}([(U^{(1)},\dots,U^{(d)}) \cdot w]) = U^{(i)} \mu_W^{(i)}([w]) (U^{(i)})^\dagger$ for all unitary $n_i\times n_i$-marices $U^{(i)}$ and $i=1,\dots,d$.
Given a $G$-stable irreducible projective subvariety~$\mathcal Z\subseteq\PP(W)$, we define the corresponding \emph{moment} or \emph{Kirwan polytope} by
\begin{align}\label{eq:general moment polytope}
  \Delta_W(\mathcal Z) := \{(\spec(\mu^{(1)}_W([w])),\dots,\spec(\mu^{(d)}_W([w]))) : [w]\in \mathcal Z\} \subseteq P_+(n_1,\dots,n_d).
\end{align}
It is known that $\Delta_W(\mathcal Z)$ is always a rational convex polytope~\cite{ness1984stratification,kirwan1984cohomology,kirwan1984convexity,brion1987image} (and we will see below why this is the case).

In \cref{sec:intro}, we had already seen an example of a moment map and a moment polytope.
Indeed, \cref{eq:moment map,eq:moment polytope} are precisely the special cases of \cref{eq:general moment map,eq:general moment polytope} when $W=V=\Ten(n_0;n_1,\dots,n_d)$, as follows readily from \cref{eq:marginal}.
Thus, it is natural to think of the moment map as a generalization of the notion of a `marginal'!
For a second example, note that \cref{eq:moment map coadjoint orbit} and the fact that
$\GL(n) \cdot [v_{\vec\lambda}] = U(n) \cdot [v_{\vec\lambda}]$ imply that
$\mu^{(i)}_{V_{\vec\lambda}}([v_{\vec\lambda}]) = \diag(\vec\lambda^{(i)})$, so
$\Delta_{V_{\vec\lambda}}(G \cdot [v_{\vec\lambda}]) = \{\vec\lambda\}$ is a single point.

These two examples can be combined in a simple but useful way, known as the
`shifting trick'.

\begin{lem}[Shifting trick, geometric part~\cite{brion1987image}]\label{lem:geometric shift}
  Let $\vec p\in P_+(n_1,\dots,n_d)$ and $\ell>0$ an integer such that $\vec\lambda:=\ell\vec p$ is integral.
  Let $\mathcal X\subseteq\PP(V)$ be a $G$-stable irreducible subvariety.
  Consider the representation $W:=\Sym^\ell(V) \ot V_{\vec\lambda^*}$.
  Then,
  \begin{align}\label{eq:geometric shift}
    \mu^{(i)}_W([Y^{\ot \ell} \ot v_{\vec\lambda^*}])
    = \ell \mu^{(i)}_V([Y]) + \diag((\vec\lambda^*)^{(i)})
    = \ell \rho^{(i)}_Y + \diag((\vec\lambda^*)^{(i)})
  \end{align}
  for all $[Y]\in\mathcal X$, $\lVert Y\rVert=1$, and $i=1,\dots,d$.
  In particular, $\vec p\in\Delta(\mathcal X)$ if and only if there exists $[Y]\in\mathcal X$ such that $\mu_W([Y^{\ot \ell} \ot v_{\vec\lambda^*}]) = 0$.
\end{lem}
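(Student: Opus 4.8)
The plan is to verify the moment map identity \eqref{eq:geometric shift} directly from the defining property of $\mu_W$, and then read off the membership equivalence from it. First I would recall that for a tensor product representation $W = W_1 \ot W_2$ of $G$, if $w_1 \in W_1$ and $w_2 \in W_2$ are nonzero, then the moment map of the decomposable vector $w_1 \ot w_2$ decomposes additively: differentiating $t \mapsto \braket{w_1 \ot w_2, \exp(At)\cdot(w_1\ot w_2)}/\braket{w_1\ot w_2, w_1\ot w_2}$ at $t=0$ and using the product rule gives $\mu_W([w_1\ot w_2]) = \mu_{W_1}([w_1]) + \mu_{W_2}([w_2])$, since the inner product and the action both factor across the tensor product and each factor has unit-normalized contribution. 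Applying this with $W_1 = \Sym^\ell(V)$, $w_1 = Y^{\ot\ell}$ and $W_2 = V_{\vec\lambda^*}$, $w_2 = v_{\vec\lambda^*}$, the problem reduces to computing the two summands separately.

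For the first summand, I would use that $\Sym^\ell(V)$ sits inside $V^{\ot\ell}$ and that the $G$-action on $Y^{\ot\ell}$ is the diagonal one; differentiating $\braket{Y^{\ot\ell}, \exp(At)\cdot Y^{\ot\ell}}/\braket{Y^{\ot\ell},Y^{\ot\ell}} = \left(\braket{Y,\exp(At)\cdot Y}/\braket{Y,Y}\right)^\ell$ at $t=0$ (using $\norm{Y}=1$) yields exactly $\ell$ times the derivative for $Y$ alone, so $\mu^{(i)}_{\Sym^\ell(V)}([Y^{\ot\ell}]) = \ell\,\mu^{(i)}_V([Y]) = \ell\,\rho_Y^{(i)}$, the last equality being the identification of the moment map on $V$ with the one-body marginal noted after \eqref{eq:moment polytope} (via \eqref{eq:marginal}). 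For the second summand, the excerpt already records in \cref{subsec:shifting trick} that $\mu^{(i)}_{V_{\vec\lambda}}([v_{\vec\lambda}]) = \diag(\vec\lambda^{(i)})$, which follows from \eqref{eq:moment map coadjoint orbit} together with $\GL(n)\cdot[v_{\vec\lambda}] = U(n)\cdot[v_{\vec\lambda}]$; substituting $\vec\lambda^*$ for $\vec\lambda$ gives $\mu^{(i)}_{V_{\vec\lambda^*}}([v_{\vec\lambda^*}]) = \diag((\vec\lambda^*)^{(i)})$. Adding the two contributions gives \eqref{eq:geometric shift}.

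For the ``in particular'' clause, note that $\diag((\vec\lambda^*)^{(i)}) = -\diag(\vec\lambda^{(i)}_\uparrow)$, i.e.\ it is the negative of $\diag(\vec\lambda^{(i)})$ with entries in increasing order, which has the same spectrum as $-\ell\,\diag(\vec p^{(i)})$. So $\mu_W([Y^{\ot\ell}\ot v_{\vec\lambda^*}]) = 0$ is equivalent to $\ell\,\rho_Y^{(i)} = \diag(\vec\lambda^{(i)}_\uparrow)$ for all $i$, i.e.\ $\rho_Y^{(i)}$ is, up to reordering the standard basis, $\diag(\vec p^{(i)})$; in particular $\spec(\rho_Y^{(i)}) = \vec p^{(i)}$. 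Thus existence of $[Y]\in\mathcal X$ with $\mu_W([Y^{\ot\ell}\ot v_{\vec\lambda^*}])=0$ implies $\vec p\in\Delta(\mathcal X)$. Conversely, if $\vec p\in\Delta(\mathcal X)$, pick $[Y]\in\mathcal X$ with $\spec(\rho_Y^{(i)})=\vec p^{(i)}$; applying a suitable $K$-element (which does not leave $\mathcal X$, as $\mathcal X$ is $G$-stable) and using $K$-equivariance of $\mu_V$, we may arrange $\rho_Y^{(i)} = \tfrac1\ell\diag(\vec\lambda^{(i)}_\uparrow)$ for all $i$, and then \eqref{eq:geometric shift} gives $\mu_W([Y^{\ot\ell}\ot v_{\vec\lambda^*}]) = 0$.

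The only genuinely delicate point is the additivity of the moment map over tensor products together with the power rule for $\Sym^\ell$ — these are routine once one is careful that the unit-norm normalization $\norm{Y}=1$ is what makes the $\ell$-th power differentiate cleanly, and that decomposable vectors are exactly where additivity holds (the statement is only about such vectors, so this is not a restriction). Everything else is bookkeeping with the already-established dictionary between moment maps, marginals, and diagonal matrices of highest weights.
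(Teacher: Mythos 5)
Your proof is correct and takes essentially the same route as the paper's: additivity of the moment map on decomposable vectors via the product rule, the power rule for $\Sym^\ell$, the identification $\mu_V^{(i)}=\rho_Y^{(i)}$, and the computation $\mu_{V_{\vec\lambda^*}}^{(i)}([v_{\vec\lambda^*}])=\diag((\vec\lambda^*)^{(i)})$ from \eqref{eq:moment map coadjoint orbit}, followed by a $K$-rotation to diagonalize in the converse direction. One small inaccuracy in your closing remark: the normalization $\lVert Y\rVert=1$ is not what makes the $\ell$-th power differentiate cleanly (the ratio $\braket{Y,\exp(At)Y}/\braket{Y,Y}$ equals $1$ at $t=0$ regardless of $\lVert Y\rVert$, so the chain rule goes through unconditionally); rather, $\lVert Y\rVert=1$ is what makes $\mu_V^{(i)}([Y])$ coincide with the unit-trace marginal $\rho_Y^{(i)}$ as written in \eqref{eq:geometric shift}.
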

\begin{proof}
  By definition, $\vec p\in\Delta(\mathcal X)$ means that there exists $Y\in\mathcal X$ such that $\spec(\rho^{(i)}_Y) = \vec p^{(i)} = \vec\lambda^{(i)}/\ell$ for all $i=1,\dots,d$.
  By applying a suitable element in $U(n_i)$, we may in fact assume that
  \begin{align*}
    \rho^{(i)}_Y = \diag(\vec p^{(i)}_\uparrow) = -\frac1\ell\diag((\vec\lambda^*)^{(i)}).
  \end{align*}
  for all $i=1,\dots,d$.
  But note that
  \begin{align*}
    \mu^{(i)}_W([Y^{\ot \ell} \ot v_{\vec\lambda^*}])
  = \ell \mu^{(i)}_V([Y]) + \mu^{(i)}_{V_{\vec\lambda^*}}([v_{\vec\lambda^*}])
  = \ell \rho^{(i)}_Y + \diag((\vec\lambda^*)^{(i)})
  \end{align*}
  (the first equation follows easily from the product rule and the second from the two examples that we just discussed), so the two conditions are indeed equivalent.
\end{proof}

\begin{rem}
\Cref{lem:geometric shift} can also be stated in the following way:
Consider the $G$-stable irreducible subvariety $\mathcal Z := \{ [X^{\ot\ell} \ot g \cdot v_{\vec\lambda^*}] : [X] \in\mathcal X, g\in G \}$ of $\PP(\Sym^\ell(V)\ot V_{\vec\lambda^*})$.
Then $\vec p\in\Delta(\mathcal X)$ if and only if $0\in\Delta(\mathcal Z)$.
\end{rem}

\Cref{lem:geometric shift} shows that membership in the moment polytope~$\Delta(\mathcal X)$ can always be reduced to zeros of the moment map -- special `uniform marginals' -- provided we are willing to work in a larger space.

The `shifting trick' has an invariant-theoretic counterpart.
To state it, consider~$\CC[W]=\bigoplus_k \CC[W]_{(k)}$, the algebra of polynomials on~$W$, graded by degree.
Then $G$ acts on polynomials~$P\in\CC[W]_{(k)}$ by~$(g \cdot P)(X) := P(g^{-1} X)$, so each $\CC[W]_{(k)}$
is also a rational representation of~$G$. Thus, this allows us to speak of polynomials that are highest
weight vectors and, in particular, of $G$-invariant polynomials.
Then we have the following result (see, e.g. \cite{brion1987image}):

\begin{lem}[Shifting trick, invariant-theoretic part]\label{lem:invariant-theoretic shift}
  Let $\vec\lambda$ be a highest weight and $\ell=\lvert\vec\lambda\rvert$.
  Let $Q$ be a $G$-invariant polynomial in $\CC[\Sym^\ell(V) \ot V_{\vec\lambda^*}]^G_{(m)}$.
  Then $P(X) := Q(X^{\ot\ell} \ot v_{\vec\lambda^*})$ is a highest weight vector in $\HWV_{m\vec\lambda^*}(\CC[V]_{(\ell m)})$.
  Conversely, every highest weight vector arises in this way.
\end{lem}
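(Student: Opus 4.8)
The plan is to handle the two directions asymmetrically: the forward implication is essentially a one-line computation, while the converse is an exercise in highest-weight theory via the Cauchy decomposition.

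\textbf{Forward direction.} First note $P\in\CC[V]_{(\ell m)}$, since $X\mapsto X^{\ot\ell}$ is a polynomial map of degree $\ell$ into $\Sym^\ell(V)$ and $Q$ is homogeneous of degree $m$. To see that $P$ is a highest weight vector of weight $m\vec\lambda^*$, fix $R=(R^{(1)},\dots,R^{(d)})\in B$ and compute, using that $G$ acts diagonally on $W=\Sym^\ell(V)\ot V_{\vec\lambda^*}$ and that $(R^{-1}X)^{\ot\ell}=R^{-1}\cdot X^{\ot\ell}$ inside $\Sym^\ell(V)$,
\[
  (R\cdot P)(X) = Q\bigl((R^{-1}X)^{\ot\ell}\ot v_{\vec\lambda^*}\bigr)
  = Q\bigl(R^{-1}\cdot(X^{\ot\ell}\ot R\cdot v_{\vec\lambda^*})\bigr)
  = (R\cdot Q)\bigl(X^{\ot\ell}\ot R\cdot v_{\vec\lambda^*}\bigr).
\]
Now $R\cdot Q=Q$ by $G$-invariance, $R\cdot v_{\vec\lambda^*}=\chi_{\vec\lambda^*}(R)v_{\vec\lambda^*}$ by \cref{eq:hwv}, and $Q$ is homogeneous of degree $m$, so the right-hand side equals $\chi_{\vec\lambda^*}(R)^m\,P(X)=\chi_{m\vec\lambda^*}(R)\,P(X)$. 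Restricting the same identity to $T\subseteq B$ shows $P$ is also a weight vector, hence $P\in\HWV_{m\vec\lambda^*}(\CC[V]_{(\ell m)})$.

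\textbf{Converse.} I would show that the linear map $\Theta\colon\CC[W]^G_{(m)}\to\HWV_{m\vec\lambda^*}(\CC[V]_{(\ell m)})$, $Q\mapsto P$, is surjective. Writing $A^*:=\Sym^\ell(V^*)$ and $B^*:=V_{\vec\lambda}$ so that $W^*\cong A^*\ot B^*$, the Cauchy formula gives the $G$-decomposition $\CC[W]_{(m)}=\Sym^m(A^*\ot B^*)=\bigoplus_{\nu\vdash m}S^\nu(A^*)\ot S^\nu(B^*)$. The point is that for $w=X^{\ot\ell}\ot v_{\vec\lambda^*}$ the element $w^{\ot m}\in\Sym^m(W)$ is the image of $(X^{\ot\ell})^{\ot m}\ot v_{\vec\lambda^*}^{\ot m}$ under the inclusion of the Cartan summand $\Sym^m A\ot\Sym^m B\hookrightarrow\Sym^m(A\ot B)$ (the $\nu=(m)$ component), so $\Theta(Q)$ depends only on the component of $Q$ in $\Sym^m(\Sym^\ell V^*)\ot\Sym^m(V_{\vec\lambda})$. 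Identifying the $G$-invariants of this component with $\operatorname{Hom}_G\bigl(\Sym^m(V_{\vec\lambda^*}),\Sym^m(\Sym^\ell V^*)\bigr)$, and noting that $v_{\vec\lambda^*}^{\ot m}$ spans the highest-weight line of the Cartan component $V_{m\vec\lambda^*}\subseteq\Sym^m(V_{\vec\lambda^*})$, one checks that under this identification $\Theta$ becomes $\phi\mapsto\operatorname{mult}\bigl(\phi(v_{m\vec\lambda^*})\bigr)$, where $\operatorname{mult}\colon\Sym^m(\Sym^\ell V^*)\twoheadrightarrow\Sym^{\ell m}(V^*)=\CC[V]_{(\ell m)}$ is polynomial multiplication.

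Surjectivity of this last map is then formal. Given $P\in\HWV_{m\vec\lambda^*}(\CC[V]_{(\ell m)})$, regard it as a $G$-map $\beta\colon V_{m\vec\lambda^*}\to\CC[V]_{(\ell m)}$ with $\beta(v_{m\vec\lambda^*})=P$; lift $\beta$ along the surjective $G$-map $\operatorname{mult}$ (possible by complete reducibility) to $\tilde\beta\colon V_{m\vec\lambda^*}\to\Sym^m(\Sym^\ell V^*)$; and precompose with the $G$-equivariant projection $\Sym^m(V_{\vec\lambda^*})\twoheadrightarrow V_{m\vec\lambda^*}$ onto the Cartan summand to get $\phi$ with $\operatorname{mult}\circ\phi|_{V_{m\vec\lambda^*}}=\beta$, whence $\Theta(\phi)=P$; transporting $\phi$ back to an invariant $Q\in\CC[W]^G_{(m)}$ via the $\nu=(m)$ inclusion completes the argument. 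I expect the main obstacle to be precisely this converse: making rigorous that evaluation at $X^{\ot\ell}\ot v_{\vec\lambda^*}$ couples only to the Cartan component of the Cauchy decomposition and that $v_{\vec\lambda^*}^{\ot m}$ is the highest weight vector of $V_{m\vec\lambda^*}$; once this is in place the rest reduces to splitting short exact sequences of $G$-representations. Alternatively one may simply invoke \cite{brion1987image} for the converse.
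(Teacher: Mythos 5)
Your proof is correct on both counts. For context: the paper does not actually prove \cref{lem:invariant-theoretic shift} — it only cites~\cite{brion1987image} — so there is no ``paper proof'' to compare against; what you have done is supply a self-contained argument. The forward direction is exactly the expected one-line manipulation using $G$-invariance of $Q$, the eigenvector property $R\cdot v_{\vec\lambda^*}=\chi_{\vec\lambda^*}(R)v_{\vec\lambda^*}$ from \cref{eq:hwv}, and the degree-$m$ homogeneity of $Q$ to absorb the scalar. The converse is where you add genuine content. Your key observations — that for a rank-one $w=a\ot b$ the element $w^{\ot m}=a^{\ot m}\ot b^{\ot m}$ sits entirely inside the Cartan ($\nu=(m)$) summand of the Cauchy decomposition, that the Cauchy decomposition is self-dual component-by-component so only the Cartan component of $Q$ contributes to $Q(w)$, and that $v_{\vec\lambda^*}^{\ot m}$ spans the highest-weight line of $V_{m\vec\lambda^*}\subseteq\Sym^m(V_{\vec\lambda^*})$ — together reduce $\Theta$ to $\phi\mapsto\operatorname{mult}\bigl(\phi(v_{m\vec\lambda^*})\bigr)$ on $\operatorname{Hom}_G\bigl(\Sym^m(V_{\vec\lambda^*}),\Sym^m(\Sym^\ell V^*)\bigr)$, after which surjectivity is indeed purely formal from surjectivity of $\operatorname{mult}\colon\Sym^m(\Sym^\ell V^*)\twoheadrightarrow\Sym^{\ell m}(V^*)$ and complete reducibility. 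The one point worth making explicit, if you want this fully watertight, is that the pairing $\Sym^m(W^*)\times\Sym^m(W)\to\CC$ is diagonal with respect to the Cauchy decompositions of both sides (i.e.\ $S^\nu(A^*)\ot S^\nu(B^*)$ pairs trivially with $S^\mu(A)\ot S^\mu(B)$ for $\nu\neq\mu$); this follows from $S^\nu(A^*)\cong S^\nu(A)^*$ and $\GL(A)\times\GL(B)$-equivariance of the pairing, and is what justifies that evaluation at $w^{\ot m}$ truly kills all non-Cartan components of $Q$. With that remark in place, your argument is complete and arguably more informative than the bare citation in the paper.
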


The significance of \cref{lem:invariant-theoretic shift} is that it allows us to reduce questions about highest weight vectors to polynomials with the zero highest weight, i.e., invariant polynomials.

\subsection{An effective version of Mumford's theorem}\label{subsec:effective mumford}

In this section, we prove degree bounds for the nonvanishing of highest weight vectors (\cref{prp:effective mumford}) using Derksen's degree bounds in the invariant setting \cite{derksen2001polynomial} and the shifting trick introduced in \cref{subsec:shifting trick}. These degree bounds will prove useful to upper bound the initial amount of randomness needed in \cref{alg:scaling}. Bounding the amount of randomness is an easy consequence of the Schwartz-Zippel lemma and the degree bounds and this is done in \cref{cor:random is generic}.

The following theorem shows that points in the moment polytope are characterized by the nonvanishing of highest weight vectors in the algebra of polynomials -- as perhaps already suggested by the analogy beween \cref{lem:geometric shift,lem:invariant-theoretic shift}.
Since we will be interested in the moment polytope of the subvariety $\mathcal X\subseteq \PP(V)$, we state the theorem in this situation (however, it generalizes verbatim to general $G$-representations).

\begin{thm}[\cite{ness1984stratification}]\label{thm:mumford}
Let $\vec p\in P_+(n_1,\dots,n_d)$ with rational entries.
Then, $\vec p\in\Delta(\mathcal X)$ if and only if there exists a positive integer $k>0$ such that $\vec\lambda=k\vec p$ is integral and there exists a highest-weight vector $P\in\HWV_{\vec\lambda^*}(\CC[V]_{(k)})$ such that $P(X)\neq0$ for some $X\in\mathcal X$.
\end{thm}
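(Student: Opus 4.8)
The plan is to deduce \cref{thm:mumford} from the two halves of the shifting trick (\cref{lem:geometric shift,lem:invariant-theoretic shift}) together with the classical Hilbert--Mumford/Kempf--Ness theory, which characterizes the nonvanishing of invariant polynomials on an orbit closure in terms of the null cone and, on the geometric side, in terms of zeros of the moment map. Concretely, write $\vec\lambda := k\vec p$, pick $\ell := \lvert\vec\lambda\rvert = k$ (since $\vec p\in P_+$ we have $\lvert\vec p\rvert=1$), and pass to the representation $W := \Sym^\ell(V)\ot V_{\vec\lambda^*}$. By \cref{lem:geometric shift}, the statement ``$\vec p\in\Delta(\mathcal X)$'' is equivalent to ``there exists $[Y]\in\mathcal X$ with $\mu_W([Y^{\ot\ell}\ot v_{\vec\lambda^*}]) = 0$''. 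By \cref{lem:invariant-theoretic shift}, the existence of a highest weight vector $P\in\HWV_{\vec\lambda^*}(\CC[V]_{(k)})$ with $P(X)\neq 0$ for some $X\in\mathcal X$ is equivalent — after also invoking that homogeneity in $X$ and nonvanishing at a single point implies nonvanishing on a dense subset, and that $\mathcal X$ projective irreducible lets us move between $\mathcal X$ and the affine cone — to the existence of a $G$-invariant polynomial $Q$ on $W$ that does not vanish identically on the subvariety $\mathcal Z := \overline{\{[X^{\ot\ell}\ot g\cdot v_{\vec\lambda^*}] : [X]\in\mathcal X,\ g\in G\}}$ (equivalently, on the affine cone over it). So the theorem reduces to the ``$\ell=0$'' / invariant case: \emph{a point lies outside the null cone of $\mathcal Z$ (some invariant is nonzero) iff $\mathcal Z$ contains a point with vanishing moment map}.

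The key steps, in order: (1) set up the shift as above and reduce both sides of the claimed equivalence to statements about $W$ and $\mathcal Z$; (2) invoke the Kempf--Ness theorem — a $G$-orbit in a representation $W$ is closed iff it contains a point minimizing the norm, and such minima are exactly the zeros of the moment map $\mu_W$; combined with the fact that the affine cone over $\mathcal Z$ has a nonvanishing invariant iff it is not contained in the null cone iff some $G$-orbit in it is closed and nonzero, this yields ``$0\in\Delta_W(\mathcal Z)$ iff some invariant $Q$ is nonzero on $\mathcal Z$''; (3) translate back via \cref{lem:geometric shift,lem:invariant-theoretic shift} to get the stated equivalence for $\vec p$ and $P$, being careful that the nonvanishing of $Q$ on $\mathcal Z$ transfers to nonvanishing of $P(X) = Q(X^{\ot\ell}\ot v_{\vec\lambda^*})$ for a suitable $X\in\mathcal X$ (here one uses that $\mathcal Z$ is the closure of the image under $G$ of products $X^{\ot\ell}\ot v_{\vec\lambda^*}$, so if $Q$ is nonzero somewhere on $\mathcal Z$ it is nonzero on this dense subset, and $G$-invariance of $Q$ removes the $g$); (4) handle the direction ``$P(X)\neq 0$ for some $X$ $\Rightarrow$ $\vec p\in\Delta(\mathcal X)$'' symmetrically. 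Most of this is citing \cite{ness1984stratification} (indeed the theorem is attributed there), so the write-up is really an assembly of the shifting trick with Kempf--Ness rather than a new argument.

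The main obstacle, and the place where care is genuinely needed, is the passage between the \emph{projective} variety $\mathcal X\subseteq\PP(V)$ and its affine cone, and correspondingly between ``$P(X)\neq 0$ for some $X\in\mathcal X$'' (a statement about points of the cone lying over $\mathcal X$) and ``the $G$-orbit closure of $X^{\ot\ell}\ot v_{\vec\lambda^*}$ in $W$ avoids the null cone''. One must check that scaling $X$ does not interfere (both $P$ and the null-cone condition are homogeneous in $X$, of matching degrees, so this is fine), that the inclusion of the highest weight orbit $G\cdot v_{\vec\lambda^*}$ — whose closure contains $0$ unless $\vec\lambda^* $ has the right sign pattern — interacts correctly with the moment-map computation in \cref{eq:geometric shift} (this is exactly why the shift uses $V_{\vec\lambda^*}$ and why the constant $\diag((\vec\lambda^*)^{(i)})$ appears), and that irreducibility of $\mathcal X$ is used to guarantee that ``nonzero somewhere'' upgrades to ``nonzero on a dense open set'' so that a \emph{generic} $X\in\mathcal X$ witnesses membership. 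Apart from this bookkeeping, the argument is a direct consequence of the results already assembled in \cref{subsec:shifting trick} together with standard geometric invariant theory.
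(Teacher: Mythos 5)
Your proposal is correct and follows essentially the same route the paper takes: both halves of the shifting trick (\cref{lem:geometric shift,lem:invariant-theoretic shift}) combined with Kempf--Ness theory, relating nonvanishing of a $G$-invariant on $W=\Sym^\ell(V)\ot V_{\vec\lambda^*}$ to the existence of a zero of $\mu_W$ in a $G$-orbit closure. One useful clarification on a point you flag as delicate: the set $\mathcal Z = \{[X^{\ot\ell}\ot g\cdot v_{\vec\lambda^*}] : [X]\in\mathcal X, g\in G\}$ is already closed without taking Zariski closure --- it is the image of the projective variety $\mathcal X\times (G\cdot[v_{\vec\lambda^*}])$ (the second factor being a flag variety) under the Veronese--Segre morphism into $\PP(W)$ --- so every zero of $\mu_W$ on $\mathcal Z$ genuinely is of the form $[X^{\ot\ell}\ot g\cdot v_{\vec\lambda^*}]$, and a unitary move (using $G\cdot[v_{\vec\lambda^*}]=K\cdot[v_{\vec\lambda^*}]$ and $K$-equivariance of $\mu_W$) reduces it to the form $[Y^{\ot\ell}\ot v_{\vec\lambda^*}]$ needed to invoke \cref{lem:geometric shift}. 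The only stylistic difference with the paper: rather than proving \cref{thm:mumford} directly, the paper proves the stronger Borel refinement (\cref{prp:borel,prp:borel converse}) --- it additionally passes from $\inf_G$ to $\inf_B$ via a QR decomposition, which yields the scalability statement $\vec p\in\Delta^B(X)$ that underlies \cref{alg:scaling}; your proof recovers \cref{thm:mumford} but not that refinement, which does not matter for the theorem as stated but does matter for the algorithm.
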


Explicitly, $P\in\HWV_{\vec\lambda^*}(\CC[V]_{(k)})$ means that
\begin{align}\label{eq:hwv poly}
  P(R \cdot X)
= \chi_{\lambda^*}(R^{-1}) P(X)
= \left( \prod_{i=1}^d \prod_{j=1}^{n_i} (R^{(i)}_{j,j})^{\lambda^{(i)}_{n_i+1-j}} \right) P(X)
\end{align}
for all $R=(R^{(1)},\dots,R^{(d)})\in B$.
We give a proof of a refinement of \cref{thm:mumford} in \cref{subsec:borel}.

\Cref{thm:mumford} alone does not appear to give an efficient way of characterizing the moment polytope since it provides no bound on the degree~$k$ nor a recipe for finding a point~$X$ s.t. $P(X) \neq 0$.
In fact, it is known that even deciding the existence of highest-weight vectors is NP-hard~\cite{ikenmeyer2017vanishing}.
Our algorithm does not solve the membership problem via the dual description provided by Mumford's theorem.
Instead, suitable highest weight vectors will feature as potential functions in the analysis of our algorithm!

We will nevertheless require a more effective understanding of \cref{thm:mumford}.
This will be the concern of the remainder of this section.
We start by observing that the algebra of highest weight vectors with highest weight a multiple of~$\vec p$ is finitely generated.
Thus, \cref{thm:mumford} can be made more effective by bounding the degree of the highest weight vectors that need to be considered.
This is achieved by our next result, which relies on recent degree bounds by Derksen~\cite{derksen2001polynomial}:

\begin{prp}[Effective Mumford's Theorem]\label{prp:effective mumford}
Let $\vec p\in P_+(n_1,\dots,n_d)$ and $\ell>0$ an integer such that $\vec\lambda=\ell\vec p$ is integral.
If~$\vec p\in\Delta(\mathcal X)$, then there exists an integer~$m>0$ and a highest weight vector $P\in\HWV_{m\vec\lambda^*}(\CC[V]_{(\ell m)})$ of degree $\ell m\leq K$, where
\begin{align*}
  K := \left(\ell d \textstyle\max_{i=1}^d n_i\right)^{d \max_{i=1}^d n_i^2},
\end{align*}
such that $P(X)\neq0$ for some $X\in\mathcal X$.
\end{prp}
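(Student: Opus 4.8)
The strategy is to combine the invariant-theoretic shifting trick (\cref{lem:invariant-theoretic shift}) with Derksen's effective degree bounds for invariant rings of reductive groups. Recall from \cref{lem:invariant-theoretic shift} that every highest weight vector $P\in\HWV_{m\vec\lambda^*}(\CC[V]_{(\ell m)})$ arises as $P(X)=Q(X^{\ot\ell}\ot v_{\vec\lambda^*})$ for some $G$-invariant $Q\in\CC[\Sym^\ell(V)\ot V_{\vec\lambda^*}]^G_{(m)}$, and conversely. Hence, bounding the degree $m$ of the invariant $Q$ that witnesses non-vanishing on $Y:=X^{\ot\ell}\ot v_{\vec\lambda^*}$ will immediately bound $\ell m$, the degree of $P$ on $V$.

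First I would set up the reduction: by \cref{thm:mumford} (or directly from the premise $\vec p\in\Delta(\mathcal X)$ together with \cref{lem:geometric shift}), there is \emph{some} highest weight vector of weight a multiple of $\vec\lambda^*$ non-vanishing on $\mathcal X$; equivalently, there is some non-constant $G$-invariant in $\CC[W]^G$, $W=\Sym^\ell(V)\ot V_{\vec\lambda^*}$, which does not vanish on the point $Y=X^{\ot\ell}\ot v_{\vec\lambda^*}$ (here we use that the point $Y$ is \emph{not} in the null cone of the $G$-action on $W$ precisely when $\vec p\in\Delta(\mathcal X)$, via the shifting trick). The key point is then that if $Y$ is outside the null cone, some invariant of degree bounded by Derksen's bound $\beta(G,W)$ is non-zero at $Y$; this is exactly the statement that the invariants up to that degree already cut out the null cone. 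Then I would pull this invariant back through \cref{lem:invariant-theoretic shift} to get the desired $P$, with $\ell m \le \ell \cdot \beta(G,W)$.

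The bulk of the work is bookkeeping on the parameters that feed into Derksen's bound. Derksen's theorem bounds the degree needed to generate (or at least to detect the null cone via) the invariant ring of a reductive group $H$ acting on a representation of dimension $D$ by roughly $\poly$ in $D$ raised to a power governed by $\dim H$ — concretely something like $D^{O(\dim H)}$ after accounting for the degrees of the defining equations and the Cayley–Hamilton-type arguments. Here the acting group is $G=\GL(n_1)\times\dots\times\GL(n_d)$, so $\dim G=\sum_i n_i^2\le d\max_i n_i^2$, and the representation $W=\Sym^\ell(V)\ot V_{\vec\lambda^*}$ has dimension polynomial in $\ell$, the $n_i$'s, and $\dim V_{\vec\lambda^*}$. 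I would bound $\dim V_{\vec\lambda^*}$ crudely (e.g. by $(\max_i n_i)^{|\vec\lambda|}=(\max_i n_i)^{\ell}$ using the Weyl dimension formula or the tableaux count), and $\dim\Sym^\ell(V)$ by a binomial coefficient, to see that $\dim W \le (\ell d\max_i n_i)^{O(1)}$ type-bounds collapse — after exponentiating to the power $\dim G$ — into the stated $K=(\ell d\max_{i} n_i)^{d\max_i n_i^2}$. One must be a little careful to absorb all constants and lower-order factors into the exponent cleanly so the final bound has exactly this shape; this is the step I expect to be the main (if routine) obstacle, since Derksen's bound is stated for a single reductive group and one needs the precise dependence on the representation dimension and on $\dim H$, and then to verify that the product-group and the specific $W$ here fit within the claimed closed form.

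Finally, I would note the one subtlety worth double-checking: \cref{lem:invariant-theoretic shift} requires $\ell=|\vec\lambda|$, which holds here since $\vec\lambda=\ell\vec p$ with $\vec p\in P_+(n_1,\dots,n_d)$ so $|\vec p|=1$ and $|\vec\lambda|=\ell$; and that the correspondence between invariants $Q$ of degree $m$ on $W$ and highest weight vectors $P$ of weight $m\vec\lambda^*$ and degree $\ell m$ on $V$ is degree-exact, so no extra factors creep in. With that in place the proof is assembled: $\vec p\in\Delta(\mathcal X)$ $\Rightarrow$ $Y$ outside the null cone of $G\curvearrowright W$ $\Rightarrow$ (Derksen) a non-constant invariant $Q$ of degree $m\le \beta(G,W)$ with $Q(Y)\neq0$ $\Rightarrow$ (shifting trick) the highest weight vector $P(X)=Q(X^{\ot\ell}\ot v_{\vec\lambda^*})$ of degree $\ell m\le K$ with $P(X)\neq 0$.
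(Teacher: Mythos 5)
Your overall plan — apply the shifting trick (\cref{lem:invariant-theoretic shift}) to pass to $W=\Sym^\ell(V)\ot V_{\vec\lambda^*}$, invoke Derksen's degree bound for the null cone of the $G$-action on $W$, and pull back through the degree-exact correspondence — is exactly the route the paper takes. You have also correctly identified the two degree-accounting facts ($|\vec\lambda|=\ell$ and $\deg(P)=\ell\deg(Q)$).

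However, there is a concrete gap in the bookkeeping step, and it is not routine. You propose to feed $\dim W$ into Derksen's bound in the form $(\dim W)^{O(\dim G)}$, but $\dim W$ is not polynomially controlled by $\ell$: by your own estimate, $\dim V_{\vec\lambda^*}$ can be as large as $(\max_i n_i)^\ell$ and $\dim \Sym^\ell(V)=\binom{\dim V+\ell-1}{\ell}$, so $\dim W$ is exponential in $\ell$. Thus $(\dim W)^{\dim G}$ would produce a bound with a factor of $\ell$ sitting \emph{in the exponent}, i.e.\ of the shape $(d\max_i n_i)^{\ell\cdot d\max_i n_i^2}$, which is exponentially larger in $\ell$ than the stated $K=(\ell d\max_i n_i)^{d\max_i n_i^2}$. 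This would be useless downstream: the number of random bits in \cref{it:randomize} of \cref{alg:scaling} is $\log_2 M = O(\log K)$, so a bound exponential in $\ell$ (hence in the bit-complexity of $\vec p$) ruins polynomial-time.

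The actual Derksen bound the paper cites (\cite[Prop.\,2.1]{derksen2001polynomial}) is not parameterized by the dimension of the representation but by the degrees of the matrix coefficients of the $\tilde G=\SL(n_1)\times\dots\times\SL(n_d)$-action on $W$, equivalently by the magnitude of the weights occurring in $W$. For $W=\Sym^\ell(V)\ot V_{\vec\lambda^*}$ the relevant quantity is $A=\sum_{i=1}^d n_i\lambda^{(i)}_1\le\ell\, d\max_i n_i$, which is only \emph{polynomial} in $\ell$; the bound has the shape $H^{t-\tilde m}A^{\tilde m}$ with $H=\max_i n_i$, $\tilde m=\dim\tilde G=\sum_i(n_i^2-1)$, $t=\dim G=\sum_i n_i^2$, giving $m\le (H\ell d)^{dH^2}/\ell$ and hence $\ell m\le K$. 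So the fix is to replace the estimate via $\dim W$ by the weight-based estimate of $A$, and to note that one applies Derksen to the semisimple subgroup $\tilde G$ (which has finite kernel on $W$, as required by the hypothesis of the cited proposition) rather than to $G$ itself.
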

\begin{proof}
By \cref{lem:invariant-theoretic shift}, any highest weight vector can be written as $P(X)=Q\left(X^{\ot \ell}\ot v_{\vec\lambda^*}\right)$, where $Q$ is a $G$-invariant polynomial on $W=\Sym^\ell(V)\ot V_{\vec\lambda^*}$.
Thus, by \cref{thm:mumford}, $\vec p\not\in\Delta(\mathcal X)$ if and only if $Q\left(X^{\ot \ell} \ot v_{\vec\lambda^*}\right) = 0$ for all $X\in\mathcal X$ and all nonconstant homogeneous polynomials $Q\in\CC[W]^G$.
By definition, the latter means that $X^{\ot \ell} \ot v_{\vec\lambda^*}$ is in the null cone of the $G$-action on~$W$ or, equivalently, of the action of the subgroup~$\tilde G=\SL(n_1)\times\dots\times\SL(n_d)$.\footnote{Since the action is scale invariant for each $\GL(n_i)$.}
The latter has finite kernel, so Derksen's degree bound from~\cite[Proposition 2.1]{derksen2001polynomial} is applicable.
It shows that the null cone is already defined by $G$-invariant polynomials of degree
\begin{align*}
  m
\leq H^{t-{\tilde m}} A^{\tilde m}
\leq H^d \left( \textstyle \sum_{i=1}^d n_i \lambda^{(i)}_1 \right)^{\sum_{i=1}^d (n_i^2 - 1)}
\leq (H\ell d)^{d H^2}/\ell,
\end{align*}
where
$H:=\max_{i=1}^d n_i$,
$A:=\sum_{i=1}^d n_i \lambda^{(i)}_1$, 
$\tilde m:=\sum_{i=1}^d (n_i^2 - 1)$, and 
$t:=\sum_{i=1}^d n_i^2$.
Since $\deg(P)=\ell\deg(Q)$, we obtain the desired degree bound.
\end{proof}

In \cref{subsec:hwv}, we will prove bounds on the evaluation of highest weight vectors (\cref{prp:hwv eval}).

\bigskip

\Cref{prp:effective mumford} shows that we only need to consider finitely many highest weight vectors to test
whether $\vec p\in\Delta(\mathcal X)$ (e.g., a basis of the space of all highest weight vectors of degree
$\leq K$).
However, we still need to test if $P(X)\neq0$ for some~$X\in\mathcal X$.
How can we find such an~$X$?
Clearly, $P(X)\neq0$ for some~$X\in\mathcal X$ iff $P(X)\neq0$ for generic $X\in\mathcal X$, so:

\begin{cor}[\cite{brion1987image}]\label{cor:generic is enough}
$\vec p\in\Delta(\mathcal X)$ if and only if $\vec p\in\Delta(X)$ for generic $X\in\mathcal X$.
\end{cor}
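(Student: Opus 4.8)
The plan is to deduce both implications from Mumford's theorem (\cref{thm:mumford}) together with the irreducibility of $\mathcal X$, exploiting the fact that nonvanishing of a fixed homogeneous polynomial is a dense Zariski-open condition on an irreducible variety. Throughout, recall that $\Delta(X)=\Delta(\overline{G\cdot[X]})$ depends only on the point $[X]\in\PP(V)$ (it is unchanged if $X$ is rescaled), so the phrase ``$\vec p\in\Delta(X)$ for generic $X\in\mathcal X$'' is unambiguous: it means that this holds for all $X$ in some dense Zariski-open subset of $\mathcal X$.

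For the ``if'' direction, if $[X]\in\mathcal X$ then $G\cdot[X]\subseteq\mathcal X$ since $\mathcal X$ is $G$-stable, and hence $\overline{G\cdot[X]}\subseteq\mathcal X$ since $\mathcal X$ is closed in $\PP(V)$. Consequently $\Delta(X)=\Delta(\overline{G\cdot[X]})\subseteq\Delta(\mathcal X)$, so if $\vec p\in\Delta(X)$ for generic $X$ — in particular for at least one $X\in\mathcal X$ — then $\vec p\in\Delta(\mathcal X)$.

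For the ``only if'' direction, assume $\vec p\in\Delta(\mathcal X)$. By \cref{thm:mumford}, there is an integer $k>0$ with $\vec\lambda:=k\vec p$ integral and a highest weight vector $P\in\HWV_{\vec\lambda^*}(\CC[V]_{(k)})$ such that $P(X_0)\neq0$ for some $X_0\in\mathcal X$. Since $P$ is homogeneous, the condition $P(X)\neq0$ depends only on $[X]$, so $U:=\{[X]\in\mathcal X : P(X)\neq0\}$ is a nonempty Zariski-open subset of $\mathcal X$, and it is dense because $\mathcal X$ is irreducible. Now fix any $X\in U$. The orbit closure $\mathcal X':=\overline{G\cdot[X]}$ is again a $G$-stable irreducible projective subvariety of $\PP(V)$ (it is the closure of the image of the connected, hence irreducible, group $G$ under the orbit morphism $g\mapsto g\cdot[X]$), and the very same $P$ — with the same degree $k$ and highest weight $\vec\lambda^*$ — satisfies $P(X)\neq0$ with $[X]\in\mathcal X'$. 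Applying the ``if'' direction of \cref{thm:mumford} to $\mathcal X'$ then gives $\vec p\in\Delta(\mathcal X')=\Delta(X)$. Since this holds for every $X\in U$, we conclude that $\vec p\in\Delta(X)$ for generic $X\in\mathcal X$.

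The argument is essentially immediate once \cref{thm:mumford} is in hand, so there is no genuine obstacle; the only points requiring a little care are bookkeeping ones — checking that $P$ is homogeneous so that its nonvanishing descends to a condition on the projective point $[X]$, and checking that orbit closures satisfy the hypotheses ($G$-stable, irreducible, projective) under which \cref{thm:mumford} was stated. Note in particular that neither \cref{prp:effective mumford} (the degree bound) nor the evaluation bounds proved later are needed here: \cref{thm:mumford} alone suffices.
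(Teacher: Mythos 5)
Your proof is correct and follows essentially the same route as the paper, which disposes of the corollary with the one-line remark that ``$P(X)\neq0$ for some~$X\in\mathcal X$ iff $P(X)\neq0$ for generic $X\in\mathcal X$'' after invoking Mumford's theorem; you have simply spelled out the details that remark compresses (that the orbit closure $\overline{G\cdot[X]}$ is again a $G$-stable irreducible projective subvariety to which \cref{thm:mumford} applies, and that the easy containment $\Delta(X)\subseteq\Delta(\mathcal X)$ gives the converse). Your observation that the degree-bounded version (\cref{prp:effective mumford}) is not actually needed here is accurate.
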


In fact, $\Delta(\mathcal X)=\Delta(X)$ for generic $X\in\mathcal X$, since both are rational convex polytopes. This can be seen from the nontrivial fact that the algebra of \emph{all} highest weight vectors is also finitely generated \cite{Gross73}.

To make \cref{cor:generic is enough} effective, we need a way to select generic elements in~$\mathcal X$.
This may be done using the Schwartz-Zippel lemma -- provided we have a suitable parametrization of $\mathcal X$. We will show how to do this in full generality in \cref{subsec:parametrizations}. We will also show there that if $\lambda^{(i)}$ is degenerate i.e. $\lambda^{(i)}_j = \lambda^{(i)}_k$ for $j \neq k$, we can use significantly fewer (sometimes even zero!) random bits to generate $X$. For now, we only carry this out for $\mathcal X = \overline{G \cdot X }$. This is critical for our analysis of \cref{it:randomize} in \cref{alg:scaling}.

\begin{cor}[Good starting points]\label{cor:random is generic}
Let $X \in \Ten(n_0; n_1, \dots, n_d)$. Suppose $\vec p\in\Delta(X)$ and $\ell>0$ such that $\vec\lambda:=\ell\vec p$ is integral. Let $M(n)$ denote the space of $n\times n$ matrices.
Choose all $\max_{i=1}^d n_i^2$ entries of the $d$-tuple of matrices $A=(A^{(1)},\dots,A^{(d)}) \in M(n_1)\times\dots\times M(n_d)$  independently and uniformly at random from~$\{1,\dots,M\}$, with
\begin{align*}
   M = 2 d K,
\quad
  K := \left(\ell d \textstyle\max_{i=1}^d n_i \right)^{d \max_{i=1}^d n_i^2}.
\end{align*}
Set
$$Z = A \cdot X:= \left(A^{(1)}\otimes \dots \otimes A^{(d)}\right) X.$$
Then, there exists a highest weight vector $P\in\HWV_{m\vec\lambda^*}(\CC[V]_{(\ell m)})$ of degree
$0<\ell m\leq K$ such that, with probability at least $1/2$, $P(A \cdot X)\neq0$.

\end{cor}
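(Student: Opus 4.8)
The plan is to combine \cref{prp:effective mumford} with the Schwartz--Zippel lemma applied to a cleverly chosen polynomial in the entries of the random matrix tuple~$A$. First I would invoke \cref{prp:effective mumford}: since $\vec p \in \Delta(X) = \Delta(\overline{G \cdot X})$, there is an integer $m > 0$ and a highest weight vector $P \in \HWV_{m \vec\lambda^*}(\CC[V]_{(\ell m)})$ of degree $k := \ell m \leq K$ such that $P(Y) \neq 0$ for some $Y \in \overline{G \cdot X}$. Because $P$ is a polynomial, the nonvanishing locus is Zariski-open, hence $P$ does not vanish identically on $G \cdot X$ itself, so there exists $g_0 \in G$ with $P(g_0 \cdot X) \neq 0$. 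It thus suffices to show that a random $A$ drawn from $\{1,\dots,M\}^{\text{(entries)}}$ satisfies $P(A \cdot X) \neq 0$ with probability at least $1/2$.

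The key step is to bound the degree of the relevant polynomial. Consider $F(A) := P(A \cdot X)$ as a polynomial in the $\sum_{i=1}^d n_i^2$ matrix entries of $A = (A^{(1)},\dots,A^{(d)})$. Since $(A^{(1)} \otimes \dots \otimes A^{(d)}) X$ is multilinear of degree one in each $A^{(i)}$, and $P$ has total degree $k$ in the coordinates of $V$ but is homogeneous of degree $k/1$ in a way that couples the $d$ tensor factors, a careful accounting shows that $F$ has degree at most $k$ in the entries of each block $A^{(i)}$, and so total degree at most $dk \le dK$ in all entries jointly. (More precisely, since $P$ is a highest weight vector with weight $m\vec\lambda^*$, in particular $|m\vec\lambda^*| = -k$ in each factor, $P$ is homogeneous of degree $k$ separately in the coordinates along each tensor leg; substituting the linear-in-$A^{(i)}$ expression for the $i$-th leg shows $\deg_{A^{(i)}} F \le k$.) Next, $F$ is not the zero polynomial, because $F(g_0) = P(g_0 \cdot X) \neq 0$ for the specific $g_0$ found above (and $G$ is Zariski-dense in $M(n_1) \times \dots \times M(n_d)$, so nonvanishing somewhere on the affine space is equivalent to nonvanishing somewhere on $G$). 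Then by the Schwartz--Zippel lemma, for $A$ with entries drawn uniformly and independently from a set $S$ of size $|S|$,
\begin{align*}
  \Pr[F(A) = 0] \le \frac{\deg F}{|S|} \le \frac{dK}{M} = \frac{dK}{2dK} = \frac12.
\end{align*}
This gives the claimed probability $1/2$, and the same $P$ serves as the witness, establishing the corollary.

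The main obstacle I anticipate is getting the degree bound $\deg_{A^{(i)}} F \le k$ exactly right, since the naive bound $\deg F \le k \cdot (\text{number of legs})$ could be off by a factor depending on how $P$'s degree distributes across the tensor factors. The clean way to handle this is to use the highest weight property: a highest weight vector of weight $m\vec\lambda^*$ transforms under the torus $T$ by the character $\chi_{m\vec\lambda^*}$, which forces $P$ to be homogeneous of degree $|m\vec\lambda^{(i),*}| = k$ in the coordinates indexed by the $i$-th tensor leg, for each $i$ separately. Since the entry-substitution $A \mapsto A\cdot X$ is linear in $A^{(i)}$ on exactly those coordinates, we get $\deg_{A^{(i)}} F \le k$, and summing over $i$ (or using the finer multivariate Schwartz--Zippel bound) yields $\Pr[F = 0] \le \sum_{i=1}^d k/M = dk/M \le dK/M = 1/2$. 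One should double-check the edge case $m = 0$ is excluded (it is, since $\vec p \in \Delta(X)$ with $\vec p$ a genuine point forces a nonconstant witness, matching the stated $0 < \ell m \le K$), and that the entries being drawn from $\{1,\dots,M\}$ rather than a field poses no issue, which it does not since $\CC \supseteq \{1,\dots,M\}$ and Schwartz--Zippel applies over any integral domain.
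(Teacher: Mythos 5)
Your proof is correct and takes essentially the same route as the paper: invoke \cref{prp:effective mumford} to obtain a highest weight vector $P$ of degree $\ell m \le K$ non-vanishing on $\overline{G\cdot X}$, note that $A \mapsto P(A\cdot X)$ is then a nonzero polynomial of degree $\le dK$ on the matrix-tuple space, and conclude by Schwartz--Zippel. Your extra accounting of the per-block degree via the torus-homogeneity of a weight vector is a valid refinement of the degree bound the paper simply asserts.
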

\begin{proof}
Set $\vec\lambda:=\ell\vec p$.
According to \cref{prp:effective mumford}, there exists a highest weight vector $P\in\HWV_{m\vec\lambda^*}(\CC[V]_{(\ell m)})$ of degree~$0<\ell m\leq K$ such that $P(Y)\neq0$ for some $Y\in\mathcal X = \overline{G \cdot X}$. Then
$
  Q(A) := P(A \cdot X)
$
is not equal to the zero polynomial on $M:= M(n_1)\times\dots\times M(n_d)$. If this were not the case, then $P$ would vanish in the Zariski dense (in $\mathcal X$) set $G\cdot X$, contradicting $P(Y) \neq 0$.
Its degree is no larger than $d K$, so the Schwartz-Zippel lemma implies that for our random choice of~$A$, $Q(A) = P(A \cdot X)\neq0$ with probability at least~$1/2$.
\end{proof}

\subsection{Construction of highest weight vectors}\label{subsec:hwv}
In this section, we will recall a classical construction of the space of highest weight vectors $\HWV_{\vec\lambda^*}(\CC[V]_{(k)})$ in the polynomial ring (cf.\ \cite{procesi2007lie,burgisser2013explicit,burgisser2017alternating}) and prove a bound on their evaluation. This bound will be crucial in the analysis of \cref{alg:scaling}.
Here, $\vec\lambda$ is a highest weight with $\lvert\vec\lambda\rvert=k$.

Any polynomial $P\in\CC[V]_{(k)}$ can be written as $P(X) = p(X^{\ot k})$, where $p$ is a linear form in $(V^{\ot k})^*$.
If $P$ is a highest weight vector then we can assume that $p$ itself is a highest weight vector of the same highest weight.
Next, note that
\begin{align}\label{eq:hwv factorize}
  \HWV_{\vec\lambda^*}((V^{\ot k})^*)
= ((\CC^{n_0})^{\ot k})^* \ot \HWV_{\left(\vec\lambda^{(1)}\right)^*}(((\CC^{n_1})^{\ot k})^*) \ot\dots\ot \HWV_{\left(\vec\lambda^{(d)}\right)^*}(((\CC^{n_d})^{\ot k})^*).
\end{align}
The right-hand side spaces are the spaces of highest-weight vectors for a single $\GL(n_i)$, and so are labeled by a single partition~$\left(\vec\lambda^{(i)}\right)^*$.

We thus start by constructing $\HWV_{\vec\mu^*}(((\CC^n)^{\ot k})^*)$ for a single $\GL(n)$.
Here, $\mu_1\geq\dots\geq\mu_n$ are integers with $\sum_{j=1}^d \mu_j=k$.
It is well-known that this space is nonzero only if $\mu_n\geq0$, i.e., $\vec\mu$ is a partition of~$k$ into at most $n$~parts.

First, consider the linear form $\Det_\ell \in ((\CC^n)^{\ot \ell})^*$ given by
\begin{align*}
  \Det_\ell(v_1\ot\dots\ot v_\ell) := \det \left[(v_i)_{n+1-j}\right]_{i,j=1,\dots,\ell}
\end{align*}
where we assume that $\ell\leq n$ (i.e., interpret the vectors as the columns of an $n\times \ell$-matrix and compute the determinant of the bottom-most $\ell\times \ell$-block).
Clearly, $\Det_\ell\neq0$ since it is nonzero on, e.g., the last $\ell$ standard basis vectors of $\CC^n$.
We claim that $\Det_\ell$ is a highest weight vector of weight $(0,\dots,0,-1,\dots,-1)$, with $\ell$ minus ones and $n-\ell$ zeros.
Indeed, if $R$ is an upper-triangular $n\times n$-matrix then
\begin{align*}
&\quad (R \cdot \Det_\ell)(v_1\ot\dots\ot v_\ell)
= \Det_\ell(R^{-1} v_1 \ot \dots \ot R^{-1} v_\ell)
= \det \left[(R^{-1} v_i)_{n+1-j}\right]_{i,j=1,\dots,\ell} \\
&= \det \left[\textstyle\sum_{j'=1}^\ell \left(R^{-1}\right)_{n+1-j,n+1-j'} (v_i)_{n+1-j'} \right]_{i,j=1,\dots,\ell}
= \left( \prod_{j=1}^\ell \left(R_{n+1-j,n+1-j}\right)^{-1} \right) \Det_l(v_1\ot\dots\ot v_\ell)
\end{align*}
where the last step follows from the multiplicativity of the ordinary determinant.

Now recall that the highest weight~$\vec\mu$ is a partition of~$k$ into at most $n$~parts.
Let $\vec\mu'$ denote its transpose, i.e., $\vec\mu'_1$ is the height of the first column of $\vec\mu$, etc., up to the last column, whose height is~$\mu'_{\mu_1}$
Note that each $\mu'_j\leq n$ and $\sum_j \mu'_j = k$.
Thus we can consider the vector
\begin{align*}
  \Det_{\vec\mu^*} := \Det_{\mu'_1} \ot \dots \ot \Det_{\mu'_{\mu_1}} \in ((\CC^n)^{\ot k})^*
\end{align*}
which is thus a nonzero highest weight vector of highest weight~$\vec\mu^*=(-\mu_n,\dots,-\mu_1)$.
We can produce many further highest weight vectors by permuting the~$k$ tensor factors by some~$\pi\in S_k$:
\begin{align*}
  \Det_{\vec\mu^*,\pi}(v_1\ot\dots\ot v_k) := \Det_{\vec\mu^*}(v_{\pi(1)}\ot\dots\ot v_{\pi(k)})
\end{align*}

\begin{lem}\label{lem:schur weyl}
The linear forms $\Det_{\vec\mu^*,\pi}$ for $\pi\in S_k$ span~$\HWV_{\vec\mu^*}(((\CC^n)^{\ot k})^*)$.
\end{lem}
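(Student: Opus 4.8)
The plan is to identify $\HWV_{\vec\mu^*}(((\CC^n)^{\ot k})^*)$ as a concrete summand in the Schur--Weyl decomposition of $(\CC^n)^{\ot k}$ and then argue that the $\Det_{\vec\mu^*,\pi}$ already exhaust that summand. Recall Schur--Weyl duality: as a $\GL(n)\times S_k$-module, $(\CC^n)^{\ot k}\cong\bigoplus_{\vec\nu} V_{\vec\nu}\ot[\vec\nu]$, the sum over partitions $\vec\nu$ of $k$ into at most $n$ parts, where $[\vec\nu]$ is the Specht module. Dualizing, $((\CC^n)^{\ot k})^*\cong\bigoplus_{\vec\nu} V_{\vec\nu}^*\ot[\vec\nu]^*$, and since $V_{\vec\nu}^*\cong V_{\vec\nu^*}$ contains a one-dimensional space of highest weight vectors of weight $\vec\nu^*$, we get $\HWV_{\vec\mu^*}(((\CC^n)^{\ot k})^*)\cong [\vec\mu]^*$ as an $S_k$-module (the only $\vec\nu$ contributing is $\vec\nu=\vec\mu$, because $\vec\nu\mapsto\vec\nu^*$ is injective on partitions of $k$ with $\le n$ parts). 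In particular this space has dimension $\dim[\vec\mu] = \#\{\text{standard Young tableaux of shape }\vec\mu\}$, and it is an \emph{irreducible} $S_k$-representation.

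With that in hand, the argument is short. First I would check that each $\Det_{\vec\mu^*,\pi}$ genuinely lies in $\HWV_{\vec\mu^*}(((\CC^n)^{\ot k})^*)$: this is immediate because $\Det_{\vec\mu^*}$ was shown above to be a highest weight vector of weight $\vec\mu^*$, and permuting tensor factors by $\pi\in S_k$ commutes with the diagonal $\GL(n)$-action, hence preserves both the highest-weight-vector property and the weight. Second, observe that $\Det_{\vec\mu^*}\neq 0$ (it is nonzero on an appropriate tuple of standard basis vectors, as noted for each $\Det_\ell$ factor), so the span $U:=\mathrm{span}\{\Det_{\vec\mu^*,\pi}:\pi\in S_k\}$ is a nonzero subspace of $\HWV_{\vec\mu^*}(((\CC^n)^{\ot k})^*)$. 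Third — and this is the crux — $U$ is manifestly $S_k$-stable, since $\sigma\cdot\Det_{\vec\mu^*,\pi}=\Det_{\vec\mu^*,\pi\sigma}$ (up to a harmless relabelling of the permutation) again permutes tensor factors. A nonzero submodule of an irreducible module is the whole module, so $U=\HWV_{\vec\mu^*}(((\CC^n)^{\ot k})^*)$, which is exactly the claim.

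The one genuine ingredient I am invoking is Schur--Weyl duality — specifically the irreducibility of the isotypic-component of highest weight vectors as an $S_k$-module. An alternative, more self-contained route that avoids quoting it wholesale: construct from $\Det_{\vec\mu^*,\pi}$ the standard Young symmetrizer image and show directly that these span; but this essentially reproves Schur--Weyl, so I would not take it. The main obstacle, such as it is, is purely bookkeeping: matching the transpose conventions (is $\vec\mu'$ the transpose used to build $\Det_{\vec\mu^*}$, and does it correspond to the Specht module for $\vec\mu$ or for $\vec\mu'$?) and confirming that the weight of $\Det_{\vec\mu^*}$ is indeed $\vec\mu^*=(-\mu_n,\dots,-\mu_1)$ rather than some reordering — but the weight computation was already carried out in the excerpt, so this reduces to citing it. Once conventions are pinned down, irreducibility of $[\vec\mu]$ does all the work.
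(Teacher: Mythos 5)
Your proof is correct and follows essentially the same route as the paper: the paper's entire argument is the two-sentence observation that Schur--Weyl duality makes $\HWV_{\vec\mu^*}(((\CC^n)^{\ot k})^*)$ an irreducible $S_k$-module, so the $S_k$-orbit of any nonzero vector (here $\Det_{\vec\mu^*}$) spans it. You simply unpack the same reasoning, spelling out the $S_k$-stability of the span and the nonvanishing of $\Det_{\vec\mu^*}$, both of which the paper leaves implicit.
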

\begin{proof}
Schur-Weyl duality asserts that the space of highest weight vectors is an irreducible $S_k$-representation.
It is therefore spanned by the $S_k$-orbit of any nonzero vector.
\end{proof}

As a direct consequence of \cref{lem:schur weyl} and the discussion surrounding \cref{eq:hwv factorize}, we obtain that the polynomials
\begin{align}\label{eq:hwv concrete}
  P(X) := (\eps_{i_1,\dots,i_k}\ot\Det_{(\vec\lambda^{(1)})^*,\pi^{(1)}}\ot\dots\ot\Det_{(\vec\lambda^{(d)})^*,\pi^{(d)}})(X^{\ot k}),
\end{align}
span the space of highest weight vectors~$\HWV_{\vec\lambda^*}(\CC[V]_{(k)})$.
Here, we have $i_1,\dots,i_k\in\{1,\dots,n_0\}$, $\pi^{(1)},\dots,\pi^{(d)}\in S_k$, and $\eps_{i_1,\dots,i_m}$ denotes the dual basis of the standard product basis of $(\CC^{n_0})^{\ot m}$.
We summarize in the following proposition, where we also establish a bound on their evaluation. We note that while the bound on the evaluations is an elementary consequence of the above construction, this has not appeared before in the literature. At the same time, this is a crucial part of our analysis of \cref{alg:scaling}.

\begin{prp}\label{prp:hwv eval}
The space of highest weight vectors~$\HWV_{\vec\lambda^*}(\CC[V]_{(k)})$ is nonzero only if
$\lambda^{(i)}_{n_i} \geq 0$ for all $i=1,\dots,d$.
In this case, it is spanned by the polynomials~$P(X)$ defined in \cref{eq:hwv concrete}, where $i_1,\dots,i_k\in\{1,\dots,n_0\}$ and $\pi^{(1)},\dots,\pi^{(d)}\in S_k$.
These are polynomials with integer coefficients, and they satisfy the bound
\begin{align*}
  \lvert P(X)\rvert \leq (n_1\dots n_d)^k \lVert X\rVert^k
\end{align*}
for all tensors $X\in V$.
\end{prp}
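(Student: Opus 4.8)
The proof proposal is as follows. The structure of the statement has three parts: (i) the nonvanishing condition $\lambda^{(i)}_{n_i}\geq0$; (ii) the spanning claim; (iii) the evaluation bound. Parts (i) and (ii) are already essentially established in the discussion preceding the proposition, so the plan is mostly to assemble them and then to prove the bound (iii), which is the genuinely new ingredient.

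\textbf{Parts (i) and (ii).} For the nonvanishing condition, I would argue factor by factor: by \cref{eq:hwv factorize}, $\HWV_{\vec\lambda^*}((V^{\ot k})^*)$ is a tensor product over $i=1,\dots,d$ of the single-$\GL(n_i)$ spaces $\HWV_{(\vec\lambda^{(i)})^*}(((\CC^{n_i})^{\ot k})^*)$ (plus the trivial $n_0$-factor), so the whole space is nonzero iff each factor is. By Schur--Weyl duality (as recalled around \cref{lem:schur weyl}), the single-$\GL(n_i)$ factor is the irreducible $S_k$-representation associated to the partition $\vec\lambda^{(i)}$ of $k$, which is nonzero precisely when $\vec\lambda^{(i)}$ is a genuine partition into at most $n_i$ parts, i.e.\ $\lambda^{(i)}_{n_i}\geq0$. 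The spanning claim (ii) is then immediate from \cref{lem:schur weyl} combined with \cref{eq:hwv factorize}: the $\Det_{(\vec\lambda^{(i)})^*,\pi^{(i)}}$ span each tensor factor, so their tensor products against the standard dual basis vectors $\eps_{i_1,\dots,i_k}$ on $(\CC^{n_0})^{\ot k}$ span the whole space; this is exactly the family in \cref{eq:hwv concrete}. That these polynomials have integer coefficients is clear because $\Det_\ell$ is a determinant in the coordinates of the input vectors, hence has coefficients in $\{0,\pm1\}$, and the $\eps_{i_1,\dots,i_k}$ merely pick out coordinates.

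\textbf{Part (iii), the evaluation bound.} This is the main point. Fix a tensor $X\in V$ and one of the spanning polynomials $P$ from \cref{eq:hwv concrete}. The plan is to bound $\lvert\Det_\ell(v_1\ot\dots\ot v_\ell)\rvert$ by a product of norms: since $\Det_\ell(v_1\ot\dots\ot v_\ell)$ is the determinant of the bottom $\ell\times\ell$ block of the matrix with columns $v_1,\dots,v_\ell$, Hadamard's inequality gives $\lvert\Det_\ell(v_1\ot\dots\ot v_\ell)\rvert\leq\prod_{i=1}^\ell\norm{v_i}$. Tensoring these estimates across the column blocks that make up $\Det_{\vec\mu^*}$ (and noting permutations $\pi$ only reorder the $k$ tensor legs), we get that the linear functional $\eps_{i_1,\dots,i_k}\ot\Det_{(\vec\lambda^{(1)})^*,\pi^{(1)}}\ot\dots\ot\Det_{(\vec\lambda^{(d)})^*,\pi^{(d)}}$ has operator norm at most $1$ on $(\CC^{n_0})^{\ot k}\ot(\CC^{n_1})^{\ot k}\ot\dots$ when we group the legs by factor --- i.e.\ when we view $X^{\ot k}$ as an element of $\bigotimes_{i=0}^d ((\CC^{n_i})^{\ot k})$ via the canonical reshuffle. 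But $X^{\ot k}$ lives in $(\CC^{n_0}\ot\dots\ot\CC^{n_d})^{\ot k}$, and the reshuffling isomorphism that moves all the $\CC^{n_i}$-legs together is \emph{not} an isometry in general; this is the subtle point. The standard fix is to expand $X=\sum_{\vec a}X_{\vec a}\,e_{a_0}\ot\dots\ot e_{a_d}$ in the standard basis and bound $P(X)$ termwise by the triangle inequality: each term contributes at most $\prod$ of the relevant $\lvert X_{\vec a}\rvert$'s (each column-block determinant applied to standard basis vectors is $0$ or $\pm1$), and the number of nonzero terms is controlled by how many ways the $k$ copies can be assigned indices in each of the $d$ tensor directions, which is $(n_1\cdots n_d)^k$ after the $\eps_{i_1,\dots,i_k}$ pins down the $\CC^{n_0}$-indices. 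Combining with $\sum_{\vec a}\lvert X_{\vec a}\rvert^2=\norm{X}^2$ via Cauchy--Schwarz (or simply $\lvert X_{\vec a}\rvert\leq\norm{X}$) then yields $\lvert P(X)\rvert\leq(n_1\cdots n_d)^k\norm{X}^k$.

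\textbf{Main obstacle.} The one place that requires care is exactly the mismatch between the two tensor orderings of $X^{\ot k}$: the Hadamard/operator-norm argument is clean in the ``grouped'' ordering but $\norm{X^{\ot k}}=\norm{X}^k$ refers to the ``natural'' ordering, and the permutation relating them is not norm-preserving as a map on the unnormalized functionals we are using. I expect the honest route is the multilinear/termwise expansion sketched above, where one never needs an isometry: one just counts monomials and bounds each coefficient, paying the factor $(n_1\cdots n_d)^k$ for the count and $\norm{X}^k$ for the magnitudes. Everything else --- Hadamard's inequality, integrality of coefficients, the Schur--Weyl input --- is routine.
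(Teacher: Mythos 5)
Your final argument (the termwise expansion, with the $\Det$-blocks giving $0$ or $\pm 1$ on standard basis vectors, each coordinate product bounded by $\norm{X}^k$, and $(n_1\cdots n_d)^k$ terms) is correct and is essentially the paper's proof; parts (i) and (ii) are handled the same way. One small correction to the discussion preceding it: the reshuffling isomorphism $(\CC^{n_0}\ot\dots\ot\CC^{n_d})^{\ot k}\cong\bigotimes_{i=0}^d(\CC^{n_i})^{\ot k}$ \emph{is} an isometry (it merely permutes tensor legs), so that is not what breaks the Hadamard/operator-norm route; what breaks is the claim that the functional has operator norm at most $1$ --- Hadamard only bounds $\Det_\ell$ on decomposable tensors, and the Hilbert-space operator norm of $\Det_\ell$ on $(\CC^n)^{\ot\ell}$ is in fact $\sqrt{\ell!}$, not $1$. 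Since you correctly discard that route and land on the argument the paper actually uses, the proof stands.
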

\begin{proof}
It only remains to verify the bound.
For this, let $X\in V=\Ten(n_0;n_1,\dots,n_d)$ be an arbitrary tensor.
Expand
\begin{align*}
  X
= \sum_{j^{(1)}=1}^{n_1}\dots\sum_{j^{(d)}=1}^{n_d} v_{j^{(1)},\dots,j^{(d)}} \ot e_{j^{(1)}} \ot \dots \ot e_{j^{(d)}},
\end{align*}
where the $v_{j^{(1)},\dots,j^{(d)}}$ are vectors in~$\CC^{n_0}$ and the $e_{j^{(i)}}$ the standard basis vectors of $\CC^{n_i}$, $i=1,\dots,d$.
Thus,
\begin{align*}
  X^{\ot k}
= \sum_{J^{(1)}\colon[k]\to[n_1]}\dots\sum_{J^{(d)}\colon[k]\to[n_d]}
&\left( \otimes_{\alpha=1}^k v_{J^{(1)}(\alpha),\dots,J^{(d)}(\alpha)} \right) \ot \left( \otimes_{\alpha=1}^k e_{J^{(1)}(\alpha)} \right) \ot \\
& \dots \ot \left( \otimes_{\alpha=1}^k e_{J^{(d)}(\alpha)} \right)
\end{align*}
and so
\begin{align*}
  P(X)
= \sum_{J^{(1)}\colon[k]\to[n_1]}\dots\sum_{J^{(d)}\colon[k]\to[n_d]}
&\eps_{i_1,\dots,i_k}\left( \otimes_{\alpha=1}^k v_{J^{(1)}(\alpha),\dots,J^{(d)}(\alpha)} \right) \cdot\Det_{(\vec\lambda^{(1)})^*,\pi^{(1)}}\left( \otimes_{\alpha=1}^k e_{J^{(1)}(\alpha)} \right)\\
& \cdots\Det_{(\vec\lambda^{(d)})^*,\pi^{(d)}}\left( \otimes_{\alpha=1}^k e_{J^{(d)}(\alpha)} \right).
\end{align*}
Let's consider a single summand.
The first factor is a product of $k$ many components of the vectors $v_{j^{(1)},\dots,j^{(d)}}$,
and so is bounded in absolute value by $\lVert X\rVert^k$.
The remaining factors are products of determinants of submatrices of matrices whose columns are standard basis vectors, hence equal to zero or $\pm1$.
Together,
\begin{align*}
  \lvert P(X)\rvert \leq \sum_{J^{(1)}\colon[k]\to[n_1]}\dots\sum_{J^{(d)}\colon[k]\to[n_d]} \lVert X\rVert^k = (n_1\cdots{}n_d)^k \lVert X\rVert^k.
\end{align*}
so we obtain the desired bound.
\end{proof}

\subsection{Borel scaling}\label{subsec:borel}
In this section, we prove a necessary and sufficient condition for scalability using upper-triangular matrices (i.e., the Borel subgroup) in terms of the non vanishing behavior of highest weight vectors. We claim no originality for this connection - this is probably well known to experts in geometric invariant theory. In fact, \cref{prp:borel,cor:borel} can also be proved as a consequence of (the analysis of) our algorithm!
But we believe that it is useful to give an independent argument which explains the initial randomization in \cref{alg:scaling} and puts it into a general context.

\begin{dfn}[Borel ``polytope'']
Define the \emph{Borel polytope} $\Delta^B(Y)\subset \Delta (Y)$ by
\begin{align}\label{eq:borel polytope}
  \Delta^{B}(X) :=
  \left\{ \vec{p}: \diag\left(\vec p_\uparrow^{(i)}\right)=\rho_Y^{(i)},\dots, \diag\left(\vec p_\uparrow^{(d)}\right) =\rho_Y^{(d)} \textrm{ for some } Y \in \overline{B \cdot [X]}\right\}.\end{align}
  Equivalently, $\vec{p} \in  \Delta^{B}(X)$ if and only if there exists $[Y]\in\overline{B\cdot [X]}$ such that $\spec\left(\rho^{(i)}_Y\right) = \vec p^{(i)}$ for all $i=1,\dots,d$.

\end{dfn}
It is a well known fact that $\Delta^B(X)$ is a polytope, but we only review the argument much later in \cref{sec:polytope}, hence the quotations.
\begin{prp}[Borel scaling]\label{prp:borel}
Let $X\in\Ten(n_0;n_1,\dots,n_d)$ be a (nonzero) tensor, $\vec p\in P_+(n_1,\dots,n_d)$, $k > 0$, and $\vec\lambda:=k\vec p$. If there exists a highest weight vector $P\in\HWV_{\lambda^*}(\CC[V]_{(k)})$ such that $P(X)\neq0$, then $\vec{p} \in \Delta^B(X)$.
\end{prp}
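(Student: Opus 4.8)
The plan is to show that the nonvanishing of a highest weight vector $P\in\HWV_{\lambda^*}(\CC[V]_{(k)})$ on $X$ forces $\vec p = \vec\lambda/k$ to lie in the Borel polytope $\Delta^B(X)$, by running the very optimization that underlies our scaling algorithm but \emph{restricted to the Borel subgroup}. Concretely, I would introduce the ``capacity''-type quantity
\[
  \capacity_P^{\vec p}(X) := \inf_{b\in B} \frac{\lvert P(b\cdot X)\rvert^{1/k}}{\lVert b\cdot X\rVert},
\]
or rather its logarithm as a function on $B$, and study how a minimizing sequence behaves. Since $P(X)\neq 0$, this infimum is over a nonempty set of nonzero values; the homogeneity degrees of numerator and denominator agree, so the quantity is invariant under the torus $T$ (here one uses the highest weight property \eqref{eq:hwv poly}: for $R\in B$, $\lvert P(R\cdot X)\rvert = \lvert\chi_{\lambda^*}(R^{-1})\rvert\,\lvert P(X)\rvert$, and the torus part exactly cancels the scaling of $\lVert R\cdot X\rVert$ when the marginals match $\vec p$). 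The role of $\vec p$ is precisely to pin down which torus directions are ``flat.''

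The key steps, in order, would be: (i) reduce to scalings of unit norm, so we are minimizing $-\tfrac1k\log\lvert P(\,\cdot\,)\rvert$ over the unit sphere inside $B\cdot X$; (ii) take a minimizing sequence $Y_t = b_t\cdot X$ and compute the gradient of the objective along the $B$-action at $Y_t$ — using \eqref{eq:hwv poly} differentiated at the identity, this gradient is governed by the difference between the ``upper-triangular part'' of the marginals $\rho_{Y_t}^{(i)}$ and $\diag(\vec p_\uparrow^{(i)})$; (iii) argue that, because $P$ is a highest weight vector, a stationary point (or the limit of a minimizing sequence) must have $\rho_{Y_t}^{(i)}\to\diag(\vec p_\uparrow^{(i)})$ for all $i$ — the off-diagonal entries are killed by the nilpotent directions of $\mathfrak b$, and the diagonal entries are killed by the torus directions, against which the objective is genuinely varying unless they equal $\vec p_\uparrow^{(i)}$; (iv) conclude that the limit point $[Y]$ lies in $\overline{B\cdot[X]}$ and has marginals with spectra $\vec p$, so $\vec p\in\Delta^B(X)$ by definition. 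This is essentially the Kempf--Ness / Hilbert--Mumford argument adapted from $G$ to its Borel subgroup $B$, with the highest weight vector $P$ playing the role that an invariant plays in the null-cone theory and $\vec p$ playing the role of the shift.

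The main obstacle I expect is the non-reductivity of $B$: the usual Kempf--Ness machinery (existence of a minimizer, the fact that a critical point of the norm has closed orbit, convexity of $\log$-norm along one-parameter subgroups) is stated for reductive $G$, whereas $B$ is solvable. I would handle this in one of two ways. Either (a) apply the shifting trick (\cref{lem:invariant-theoretic shift}): write $P(X) = Q(X^{\ot\ell}\ot v_{\vec\lambda^*})$ for a $G$-invariant $Q$, so that $P(X)\neq0$ says $X^{\ot\ell}\ot v_{\vec\lambda^*}$ is outside the null cone of $G$ acting on $W=\Sym^\ell(V)\ot V_{\vec\lambda^*}$, and then run the reductive Kempf--Ness argument on $W$ — but this only gives $\vec p\in\Delta(X)$, not membership in the \emph{Borel} polytope, so one still needs to track that the optimizing group elements can be taken in a Borel-type subgroup of the bigger group, using the structure of $v_{\vec\lambda^*}$ whose $G$-stabilizer contains the opposite Borel. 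Or (b) work directly with $B = T \ltimes N$: first use the torus $T$ (a genuine reductive abelian group, so moment-map/convexity arguments apply verbatim) to drive the diagonal of each marginal toward $\vec p_\uparrow^{(i)}$, interleaved with the unipotent $N$ to kill off-diagonal entries, controlling that the objective $\lvert P(\cdot)\rvert^{1/k}$ stays bounded away from zero throughout — exactly the three-step lower-bound/progress/upper-bound scheme of \cref{sec:matrix_scaling}, with \cref{prp:hwv eval} supplying the crucial upper bound. Approach (b) is more self-contained and dovetails with the rest of the paper, so I would present that, deferring the genuinely topological claim that the minimizing sequence has a limit in $\overline{B\cdot[X]}$ with the asserted marginals to a compactness argument on the (compact) sphere combined with the continuity of $\spec(\rho_{(\cdot)}^{(i)})$.
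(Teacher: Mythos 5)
Your proposal is sound in spirit but takes a genuinely different route from the paper, and two details deserve attention.

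\textbf{Comparison with the paper's proof.}
The paper proves this by a single Kempf--Ness argument on the shifted representation $W=\Sym^k(V)\ot V_{\vec\lambda^*}$: it observes that
\[
\inf_{g\in G}\lVert g\cdot(X^{\ot k}\ot v_{\vec\lambda^*})\rVert
=\inf_{R\in B}\lVert R\cdot(X^{\ot k}\ot v_{\vec\lambda^*})\rVert
=\inf_{R\in B}\lVert R\cdot X\rVert^k\lvert\chi_{\vec\lambda^*}(R)\rvert,
\]
where the first equality is QR decomposition ($G=KB$) combined with the $K$-invariance of the norm. This automatically places the minimizing sequence in $B$, hence the limiting point lies in $\overline{B\cdot[X^{\ot k}\ot v_{\vec\lambda^*}]}$ \emph{and} is simultaneously a norm-minimizer for the reductive group $G$, so the Kempf--Ness vanishing of the moment map applies. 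This is your route (a), but the mechanism by which the minimizer is pushed into a Borel orbit is not, as you suggest, the stabilizer of $v_{\vec\lambda^*}$ containing an opposite Borel --- it is the more elementary $K$-invariance/QR observation, which is cleaner and does not depend on any special structure of the extra factor. Your route (b), running the alternating scaling algorithm and appealing to the lower-bound/progress/upper-bound scheme plus compactness, is also a valid proof and is explicitly acknowledged as such in the text preceding the proposition; but since the full machinery (\cref{prp:progress}, \cref{prp:hwv eval}, the compactness argument) lives in \cref{sec:analysis}, the static Kempf--Ness proof the paper gives is self-contained within \cref{sec:git} and avoids a forward reference. Route (b) is dynamic and algorithmic; route (a) is a one-shot variational argument. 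Both work; the paper chooses (a) with the QR simplification.

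\textbf{Two corrections.}
First, your definition
\[
\capacity_P^{\vec p}(X):=\inf_{b\in B}\frac{\lvert P(b\cdot X)\rvert^{1/k}}{\lVert b\cdot X\rVert}
\]
has the wrong extremum: by degree-$k$ homogeneity this equals $\inf_{\lVert Y\rVert=1,\,Y\in B\cdot X}\lvert P(Y)\rvert^{1/k}$, which is typically $0$ (the orbit closure usually meets the zero set of $P$). The quantity that is actually positive and useful is the $\sup$, and indeed that is what you end up minimizing when you pass to $-\tfrac1k\log\lvert P(\cdot)\rvert$. The relation to the paper's $\capacity_{\vec p}(X)=\inf_R\lVert R\cdot X\rVert\lvert\chi_{\vec p^*}(R)\rvert$ is
\[
\capacity_{\vec p}(X)^k = \frac{\lvert P(X)\rvert}{\sup_{\lVert Y\rVert=1,\,[Y]\in\overline{B\cdot[X]}}\lvert P(Y)\rvert},
\]
so it is the supremum, not the infimum, of $\lvert P\rvert$ that appears in the denominator. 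Second, in (b) you describe ``first use the torus $T$ \ldots interleaved with the unipotent $N$.'' The paper's algorithm does not interleave $T$- and $N$-steps: it uses a single Cholesky factor per step, which mixes the two, and the proof of progress (\cref{lem:generalized lsw}, Pinsker) is tied to that specific scaling. A literal $T$-then-$N$ alternation would require a new progress lemma. If you mean the three-step scheme as written, say so; if you mean a genuine $T/N$ interleaving, that is a different algorithm and would need its own analysis.
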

\begin{proof}
Assume $P\in\HWV_{\lambda^*}(\CC[V]_{(k)})$ and $P(X)\neq0$. We first show $X^{\ot k} \ot v_{\vec\lambda^*} \in W := \Sym^k(V) \ot V_{\vec\lambda^*}$ is not in the null cone, i.e., that
  \begin{align}\label{eq:borel semistable}
    \inf_{g\in G} \lVert (g \cdot X)^{\ot k} \ot (g \cdot v_{\vec\lambda^*}) \rVert
  = \inf_{R\in B} \lVert (R \cdot X)^{\ot k} \ot (R \cdot v_{\vec\lambda^*}) \rVert
  > 0.
  \end{align}
  The equality follows from the QR-decomposition and unitary invariance of the norm.
  Thus, using \cref{eq:hwv},
  \begin{align*}
    \lVert (R \cdot X)^{\ot k} \ot (R \cdot v_{\vec\lambda^*}) \rVert
  = \lVert R \cdot X \rVert^k \, \lvert\chi_{\vec\lambda^*}(R)\rvert.
  \end{align*}
  On the other hand, the assumption and \cref{eq:hwv poly} show that
  \begin{align*}
    P(R \cdot X) \chi_{\vec\lambda^*}(R) = P(X) \neq 0,
  \end{align*}
  and hence
  \begin{align*}
    \lVert R \cdot X\rVert^k \,\lvert \chi_{\vec\lambda^*}(R) \rvert
  = \frac {\lvert P(X) \rvert} {\lvert P\left(\frac {R\cdot X}{\lVert R\cdot X\rVert}\right) \rvert}
  \geq \frac {\lvert P(X) \rvert} {\sup_{\lVert Z\rVert=1} \lvert P(Z) \rvert} > 0,
  \end{align*}
  where we used that $P$~is a continuous function, so its supremum on the space of tensors of unit norm is finite.
  This uniform lower bound establishes \cref{eq:borel semistable}.

  In view of \cref{eq:borel semistable}, the infimum can be attained by $Y^{\ot k} \ot v_{\lambda^*}$ for some $[Y] \in \overline{B \cdot [X]}$.
  We may assume that $Y$ is a unit vector (otherwise rescale~$X$ appropriately).
  Since $Y^{\ot k}\ot v_{\lambda^*}$ has minimal norm in its $G$-orbit, its squared norm does not change to first order under the infinitesimal action of any one-parameter subgroup, such as $\exp(At)$, where $A=(A^{(1)},\dots,A^{(d)})$ is a tuple of Hermitian $n_i\times n_i$-matrices.
  But then
  \begin{align*}
    0
    = \partial_{t=0} \frac12\norm{ \exp(At) \cdot \left( Y^{\ot k}\ot v_{\lambda^*}\right) }^2
    = \mu_W([Y^{\ot k} \ot v_{\vec\lambda^*}]),
  \end{align*}
  by definition of the moment map (\cref{eq:general moment map}), so it follows from \cref{eq:geometric shift} that $\rho^{(i)}_Y = \diag\left(\vec p_\uparrow^{(i)}\right)$ for all $i=1,\dots,d$. Thus, $\vec p \in \Delta^B(X)$.
\end{proof}

Of course, the conclusion of the proposition is equivalent to the statement that for all $\eps>0$ there exists $R\in B$ such that the marginals of $R \cdot X$ are $\eps$-close to the prescribed ones.

The proof of the above proposition uses the following notion of ``capacity" which generalizes the optimization problems considered in \cite{gurvits2004classical, garg2016deterministic, burgisser2017alternating}.

\begin{dfn}[$\vec p$-capacity]\label{dfn:capacity} Given a tensor $X\in\Ten(n_0;n_1,\dots,n_d)$ and $\vec p\in P_+(n_1,\dots,n_d)$, define
$$
\capacity_{\vec{p}}(X) =  \inf_{R\in B}  \norm{(R \cdot X)} |\chi_{\vec p^*}(R)|
$$
Here $B$ is the Borel subgroup (tuples of upper triangular matrices). And
$$
\chi_{\vec p^*}(R) = \prod_{i=1}^d \prod_{j=1}^{n_i} \left( R^{(i)}_{j,j}\right)^{-p^{(i)}_{n+1-j}}
$$
\end{dfn}

The proof of \Cref{prp:borel} yields the following connection: $X$ is Borel-scalable to marginals $\vec p$ iff $\capacity_{\vec{p}}(X) > 0$. The proof of \Cref{prp:borel} when combined with \cref{prp:hwv eval} also yields the following lower bound on capacity. This greatly generalizes the lower bounds in \cite{gurvits2004classical, garg2016deterministic, garg2017algorithmic, burgisser2017alternating}.

\begin{thm}\label{thm:capacity_lb} Let $X\in\Ten(n_0;n_1,\dots,n_d)$ be a tensor with integer entries s.t. $\capacity_{\vec{p}}(X) > 0$. Then
$$
\capacity_{\vec{p}}(X)  \ge \frac{1}{n_1 \cdots n_d}
$$
\end{thm}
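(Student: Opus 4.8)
The plan is to combine \cref{thm:mumford} with the evaluation bound of \cref{prp:hwv eval}, following exactly the logic already used to prove \cref{prp:borel}. First I would recall that, by the proof of \cref{prp:borel} (see in particular \cref{eq:borel semistable} and the displayed chain of inequalities there), one has for any highest weight vector $P\in\HWV_{\vec\lambda^*}(\CC[V]_{(k)})$ with $P(X)\neq 0$ the lower bound
\begin{align*}
  \capacity_{\vec p}(X) \;=\; \inf_{R\in B}\norm{R\cdot X}\,\lvert\chi_{\vec p^*}(R)\rvert
  \;\ge\; \frac{\lvert P(X)\rvert^{1/k}}{\bigl(\sup_{\lVert Z\rVert=1}\lvert P(Z)\rvert\bigr)^{1/k}},
\end{align*}
since $\chi_{\vec p^*}(R) = \chi_{\vec\lambda^*}(R)^{1/k}$ and $P(R\cdot X)\chi_{\vec\lambda^*}(R) = P(X)$.

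The second step is to produce a good such $P$. Since $\capacity_{\vec p}(X)>0$, the proof of \cref{prp:borel} shows $\vec p\in\Delta^B(X)\subseteq\Delta(X)$; in particular $\vec p\in\Delta(X)\subseteq\Delta(\mathcal X)$ for $\mathcal X=\overline{G\cdot X}$. However, to apply \cref{thm:mumford} I need $\vec p$ to be rational — so here I should either assume this (it is implicit, as $\vec p$ must be such that $\vec\lambda = k\vec p$ is integral for the statement to be meaningful) or invoke a density argument. Granting rationality, \cref{thm:mumford} gives an integer $k>0$ with $\vec\lambda = k\vec p$ integral and a highest weight vector $P\in\HWV_{\vec\lambda^*}(\CC[V]_{(k)})$ with $P(X)\neq 0$. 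By \cref{prp:hwv eval}, the space $\HWV_{\vec\lambda^*}(\CC[V]_{(k)})$ is spanned by polynomials with \emph{integer} coefficients satisfying $\lvert P(Z)\rvert\le (n_1\cdots n_d)^k\lVert Z\rVert^k$, so I may choose $P$ in this spanning set with $P(X)\neq 0$ (if every spanning polynomial vanished at $X$, so would all of $\HWV_{\vec\lambda^*}$). Since $X$ has integer entries and $P$ has integer coefficients, $P(X)$ is a nonzero Gaussian integer, hence $\lvert P(X)\rvert\ge 1$.

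Plugging these two facts into the capacity bound yields
\begin{align*}
  \capacity_{\vec p}(X)\;\ge\;\frac{\lvert P(X)\rvert^{1/k}}{\bigl(\sup_{\lVert Z\rVert=1}\lvert P(Z)\rvert\bigr)^{1/k}}\;\ge\;\frac{1^{1/k}}{\bigl((n_1\cdots n_d)^k\bigr)^{1/k}}\;=\;\frac{1}{n_1\cdots n_d},
\end{align*}
which is the claimed bound. I expect the only real subtlety to be the bookkeeping around rationality of $\vec p$ and the requirement that $\vec\lambda=k\vec p$ be integral (so that $\chi_{\vec p^*}(R)$ makes sense and \cref{thm:mumford} applies); everything else is a direct assembly of \cref{prp:borel}'s argument, \cref{thm:mumford}, and the integrality/evaluation properties from \cref{prp:hwv eval}. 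A minor point worth stating cleanly is why $\sup_{\lVert Z\rVert=1}\lvert P(Z)\rvert$ can be replaced by the explicit bound $(n_1\cdots n_d)^k$ — this is immediate from \cref{prp:hwv eval} applied to unit-norm $Z$.
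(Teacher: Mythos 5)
Your argument is essentially the one the paper has in mind (the paper only gives a one-line sketch: ``the proof of \cref{prp:borel} combined with \cref{prp:hwv eval}''), but there is a genuine gap in the step where you produce the highest weight vector $P$ with $P(X)\neq0$. You invoke \cref{thm:mumford}, but that theorem only guarantees $P(X')\neq 0$ for \emph{some} $X'\in\mathcal X=\overline{G\cdot X}$, not for the specific $X$ you have in hand. Since $P$ is a highest weight vector and not a $G$-invariant, it is not constant along $G$-orbits, and $P(g\cdot X)\neq0$ for one $g$ does not imply $P(X)\neq0$ — this is exactly why \cref{alg:scaling} requires an initial randomization. The diagonal unit tensor mentioned in the introduction is a concrete example where every highest weight vector of a non-degenerate weight vanishes at $X$ while not vanishing at generic $g\cdot X$.

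The correct citation is \cref{prp:borel converse}: from $\capacity_{\vec p}(X)>0$ you already deduced $\vec p\in\Delta^B(X)$, and \cref{prp:borel converse} then gives you precisely a highest weight vector $P\in\HWV_{m\vec\lambda^*}(\CC[V]_{(\ell m)})$ with $P(X)\neq0$ for the given $X$. With that substitution the rest of your argument — passing to a spanning element with integer coefficients via \cref{prp:hwv eval}, using $\lvert P(X)\rvert\ge 1$, and plugging into the capacity inequality extracted from the proof of \cref{prp:borel} — is correct and matches the paper's intended route. (One small cosmetic point: write $\lvert\chi_{\vec p^*}(R)\rvert = \lvert\chi_{\vec\lambda^*}(R)\rvert^{1/k}$ rather than $\chi_{\vec p^*}(R) = \chi_{\vec\lambda^*}(R)^{1/k}$, since $\chi_{\vec\lambda^*}(R)$ may be complex and the $k$-th root is then ambiguous; only the moduli are needed.) Your remark about the implicit rationality of $\vec p$ is also correct and appropriate to flag.
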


\Cref{prp:borel} implies that it suffices to scale by elements from the Borel subgroup -- provided we randomize the starting point.
Indeed, if $\vec p\in\Delta(\mathcal X)$ then we may first select a generic $X\in\mathcal X$ such that $P(X)\neq0$ for some highest weight vector (e.g., using \cref{cor:random is generic}), which is precisely what we do in our algorithm.
We summarize:

\begin{cor}\label{cor:borel}
Let $\vec p\in\Delta(X)$.
Then, for generic $Y\in \overline{G \cdot X}$, we have $\vec p \in \Delta ^B(X)$.
\end{cor}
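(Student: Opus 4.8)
The plan is to combine Mumford's theorem (\cref{thm:mumford}) with the Borel scaling criterion (\cref{prp:borel}), together with an elementary genericity argument. First a word on the statement: it should be read as ``$\vec p\in\Delta^B(Y)$ for generic $Y\in\overline{G\cdot X}$'', since $\Delta^B(X)$ does not depend on $Y$ and can be a proper subset of $\Delta(X)$ --- this is exactly what the diagonal unit tensor example shows. So the content is that after a generic change of basis one no longer needs the full group $G$ but only the Borel subgroup $B$ in order to reach the target spectra, which is precisely what justifies the randomization in \cref{alg:scaling}.

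First I would apply \cref{thm:mumford} to the irreducible $G$-stable projective variety $\mathcal X=\overline{G\cdot X}$: since $\vec p\in\Delta(X)=\Delta(\mathcal X)$, there exist an integer $k>0$ with $\vec\lambda:=k\vec p$ integral and a highest weight vector $P\in\HWV_{\vec\lambda^*}(\CC[V]_{(k)})$ such that $P(Z)\neq0$ for some $Z\in\mathcal X$. (An explicit bound on $k$ is available from \cref{prp:effective mumford}, but is not needed for the qualitative statement.) The nonvanishing locus $U:=\{Y\in\mathcal X : P(Y)\neq0\}$ is then a nonempty Zariski-open subset of $\mathcal X$, hence dense by irreducibility of $\mathcal X$, so a generic $Y\in\overline{G\cdot X}$ satisfies $P(Y)\neq0$. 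Concretely such a $Y$ can be produced as $Y=A\cdot X$ for a random integer matrix tuple $A$, exactly as in \cref{cor:random is generic}.

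It then remains to apply \cref{prp:borel} to such a $Y$: the existence of a highest weight vector $P\in\HWV_{\vec\lambda^*}(\CC[V]_{(k)})$ with $\vec\lambda=k\vec p$ and $P(Y)\neq0$ immediately yields $\vec p\in\Delta^B(Y)$. I do not expect any real obstacle here --- the two substantive ingredients (\cref{thm:mumford,prp:borel}) are already established, and the only thing to verify is the elementary fact that a polynomial that is nonzero somewhere on an irreducible variety is nonzero on a dense Zariski-open subset. If one prefers to avoid the word ``generic'', one can instead cite \cref{cor:random is generic} directly and obtain the same conclusion with success probability at least $1/2$.
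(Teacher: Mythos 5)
Your proposal is correct and matches the paper's own (implicit) argument: apply \cref{thm:mumford} to $\mathcal X=\overline{G\cdot X}$ to find a highest weight vector $P$ nonvanishing somewhere on $\mathcal X$, use irreducibility to conclude $P(Y)\neq0$ for generic $Y$, and then apply \cref{prp:borel}. You also correctly note that the conclusion should read $\vec p\in\Delta^B(Y)$ rather than $\vec p\in\Delta^B(X)$, which is a typo in the paper's statement.
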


We also have the following converse to \cref{prp:borel}.

\begin{prp}\label{prp:borel converse}
Let $X\in\Ten(n_0;n_1,\dots,n_d)$ be a (nonzero) tensor and $\vec p\in P_+(n_1,\dots,n_d)$.
Let $\ell>0$ such that $\vec\lambda:=\ell\vec p$ is integral.
If $\vec p \in \Delta^B(X)$, then there exists $m>0$ and a highest weight vector $P\in\HWV_{m\vec\lambda^*}(\CC[V]_{(\ell m)})$ such that $P(X)\neq0$.
\end{prp}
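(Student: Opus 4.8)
The plan is to run the proof of \cref{prp:borel} in reverse: the shifting trick will let me trade the Borel-scalability hypothesis $\vec p\in\Delta^B(X)$ for the statement that the shifted vector $X^{\ot\ell}\ot v_{\vec\lambda^*}$ lies \emph{outside} the null cone of the $G$-action on $W:=\Sym^\ell(V)\ot V_{\vec\lambda^*}$, and then \cref{lem:invariant-theoretic shift} will turn a witnessing invariant into the desired highest weight vector. Observe that $\vec\lambda=\ell\vec p$ is a highest weight with $\lvert\vec\lambda\rvert=\ell$, so \cref{lem:geometric shift,lem:invariant-theoretic shift} both apply.

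First I would unpack the hypothesis. By \cref{eq:borel polytope}, $\vec p\in\Delta^B(X)$ provides a tensor $Y$ (necessarily of unit norm, since $\tr\rho_Y^{(i)}=\sum_j p_j^{(i)}=1$) with $[Y]\in\overline{B\cdot[X]}$ and $\rho_Y^{(i)}=\diag(\vec p_\uparrow^{(i)})$ for all $i$. Substituting into the geometric-shift identity \cref{eq:geometric shift} and using $\vec\lambda^*=-\vec\lambda_\uparrow=-\ell\,\vec p_\uparrow$ gives
\begin{align*}
  \mu_W^{(i)}\!\left([Y^{\ot\ell}\ot v_{\vec\lambda^*}]\right)
  = \ell\,\rho_Y^{(i)}+\diag\!\left((\vec\lambda^*)^{(i)}\right)
  = \ell\diag\!\left(\vec p_\uparrow^{(i)}\right)-\ell\diag\!\left(\vec p_\uparrow^{(i)}\right)=0 .
\end{align*}
Thus $Y^{\ot\ell}\ot v_{\vec\lambda^*}$ is a nonzero vector with vanishing moment map. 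By the Kempf--Ness theorem (equivalently, by the convexity of $t\mapsto\log\lVert\exp(At)\cdot w\rVert^2$ already invoked in \cref{prp:borel}), such a vector is a point of minimal norm on its $G$-orbit, so $0\notin\overline{G\cdot(Y^{\ot\ell}\ot v_{\vec\lambda^*})}$; that is, $Y^{\ot\ell}\ot v_{\vec\lambda^*}$ is not in the null cone $\mathcal N_W$.

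Next I would transfer this to $X$. Pick $R_n\in B$ with $[R_n\cdot X]\to[Y]$; after normalizing and absorbing phases, $R_n\cdot X/\lVert R_n\cdot X\rVert\to Y$. Since $R_n\cdot v_{\vec\lambda^*}=\chi_{\vec\lambda^*}(R_n)\,v_{\vec\lambda^*}$, one gets
\begin{align*}
  Y^{\ot\ell}\ot v_{\vec\lambda^*}
  = \lim_{n}\; \frac{1}{\lVert R_n\cdot X\rVert^{\ell}\,\chi_{\vec\lambda^*}(R_n)}\; R_n\cdot\!\left(X^{\ot\ell}\ot v_{\vec\lambda^*}\right),
\end{align*}
and since a nonzero scalar times an element of $B$ again lies in $B\subseteq G$, this exhibits $Y^{\ot\ell}\ot v_{\vec\lambda^*}\in\overline{G\cdot(X^{\ot\ell}\ot v_{\vec\lambda^*})}$. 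As $\mathcal N_W$ is closed and $G$-stable, $X^{\ot\ell}\ot v_{\vec\lambda^*}\in\mathcal N_W$ would force $Y^{\ot\ell}\ot v_{\vec\lambda^*}\in\mathcal N_W$, contradicting the previous paragraph; hence $X^{\ot\ell}\ot v_{\vec\lambda^*}\notin\mathcal N_W$ (equivalently $\capacity_{\vec p}(X)>0$, cf.\ the remark after \cref{prp:borel}). By the defining property of the null cone, there is then a nonconstant homogeneous invariant $Q\in\CC[W]^G_{(m)}$ with $m>0$ and $Q(X^{\ot\ell}\ot v_{\vec\lambda^*})\neq0$. Finally, \cref{lem:invariant-theoretic shift} shows $P(X):=Q(X^{\ot\ell}\ot v_{\vec\lambda^*})$ is a highest weight vector in $\HWV_{m\vec\lambda^*}(\CC[V]_{(\ell m)})$, and $P(X)\neq0$ by construction, which is the assertion.

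I expect the step requiring the most care to be the passage from the projective membership $[Y]\in\overline{B\cdot[X]}$ to the affine membership $Y^{\ot\ell}\ot v_{\vec\lambda^*}\in\overline{G\cdot(X^{\ot\ell}\ot v_{\vec\lambda^*})}$: one must carefully track the normalization and the scalar $\chi_{\vec\lambda^*}(R_n)^{-1}\lVert R_n\cdot X\rVert^{-\ell}$ and argue it can be absorbed into the group action, using that $\chi_{\vec\lambda^*}$ is a nonvanishing character on $B$ and that $B$ contains the scalar matrices. The only other nonelementary ingredient is the Kempf--Ness input ``$\mu_W=0\Rightarrow$ not in the null cone''; everything else is bookkeeping already carried out in the proof of \cref{prp:borel} and in the shifting-trick lemmas.
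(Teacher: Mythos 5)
Your overall strategy matches the paper's: use the geometric part of the shifting trick to read $\vec p\in\Delta^B(X)$ as $\mu_W([Z])=0$ for $Z:=Y^{\ot\ell}\ot v_{\vec\lambda^*}$ with $[Y]\in\overline{B\cdot[X]}$, invoke Kempf--Ness to put $Z$ outside the null cone, pull out a nonvanishing invariant $Q$, and convert it to a highest weight vector via \cref{lem:invariant-theoretic shift}. The difference is the order of the final transfer: the paper extracts $Q$ at $Z$, hence $P(Y)\neq0$, and then transfers to $P(X)\neq0$ using the $B$-eigenvector property of $P$ together with homogeneity; you instead try to transfer the \emph{null-cone non-membership} from $Z$ to $X^{\ot\ell}\ot v_{\vec\lambda^*}$ first, and only then extract $Q$.

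There is a genuine gap in your transfer step. You claim that $Y^{\ot\ell}\ot v_{\vec\lambda^*}\in\overline{G\cdot(X^{\ot\ell}\ot v_{\vec\lambda^*})}$, justifying the absorption of the scalar $\lVert R_n\cdot X\rVert^{-\ell}\chi_{\vec\lambda^*}(R_n)^{-1}$ into the group by the observation that ``$B$ contains the scalar matrices.'' But on $W=\Sym^\ell(V)\ot V_{\vec\lambda^*}$ the center of each $\GL(n_i)$ acts \emph{trivially}: the scalar $c I_{n_i}$ acts by $c^\ell$ on $\Sym^\ell(V)$ and by $c^{\lvert(\vec\lambda^*)^{(i)}\rvert}=c^{-\ell}$ on $V_{\vec\lambda^*}$, and these cancel. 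So no element of $G$ can realize the scalar factor $\alpha_n:=\lVert R_n\cdot X\rVert^{-\ell}\chi_{\vec\lambda^*}(R_n)^{-1}$ on $W$, and the asserted $G$-orbit-closure membership need not hold. What your limit computation actually shows is $Y^{\ot\ell}\ot v_{\vec\lambda^*}\in\overline{\CC^*\!\cdot G\cdot(X^{\ot\ell}\ot v_{\vec\lambda^*})}$. That still suffices: the null cone is cut out by nonconstant homogeneous invariants, hence is a closed, $G$-stable \emph{cone} (closed under $\CC^*$-scaling), so if $X^{\ot\ell}\ot v_{\vec\lambda^*}$ lay in the null cone, so would every point of $\overline{\CC^*\!\cdot G\cdot(X^{\ot\ell}\ot v_{\vec\lambda^*})}$, in particular $Y^{\ot\ell}\ot v_{\vec\lambda^*}$ --- contradiction. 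Replacing your justification with this cone argument (or, more simply, following the paper's route: take $Q$ with $Q(Z)\neq0$ so $P(Y)\neq0$, and then note $P$ vanishes on $\overline{\CC^*B\cdot X}\ni Y$ whenever $P(X)=0$, using the $B$-eigenvector property of $P$ together with its homogeneity) makes the proof correct.
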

\begin{proof}[Sketch of proof]
  Consider $Z := Y^{\ot \ell} \ot v_{\vec\lambda^*} \in W := \Sym^\ell(V)\ot V_{\vec\lambda^*}$.
  By \cref{lem:geometric shift}, the assumption means that $\mu_W([Z])=0$.
  As in \cite[Proof of Theorem 3.2]{burgisser2017alternating}, one can show that $Z$ is a vector of minimal norm in its $G$-orbit.
  Therefore, $0\not\in \overline{G \cdot (Y^{\ot \ell} \ot v_{\vec\lambda^*})}$ and so there exists a homogeneous $G$-invariant polynomial $Q \in \CC[\Sym^\ell(V)\ot V_{\vec\lambda^*}]_{(m)}$ for some $m>0$ such that $Q(Z)\neq0$.
  But then $P(Z) = Q(Z^{\ot k} \ot v_{\vec\lambda^*})$ is a highest weight vector of highest weight $k=\ell m$ such that $P(Y)\neq 0$.
  Since $P$ is an eigenvector of the $B$-action and $[Y]\in\overline{B \cdot [X]}$, it follows that also $P(X)\neq0$.
\end{proof}

\section{Analysis of Algorithm~\ref{alg:scaling}}\label{sec:analysis}
In this section we analyze our tensor scaling algorithm, Algorithm \ref{alg:scaling}. \cref{sec:progress} contains an analysis of the progress made per step (\cref{prp:progress}). \cref{thm:scaling:proof} contains the proof of \cref{thm:scaling}. \cref{subsec:bit_complexity} contains a sketch of bit-complexity analysis of \cref{alg:scaling}. \cref{sec:singular_spectra} contains a reduction from the singular spectra setting to the setting of non-singular spectra.

\subsection{Scaling step}\label{sec:progress}

Consider a single scaling step (\ref{it:borel scale} in \cref{alg:scaling}).
Given a tensor~$Y$ (of unit $\ell_2$ norm) with nonsingular marginals, this amounts to the update
\begin{align*}
 Y \leftarrow Y' := \diag\left(\vec p^{(i)}_{\uparrow}\right)^{1/2} \left(R^{(i)}\right)^{-1} \cdot Y
\end{align*}
where $\rho_Y^{(i)} = R^{(i)} (R^{(i)})^\dagger$ is the Cholesky decomposition of the $i$-th marginal. Here $\vec p^{(i)}_{\uparrow}$ denotes $\left( p^{(i)}_{n_i},\ldots, p^{(i)}_{1}\right)$.

The following proposition shows that the highest weight vectors grow by a constant factor in each scaling step.

\begin{prp}[Progress per step]\label{prp:progress}
  Let $P\in\HWV_{\vec\lambda^*}(\CC[V]_{(k)})$ and $\vec p:=\vec\lambda/k$. Then,
  \begin{align*}
    \lvert P(Y') \rvert \geq 2^{\frac k{32\ln2} \left\lVert \diag\left(\vec p^{(i)}_{\uparrow}\right) - \rho^{(i)} \right\rVert_{\tr}^2} \lvert P(Y) \rvert.
  \end{align*}
\end{prp}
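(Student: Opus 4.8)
The plan is to reduce the claim to a single-matrix information‑theoretic inequality by using the highest weight property of $P$, and then to establish that inequality via Pinsker's inequality together with elementary estimates on the Cholesky decomposition.

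First I would observe that the scaling step is the action of an element of the Borel subgroup. The update is $Y' = g\cdot Y$ with $g=(I,\dots,g^{(i)},\dots,I)$ and $g^{(i)}=\diag(\vec p^{(i)}_{\uparrow})^{1/2}(R^{(i)})^{-1}$; since $R^{(i)}$ is upper triangular so is $(R^{(i)})^{-1}$, hence $g^{(i)}$ (a diagonal matrix times an upper‑triangular matrix) is upper triangular, i.e.\ $g\in B$. Because $P\in\HWV_{\vec\lambda^*}(\CC[V]_{(k)})$, \eqref{eq:hwv poly} gives $P(Y')=P(g\cdot Y)=\chi_{\vec\lambda^*}(g^{-1})\,P(Y)$, so everything reduces to a lower bound on $\lvert\chi_{\vec\lambda^*}(g^{-1})\rvert$. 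Only the $i$-th tensor factor is nontrivial, and $g^{(i)}$ is upper triangular with diagonal entries $(p^{(i)}_{n_i+1-j})^{1/2}/R^{(i)}_{jj}$, so writing $q_\ell:=\lvert R^{(i)}_{n_i+1-\ell,\,n_i+1-\ell}\rvert^2$ (the squared diagonal of the Cholesky factor read from the bottom) and using $\lambda^{(i)}_\ell=k\,p^{(i)}_\ell$ one obtains
\[
  \lvert\chi_{\vec\lambda^*}(g^{-1})\rvert \;=\; \prod_{\ell=1}^{n_i}\Big(\frac{p^{(i)}_\ell}{q_\ell}\Big)^{k p^{(i)}_\ell/2}\;=\;\exp\!\Big(\tfrac{k}{2}\,D\Big),\qquad D:=\sum_{\ell=1}^{n_i} p^{(i)}_\ell\,\ln\frac{p^{(i)}_\ell}{q_\ell}.
\]
Taking $\log_2$, the proposition is exactly the bound $D\ge\tfrac1{16}\lVert\diag(\vec p^{(i)}_{\uparrow})-\rho^{(i)}\rVert_{\tr}^2$.

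Next I would record the two structural facts about $\vec q$ furnished by the Cholesky decomposition $\rho^{(i)}=\rho^{(i)}_Y=R^{(i)}(R^{(i)})^\dagger$: for each $L$, $\prod_{\ell=1}^{L}q_\ell$ equals the bottom‑right $L\times L$ principal minor $Q_L(\rho^{(i)})$ (the bottom‑right block of an upper‑triangular Cholesky factor is itself the Cholesky factor of the corresponding block of $\rho^{(i)}$), and $\sum_\ell q_\ell=\sum_j\lvert R^{(i)}_{jj}\rvert^2\le\tr\rho^{(i)}=1$ since $Y$ has unit norm; in particular $D\ge0$. To prove $D\ge\tfrac1{16}\lVert\diag(\vec p^{(i)}_{\uparrow})-\rho^{(i)}\rVert_{\tr}^2$ I would pass through a divergence between the two \emph{matrices} $\diag(\vec p^{(i)}_{\uparrow})$ and $\rho^{(i)}$. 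One route is to show that $D$ dominates, up to an absolute constant, the quantum relative entropy $\tr[\diag(\vec p^{(i)}_{\uparrow})(\ln\diag(\vec p^{(i)}_{\uparrow})-\ln\rho^{(i)})]$ — this is where a telescoping of $D$ over the minors $Q_L(\rho^{(i)})$, together with Hadamard–Fischer and Cauchy interlacing estimates, enters — and then invoke the quantum Pinsker inequality $\tr[\sigma(\ln\sigma-\ln\tau)]\ge\tfrac12\lVert\sigma-\tau\rVert_{\tr}^2$. A more hands‑on route avoiding logarithms of matrices is: set $c:=\sum_\ell q_\ell\le1$, apply the classical Pinsker inequality to $D(\vec p^{(i)}\Vert\vec q/c)$, combine with $\ln(1/c)\ge1-c$ to lower bound $D$ by a function of $\lVert\vec p^{(i)}-\vec q\rVert_1$ and $1-c$, and then bound $\lVert\diag(\vec p^{(i)}_{\uparrow})-\rho^{(i)}\rVert_{\tr}$ by the same two quantities using the Cholesky relations above and the contractivity of the pinching (diagonal) map on the trace norm. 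Either way the loss is a fixed constant, which is what produces $\tfrac1{16}$ (rather than the $\tfrac12$ one would get from Pinsker alone).

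The hard part is precisely this last step. The numbers $q_\ell$ record only the "triangular" (Cholesky) part of $\rho^{(i)}$, so $D$ is a priori insensitive to off‑diagonal discrepancies between $\rho^{(i)}$ and $\diag(\vec p^{(i)}_{\uparrow})$, yet the bound must control the full trace norm $\lVert\diag(\vec p^{(i)}_{\uparrow})-\rho^{(i)}\rVert_{\tr}$. This is the quantitative reflection of the phenomenon that makes the random initialization step of \cref{alg:scaling} necessary (a diagonal tensor is "Borel‑stuck" even when its $G$‑orbit reaches non‑isospectral marginals), and making it precise with a constant independent of the dimensions $n_i$ — rather than one degrading like $n_i$ — is the crux of the proof.
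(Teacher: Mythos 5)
Your reduction to the scalar inequality is exactly right and matches the paper: the scaling factor is an element of the Borel subgroup, so the highest weight property gives $|P(Y')| = \exp(kD/2)\,|P(Y)|$ with $D = \sum_\ell p^{(i)}_\ell \ln(p^{(i)}_\ell/q_\ell)$, and the proposition becomes $D \ge \tfrac1{16}\lVert\diag(\vec p^{(i)}_\uparrow)-\rho^{(i)}\rVert_{\tr}^2$; the paper proves precisely this as a standalone lemma (its \cref{lem:generalized lsw}). You also correctly identify the two structural facts the proof uses: $\sum_\ell q_\ell \le \tr\rho^{(i)} = 1$, and the need to split the trace norm into a ``diagonal discrepancy'' part controlled by Pinsker and a part controlled by the subnormalization defect $1-c$.

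The gap is in the last step of your Route~2. You propose to finish by ``contractivity of the pinching (diagonal) map on the trace norm,'' but this cannot work: pinching gives $\lVert\diag(\vec p)-\diag(\rho)\rVert_{\tr}\le\lVert\diag(\vec p)-\rho\rVert_{\tr}$, which is the wrong direction (you need to \emph{upper} bound $\lVert\diag(\vec p)-\rho\rVert_{\tr}$), and moreover $\vec q$ is not $\diag(\rho)$ — it is the squared diagonal of the Cholesky factor $R$, and $\rho_{jj}=\sum_{k\ge j}|R_{jk}|^2\ge q_j$. The ingredient the paper actually uses to bridge $1-c$ and $\lVert\rho-\diag(\vec q)\rVert_{\tr}$ is the identity $1-c=\sum_{j\ne k}|R_{jk}|^2=\lVert R-\diag(R)\rVert_F^2$ together with the matrix perturbation bound $\lVert AA^\dagger-BB^\dagger\rVert_{\tr}\le(\lVert A\rVert_F+\lVert B\rVert_F)\lVert A-B\rVert_F$ applied to $A=R$, $B=\diag(R)$ (so $BB^\dagger=\diag(\vec q)$), giving $\lVert\rho-\diag(\vec q)\rVert_{\tr}\le 2\sqrt{1-c}$; one then averages this with Pinsker and uses the triangle inequality. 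This matrix inequality is the quantitative crux, and it is genuinely different from contractivity of a pinching. Your Route~1 (telescoping over minors plus Hadamard--Fischer and Cauchy interlacing) is not the paper's route either, and as written it is a plan rather than a proof — it is not clear how to extract a dimension-free constant from it. Your closing discussion of why a constant independent of $n_i$ is subtle is apt, but the actual mechanism that delivers it is the $\lVert AA^\dagger-BB^\dagger\rVert_{\tr}$ bound, not interlacing.
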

The following crucial lemma is needed for the proof. We delay its proof until after the proof of \cref{prp:progress}.

\begin{lem}\label{lem:generalized lsw}
Let $\rho$ be a PSD $n\times n$-matrix with unit trace such that $\rho=RR^\dagger$, where $R$ is an arbitrary $n\times n$-matrix.
Then, $\vec q=(\lvert R_{1,1}\rvert^2,\dots,\lvert R_{n,n}\rvert^2)$ is a subnormalized probability distribution, and, for any probability distribution~$\vec p$,
\begin{align*}
  D_{KL}(\vec p\Vert\vec q) \geq \frac1{16\ln2} \lVert \diag(\vec p) - \rho \rVert_{\tr}^2.
\end{align*}
where $D_{KL}(\vec p\Vert\vec q) := \sum_{j=1}^n p_j \log_2 (p_j/q_j)$ is the KL-divergence.
\end{lem}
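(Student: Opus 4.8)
The plan is to reduce the matrix inequality to a scalar (classical) one by a suitable comparison between the diagonal of~$\rho$ and the numbers $q_j = |R_{j,j}|^2$. First I would record the elementary facts. Writing $R$ with rows $r_j^\dagger$ (so $\rho_{jj} = \|r_j\|^2 \ge |R_{j,j}|^2 = q_j$) and using $\tr\rho = 1$, we get $\sum_j q_j \le \sum_j \rho_{jj} = 1$, so $\vec q$ is a subnormalized probability distribution, as claimed. Moreover $q_j \le \rho_{jj}$ entrywise, which will be the key quantitative input: it lets us pass from the true marginal $\rho_{jj}$ to the ``Cholesky'' marginal $q_j$ only losing in the right direction for a lower bound on the divergence.

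The main chain of inequalities I would set up is
\begin{align*}
  D_{KL}(\vec p \Vert \vec q)
  \;\ge\; D_{KL}\bigl(\vec p \,\Vert\, \diag(\rho)\bigr)
  \;\ge\; \frac{1}{2\ln 2}\,\bigl\| \vec p - \diag(\rho) \bigr\|_1^2
  \;\ge\; \frac{1}{2\ln 2}\cdot\frac{1}{?}\,\bigl\| \diag(\vec p) - \rho \bigr\|_{\tr}^2 .
\end{align*}
The first inequality is monotonicity of KL-divergence in the second argument under $q_j \le \rho_{jj}$: since $\log_2(p_j/q_j) \ge \log_2(p_j/\rho_{jj})$ termwise and each $p_j \ge 0$, summing gives $D_{KL}(\vec p\Vert\vec q) \ge D_{KL}(\vec p\Vert\diag(\rho))$ (one should be slightly careful that $\diag(\rho)$ need not be normalized either, but Pinsker-type bounds hold for subnormalized $\vec q$ too, or one can first normalize and absorb the loss — this is a routine point). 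The second inequality is Pinsker's inequality, $D_{KL}(\vec p\Vert\vec q)\ge \tfrac{1}{2\ln2}\|\vec p-\vec q\|_1^2$, applied with $\vec q = \diag(\rho)$ (in the subnormalized case one gets at worst an extra constant, which is why the final statement only claims the constant $1/(16\ln 2)$ rather than $1/(2\ln 2)$).

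The genuinely substantive step is the last one: comparing the $\ell^1$-distance of the diagonals, $\|\vec p - \diag(\rho)\|_1 = \sum_j |p_j - \rho_{jj}|$, with the trace-norm distance of the full matrices, $\|\diag(\vec p) - \rho\|_{\tr}$. Here I would use a ``pinching'' / diagonal-restriction argument: the completely positive trace-preserving map $\mathcal D$ that zeroes out off-diagonal entries is a contraction for the trace norm, so
\begin{align*}
  \bigl\| \diag(\vec p) - \rho \bigr\|_{\tr}
  \;\ge\; \bigl\| \mathcal D\bigl(\diag(\vec p) - \rho\bigr) \bigr\|_{\tr}
  \;=\; \bigl\| \diag(\vec p) - \diag(\rho) \bigr\|_{\tr}
  \;=\; \bigl\| \vec p - \diag(\rho) \bigr\|_1 .
\end{align*}
Wait — this goes the wrong way for what I wrote above; so in fact I want to run the chain with $\|\diag(\vec p)-\rho\|_{\tr}$ on the \emph{large} side, i.e. I need $\|\vec p - \diag(\rho)\|_1 \le \|\diag(\vec p) - \rho\|_{\tr}$, which is exactly what pinching gives. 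Good — so there is no loss at all in this step (constant $1$), and the claimed constant $1/(16\ln2)$ is comfortably absorbed by the subnormalization slack in Pinsker plus monotonicity. The one place to be careful, and what I expect to be the only real obstacle, is handling the subnormalization of $\vec q$ (and of $\diag(\rho)$) cleanly in Pinsker's inequality and in monotonicity of the divergence when the ``reference'' distribution has total mass $<1$; the cleanest fix is to invoke the extended Pinsker inequality for subnormalized distributions, or to note $q_j \le \rho_{jj}$ and $\sum\rho_{jj}=1$ so that only $\vec q$ (not $\diag\rho$) is subnormalized, apply monotonicity to replace $\vec q$ by $\diag\rho$ first, and then use the ordinary Pinsker inequality for the now-normalized pair $(\vec p, \diag\rho)$, spending the factor of $8$ there only if a looser but simpler subnormalized bound is used. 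Assembling these three inequalities yields $D_{KL}(\vec p\Vert\vec q) \ge \tfrac{1}{16\ln2}\|\diag(\vec p)-\rho\|_{\tr}^2$, completing the proof.
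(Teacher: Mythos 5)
Your plan has a genuine gap, and it is exactly at the step you flagged as "the genuinely substantive step." Having passed from $D_{KL}(\vec p\Vert\vec q)$ to $D_{KL}(\vec p\Vert\diag(\rho))$ via $q_j\le\rho_{jj}$ and applied Pinsker, you are left needing
\begin{align*}
  \lVert \vec p - \diag(\rho)\rVert_1 \;\gtrsim\; \lVert \diag(\vec p) - \rho\rVert_{\tr},
\end{align*}
but the pinching argument you invoke gives the reverse inequality, $\lVert \vec p - \diag(\rho)\rVert_1 \le \lVert \diag(\vec p) - \rho\rVert_{\tr}$, and no inequality in the direction you need can hold with \emph{any} constant: take $\vec p=(1/2,1/2)$ and $\rho=\tfrac12\left(\begin{smallmatrix}1&1\\1&1\end{smallmatrix}\right)$, so that $\diag(\rho)=\vec p$ and the left side is $0$, while $\lVert\diag(\vec p)-\rho\rVert_{\tr}=1$. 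Your later remark that pinching gives you what you need "with no loss" is a sign you reversed the inequality in your head; in fact your chain ends in $\le$, not $\ge$, and is therefore not a lower bound on $D_{KL}$.

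The underlying problem is that replacing $\vec q$ by $\diag(\rho)$ in the first step throws away exactly the information the lemma requires. The trace distance $\lVert\diag(\vec p)-\rho\rVert_{\tr}$ has two contributions: the diagonal discrepancy and the off-diagonal mass of $\rho$. Your chain only ever sees the diagonal discrepancy. The reason the lemma is stated with the Cholesky factor $R$ and $q_j=|R_{jj}|^2$ rather than $\diag(\rho)$ is that the subnormalization slack $1-\lVert\vec q\rVert_1=\sum_{j\neq k}|R_{jk}|^2=\lVert R-\diag(\vec r)\rVert_F^2$ encodes the off-diagonal mass. The paper exploits this by giving \emph{two} lower bounds on $D_{KL}(\vec p\Vert\vec q)$ — Pinsker controls $\lVert\diag(\vec p)-\diag(\vec q)\rVert_{\tr}$, and the elementary bound $D_{KL}(\vec p\Vert\vec q)\ge-\log_2\lVert\vec q\rVert_1\ge\tfrac1{\ln2}(1-\lVert\vec q\rVert_1)$ together with $\lVert AA^\dagger-BB^\dagger\rVert_{\tr}\le(\lVert A\rVert_F+\lVert B\rVert_F)\lVert A-B\rVert_F$ controls $\lVert\rho-\diag(\vec q)\rVert_{\tr}$ — and then averages and applies the triangle inequality through the intermediate matrix $\diag(\vec q)$ (not $\diag(\rho)$). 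To repair your proof you would need to retain $\vec q$ as the comparison distribution and add the second, subnormalization-based lower bound; the monotonicity step $D_{KL}(\vec p\Vert\vec q)\ge D_{KL}(\vec p\Vert\diag(\rho))$ is correct but cannot be the only place where $\vec q$ is used.
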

\cref{prp:progress} follows straightforwardly from \cref{lem:generalized lsw}:
\begin{proof}[Proof of \cref{prp:progress}:]
Since $P$ is a highest-weight vector and $\diag(\vec p^{(i)}_\uparrow)^{1/2} (R^{(i)})^{-1}$ is upper-triangular, \cref{eq:hwv poly} shows that
\begin{align*}
  \lvert P(Y') \rvert^2
&= \left( \prod_{j=1}^{n_i} (p^{(i)}_{n_i+1-j})^{kp^{(i)}_{n_i+1-j}} \lvert R^{(i)}_{j,j} \rvert^{-2kp^{(i)}_{n_i+1-j}} \right) \lvert P(Y)\rvert^2\\
&= \left( \prod_{j=1}^{n_i} (p^{(i)}_{n_i+1-j})^{p^{(i)}_{n_i+1-j}} \lvert R^{(i)}_{j,j} \rvert^{-2p^{(i)}_{n_i+1-j}} \right)^k \lvert P(Y)\rvert^2\\
&= 2^{k D_{KL}(\vec p^{(i)}_{\uparrow} \Vert \vec q^{(i)})} \lvert P(Y)\rvert^2
\;\geq\; 2^{\frac k{16\ln2}  \left\lVert \diag\left(\vec p^{(i)}_{\uparrow}\right) - \rho^{(i)} \right\rVert_{\tr}^2} \lvert P(Y)\rvert^2.
\end{align*}
The inequality is \cref{lem:generalized lsw}, which applies because $Y$ is unit norm and hence $\rho^{(i)}$ is unit trace.
\end{proof}

\begin{proof}[Proof of \cref{lem:generalized lsw}:]
  To see that $\vec q$ is subnormalized, observe that
  \begin{align*}
    \sum_{j=1}^n q_j
  = \sum_{j=1}^n \lvert R_{j,j}\rvert^2
  \leq \tr[R R^\dagger] = \tr[\rho] = 1.
  \end{align*}
  On the one hand, we can now apply Pinsker's inequality in the form~\cite[Thm.~10.8.1]{wilde2013quantum}, where the second distribution is allowed to be subnormalized:
  \begin{align*}
    D_{KL}(\vec p\Vert\vec q)
  \geq \frac1{2\ln2} \lVert \vec p - \vec q \rVert_1^2
  = \frac1{2\ln2} \lVert \diag(\vec p) - \diag(\vec q)\rVert_{\tr}^2.
  \end{align*}
  On the other hand,
  \begin{align*}
    D_{KL}(\vec p\Vert\vec q)
  &= D_{KL}(\vec p\Vert\vec q/\lVert\vec q\rVert_1) - \log_2 \lVert\vec q\rVert_1
  \geq - \log_2 \lVert\vec q\rVert_1
  \geq \frac1{\ln2} \left( 1 - \lVert\vec q\rVert_1 \right),
  \end{align*}
  since $\ln x\leq x-1$ for all $x>0$.
  Further,
  \begin{align*}
  \frac1{\ln2} \left( 1 - \lVert\vec q\rVert_1 \right)
  = \frac1{\ln2} \sum_{j\neq k}^n \lvert R_{j,k}\rvert^2
  = \frac1{\ln2} \lVert R - \diag(\vec r) \rVert_F^2
  \geq \frac1{4\ln2} \lVert \rho - \diag(\vec q) \rVert_{\tr}^2,
  \end{align*}
  where $\vec r=(R_{1,1},\dots,R_{,n,n})$ is the diagonal of~$R$.
  In the last step, we used that for any two matrices~$A$ and $B$, $\lVert AA^\dagger - BB^\dagger \rVert_{\tr} \leq \lVert A + B \rVert_F \lVert A - B\rVert_F \leq (\lVert A\rVert_F + \lVert B\rVert_F) \lVert A - B \rVert_F$ (see \cite[Proof of X.2.4]{bhatia2013matrix}).
  Averaging both inequalities, we find that
  \begin{align*}
    D_{KL}(\vec p\Vert\vec q)
  &\geq \frac1{4\ln2} \lVert \diag(\vec p) - \diag(\vec q)\rVert_{\tr}^2 + \frac1{8\ln2} \lVert \rho - \diag(\vec q) \rVert_{\tr}^2\\
  &\geq \frac1{8\ln2} \left( \lVert \diag(\vec p) - \diag(\vec q)\rVert_{\tr}^2 + \lVert \rho - \diag(\vec q) \rVert_{\tr}^2 \right)\\
  &\geq \frac1{16\ln2} \left( \lVert \diag(\vec p) - \diag(\vec q)\rVert_{\tr} + \lVert \rho - \diag(\vec q) \rVert_{\tr} \right)^2\\
  &\geq \frac1{16\ln2} \lVert \diag(\vec p) - \rho \rVert_{\tr}^2. \qedhere
  \end{align*}
\end{proof}

The following lemma can be found in \cite{bhatia2013matrix}.

\begin{lem}[Lemma IV.3.2 in \cite{bhatia2013matrix}]\label{lem:Bhatia} Let $A$ and $B$ be Hermitian matrices. Then
$$
\norm{A-B}_{\tr} \ge \norm{\spec(A) - \spec(B)}_1
$$
Here $\spec(A)$ denotes the vector of eigenvalues of $A$ arranged in decreasing order.
\end{lem}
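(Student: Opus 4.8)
The statement is Lidskii's eigenvalue perturbation inequality specialized to the trace norm, so the plan is to import the Lidskii--Wielandt inequalities (which themselves live in \cite{bhatia2013matrix}) and then carry out a short bookkeeping argument with the positive and negative parts of the spectrum of $A-B$. Throughout, write $\vec\alpha := \spec(A)$, $\vec\beta := \spec(B)$ and $\vec\gamma := \spec(A-B)$, all arranged non-increasingly.

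First I would recall the Lidskii--Wielandt inequalities: for every index set $S\subseteq\{1,\dots,n\}$,
\[
  \sum_{j\in S}\left(\alpha_j-\beta_j\right)\ \le\ \sum_{i=1}^{\lvert S\rvert}\gamma_i ,
\]
and, applying the same bound to the Hermitian matrix $B-A$ (whose non-increasingly ordered eigenvalues are $-\gamma_n\ge\dots\ge-\gamma_1$),
\[
  \sum_{j\in S}\left(\beta_j-\alpha_j\right)\ \le\ \sum_{i=1}^{\lvert S\rvert}\left(-\gamma_{n+1-i}\right).
\]
Equivalently, one can phrase the input as the majorization $\vec\alpha-\vec\beta\prec\vec\gamma$ and then invoke Karamata's inequality for the convex function $t\mapsto\lvert t\rvert$; the two routes are interchangeable, and I would use whichever is stated most cleanly in the reference.

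Next I would split indices by the sign of $\alpha_j-\beta_j$: set $J_+:=\{\,j:\alpha_j>\beta_j\,\}$ and $J_-:=\{\,j:\alpha_j<\beta_j\,\}$, so that $\norm{\spec(A)-\spec(B)}_1=\sum_{j\in J_+}(\alpha_j-\beta_j)+\sum_{j\in J_-}(\beta_j-\alpha_j)$. Applying the first inequality with $S=J_+$ and the second with $S=J_-$, and using that the partial sums $\sum_{i=1}^k\gamma_i$ of a non-increasing sequence are maximized at the last strictly positive term (so $\sum_{i=1}^k\gamma_i\le\sum_{i:\gamma_i>0}\gamma_i$ for every $k$, and symmetrically $\sum_{i=1}^k(-\gamma_{n+1-i})\le\sum_{i:\gamma_i<0}(-\gamma_i)$), one obtains
\[
  \norm{\spec(A)-\spec(B)}_1\ \le\ \sum_{i:\gamma_i>0}\gamma_i+\sum_{i:\gamma_i<0}(-\gamma_i)\ =\ \sum_{i=1}^n\lvert\gamma_i\rvert\ =\ \norm{A-B}_{\tr},
\]
which is the claim.

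I do not expect any genuine obstacle here: the entire content of the lemma is carried by the classical Lidskii--Wielandt perturbation bound, and the remaining work is the elementary observation that summing its ``positive-part'' and ``negative-part'' instances exactly reconstitutes $\norm{A-B}_{\tr}$. This is also why it is legitimate for the paper to simply cite it as Lemma IV.3.2 of \cite{bhatia2013matrix}.
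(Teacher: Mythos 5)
Your proof is correct. The paper itself does not prove this lemma but simply cites it as Lemma IV.3.2 in Bhatia's \emph{Matrix Analysis}, and your argument is essentially the standard one behind that reference: Lidskii's majorization $\vec\alpha-\vec\beta\prec\vec\gamma$ (equivalently, the two Lidskii--Wielandt partial-sum bounds you state) followed by the elementary observation that the positive- and negative-part instances sum to $\lVert A-B\rVert_{\tr}$, or, in your alternate phrasing, Karamata's inequality with $f(t)=\lvert t\rvert$.
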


As a consequence, we get that at the end of \cref{alg:scaling}, the output tensor $Y = g \cdot X$ satisfies
$$
\norm{\spec\left( \rho^{(i)}_Y\right) - \diag\left( \vec p^{(i)}\right)}_1 \le \eps
$$
for all $i=1$ to $d$.

\subsection{Proof of Theorem~\ref{thm:scaling}}\label{thm:scaling:proof}

In this section, we prove \cref{thm:scaling} which we restate for convenience.

\mainthm*

\begin{proof}[Proof of \cref{thm:scaling}]
Let $X\in\Ten(n_0;n_1,\dots,n_d)$ be a tensor whose entries are Gaussian integers, $\vec p\in P_+(n_1,\dots,n_d)$ a rational spectrum such that $p^{(i)}_{n_i}>0$ for all $i=1,\dots,d$, and $\eps>0$.
Assume that $\vec p\in\Delta(X)$.
We need to show that, with probability at least 1/2, \cref{alg:scaling} terminates in step~\ref{it:borel scale} by outputting an appropriate scaling.

In step~\ref{it:randomize} we select a tuple of random matrices~$g$ according to the parameters explained in \cref{cor:random is generic}.

Thus it follows from \cref{cor:random is generic} that, with probability at least 1/2, $g \cdot X\neq0$ and $[g \cdot X] \in \overline{G\cdot [X]}$ and there exists a highest weight vector~$P\in\HWV_{\vec\lambda^*}(\CC[V]_{(k)})$ of degree $0<k\leq K$, where $\vec\lambda=k\vec p$, such that $P(g \cdot X)\neq0$.

We may condition on this event.
By \cref{prp:hwv eval}, we may further assume that $P$ has integer coefficients and that it satisfies the bound
\begin{align}\label{eq:hwv bound}
  \lvert P(Y)\rvert \leq (n_1\dots n_d)^k \lVert Y\rVert
\end{align}
for all tensors $Y\in V$.

For step~\ref{it:full rank}, note that $P(g \cdot X)\neq0$, so $\vec p\in\Delta(g \cdot X)$.
On the other hand, the ranks of the one-body marginals $\rho^{(i)}$ are invariant under scaling by the group action~$G$~\cite{dur2000three,burgisser2017alternating}.
Therefore, since the target spectra~$\vec p^{(i)}$ have full rank by assumption, this means that the $\rho_{g \cdot X}^{(i)}$ necessarily must have full rank also.
It follows that the algorithm does not halt and instead proceeds to step~\ref{it:borel scale}. This also implies that $g$ is not singular and is hence actually \emph{in} $G$, because $\rho_{g \cdot X}^{(i)} = g^{(i)} A g^{(i), \dagger}$ for some matrix $A$, and this would be singular if $g^{(i)}$ were singular.\\

We now move to the scaling step~\ref{it:borel scale}.
Let us denote by $g[t]\in G$ the value of the group element~$g$ at the beginning of the $t$-th iteration, and by $Y[t] := g[t] \cdot X$ the corresponding tensor.
Suppose for sake of finding a contradiction that the algorithm has not terminated after~$T$ steps but instead proceeds to \ref{it:give up}.
We will prove the following three statements:
\begin{itemize}
\item \textbf{Lower bound:} $\lvert P(Y[1]) \rvert \geq 2^{-k(2\sum_{i=0}^d \log_2(n_i) + b + d\log_2(M))}$,
\item \textbf{Progress per step:} $\lvert P(Y[t+1])\rvert > 2^{\frac k{32\ln 2}\eps^2} \lvert P(Y[t])\rvert$ for $t=1,\dots,T$,
\item \textbf{Upper bound:} $\lvert P(Y[t]) \rvert \leq 2^{k\sum_{i=1}^d \log_2(n_i)}$.
\end{itemize}
For the lower bound, note that $g[1] = (g^{(1)} / \lVert g\cdot X\rVert, g^{(2)},\dots,g^{(d)})$ and so $Y[1] := g \cdot X / \norm{ g \cdot X}$.
Now,
\begin{align*}
  \lvert P(Y[1]) \rvert
= \frac{\lvert P(g \cdot X) \rvert}{\lVert g\cdot X\rVert^k}
\geq \frac1{\lVert g\cdot X\rVert^k},
\end{align*}
since $P$ is homogeneous and both $P$ and $g \cdot X$ have integer coefficients only.
On the other hand,
\begin{align*}
  &\lVert g\cdot X\rVert
\leq \lVert X\rVert \prod_{i=1}^d \lVert g^{(i)} \rVert_{\operatorname{op}}
\leq \lVert X\rVert \prod_{i=1}^d \lVert g^{(i)} \rVert_F
\leq \sqrt{2 n_0 n_1 \dots n_d} 2^b \prod_{i=1}^d (n_i M) \\
= &2^{\frac12 + \frac12 \log_2(n_0 n_1 \dots n_d) + b + \log_2(n_1\dots{} n_d) + d\log_2(M)}
\leq 2^{2\sum_{i=0}^d \log_2(n_i) + b + d\log_2(M)},
\end{align*}
where $\lVert \cdot\rVert_{\operatorname{op}}$ denotes the operator norm (assuming $n_0 n_1\cdots{}n_d>1$).
Thus the lower bound follows.
The progress per step follows directly from the analysis in the preceding section (\cref{prp:progress}) and the fact that $Y[t]$ remain unit vectors throughout, which we prove below.
The upper bound also follows from the fact that $Y[t]$ remain unit vectors throughout and \cref{eq:hwv bound}.
The unit norm condition is clear for $Y[1]$, and for $t=1,\dots,T$ we have that
\begin{align*}
&\quad \lVert Y[t+1]\rVert^2
= \tr[\rho^{(i)}_{Y[t]}]\\
&= \tr[\diag(\vec p_\uparrow^{(i)})^{1/2} (R^{(i)})^{-1} \rho^{(i)}_{Y[t-1]} ((R^{(i)})^{-1})^\dagger \diag(\vec p_\uparrow^{(i)})^{1/2}] \\
&= \tr[\diag(\vec p_\uparrow^{(i)})]
= 1,
\end{align*}
where $i$ is the index of the marginal that we selected in the $t$-th scaling step and $R^{(i)}$ the Cholesky factor of $\rho^{(i)}_{Y[t]}$.
Thus we have proved all three statements.
Together, they imply that
$$
\frac{T k}{32\ln 2}\eps^2
< k \left(\sum_{i=1}^d \log_2(n_i) \right) + k \left(2\sum_{i=0}^d \log_2(n_i) + b + d\log_2(M) \right)
$$
The $k$ appears on both sides (which is indeed crucial since $k$ could be exponential in the input parameters!) and we get that
\begin{align*}
\frac{T}{32\ln 2}\eps^2
< \sum_{i=1}^d \log_2(n_i) + 2\sum_{i=0}^d \log_2(n_i) + b + d\log_2(M)
\leq 3\sum_{i=0}^d \log_2(n_i) + b + d\log_2(M)
\end{align*}
and so
\begin{align*}
  T < \frac{32\ln2} {\eps^2} \left( 3\sum_{i=0}^d \log_2(n_i) + b + d\log_2(M) \right),
\end{align*}
which is the desired contradiction.
\end{proof}

\subsection{Bit complexity analysis of Algorithm~\ref{alg:scaling}}\label{subsec:bit_complexity}

In this section, we give a sketch on how to implement \cref{alg:scaling} so that all the intermediate computations are done on numbers with polynomial number of bits. The analysis we provide seems simpler than the analysis in \cite{garg2016deterministic, franks2018operator}. We use notation from the proof of \cref{thm:scaling} in \cref{thm:scaling:proof}. We will maintain the group elements $g[t]$'s. Suppose the algorithm, given $g[t]$, would have chosen to normalize the $i^{\text{th}}$ coordinate. That is, it would have multiplied $A$ to the $i^{\text{th}}$ element of $g[t]$.  We will truncate $A$ to obtain $B$, and set $g[t+1]$ to be $B$ multiplied to the $i^{th}$ coordinate of $g[t]$ (we will also normalize by an appropriate factor close to $1$ to be specified later). The truncation will be done to a fixed number of bits after the binary point.

Suppose we have a tensor $Y = Y[t] = g[t] \cdot X$ (at some point $t$) and a marginal $\rho^{(i)}_Y$ that is at least $\eps$-far from $\diag(\vec p^{(i)}_\uparrow)$.
Assume moreover that we know that $\lambda_{\min}\left(\rho^{(k)}_Y\right) \geq 2^{-C}$ for all $k \in [d]$ (for $C\geq0$) and that $\lVert Y \rVert \leq 1$.

Consider the Cholesky factorization $\rho_Y^{(i)} = R R^\dagger$ ($R$ upper triangular) and define
\begin{align*}
  A := \diag\left(\vec p^{(i)}_\uparrow\right)^{1/2} R^{-1}.
\end{align*}
We will use the notation $\lesssim$ to suppress polynomial factors in $\max_i n_i,d,b$. Note that
\begin{align*}
  \lVert A \rVert^2 \leq \lVert R^{-1} \rVert^2 = \lVert \rho^{-1} \rVert_{\tr} \lesssim \lambda^{-1}_{\min}(\rho) \lesssim 2^C.
\end{align*}
Let $B$ be a truncation of $A$ to $Q$ bits after the binary point to be determined later.
(In total we need $O(n_i^2 (C+Q))$ bits to store $B$).
This means that 
\begin{align*}
  \lVert A - B \rVert \lesssim 2^{-Q},
\end{align*}
and, as a consequence,
\begin{align}\label{eq:relerr}
  \left\lvert \frac {\lvert B_{jj} \rvert} {\lvert A_{jj} \rvert} - 1 \right\rvert
\leq \frac {\lvert B_{jj} - A_{jj} \rvert} {\lvert A_{jj} \rvert}
\lesssim 2^{-Q} \lVert Y \rVert
\leq 2^{-Q},
\end{align}
where we used that
$\lvert A_{jj} \rvert \gtrsim \lvert R_{jj} \rvert^{-1} \geq \lVert R \rVert^{-1} = \lVert Y\rVert^{-1}$.
Now,
\begin{align*}
&\quad \lvert P(B \cdot Y)\rvert^2
= \prod_j \lvert B_{jj} \rvert^{2 k p^{(i)}_{\uparrow,j}} \lvert P(Y)\rvert^2
= 2^{\sum_j 2 k p^{(i)}_{\uparrow,j} \log \lvert B_{jj} \rvert} \lvert P(Y)\rvert^2 \\
&= 2^{\sum_j 2 k p^{(i)}_{\uparrow,j} \log \frac{\lvert B_{jj} \rvert}{\lvert A_{jj} \rvert}} \lvert P(A \cdot Y)\rvert^2
\geq 2^{\sum_j 2 k p^{(i)}_{\uparrow,j} \log \frac{\lvert B_{jj} \rvert}{\lvert A_{jj} \rvert}} 2^{k c \eps^2} \lvert P(Y)\rvert^2
\end{align*}
where $c=1/32\ln2$.
The last step is only applicable if $\norm{Y} \le 1$ (\cref{prp:progress} applies even if $\norm{Y} \le 1$), so we will make sure that that is satisfied throughout.
Assuming $2^{-Q} \leq c' \eps^2/2$, which is $\leq1/2$, \cref{eq:relerr} implies that $\left\lvert \log \frac {\lvert B_{jj} \rvert} {\lvert A_{jj} \rvert} \right\rvert \lesssim c \eps^2/2$, and so
\begin{align*}
  \lvert P(B \cdot Y)\rvert^2 \geq 2^{k c \eps^2/2} \lvert P(Y)\rvert^2,
\end{align*}
so we make progress.

Let's see how the norm changed under this scaling step:
\begin{align*}
  \lVert B \cdot Y\rVert
\leq \lVert B - A \rVert \lVert Y\rVert + \norm{A \cdot Y}
= \lVert B - A \rVert \lVert Y\rVert + 1
\leq 2^{-Q} \lVert Y\rVert + 1 
\leq 2^{-Q} + 1
\end{align*}
However, we wanted the norm to remain $\le 1$ throughout. So we will normalize by a factor of $\kappa = 2^{-Q} + 1$. Then
\begin{align*}
 \bigg\lvert P\left(\frac{1}{\kappa} B \cdot Y\right)\bigg\rvert^2 \geq 2^{k c \eps^2/2 - 2k \log(\kappa)} \lvert P(Y)\rvert^2,
\end{align*}
So as long as $2^{-Q} \ll \eps^2$, we still make progress.

Finally, let's lower-bound the smallest eigenvalue of the new marginals.
Assume first that we applied $A$, not $B$.
Then the $i$-th marginal would be $\diag(\vec p^{(i)}_\uparrow)$, hence $\lambda_{\min}\left( \rho^{(i)}\right)\geq p^{(i)}_{n_i}$, while $\lambda_{\min}$ decreases by a most a factor of $p^{(i)}_{n_i}$ for all other marginals since we have
\begin{align*}
  \lambda_{\min}(A^\dagger A) \geq p^{(i)}_{n_i} \lambda_{\min}\left(\left(\rho_Y^{(i)}\right)^{-1}\right) = p^{(i)}_{n_i}\lambda^{-1}_{\max}\left(\rho_Y^{(i)}\right) \geq \frac {p^{(i)}_{n_i}} {\lVert Y\rVert} \geq p^{(i)}_{n_i}.
\end{align*}
On the other hand,
\begin{align*}
  \lVert B \cdot Y - A \cdot Y \rVert \leq 2^{-Q},
\end{align*}
so the ideal marginals should be $O(2^{-Q})$ close to the real marginals (in particular their eigenvalues).
If we choose $Q$ such that $2^{-Q} \leq 2^{-(C+1)}$, say, then this should mean that, roughly speaking, $\lambda_{\min}\geq 2^{-C-1}$ after the step. The normalization by $\kappa$ also doesn't affect the $\lambda_{\min}$ much. If we run for $T$ iterations, which will turn out to be polynomial by the above analysis, $C$ will remain bounded by $T \poly(\max_i n_i, d,b)$. Thus $Q$ can be chosen to be $O(T \poly(\max_i n_i, d,b))$ and the total bit size remains bounded throughout the algorithm.

\subsection{Singular spectra}\label{sec:singular_spectra}

\begin{dfn}
Let $r_i:=\rk \diag(\vec p^{(i)})$. Define
 $\vec{p}^{(i),+}:= (p_1, \dots, p_{r_i})$, and $\vec{p}_+ = (\vec{p}^{(1),+}, \dots, \vec{p}^{(d),+})$,
 define $X_+ \in \Ten_{n_0; r_1, \dots, r_d} (\CC)$ to be the restriction of $X$ to the coordinates $n_1 - r_j + 1 \leq i_j \leq n_j$ for all $j \in [d]$, and
 define $B_+ = B(r_1) \times \dots \times B(r_d)$.
\end{dfn}
The following lemma shows that to determine scalability of $X$ by $B$ to specified marginals, it is enough to consider scalability of $X_+$ by $B_+$ to the ``positive parts'' of the same marginals.

\begin{lem}\label{lem:singular_spectra}
 $\vec p_+ \in \Delta^{B_+}(X_+)$ if and only if $\vec p \in \Delta^{B}(X)$. Furthermore, if $b_+ \in B_+$ such that $b_+ \cdot X_+$ has marginals that are $\epsilon/2$-close to $\vec p_+$, then by in linear time one can obtain $b \in B$ such that $b \cdot X$ has marginals that are $\epsilon$-close to $\vec p$.
\end{lem}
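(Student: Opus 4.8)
The plan is to establish both directions of the equivalence $\vec p_+ \in \Delta^{B_+}(X_+) \Leftrightarrow \vec p \in \Delta^{B}(X)$ by relating scalings of $X$ to scalings of $X_+$ directly, rather than through the highest-weight-vector criterion. The key structural observation is that $X$, as an element of $\Ten(n_0;n_1,\dots,n_d)$, lives inside the subspace obtained by embedding $\Ten(n_0;r_1,\dots,r_d)$ into the ``last $r_j$ coordinates'' block of each tensor factor $\CC^{n_j}$; moreover the restriction $X_+$ is precisely the projection of $X$ onto that block, and the target $\diag(\vec p_\uparrow^{(j)}) = \diag(0,\dots,0,p^{(j)}_{r_j},\dots,p^{(j)}_1)$ is supported exactly on the same block.

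For the direction $\vec p_+ \in \Delta^{B_+}(X_+) \Rightarrow \vec p \in \Delta^{B}(X)$, I would take $b_+ \in B_+$ with $\rho^{(j)}_{b_+ \cdot X_+}$ close to $\diag(\vec p_\uparrow^{(j),+})$ and lift each $b_+^{(j)} \in B(r_j)$ to an element $b^{(j)} \in B(n_j)$: put $b_+^{(j)}$ in the bottom-right $r_j\times r_j$ block, put the identity (or a small generic perturbation) in the top-left $(n_j-r_j)\times(n_j-r_j)$ block, and zeros in the upper-right block so that the result is still upper-triangular. The point is that $b^{(j)}$ restricted to the relevant block agrees with $b_+^{(j)}$, and $X$ is supported on that block, so $b \cdot X$ equals the image of $b_+ \cdot X_+$ under the embedding, whence $\rho^{(j)}_{b\cdot X}$ is block-diagonal with the $\rho^{(j)}_{b_+\cdot X_+}$ block in the bottom-right and zeros elsewhere — exactly $\epsilon$-close to $\diag(\vec p_\uparrow^{(j)})$ when $b_+\cdot X_+$ is $\epsilon/2$-close to $\vec p_+$ (one factor of $2$ or a trivial normalization is absorbed here, which also gives the ``furthermore'' statement: the lift is computed in linear time and the closeness is preserved). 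Hence $\vec p \in \Delta^B(X)$, and in fact this already proves the quantitative second sentence of the lemma.

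For the converse $\vec p \in \Delta^{B}(X) \Rightarrow \vec p_+ \in \Delta^{B_+}(X_+)$, I would use \cref{prp:borel converse}: $\vec p \in \Delta^B(X)$ yields a highest weight vector $P \in \HWV_{m\vec\lambda^*}(\CC[V]_{(\ell m)})$ with $P(X) \neq 0$, where $\vec\lambda = \ell\vec p$. Because $\vec p^{(j)}$ has exactly $r_j$ nonzero entries, the highest weight $\vec\lambda^{(j)}$ satisfies $\lambda^{(j)}_{r_j+1} = \dots = \lambda^{(j)}_{n_j} = 0$, i.e. only the first $r_j$ parts are nonzero (after the dualization these become the bottom coordinates — one must track the $\uparrow$ conventions carefully). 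Looking at the explicit construction in \cref{subsec:hwv}, the factors $\Det_{(\vec\lambda^{(j)})^*,\pi^{(j)}}$ only involve determinants of sizes $\leq r_j$ taken from the bottom-most rows, hence only depend on the coordinates $n_j - r_j + 1,\dots,n_j$ of the relevant tensor factor. Therefore $P(X)$ depends only on $X_+$, and in fact $P$ descends to a highest weight vector $P_+$ on $\Ten(n_0;r_1,\dots,r_d)$ of highest weight $m(\vec\lambda_+)^*$ with $P_+(X_+) = P(X) \neq 0$. Applying \cref{prp:borel} to $X_+$ (with the group $B_+$) then gives $\vec p_+ \in \Delta^{B_+}(X_+)$.

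The main obstacle I anticipate is bookkeeping of the ordering/dualization conventions: $\Delta^B$ is defined in terms of $\vec p_\uparrow$ and $\rho^{(i)}$ (so the target is $\diag(0,\dots,0,p_{r_i},\dots,p_1)$), while the highest-weight-vector machinery uses $\vec\lambda^* = -\vec\lambda_\uparrow$, and the $\Det_\ell$ forms pick out the \emph{bottom} $\ell\times\ell$ block — I need to verify that all these ``bottom-aligned'' conventions are mutually consistent so that the support of everything (the tensor $X$, the target marginal, the relevant matrix blocks, and the determinantal factors of the highest weight vectors) really is the same $r_j$-dimensional corner. Once that alignment is confirmed, both directions are essentially immediate; the embedding argument for the forward direction is elementary linear algebra, and the converse is a one-line application of the construction in \cref{subsec:hwv} combined with \cref{prp:borel,prp:borel converse}. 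A minor secondary point is ensuring the lifted $b^{(j)}$ is genuinely invertible and upper-triangular (true by construction, since a block-upper-triangular matrix with invertible diagonal blocks is invertible).
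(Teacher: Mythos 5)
Your converse direction is essentially correct and amounts to the same argument as the paper's, reached by slightly different bookkeeping: you argue directly from the explicit spanning set of \cref{eq:hwv concrete} that each determinantal factor $\Det_{(\vec\lambda^{(j)})^*,\pi^{(j)}}$ only touches the bottom $r_j$ coordinates when $\vec\lambda^{(j)}$ has exactly $r_j$ nonzero parts, whereas the paper uses the weight identity $P(X) = P\bigl((\delta I_{n_i-r_i}\oplus I_{r_i})\cdot X\bigr)$ for all $\delta>0$ together with a $\delta\to 0$ limit to conclude $P(X)=P((X_+)_-)$. The two routes are interchangeable (both pass through \cref{prp:borel converse}, \cref{prp:hwv eval}, and \cref{prp:borel}), and your version is if anything a bit more concrete.

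The forward direction, however, has a genuine gap. Your ``key structural observation'' that $X$ ``lives inside the subspace obtained by embedding $\Ten(n_0;r_1,\dots,r_d)$ into the last $r_j$ coordinates block'' is simply false: $X_+$ is \emph{defined} as the restriction of $X$ to that bottom corner, and $X$ can have arbitrary nonzero entries outside it. Consequently the step ``$X$ is supported on that block, so $b\cdot X$ equals the image of $b_+\cdot X_+$ under the embedding'' fails, and your lifted scaling $b^{(j)} = I_{n_j-r_j}\oplus b_+^{(j)}$ leaves all the extraneous entries of $X$ untouched; the marginals $\rho^{(j)}_{b\cdot X}$ then pick up contributions outside the bottom $r_j\times r_j$ block that need not be small, so they need not be close to $\diag(\vec p^{(j)}_\uparrow)$ at all. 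Your parenthetical ``small generic perturbation'' of the identity does not help, since perturbing $I$ still leaves those contributions of order one. What is required is a small \emph{contraction} in the top-left block: take $b^{(j)} = \delta I_{n_j-r_j}\oplus b_+^{(j)}$ with $\delta>0$ small (the paper chooses $\delta \le \eps^{1/d}/(4\lVert X\rVert)$). This is still invertible upper-triangular, and as $\delta\to 0$ it suppresses the top coordinates so that $b\cdot X$ converges to the zero-padded embedding $(b_+\cdot X_+)_-$; a sufficiently small $\delta$ then absorbs the extra $\eps/2$ of error, which is exactly where the quantitative ``furthermore'' clause comes from. So the direction is recoverable, but it needs $\delta I$ with $\delta\ll 1$, not the identity, in the top-left block.
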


\begin{proof}
Suppose $\vec p_+ \in \Delta^{B_+}(X)$ and $b_+ \in B_+$ such that $b_+ \cdot X_+$ has marginals that are $\epsilon/2$-close to $\vec p_+$. It will be trivial to obtain $b$ - simply set $b$ to be the block-diagonal matrix $(\delta I_{n_i - r_i} \oplus b_+)$, where $\delta$ is at most, say, $\epsilon^{1/d}/ (4\|X\|)$.

On the other hand, suppose $\vec p \in \Delta^{B}(X)$. By \cref{prp:borel converse} and \cref{prp:hwv eval}, there exists a highest weight vector $P$ of weight $\vec \lambda^*$ (for some $k$ s.t. $\vec \lambda = k \vec p$ is integral) such that $P(X)\neq0$. Since $P$ has weight $\vec \lambda^*$ and $p_{r_i + 1}^{(i)}=  \dots = p_{n_i}^{(i)} = 0$, we have
$$ P(X) = P( (\delta I_{n_i - r_i})\oplus I_{r_i})\cdot X)$$
for all $i$ and all $\delta > 0$. Applying this in succession for $i = 1,\dots d$ and allowing $\delta \to 0$ shows $P(X) = P((X_+)_-)$, where for $Y \in \Ten_{n_0; r_1, \dots, r_d}(\CC)$ we define $(Y)_-$ to be the ``padded'' element of $\Ten_{n_0; n_1, \dots, n_d}(\CC)$ agreeing with $Y$ on coordinates $i_0,i_1, \dots, i_d$ where $n_1 - r_j + 1 \leq i_j \leq n_j$ for all $j \in [d]$ and zero on all other coordinates. Now it's easy to check that $P_+:Y \to P((Y)_-)$ is a highest weight vector of weight $(\vec \lambda_+)^*$ ($\vec \lambda_+ = k \vec p_+$) for the action of $\GL(r_1)\times \dots \times \GL(r_d)$ on $\Ten_{n_0; r_1, \dots, r_d}(\CC)$ and $P_+(X_+) \neq 0$. By \cref{prp:borel}, $\vec p_+ \in \Delta^{B_+}(X_+)$.
\end{proof}
The lemma implies that we can modify \cref{alg:scaling} to prove \cref{thm:main}, namely that there is an efficient scaling algorithm that works even if the target marginals are singular.
\begin{cor}\label{cor:singular_algorithm} \cref{thm:main} is true.
\end{cor}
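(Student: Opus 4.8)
The plan is to reduce the general case of \cref{thm:main} to the non-singular case, which is already handled by \cref{thm:scaling}. If $p^{(i)}_{n_i}>0$ for all $i$, there is nothing to do: running \cref{alg:scaling} with the polynomially bounded precision of \cref{subsec:bit_complexity} takes time $\poly(N,1/\eps)$, and by \cref{thm:scaling} it succeeds with probability at least $1/2$, which can be amplified by repetition; by the remark after \cref{thm:scaling} this is precisely \cref{thm:main} in this case. So assume $\vec p$ is singular, i.e.\ $r_i:=\rk\diag(\vec p^{(i)})<n_i$ for some $i$, so that $\vec p^{(i)}=(p^{(i)}_1,\dots,p^{(i)}_{r_i},0,\dots,0)$ and $\vec p_+$ is full-rank.

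The modification of \cref{alg:scaling} I would use is the following: run the randomization step~\ref{it:randomize} \emph{unchanged} to obtain $Z=A\cdot X$ with $A^{(i)}$ random $n_i\times n_i$ integer matrices; pass to the restriction $Z_+\in\Ten(n_0;r_1,\dots,r_d)$; then run steps~\ref{it:full rank}--\ref{it:give up} on the input $(Z_+,\vec p_+,\eps/2)$, treating $Z_+$ as an already-randomized starting tensor (in particular, report $\vec p\notin\Delta(X)$ if $Z_+=0$, or if some $\rho^{(i)}_{Z_+}$ is singular, or if the Borel-scaling loop has not terminated after its prescribed number of iterations); finally, if the inner run produces $g_+\in\GL(r_1)\times\dots\times\GL(r_d)$ with $g_+\cdot Z_+$ that is $\tfrac{\eps}{2}$-close to $\vec p_+$, output $g\in G$ with $g^{(i)}:=(\delta I_{n_i-r_i}\op g_+^{(i)})A^{(i)}$ for a sufficiently small $\delta>0$ (e.g.\ $\delta=\eps^{1/d}/(4\norm X)$). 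The crucial point is that the restriction must be applied \emph{after} the randomization: \cref{lem:singular_spectra} relates Borel-scalability of a tensor to Borel-scalability of its restriction, and Borel-scalability is not invariant under the full group $G$ --- for instance $X=e_1\ot e_2$ with $\vec p=((1,0),(1,0))$ has $\vec p\in\Delta(X)$ but $X_+=0$ --- so the lemma can only be combined with \cref{cor:random is generic}, which converts membership $\vec p\in\Delta(X)$ into Borel-scalability of the \emph{randomized} tensor $Z$.

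For correctness, suppose $\vec p\in\Delta(X)$. By \cref{cor:random is generic} together with \cref{prp:hwv eval}, with probability at least $1/2$ there is a highest weight vector $P$ of the form~\cref{eq:hwv concrete}, of weight $\vec\lambda^*$ with $\vec\lambda=k\vec p$ and degree $0<k\le K$, such that $P(Z)\neq0$. Since $\vec\lambda^{(i)}$ has $n_i-r_i$ trailing zeros, each determinant factor $\Det_{(\vec\lambda^{(i)})^*,\pi^{(i)}}$ in~\cref{eq:hwv concrete} only involves the last $r_i$ coordinates of $\CC^{n_i}$; hence, exactly as in the proof of \cref{lem:singular_spectra}, $P$ descends to a highest weight vector $P_+$ for the $\GL(r_1)\times\dots\times\GL(r_d)$-action on $\Ten(n_0;r_1,\dots,r_d)$, of weight $(k\vec p_+)^*$, again of the form~\cref{eq:hwv concrete}, with $P_+(Z_+)=P(Z)\neq0$. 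In particular $P_+$ has integer coefficients and obeys the bound of \cref{prp:hwv eval}, and $Z_+$ has Gaussian integer entries of bit size $\poly(\max_i n_i,d,b)$ while $\vec p_+$ has bit size at most $b$; feeding these into the three-step analysis of the proof of \cref{thm:scaling}, with $P_+$ as potential function, shows that the inner run terminates within its iteration bound and outputs the desired $g_+$. In the other direction, the inner run outputs $g_+$ only after verifying that $g_+\cdot Z_+$ is $\tfrac{\eps}{2}$-close to $\vec p_+$, and then $g\cdot X=(\delta I_{n_i-r_i}\op g_+^{(i)})\cdot Z\to(g_+\cdot Z_+)_-$ as $\delta\to0$, whose $i$-th marginal spectrum is $\spec(\rho^{(i)}_{g_+\cdot Z_+})$ padded with zeros and hence $\tfrac{\eps}{2}$-close to $\vec p^{(i)}$; so for $\delta$ small enough $g\cdot X$ is $\eps$-close to $\vec p$, and the output is always valid. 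Consequently the only way the algorithm can fail the conclusion of \cref{thm:main} is a spurious non-membership report when $\vec p\in\Delta(X)$, which happens only in the probability-$<1/2$ event that the randomization fails; the restriction and padding steps take linear time and all intermediate bit sizes remain polynomially bounded, so the total running time is $\poly(N,1/\eps)$.

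The main obstacle here is conceptual rather than computational: one must recognize that \cref{lem:singular_spectra} is a statement about Borel-scalability, not about $\Delta(\cdot)$, so it can only be inserted \emph{between} the randomization step and the Borel-scaling loop of \cref{alg:scaling}; and one must then check that the highest weight vector serving as the potential function in the analysis of \cref{thm:scaling} restricts to a highest weight vector on the smaller tensor space, so that the three-step convergence argument transfers with no change.
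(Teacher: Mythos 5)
Your proposal is correct and follows essentially the same route as the paper: modify \cref{alg:scaling} to restrict to the positive block \emph{after} the randomization step, run the Borel-scaling loop on $\big((g\cdot X)_+,\vec p_+\big)$, and pad the resulting scaling with a small $\delta$ block using \cref{lem:singular_spectra}, with the analysis carried through by passing from the highest weight vector $P$ with $P(g\cdot X)\neq 0$ to its restriction $P_+$ (which inherits the degree, integrality, and evaluation bounds). The one thing you add beyond the paper's terse proof is a helpful explicit justification (with the example $X=e_1\ot e_2$) of \emph{why} the restriction must come after randomization, since \cref{lem:singular_spectra} concerns Borel-scalability rather than membership in $\Delta(X)$; this is correct and matches what the paper implicitly relies on.
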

\begin{proof}
We modify \cref{alg:scaling} as follows: Before \cref{it:full rank}, simply replace $g \cdot X$ by $(g \cdot X)_+$. Perform \cref{it:borel scale} using $X_0 = (g \cdot X)_+$ as the initial tensor and updating a Borel $b_+ \in B_+$ in each step. If ever $b_+\cdot X_0$ is close enough to satisfying the marginal condition, obtain $b$ from $b_+$ as in \cref{lem:singular_spectra} and output $b g$.

We proceed with the analysis. As in the proof of \cref{thm:scaling}, condition on a successful performance of \cref{it:randomize}. That is, condition on having found $g$ such that there exists a highest weight vector $P$ of weight $\lambda^* = k \vec p^*$ degree $k \leq K$ such that $P(g \cdot X) \neq 0$ satisfying $\lvert P(Y)\rvert \leq (n_1\dots n_d)^k \lVert Y\rVert^k$ for all $Y \in V$.

From the proof of \cref{lem:singular_spectra}, from $P$ such that $P(g \cdot X) \neq 0$ we may obtain $P_+$ such that $P_+((g \cdot X)_+) = P_+(X_0) \neq 0$ which is also of degree $k \leq K$ and satisfies the same bound $\lvert P_+(Y)\rvert \leq (n_1\dots n_d)^k \lVert Y\rVert^k$ for all $Y \in \Ten_{n_0; r_1, \dots, r_d}(\CC)$. The degree and evaluation bound follow because we obtained $P_+$ by simply setting some variables of $P$ to zero. The rest of the analysis is the same as that of \cref{thm:scaling}.
\end{proof}

We also get the following corollary (from the proof of \cref{lem:singular_spectra}) relating singular spectra with non-singular spectra.

\begin{cor} It holds that $\vec p \in \Delta(X)$ iff for a generic $g$, $\vec p_+ \in \Delta((g\cdot X)_+)$.
\end{cor}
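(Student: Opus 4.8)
The plan is to invoke Mumford's criterion (\cref{thm:mumford}) on both sides of the equivalence and bridge them using the highest-weight-vector correspondence already implicit in the proof of \cref{lem:singular_spectra}. Write $G_+ = \GL(r_1)\times\dots\times\GL(r_d)$ for the group acting on $\Ten(n_0;r_1,\dots,r_d)$; for $Z\in V$ let $Z_+$ be its restriction to the last $r_i$ coordinates of each index $i\in[d]$, and for $W\in\Ten(n_0;r_1,\dots,r_d)$ let $(W)_-$ be the zero-padded tensor in $V$. The structural fact I would record first is: because $p^{(i)}_{r_i+1}=\dots=p^{(i)}_{n_i}=0$, for every $k>0$ with $\vec\lambda:=k\vec p$ integral the assignments $P\mapsto P_+$ with $P_+(W):=P((W)_-)$ and $P_+\mapsto(Z\mapsto P_+(Z_+))$ are mutually inverse linear bijections between $\HWV_{\vec\lambda^*}(\CC[V]_{(k)})$ and $\HWV_{(\vec\lambda_+)^*}(\CC[\Ten(n_0;r_1,\dots,r_d)]_{(k)})$, and $P(Z)=P_+(Z_+)$ holds identically in $Z$. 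The identity $P(Z)=P_+(Z_+)$ is exactly the computation in the proof of \cref{lem:singular_spectra} (invariance of $P$ under $\diag(tI_{n_i-r_i}\oplus I_{r_i})$, then $t\to0$ in succession over $i$); that the lift of a highest weight vector is a highest weight vector of the expected weight follows from $\Pi_i R^{(i)} = R^{(i)}_+\Pi_i$ for upper-triangular $R^{(i)}$ (with $\Pi_i\colon\CC^{n_i}\to\CC^{r_i}$ the coordinate projection and $R^{(i)}_+$ the again upper-triangular trailing $r_i\times r_i$ block), so $(R\cdot Z)_+ = R_+\cdot Z_+$, together with the matching of $(k\vec p^{(i)})^*$ (leading zeros) and $(k\vec p_+^{(i)})^*$.

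For the forward direction, assume $\vec p\in\Delta(X)$. By \cref{thm:mumford} there are $k>0$ with $\vec\lambda:=k\vec p$ integral and $P\in\HWV_{\vec\lambda^*}(\CC[V]_{(k)})$ not vanishing identically on $\overline{G\cdot X}$; since $G\cdot X$ is Zariski dense in this irreducible variety, equivalently $P(g\cdot X)\neq0$ for generic $g\in G$. Then $P_+((g\cdot X)_+)=P(g\cdot X)\neq0$ for generic $g$, and since $P_+$ has weight $(\vec\lambda_+)^*$ with $\vec\lambda_+=k\vec p_+$, \cref{thm:mumford} applied to the $G_+$-orbit closure of $(g\cdot X)_+$ gives $\vec p_+\in\Delta((g\cdot X)_+)$ for generic $g$ (in fact \cref{prp:borel} for $G_+$ gives the stronger $\vec p_+\in\Delta^{B_+}((g\cdot X)_+)$).

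For the converse, assume $\vec p_+\in\Delta((g\cdot X)_+)$ for generic $g$, and fix one such generic $g_0$ with all $g_0^{(i)}$ invertible. By \cref{thm:mumford} for $G_+$ there are $k>0$ and $P_+\in\HWV_{(k\vec p_+)^*}(\CC[\Ten(n_0;r_1,\dots,r_d)]_{(k)})$ not vanishing identically on the irreducible variety $\overline{G_+\cdot(g_0\cdot X)_+}$; since the orbit $G_+\cdot(g_0\cdot X)_+$ is dense in it, pick $h\in G_+$ with $P_+(h\cdot(g_0\cdot X)_+)\neq0$. Writing $(g_0\cdot X)_+ = (I_{n_0}\otimes A^{(1)}\otimes\dots\otimes A^{(d)})X$ with $A^{(i)}$ the (full-rank) last $r_i$ rows of $g_0^{(i)}$, each $h^{(i)}A^{(i)}$ is a full-rank $r_i\times n_i$ matrix, hence equals the last $r_i$ rows of some invertible $g_1^{(i)}$, and then $g_1\in G$ satisfies $(g_1\cdot X)_+ = h\cdot(g_0\cdot X)_+$. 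Lifting $P_+$ to $P(Z):=P_+(Z_+)\in\HWV_{(k\vec p)^*}(\CC[V]_{(k)})$, we get $P(g_1\cdot X) = P_+((g_1\cdot X)_+) \neq 0$, so $P$ does not vanish identically on $\overline{G\cdot X}$, and \cref{thm:mumford} yields $\vec p\in\Delta(X)$.

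I expect the only genuine subtlety to lie in the converse: ``$\vec p_+\in\Delta((g\cdot X)_+)$'' supplies, via \cref{thm:mumford}, a highest weight vector non-vanishing merely \emph{somewhere on} the orbit closure of $(g\cdot X)_+$, whereas to apply \cref{thm:mumford} back on $V$ one needs non-vanishing \emph{at} a tensor of the precise form $(g_1\cdot X)_+$. Since $P_+$ is only a $B_+$-eigenvector and not $G_+$-invariant, the base point cannot be moved by $G_+$ for free; the fix is to observe that the tensors $(g\cdot X)_+$ sharing the row spaces of $g_0$ form exactly the $G_+$-orbit of $(g_0\cdot X)_+$, so Zariski density of both this orbit and of the non-vanishing locus of $P_+$ inside the (irreducible) orbit closure produces the needed $h$, hence $g_1$. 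The remaining ingredients — the highest weight vector correspondence and that the lift is again a highest weight vector — are routine, essentially already contained in the proof of \cref{lem:singular_spectra}.
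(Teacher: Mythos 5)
Your proof is correct, and it uses the same underlying machinery the paper has in mind: the bijection $P \leftrightarrow P_+$ between highest weight vectors of weights $\vec\lambda^*$ and $(\vec\lambda_+)^*$ (with the leading zeros of $\vec\lambda^*$ accounting for the projection to the last $r_i$ coordinates), the identity $P(Z) = P_+(Z_+)$ obtained by the $t\to 0$ argument, and Mumford-type density arguments. The paper simply remarks that the corollary follows ``from the proof of \cref{lem:singular_spectra}''; your write-up makes the details explicit, and in particular your forward direction could equally be phrased as the chain $\vec p\in\Delta(X)\Leftrightarrow\vec p\in\Delta^B(g\cdot X)$ (generic $g$, by \cref{cor:borel} and \cref{prp:borel converse}) $\Leftrightarrow \vec p_+\in\Delta^{B_+}((g\cdot X)_+)$ (by the lemma) $\Rightarrow \vec p_+\in\Delta((g\cdot X)_+)$, which is presumably what the authors intend. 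The one genuine subtlety you isolate — that in the converse you must move the non-vanishing point of $P_+$ from somewhere in $\overline{G_+\cdot(g_0\cdot X)_+}$ to a tensor of the form $(g_1\cdot X)_+$, which you do by extending the full-rank $r_i\times n_i$ matrix $h^{(i)}\nu_{r_i}g_0^{(i)}$ to an invertible $n_i\times n_i$ matrix — is a real gap in the paper's one-line justification, and your resolution is the right one. So: correct, same approach, with a careful handling of a step the paper leaves implicit.
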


This is a generalization of the following well known fact: A matrix $M$ (say complex $n \times n$) has rank $\ge r$ iff for generic $U,V$ of  dimensions $r \times n$ and $n \times r$, respectively, it holds that rank of $UMV$ is full.

\section{A reduction to uniform tensor scaling}\label{sec:reductions}

It is interesting to see how the shifting tricks of \cref{subsec:shifting trick} compare with the reduction in \cite{franks2018operator}, which treats the $d = 2$ case of \cref{prb:orbit closure} (also known as \emph{operator scaling})\footnote{the problems are equivalent by an isomorphism between mixed states and completely positive maps known as \emph{state-channel duality} \cite{jamiolkowski1972linear}.}. There, scaling to \emph{any} tensor marginals by $B(n_1)\times B(n_2)$ was reduced to scaling to \emph{uniform} tensor marginals by $\GL(\ell) \times \GL(\ell)$ where $\ell = |\vec{\lambda}|$. In contrast to the shifting trick, the group changes, but the action and the notion of marginal remain the same!
The shifting tricks can be viewed as reductions to uniform case with the same group, but with a different group action and a different notion of marginals: namely the action $g \cdot (X^{\otimes \ell} \otimes v_{\vec{\lambda}^*})= (g\cdot X)^{\otimes \ell} \otimes (g \cdot v_{\vec{\lambda}^*})$ where $\ell = |\vec{\lambda}|$, and tensor marginal replaced by the moment map.\\

The purpose of this section is to verify that the reduction from \cite{franks2018operator} can be fit into the framework of this paper. The conclusion is that the reduction gives the same results in a more ad hoc way - though it may still have conceptual benefits.\\

Here we show how to generalize the reduction from \cite{franks2018operator} to $d \geq 3$. We will use some shorthand: Let $\Lambda^{(i)} = \diag(\vec\lambda_\uparrow^{(i)})$ and
 $$\vec{\Lambda} := \left(\Lambda^{(1)}, \dots, \Lambda^{(d)}\right);$$ similarly for $\vec{P}$, $P^{(i)}$, and $\vec{p}$. The reduction will map $Y \in \Ten(n_0;n_1,\dots,n_d)$ to a tensor $L(Y)$ in the larger space
 $$\Ten(n_0 \lambda^{(1)}_1 \cdots \lambda^{(d)}_1; \ell, \dots, \ell)$$ with the property that there is an element of $\overline{\GL(\ell)^d \cdot L(Y)}$ with uniform marginals if and only if there is an element of $\overline{B \cdot Y}$ with $i^{th}$ marginal equal to $\Lambda^{(i)}$, i.e. $\vec{\lambda}/\ell \in \Delta^B(Y)$.\\

In \cref{subsec:reduction_weights} we will use this map to recreate the construction of highest weight
polynomials from \cref{subsec:hwv} and to give an alternate proof that one of these polynomials does not vanish
if $\vec{\lambda}/\ell \in \Delta^B(Y)$.

\subsection{Properties of the reduction}
Let us see how to create $L(Y)$. Recall from \cref{lem:singular_spectra} in \cref{thm:scaling:proof} that an instance of membership in $\Delta^B(X)$ can be efficiently reduced to another instance of the same problem with non-singular target marginals, so we assume that $p^{(i)}_{n_i} > 0$ for all $i \in [d]$.

For the following discussion we will use the density matrix formalism, so the $\rho$'s that follow play the role of the reduced density matrix $\rho_X$ for a tensor $X$. Let $\rho$ be the density matrix on which we would like to perform the reduction. First, we can forget about the scaling and try to imagine a density matrix $\tilde{\rho}$ on a larger space that has uniform marginals if and only if $\rho^{(i)} = \diag (\vec \lambda_\uparrow^{(i)})$.\footnote{in \cite{franks2018operator} the choice $\rho^{(i)} = \diag (\vec \lambda^{(i)})$ was made instead; this is because in that paper the action was $b \cdot \rho := b^\dagger \rho b$ rather than $b \rho b^\dagger$.}
Our map $\rho \to \tilde{\rho}$ should preserve positivity, so a natural candidate is an \emph{completely positive map}. Recall that a \emph{completely positive map} $T: \operatorname{Mat}_{n\times n} \to \operatorname{Mat}_{m\times m}$ is any map of the form $T:X \mapsto \sum_i A_i X A_i^\dagger$ where $A_i \in \operatorname{Mat}_{m \times n}$. If $T:X \mapsto \sum_i A_i X A_i^\dagger$, then $T^*$ is defined by $T^*: X \mapsto  \sum_i A_i^\dagger X A_i$.

One may try to build the map $T$ in question as a tensor product of completely positive maps, each of which acts on one of the $d$ tensor factors. In order to do this, we would need an injective completely positive map $T_{\vec \lambda}$, depending on a single partition $\vec \lambda$ of $\ell$ with $n$ parts, that satisfies
\begin{align}T_\lambda(\Lambda^{(i)}) = I_\ell \textrm{ and } T_\lambda^*(I_\ell) = I_n\label{eq:reduction}.\end{align}
Let us show why such a family of maps would suffice:

\begin{prp}\label{prp:reduction_marginals}
If the partitions ${\vec \lambda}$ of $\ell$ with $n$ nonzero parts parameterizes a family $T_{\vec \lambda}$ of injective, completely positive maps satisfying \cref{eq:reduction}, then the completely positive map
\begin{align}T = T_{{\vec \lambda}^{(1)}} \otimes \dots \otimes T_{{\vec \lambda}^{(d)}} \label{eq:reduction_tensor}\end{align}
satisfies $T(\rho)^{(i)} = I_\ell$ if and only if $\rho^{(i)}   = \Lambda^{(i)}$ for all $i \in [d]$.
\end{prp}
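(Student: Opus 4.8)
The key fact I would use is the standard behavior of marginals under a tensor product of completely positive maps. Specifically, if $T^{(i)}\colon\operatorname{Mat}_{n_i\times n_i}\to\operatorname{Mat}_{\ell\times\ell}$ are completely positive maps and $T=T^{(1)}\otimes\cdots\otimes T^{(d)}$ (where I should also tensor with the identity on the $\CC^{n_0}$ factor, which plays no role), then for any density matrix $\rho$ on $\Ten(n_0;n_1,\dots,n_d)$ one has
\begin{align}\label{eq:marginal-of-cp}
T(\rho)^{(i)} = T^{(i)}\bigl(\rho^{(i)}\bigr)\quad\text{for all }i=1,\dots,d,
\end{align}
\emph{provided} each $T^{(i)}$ is trace-preserving on the other factors, i.e.\ $(T^{(j)})^*(I_\ell)=I_{n_j}$ for $j\neq i$. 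This last condition is exactly the second half of \cref{eq:reduction}. I would establish \cref{eq:marginal-of-cp} directly from the defining property of the partial trace (or equivalently of the marginal, \cref{eq:marginal}): writing $T^{(j)}(X)=\sum_a A^{(j)}_a X (A^{(j)}_a)^\dagger$, one computes $\tr[T(\rho)^{(i)}A^{(i)}] = \tr[T(\rho)(I\otimes\cdots\otimes A^{(i)}\otimes\cdots)]$, expands $T$ as a sum over Kraus operators on all factors, uses $(T^{(j)})^*(I_\ell)=I_{n_j}$ to collapse each factor $j\neq i$, and reads off $\tr[T^{(i)}(\rho^{(i)})A^{(i)}]$; since $A^{(i)}$ was arbitrary this gives \cref{eq:marginal-of-cp}.

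Granting \cref{eq:marginal-of-cp}, the proposition becomes almost immediate. For the ``if'' direction: if $\rho^{(i)}=\Lambda^{(i)}$ for all $i$, then $T(\rho)^{(i)}=T^{(i)}(\Lambda^{(i)})=I_\ell$ by the first half of \cref{eq:reduction}. For the ``only if'' direction: suppose $T(\rho)^{(i)}=I_\ell$ for all $i$. By \cref{eq:marginal-of-cp}, $T^{(i)}(\rho^{(i)})=I_\ell=T^{(i)}(\Lambda^{(i)})$, and since $T^{(i)}=T_{\vec\lambda^{(i)}}$ is injective this forces $\rho^{(i)}=\Lambda^{(i)}$. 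This is where the injectivity hypothesis on the $T_{\vec\lambda}$'s is used, and it is the only place it is needed.

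**Where the work is.** The proposition itself is a short deduction; the real content is \cref{eq:marginal-of-cp}, and the main subtlety there is bookkeeping the tensor-product structure carefully so that the condition $(T^{(j)})^*(I_\ell)=I_{n_j}$ is invoked correctly on precisely the factors $j\neq i$. It is worth stating \cref{eq:marginal-of-cp} as a small lemma (or a displayed observation) before the proof of \cref{prp:reduction_marginals}, since it is the reusable piece. One should also double-check the normalization convention: $T$ as defined need not be trace-preserving globally (we only have $(T^{(j)})^*(I_\ell)=I_{n_j}$, not $T^{(j)*}=\mathrm{id}$), but the argument only ever uses trace-preservation ``one factor at a time,'' which is exactly what is available. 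No genuine obstacle is expected; the proof is a routine but slightly notation-heavy unwinding of definitions, and I would keep it short by isolating \cref{eq:marginal-of-cp} and citing the defining property \cref{eq:marginal} of one-body marginals.
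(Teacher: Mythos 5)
Your proposal is correct and takes essentially the same route as the paper: the key identity $T(\rho)^{(i)}=T_{\vec\lambda^{(i)}}(\rho^{(i)})$ (which you call \cref{eq:marginal-of-cp} and establish via Kraus operators and the defining property \cref{eq:marginal}) is exactly what the paper derives, using $T_{\vec\lambda^{(j)}}^*(I_\ell)=I_{n_j}$ on the factors $j\neq i$, after which both proofs finish identically from $T_{\vec\lambda^{(i)}}(\Lambda^{(i)})=I_\ell$ and injectivity.
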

\begin{proof}
By symmetry, it suffices to prove the proposition only for $i = 1$. Indeed,
\begin{align*}
T(\rho)^{(1)}
&=\left(T_{{\vec {\vec \mu}}^{(1)}} \otimes \dots \otimes T_{{\vec \lambda}^{(d)}} (\rho)\right)^{(1)}\\
&= T_{{\vec \lambda}^{(1)}}\left( \left(I_{n_1} \otimes T_{{\vec \lambda}^{(2)}} \otimes \dots \otimes T_{{\vec \lambda}^{(d)}} (\rho) \right)^{(1)}\right)\\
&= T_{{\vec \lambda}^{(1)}}\left( \left(I_{n_1} \otimes T_{{\vec \lambda}^{(2)}}^*(I_\ell) \otimes \dots \otimes T_{{\vec \lambda}^{(d)}}^*(I_\ell) (\rho) \right)^{(1)}\right)\\
&= T_{{\vec \lambda}^{(1)}} (\rho^{(1)}).
\end{align*}
The second and third equalities follows from the fact that $T$ is a completely positive map and properties of the partial trace. If $T_{\vec \lambda}^{(1)}$ is injective and satisfies \cref{eq:reduction}, then the last line is equal to $I_\ell$ if and only if $\rho^{(1)} =\Lambda^{(1)}$. \end{proof}

Next, we need to show that scalings of $\tilde{\rho}$ correspond to scalings of $\rho$. For this, it is enough to find a group homomorphism $h_{\vec \lambda}:B(n) \to B(\ell)$ satisfying
\begin{align}T_{\vec \lambda}(b \cdot X)  = h_{\vec \lambda}(b) \cdot T_{\vec \lambda} (X)\label{eq:reduction_scaling}  \end{align}
for all $n\times n$ matrices $X$ and all $b \in B(n)$, and as a consequence
\begin{align}T\left(\left(b^{(1)}, \dots, b^{(d)}\right) \cdot \rho\right)  = \left(h_{{\vec \lambda}^{(1)}}\left(b^{(1)}\right), \dots, h_{{\vec \lambda}^{(d)}}\left(b^{(d)}\right)\right) \cdot T(\rho)\label{eq:reduction_scaling_tensor} \end{align}
for all positive semidefinite matrices $\rho$.

\begin{prp}\label{prp:reduction_scalable}
Suppose the partitions ${\vec \lambda}$ of $\ell$ with exactly $n$ nonzero parts parametrize a family of injective, completely positive maps $T_{\vec \lambda}$ satisfying \cref{eq:reduction} and group homomorphisms $h_{\vec \lambda}:B(n) \to B(\ell)$ satisfying \cref{eq:reduction_scaling}. Let $T$ be as in \cref{eq:reduction_tensor}. Let $\vec \lambda = (\vec \lambda^{(1)}, \dots, \vec \lambda^{(d)})$. The following are equivalent:
\begin{enumerate}
\item\label{item:unred_marg} There is an element of $\overline{B \cdot \rho}$ with $i^{th}$ marginal
$\Lambda^{(i)}$ for all $i \in [d]$,
 i.e. $\vec{\lambda}/\ell \in \Delta^B(Y)$ for $\rho = \rho_Y$.
\item\label{item:red_unif} There is an element of $\overline{B_0 \cdot T(\rho)}$ with uniform marginals, where $B_0 = h_{{\vec \lambda}^{(1)}}B(n_{i_1}) \times \dots \times h_{{\vec \lambda}^{(d)}}B(n_{i_d})$.
\item\label{item:red_unif_general} There is an element of $\overline{G_0 \cdot T(\rho)}$ with uniform marginals, where $G_0 = \GL(\ell)\times \dots \times \GL(\ell)$.
\end{enumerate}
\end{prp}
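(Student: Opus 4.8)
The plan is to prove the cycle of implications $\ref{item:unred_marg} \Rightarrow \ref{item:red_unif} \Rightarrow \ref{item:red_unif_general} \Rightarrow \ref{item:unred_marg}$, where the middle implication is trivial since $B_0 \subseteq G_0$. First I would establish \cref{eq:reduction_scaling_tensor}: given \cref{eq:reduction_scaling} for each single factor, the tensor-product identity follows because $T = T_{\vec\lambda^{(1)}} \otimes \dots \otimes T_{\vec\lambda^{(d)}}$ and the group action on $\rho$ also factorizes across tensor legs, so applying \cref{eq:reduction_scaling} one leg at a time (exactly as in the proof of \cref{prp:reduction_marginals}) yields the claim. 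Combined with the marginal property from \cref{prp:reduction_marginals}, this will be enough machinery.

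For \cref{item:unred_marg} $\Rightarrow$ \cref{item:red_unif}: suppose $\sigma \in \overline{B \cdot \rho}$ has $i$-th marginal $\Lambda^{(i)}$ for all $i$; write $\sigma = \lim_n b_n \cdot \rho$ with $b_n \in B$. By continuity of $T$ and \cref{eq:reduction_scaling_tensor}, $T(\sigma) = \lim_n T(b_n \cdot \rho) = \lim_n h(b_n) \cdot T(\rho)$, where $h = (h_{\vec\lambda^{(1)}}, \dots, h_{\vec\lambda^{(d)}})$; hence $T(\sigma) \in \overline{B_0 \cdot T(\rho)}$. By \cref{prp:reduction_marginals}, since $\sigma^{(i)} = \Lambda^{(i)}$ for all $i$, we get $T(\sigma)^{(i)} = I_\ell$ for all $i$, i.e.\ $T(\sigma)$ has uniform marginals. (One should also normalize trace appropriately, but this is immaterial since scaling $\rho$ by a positive scalar commutes with everything.) So $T(\sigma)$ is the desired witness.

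The main obstacle is the converse direction \cref{item:red_unif_general} $\Rightarrow$ \cref{item:unred_marg}: here we start from an element of $\overline{G_0 \cdot T(\rho)}$ with uniform marginals but $G_0 = \GL(\ell)^d$ is \emph{larger} than the image $B_0$ of the reduction, so we cannot directly pull back the group element. The right tool is geometric invariant theory: having a point with uniform (i.e.\ vanishing, after the obvious recentering) moment map in the orbit closure $\overline{G_0 \cdot T(\rho)}$ is equivalent, by the Kempf--Ness/Mumford theory invoked via \cref{thm:mumford} (applied with target the uniform spectrum, i.e.\ to $G_0$-invariants), to $T(\rho)$ not lying in the null cone of the $\SL(\ell)^d$-action, equivalently to the existence of a homogeneous invariant $Q \in \CC[\Ten(\cdot;\ell,\dots,\ell)]^{G_0}$ with $Q(T(\rho)) \neq 0$. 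Then I would pull $Q$ back along $T$: define $P(X) := Q(T(\rho_X))$ (or more precisely work at the level of the tensor $L(Y)$, so $P(Y) := Q(L(Y))$), and verify using \cref{eq:reduction_scaling,eq:reduction_scaling_tensor} together with the explicit form of $h_{\vec\lambda}$ on diagonal entries that $P$ is a highest weight vector of weight $(m\vec\lambda)^*$ for the $B$-action with $P(Y) \neq 0$; this is exactly the computation carried out in \cref{subsec:reduction_weights} of the paper, which I would cite. Finally, \cref{prp:borel} (Borel scaling) converts the nonvanishing of this highest weight vector back into the statement $\vec\lambda/\ell \in \Delta^B(Y)$, which is \cref{item:unred_marg}, closing the cycle. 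The delicate points to get right are: (i) that the recentering of the moment map on the enlarged space corresponds exactly to the uniform spectrum, so that \cref{thm:mumford} applies with the zero weight; and (ii) that injectivity of $T_{\vec\lambda}$ (equivalently of $T$) is genuinely used — it guarantees $T(\rho) \neq 0$ whenever $\rho \neq 0$, so the null-cone dichotomy is not vacuous, and it also ensures the map on orbit closures does not collapse information needed to recover the $B$-scaling.
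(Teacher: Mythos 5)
Your proof is correct, and the cycle of implications you chose is sound, but the argument for the hard direction diverges substantially from the paper's. You close the loop by proving \cref{item:red_unif_general}~$\Rightarrow$~\cref{item:unred_marg} via GIT: characterize \cref{item:red_unif_general} as $T(\rho)$ (equivalently $L(Y)$) lying outside the $\SL(\ell)^d$ null cone, pull a nonvanishing $G_0$-invariant $Q$ back through $L$ to obtain a highest weight vector $P(Y) = Q(L(Y))$ of weight $m\vec\lambda^*$ (the computation of \cref{subsec:reduction_weights}), and then invoke \cref{prp:borel}. The paper instead proves \cref{item:red_unif_general}~$\Rightarrow$~\cref{item:red_unif} by a constructive Sinkhorn argument: it runs the alternating scaling algorithm of \cite{burgisser2017alternating} on $T(\rho)$ and observes, by induction using \cref{eq:reduction_scaling_tensor} and the Cholesky factorization, that each normalizing scaling can be chosen of the form $h_{\vec\lambda^{(i_t)}}(b_t)$ with $b_t$ upper triangular -- so the whole trajectory stays inside $B_0 \cdot T(\rho)$. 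The paper's route is more elementary (it does not need \cref{prp:borel} or any highest-weight machinery) and has an algorithmic payoff made explicit in the Remark after the proposition, namely that \cref{alg:scaling} is literally a step-for-step simulation of uniform tensor scaling on $L(Y)$; your route is conceptually tidy and reuses machinery already built, but does not give that corollary for free. One small correction: injectivity of $T$ is \emph{not} where you say it is. In your cycle, item~1~$\Rightarrow$~item~2 only needs continuity and equivariance, and the GIT step never needs it either; injectivity is genuinely used in the paper's proof of the reverse inclusion $\overline{T(B\cdot\rho)} = T(\overline{B\cdot\rho})$ (i.e.\ for item~2~$\Rightarrow$~item~1), a direction your cycle sidesteps.
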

\begin{proof}
We first show
\cref{item:unred_marg} $\iff$ \cref{item:red_unif}. Indeed, by \cref{eq:reduction_scaling_tensor} and the fact that $T$ is an injective, linear map, we have
\begin{align}\overline{B_0 \cdot T(\rho)} = \overline{T(B \cdot \rho)} = T(\overline{B \cdot \rho}).\label{eq:closures}
\end{align}
By \cref{prp:reduction_marginals}, $T(\overline{B \cdot \rho})$ has an element with uniform marginals if and only if $\overline{B \cdot \rho}$ has an element with $i^{th}$ marginal equal to $\Lambda^{(i)}$ for all $i \in [d]$.

Next we show \cref{item:red_unif_general} $\iff$ \cref{item:red_unif}. Clearly \cref{item:red_unif} $\implies$ \cref{item:red_unif_general}. If \cref{item:red_unif} holds, then $T(\rho)$ is not in the null-cone of the action of the action of $G_0$, so by \cite{burgisser2017alternating}, Sinkhorn style scaling of $T(\rho)$ converges to a mixed state with uniform marginals. More precisely, there is a sequence $({i_t}:t \geq 0)$ such that the sequence of density matrices defined by $$\rho(t + 1) = (I_{n_1} \otimes \dots \otimes I_{n_{{i_t}} - 1} \otimes g(t) \otimes I_{n_{{i_t}} + 1} \otimes \dots \otimes I_{n_d})\cdot \rho(t)$$ for $t \geq 0$ and $\rho(0) = T(\rho)$ converges to a mixed state with uniform marginals provided
\begin{align}g(t) \rho(t)^{(i_t)} g(t)^\dagger = I_{\ell}\label{eq:sinkhorn}\end{align}
for all $t \geq 0$. However, we are lucky, and we may choose each of our scalings $g(t)$ to be $h_{{\vec \lambda}^{({i_t})}}({b_t})$ for some $b \in B(n_{{i_t}})$! Indeed, suppose inductively that $s \geq 1$ and for all $t \leq s -1$ there exists ${b_t}$ such that $g(t) = h_{{\vec \lambda}^{(i)}}({b_t})$ satisfies \cref{eq:sinkhorn}. Then by \cref{eq:reduction_scaling_tensor} and group homomorphism property of $h_{\vec \lambda}$, we have $\rho(t) = T(\underline{\rho})$ for some $\underline{\rho}$. In particular, $T(\underline{\rho})^{(i_t)} = T_{{\vec \lambda}^{({i_t})}}(\underline{\rho}^{(i)}).$ Take ${b_t}$ upper triangular such that $${b_t}\underline{\rho}^{(i_t)} {b_t}^\dagger = \Lambda^{(i_t)}.\footnote{this is precisely the update step in \cref{alg:scaling}!}$$
which is always possible due to the existence of the Cholesky decomposition. By \cref{eq:reduction_scaling} we have
$$ h_{{\vec \lambda}^{({i_t})}}({b_t}) \cdot  \rho(t)^{(i_t)} = T_{{\vec \lambda}^{({i_t})}}\left({b_t} \cdot \underline{\rho}^{(i_t)}\right) = I_\ell.$$
By induction, $g(t) = h_{{\vec \lambda}^{({i_t})}}({b_t})$ satisfies \cref{eq:sinkhorn} for all $t \geq 0$, so $\rho(t)$ is a sequence of elements of $H_0 \cdot T(\rho)$ converging to a density matrix with uniform marginals, and hence \cref{item:red_unif} holds.
\end{proof}

\begin{rem}
The above proof also shows that \cref{alg:scaling} converges to a tensor with the appropriate marginals if $\vec{ \lambda}/\ell \in \Delta^B(Y)$: each scaling step of \cref{alg:scaling} is exactly a step of the scaling algorithm from \cite{burgisser2017alternating} applied to $L(Y)$, which we now know converges to a tensor with uniform marginals if $\vec{\lambda}/\ell \in \Delta^B(Y)$.
\end{rem}

\subsection{The construction}
So far, \cref{prp:reduction_scalable} may be vacuously true. That is, we have not yet proven that the families $T_{\vec \lambda}$ and $h_{\vec \lambda}$ exist. We do this here by computing a reduction $L_{\vec \lambda}$ between pure tensors and taking the partial trace. In what follows, the only nontrivial part is the guess for $L_{\vec \lambda}$. All else is elementary linear algebra. $L_{\vec \lambda}$ is this is very similar to \cite{franks2018operator}, but firstly it is a map between tensors rather than density matrices, and secondly it commutes with the action of the upper triangular rather than lower triangular matrices (hence the choice to use projections to the \emph{last} few coordinates).

Let $\nu_j:\CC^n \to \CC^j$ denote the projection to the last $j$ coordinates in some fixed orthonormal basis for $\CC^n$. In that basis,
$$ \begin{array}{ccc} & n & \\
 \nu_{j} = & \left[ \begin{array}{ccc ccc}0 &  \hdots & 0 & 1 & \hdots & 0  \\
\vdots & \ddots & \vdots & \vdots & \ddots & \vdots \\
0 &  \hdots & 0 & 0 & \hdots & 1  \\
 \end{array}\right] & j.\end{array} $$
The dependence of $\nu_j$ on $n$ and the basis will be suppressed.
\begin{dfn}[Reduction, $d = 1$] Suppose ${\vec \lambda}$ is a partition of $\ell$ with at most $n$ nonzero parts. Let ${\vec \mu}$ be the conjugate partition to ${\vec \lambda}$.
For $i \in [\lambda_1]$, define
$$ \tau_i^{{\vec \lambda}}:\CC^n \to \bigoplus_{j \in \lambda_1} \CC^{\mu_j}  = \CC^\ell$$ by
\begin{align*}  \tau_j^{{\vec \lambda}}x  &= (\underbrace{0, \dots, 0}_{\mu_1}, \cdots, \underbrace{0, \dots, 0}_{\mu_{i - 1}}, \nu_{\mu_i} x, \underbrace{0, \dots, 0}_{\mu_{i +1 }},   \dots , \underbrace{0 \dots 0}_{\mu_k})\\
&= (\underbrace{0, \dots, 0}_{\mu_1}, \cdots, \underbrace{0, \dots, 0}_{\mu_{i - 1}}, \underbrace{x_{n-\mu_i + 1}, \dots, x_n}_{\mu_i}, \underbrace{0, \dots, 0}_{\mu_{i +1 }},   \dots , \underbrace{0 \dots 0}_{\mu_k}).
\end{align*}
Now define $$L_{{\vec \lambda}}: \CC^n \to \CC^{\lambda_1} \otimes \CC^{\ell}$$ by
$$ L_{{\vec \lambda}} v = \sum_{j \in [\lambda_1]} e_j \otimes (\tau_j^{\vec \lambda} v), $$
where $(e_j: j \in [\lambda_1])$ is some fixed orthonormal basis of $\CC^{\lambda_1}$.
As a matrix in the same basis used to define $\tau^{\vec \lambda}$,
$$L _{\lambda} v = \left[\begin{array}{cccc} v_{n - \mu_1 +1} & 0 & \hdots & 0 \\
\vdots & 0 & \hdots & 0 \\
v_{n} & 0 & \hdots & 0 \\
 0 & v_{n - \mu_2 + 1} & \hdots & 0 \\
0 & \vdots & \hdots & 0 \\
0 & v_{n} & \hdots & 0 \\
\vdots & \vdots & \ddots & \vdots \\
0 & 0 & \hdots & v_{n - \mu_k + 1} \\
0 & 0 & \hdots & \vdots \\
0 & 0 & \hdots & v_{n} \\
\end{array}\right].$$
Note that the $i^{th}$ column of $L_{{\vec \lambda}} v$ is $\tau^\lambda_i v$.
\end{dfn}

\begin{prp}
Suppose ${\vec \lambda}$ is a partition of $\ell$ with exactly $n$ nonzero parts and $\Lambda = \diag(\vec \lambda_\uparrow)$. Define
\begin{align*}&\underline{T}_{\vec \lambda}:X \mapsto \tr_{\CC^{\lambda_1}} \left[ L_{\vec \lambda}  X L_{\vec \lambda}^\dagger \right],  \textrm{ and} \\
 &T_{\vec \lambda}:X \mapsto \underline{T}_{\vec \lambda}({\Lambda}^{-1/2} X {\Lambda}^{-1/2}).
\end{align*}
Then $T_{\vec \lambda}$ is injective, completely positive, and satisfies \cref{eq:reduction}, and the map $h_{\vec \lambda}:B(n) \to B(\ell)$ given by
$$ h_{\vec \lambda}: b \mapsto \underline{T}_{\vec \lambda} ({\Lambda}^{-1/2} b {\Lambda}^{1/2})$$
is a group homomorphism and satisfies \cref{eq:reduction_scaling}.\footnote{$\underline{T}_{\vec \lambda}$ is almost the reduction from \cite{franks2018operator}, but a change of variables makes the presentation simpler.}

\end{prp}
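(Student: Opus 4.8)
The plan is to first rewrite $\underline{T}_{\vec\lambda}$ in a completely transparent form and then read off all five assertions. Regarding $L_{\vec\lambda}=\sum_{j=1}^{\lambda_1}e_j\otimes\tau_j^{\vec\lambda}$ as an operator $\CC^n\to\CC^{\lambda_1}\otimes\CC^\ell$, expanding $L_{\vec\lambda}XL_{\vec\lambda}^\dagger$ and tracing out the $\CC^{\lambda_1}$-factor yields the Kraus form $\underline{T}_{\vec\lambda}(X)=\sum_{j=1}^{\lambda_1}\tau_j^{\vec\lambda}X(\tau_j^{\vec\lambda})^\dagger$. Now $\tau_j^{\vec\lambda}$ has range in the $j$-th summand of $\CC^\ell=\bigoplus_{j=1}^{\lambda_1}\CC^{\mu_j}$ (where $\vec\mu$ is the conjugate of $\vec\lambda$) and acts there as $\nu_{\mu_j}$, extraction of the last $\mu_j$ coordinates. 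Hence $\underline{T}_{\vec\lambda}(X)$ is block diagonal with $j$-th block equal to the trailing (bottom-right) $\mu_j\times\mu_j$ principal submatrix of $X$, which I will write $[X]_{\mu_j}$. Consequently $T_{\vec\lambda}(X)=\sum_j A_jXA_j^\dagger$ with $A_j:=\tau_j^{\vec\lambda}\Lambda^{-1/2}$, so $T_{\vec\lambda}$ is completely positive; and since $\vec\lambda$ has exactly $n$ nonzero parts we have $\mu_1=n$, so the $j=1$ block of $\underline{T}_{\vec\lambda}(X)$ is $[X]_n=X$ itself, whence $\underline{T}_{\vec\lambda}$ — and therefore $T_{\vec\lambda}$, as $\Lambda^{\pm1/2}$ is invertible — is injective.

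For \cref{eq:reduction} I would record the two ``closure'' identities $\sum_j\tau_j^{\vec\lambda}(\tau_j^{\vec\lambda})^\dagger=\bigoplus_j I_{\mu_j}=I_\ell$ and $\sum_j(\tau_j^{\vec\lambda})^\dagger\tau_j^{\vec\lambda}=\diag(c_1,\dots,c_n)$ with $c_k=\#\{j\le\lambda_1:\mu_j\ge n-k+1\}$; the conjugate-of-conjugate identity for partitions gives $c_k=\lambda_{n-k+1}$, so this sum is $\diag(\vec\lambda_\uparrow)=\Lambda$. Then $T_{\vec\lambda}(\Lambda)=\underline{T}_{\vec\lambda}(I_n)=\sum_j\tau_j^{\vec\lambda}(\tau_j^{\vec\lambda})^\dagger=I_\ell$, and since $T_{\vec\lambda}^*(Z)=\Lambda^{-1/2}\bigl(\sum_j(\tau_j^{\vec\lambda})^\dagger Z\,\tau_j^{\vec\lambda}\bigr)\Lambda^{-1/2}$ we get $T_{\vec\lambda}^*(I_\ell)=\Lambda^{-1/2}\Lambda\Lambda^{-1/2}=I_n$.

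For the homomorphism property, write $\beta:=\Lambda^{-1/2}b\Lambda^{1/2}$, which is again upper triangular with the same (nonzero) diagonal as $b$; by the description above, $h_{\vec\lambda}(b)=\underline{T}_{\vec\lambda}(\beta)=\bigoplus_j[\beta]_{\mu_j}$ is upper triangular and invertible, hence lies in $B(\ell)$. The key elementary fact (a ``Borel corner lemma'') is that for upper triangular $n\times n$ matrices $A,A'$ one has $[AA']_s=[A]_s[A']_s$ for every $s\le n$: upper-triangularity of $A$ forces the sum defining $(AA')_{kl}$, with $k,l$ in the last $s$ positions, to range only over indices in the last $s$ positions. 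Since $\Lambda^{-1/2}(bb')\Lambda^{1/2}=\beta\beta'$, this gives $h_{\vec\lambda}(bb')=\bigoplus_j[\beta]_{\mu_j}[\beta']_{\mu_j}=h_{\vec\lambda}(b)h_{\vec\lambda}(b')$, and $h_{\vec\lambda}(I_n)=I_\ell$. Finally, \cref{eq:reduction_scaling} follows from the variant $[\beta Y\beta^\dagger]_s=[\beta]_s[Y]_s[\beta]_s^\dagger$ valid for upper triangular $\beta$ and arbitrary $Y$ (same index-support argument, using $[\beta^\dagger]_s=[\beta]_s^\dagger$): setting $Y:=\Lambda^{-1/2}X\Lambda^{-1/2}$, so that $\Lambda^{-1/2}(b\cdot X)\Lambda^{-1/2}=\beta Y\beta^\dagger$ for the action $b\cdot X=bXb^\dagger$, we obtain $T_{\vec\lambda}(b\cdot X)=\bigoplus_j[\beta Y\beta^\dagger]_{\mu_j}=\bigoplus_j[\beta]_{\mu_j}[Y]_{\mu_j}[\beta]_{\mu_j}^\dagger=h_{\vec\lambda}(b)\,\underline{T}_{\vec\lambda}(Y)\,h_{\vec\lambda}(b)^\dagger=h_{\vec\lambda}(b)\cdot T_{\vec\lambda}(X)$.

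The only part of this that is not pure bookkeeping is the very first step: recognizing that $\underline{T}_{\vec\lambda}$ is nothing but ``collect the nested trailing corners of sizes $\mu_1\ge\cdots\ge\mu_{\lambda_1}$'', with $\mu_1=n$. Once that structural picture is in place, complete positivity, injectivity, \cref{eq:reduction}, and the homomorphism and intertwining properties all reduce to the single observation that passing to a trailing principal submatrix is multiplicative on, and compatible with the conjugation action of, the Borel subgroup. I do not anticipate a genuine obstacle beyond carefully matching the block indexing of $\CC^\ell=\bigoplus_j\CC^{\mu_j}$ with the definitions of the $\tau_j^{\vec\lambda}$'s and of $\nu_{\mu_j}$.
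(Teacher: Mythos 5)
Your proof is correct and follows the same overall skeleton as the paper's: Kraus form for complete positivity, $\mu_1=n$ for injectivity, the two closure identities (with conjugate-of-conjugate giving $\sum_j\nu_{\mu_j}^\dagger\nu_{\mu_j}=\Lambda$) for the normalization conditions, and then the compatibility of trailing-corner compression with upper triangularity for the Borel intertwining. The one genuine difference is the choice of central lemma. The paper first proves the operator identity $L_{\vec\lambda}\,b = (I_{\lambda_1}\otimes\underline{T}_{\vec\lambda}(b))\,L_{\vec\lambda}$ (which says $L_{\vec\lambda}$ is a $B(n)$-intertwiner between $b\mapsto b$ and $b\mapsto I_{\lambda_1}\otimes\underline{T}_{\vec\lambda}(b)$), and then deduces both the homomorphism property and \cref{eq:reduction_scaling} by conjugating under the partial trace; the pointwise facts $\tau_j^{{\vec\lambda},\dagger}\tau_i^{\vec\lambda}=\delta_{ij}\nu_{\mu_i}^\dagger\nu_{\mu_i}$ and $\nu_i b\nu_i^\dagger\nu_i=\nu_i b$ do the work. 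You instead package the same content as a ``corner lemma'' ($[AA']_s=[A]_s[A']_s$ for $A,A'$ upper triangular, plus its two-sided variant $[\beta Y\beta^\dagger]_s=[\beta]_s[Y]_s[\beta]_s^\dagger$) and apply it directly to the block-diagonal description $\underline{T}_{\vec\lambda}(X)=\bigoplus_j[X]_{\mu_j}$, never touching $L_{\vec\lambda}$ after the first paragraph. Your route is slightly more minimal and arguably more transparent for the proposition at hand; the paper's route pays a small extra cost to establish the intertwiner statement because it is reused later (in the capacity remark and in \cref{subsec:reduction_weights}, under the name \cref{eq:b-linear}). If you care about those downstream applications, the intertwiner form is the one you ultimately want; otherwise your block-level bookkeeping is a clean substitute.
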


\begin{proof}
We expand the expression for $\underline{T}_{\vec \lambda}$.
\begin{align} \tr_{\CC^{\lambda_1}} \left[ L_{\vec \lambda} X  L_{\vec \lambda}^\dagger \right]
&= \tr_{\CC^{\lambda_1}} \left[ \sum_{i = 1}^{\lambda_1} \sum_{j = 1}^{\lambda_1} e_i e_j^\dagger \otimes  \tau^{{\vec \lambda}}_i  X  \tau^{{\vec \lambda}, \dagger}_j \right] \nonumber\\
&=  \sum_{j = 1}^{\lambda_1}   \tau^{{\vec \lambda}}_j  X  \tau^{{\vec \lambda}, \dagger}_j.\label{eq:kraus}
\end{align}
The last line shows explicitly that $\underline{T}_{\vec \lambda}$ (and hence $T_{\vec \lambda}$) is a completely positive map. To see that $\underline{T}_{\vec \lambda}$ (and hence $T_{\vec \lambda}$) is injective, observe that $ \tau^{{\vec \lambda}}_1 \underline{T}_{\vec \lambda}(X) \tau^{\lambda, \dagger}_1  = X$. This holds because ${\vec \lambda}$ has $n$ nonzero parts, and so $\nu_{\mu_1} = \nu_n =  I_n$.\\

Here we show $T_{\vec \lambda}$ satisfies \cref{eq:reduction}. This follows from the below pair of equations: \begin{align*}\underline{T}_{\vec \lambda}(I) &= \sum_{i =1}^{\lambda_1} \tau_j^{\vec \lambda} \tau_j^{{\vec \lambda}, \dagger} = I_\ell, \textrm{ and  }\\
\underline{T}^*_{\vec \lambda}(I_n) &= \sum_{i =1}^{\lambda_1} \tau_j^{{\vec \lambda},\dagger} \tau_j^{{\vec \lambda}} = \sum_{j = 1}^{\lambda_1} \nu_{\mu_j}^\dagger \nu_{\mu_j} = {\Lambda}.
\end{align*}
The first equality in both lines is from \cref{eq:kraus}.The third equality in the second line is our only use of the fact that ${\vec \mu}$ is the conjugate partition of ${\vec \lambda}$.\\

Next we must show $h_{\vec \lambda}: b \mapsto \underline{T}_{\vec \lambda} ({\Lambda}^{-1/2} b {\Lambda}^{1/2})$ is a group homomorphism. Because conjugation by $\sqrt{{\Lambda}}$ is a group homomomorphism, it is enough to show $ b \mapsto \underline{T}_{\vec \lambda}(b)$ is a group homomorphism.
We'll show something even more useful, namely
\begin{align} L_{\vec \lambda} b = (I_{\lambda_1} \otimes \underline{T}_{\vec \lambda}(b)) L_{\vec \lambda}.\label{eq:b-linear}
\end{align}
Indeed,
\begin{align*}
(I_{\lambda_1} \otimes \underline{T}_{\vec \lambda}(b)) L_{\vec \lambda} v &= \left(I_{\lambda_1} \otimes \sum_j \tau^{\vec \lambda}_j b \tau^{{\vec \lambda}, \dagger}_j\right) L_{\vec \lambda} v\\
&= \sum_{i = 1}^{\lambda_1} e_i \otimes \sum_j \tau^{\vec \lambda}_j b \tau^{{\vec \lambda}, \dagger}_j \tau^{\vec \lambda}_i v\\
&= \sum_{i = 1}^{\lambda_1} e_i \otimes \tau^{\vec \lambda}_i \nu^{ \dagger}_{\mu_i} \nu_{\mu_i} b \nu^{ \dagger}_{\mu_i} \nu_{\mu_i} v\\
&= \sum_{i = 1}^{{\vec \lambda}_1} e_i \otimes \tau^{\vec \lambda}_i  b  v.
\end{align*}
The second-to-last equality uses the easy facts that $\tau^{{\vec \lambda}, \dagger}_j \tau^{\vec \lambda}_i = \delta_{ij} \nu^{ \dagger}_{\mu_i} \nu_{\mu_i}$ and $\tau^{{\vec \lambda}}_i = \tau^{{\vec \lambda}}_i \nu^{ \dagger}_{\mu_i} \nu_{\mu_i}$.  The last equality uses the simple identity $\nu_{i}b \nu^{ \dagger}_{i} \nu_{i} = \nu_i b$ for all $b \in B(n)$, $i \in [n]$.  From \cref{eq:b-linear}, we have
$$ \underline{T}_{\vec \lambda}(bX) = \tr_{\CC^{\lambda_1}} L_{\vec \lambda} bX  L_{\vec \lambda}^\dagger = \tr_{\CC^{\lambda_1}}(I_{\lambda_1} \otimes \underline{T}_{\vec \lambda}(b)) L_{\vec \lambda} X  L_{\vec \lambda}^\dagger =\underline{T}_{\vec \lambda}(b) \underline{T}_{\vec \lambda}(X)$$
for $b \in B(n)$ and \emph{any} $n\times n$ matrix $X$. \cref{eq:b-linear} means $L_{\vec \lambda}$ is a $B(n)$-linear map, or \emph{intertwiner}, between the representations $b \to b$ and $b \to I_{\lambda_1} \otimes \underline{T}_{\vec \lambda}(b)$ of $B(n)$!\\

 It remains to show \cref{eq:reduction_scaling}. Something a bit stronger follows from $T_{\vec \lambda}(bX)  = T_{\vec \lambda}(b) T_{\vec \lambda}(X)$:
$$\underline{T}_{\vec \lambda}({\Lambda}^{-1/2} b {\Lambda}^{1/2}) \underline{T}_{\vec \lambda}({\Lambda}^{-1/2} X {\Lambda}^{-1/2}) = \underline{T}_{\vec \lambda}({\Lambda}^{-1/2} bX {\Lambda}^{-1/2}).$$
Equivalently, $h_{\vec \lambda}(b)T_{\vec \lambda}(X) = T_{\vec \lambda}(b X)$. \end{proof}

We can use the map $L_{\vec \lambda}$ to phrase \cref{prp:reduction_scalable} in terms of the null-cone.
\begin{dfn}[Reduction between pure tensors]
Define
$$L(Y) = (I_{n_0} \otimes L_{{\vec \lambda}^{(1)}} \otimes \dots \otimes L_{{\vec \lambda}^{(d)}})  Y.$$
\end{dfn}
 First note that $L( Y) \in \operatorname{Ten}_{n_0;  \lambda_1^{(1)}, \ell, \dots, \lambda_1^{(d)}, \ell}(\CC)$. By reorganizing, we'll think of it as an element of $\Ten_{n'_0;  \ell, \dots, \ell}(\CC)$ where $n'_0 = n_0\lambda_1^{(1)} \cdots  \lambda_1^{(d)}$. We then allow $\SL(\ell)^d$ to act on all of the $d$ tensor factors of $L(X)$ of dimension $\ell$. From \cite{burgisser2017alternating}, we have that
$$ L\left(\vec{\Lambda}^{-1/2}\cdot Y\right)$$
is outside the null cone if and only if its reduced density matrix (tracing out $\CC^{n'_0}$) is scalable to uniform marginals. However, by our definition of $T_{\vec \lambda}$, the reduced density matrix is precisely $T (\tr_{\CC^{n_0}} YY^\dagger)$ of \cref{prp:reduction_scalable}! By \cref{prp:reduction_scalable}, $L\left(\vec{\Lambda}^{-1/2}\cdot Y\right)$ is outside the null-cone if and only if $\vec{\lambda}/\ell \in \Delta^B(Y)$. Further, by \cref{eq:b-linear}, $L\left(\vec{\Lambda}^{-1/2}\cdot Y\right)$ is in the null-cone of the action of $\SL(\ell)^d$ if and only if $L(Y)$ is. The preceding reasoning gives us yet another reduction to the null-cone problem:
\begin{cor}[Reduction to tensor scaling null cone]\label{cor:pure_reduction}
We have $\vec{\lambda}/\ell \in \Delta^B(Y)$ if and only if
$L(Y)$ is outside the null-cone of the action of $\SL(\ell)^d$.
\end{cor}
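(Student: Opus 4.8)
The statement essentially repackages the machinery built in this section, so the plan is to chain together \cref{prp:reduction_scalable}, the intertwining identity \eqref{eq:b-linear}, and the null-cone characterization of uniform tensor scaling from \cite{burgisser2017alternating}. Throughout I would invoke the standing assumption $p^{(i)}_{n_i}>0$ (legitimate by \cref{lem:singular_spectra}), so that every $\vec\lambda^{(i)}$ has exactly $n_i$ nonzero parts, $\Lambda^{(i)}=\diag(\vec\lambda^{(i)}_\uparrow)$ is invertible, and all of $T_{\vec\lambda^{(i)}}$, $\underline T_{\vec\lambda^{(i)}}$, $h_{\vec\lambda^{(i)}}$ from the preceding proposition are available. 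We regard $L(Y)$ as an element of $\Ten_{n_0';\ell,\dots,\ell}(\CC)$ with $n_0'=n_0\lambda^{(1)}_1\cdots\lambda^{(d)}_1$, on whose $d$ factors of dimension $\ell$ the group $\SL(\ell)^d$ acts; note $L(Y)\neq 0$ for $Y\neq 0$ since each $L_{\vec\lambda^{(i)}}$ is injective.

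\emph{Step 1 (identify the reduced state).} First I would check that the reduced density matrix of $L(\vec\Lambda^{-1/2}\cdot Y)$ obtained by tracing out $\CC^{n_0'}$ equals $T(\rho_Y)$, where $\rho_Y=\tr_{\CC^{n_0}}[YY^\dagger]$ and $T=T_{\vec\lambda^{(1)}}\otimes\dots\otimes T_{\vec\lambda^{(d)}}$ as in \cref{prp:reduction_scalable}. Indeed, unfolding $T_{\vec\lambda}(X)=\tr_{\CC^{\lambda_1}}[L_{\vec\lambda}\Lambda^{-1/2}X\Lambda^{-1/2}L_{\vec\lambda}^\dagger]$ and using that partial traces factor through tensor factors, tracing $L(\vec\Lambda^{-1/2}\cdot Y)L(\vec\Lambda^{-1/2}\cdot Y)^\dagger$ first over $\CC^{n_0}$ and then over each $\CC^{\lambda^{(i)}_1}$ yields exactly $T(\rho_Y)$; since $T$ is trace-preserving ($T^*_{\vec\lambda}(I_\ell)=I_n$), no normalization subtlety arises.

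\emph{Step 2 (strip the twist).} Next I would remove the $\vec\Lambda^{-1/2}$ in front of $Y$. Applying the intertwining relation \eqref{eq:b-linear} with $b=\Lambda^{(i),-1/2}\in B(n_i)$ gives $L_{\vec\lambda^{(i)}}\Lambda^{(i),-1/2}=(I_{\lambda^{(i)}_1}\otimes\underline T_{\vec\lambda^{(i)}}(\Lambda^{(i),-1/2}))L_{\vec\lambda^{(i)}}$, whence $L(\vec\Lambda^{-1/2}\cdot Y)=g\cdot L(Y)$ with $g:=(\underline T_{\vec\lambda^{(1)}}(\Lambda^{(1),-1/2}),\dots,\underline T_{\vec\lambda^{(d)}}(\Lambda^{(d),-1/2}))\in\GL(\ell)^d$ acting only on the $d$ factors of dimension $\ell$ and as the identity on $\CC^{n_0'}$; invertibility of each $\underline T_{\vec\lambda}(\Lambda^{-1/2})$ follows from $\underline T_{\vec\lambda}(b)\underline T_{\vec\lambda}(b^{-1})=\underline T_{\vec\lambda}(I)=I_\ell$. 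Since $\SL(\ell)^d\trianglelefteq\GL(\ell)^d$, we get $\overline{\SL(\ell)^d\cdot(g\cdot L(Y))}=g\cdot\overline{\SL(\ell)^d\cdot L(Y)}$, so $0$ lies in one closure iff in the other; hence $L(Y)$ is in the null cone of $\SL(\ell)^d$ iff $L(\vec\Lambda^{-1/2}\cdot Y)$ is.

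\emph{Step 3 (conclude).} Finally, by the null-cone characterization for tensor scaling from \cite{burgisser2017alternating}, $L(\vec\Lambda^{-1/2}\cdot Y)$ is outside the null cone of $\SL(\ell)^d$ iff there is an element of $\overline{\GL(\ell)^d\cdot T(\rho_Y)}$ with uniform marginals — which is precisely item~\ref{item:red_unif_general} of \cref{prp:reduction_scalable} for $\rho=\rho_Y$. That proposition equates it with item~\ref{item:unred_marg}, i.e.\ with $\vec\lambda/\ell\in\Delta^B(Y)$. Combining with Step~2 gives the claim. The only real obstacle is bookkeeping: verifying that the tensor-factor reorganization and the iterated partial traces of Step~1 genuinely produce $T(\rho_Y)$ (and not a permuted or rescaled variant), and that the element $g$ of Step~2 is supported on the $\ell$-dimensional factors so that normality of $\SL(\ell)^d$ applies — the substantive content was already carried out in \cref{prp:reduction_scalable} and in \cite{burgisser2017alternating}.
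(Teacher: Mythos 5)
Your proposal is correct and follows the same route the paper takes: identify the reduced density matrix of $L(\vec\Lambda^{-1/2}\cdot Y)$ with $T(\rho_Y)$, invoke the null-cone characterization for uniform tensor scaling from \cite{burgisser2017alternating} together with \cref{prp:reduction_scalable}, and then use \cref{eq:b-linear} to pass between $L(\vec\Lambda^{-1/2}\cdot Y)$ and $L(Y)$. The only difference is that you spell out the normality of $\SL(\ell)^d$ in $\GL(\ell)^d$ to justify the last transfer, a detail the paper leaves implicit.
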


\begin{rem}[Capacity]\label{rem:capacity}
Let us look at the familiar, and easy to prove, formula for the determinant after applying the reduction:
\begin{align}
\det \underline{T}_{\vec \lambda}(b^{-1}) = \chi_{{\vec \lambda}^*}(b).\label{eq:reduction_determinant}
\end{align}
Using this, we can show that the capacity from \cref{dfn:capacity} given by
$$\capacity_{\vec{\lambda}}(X) =  \inf_{R \in B} \|R \cdot X\| |\chi_{\vec{\lambda}^*}(R)|$$
is a natural choice of capacity. \cref{prp:reduction_scalable} implies $\SL(\ell)^d$ scaling of $L(Y)$ to uniform marginals, if it is possible, can be performed by scalings of the form $(\underline{T}_{{\vec \lambda}^{(1)}}(b_1), \dots , \underline{T}_{{\vec \lambda}^{(d)}}(b_d))$ for $b_i$ in $B(n_i)$. For short, denote this element $\underline{T}_{\vec{\lambda}}(b)$. Thus, $L(Y)$ is scalable to uniform marginals if and only if
\begin{align*}
0 &< \inf_{b: \det( \underline{T}_{\lambda^{(i)}}(b_i)) = 1} \|\underline{T}_{\vec{\lambda}}(b)\cdot L(X)\|^2\\
&= \inf_{b: \chi_{\vec{\lambda}^*}(b^{-1}) = 1} \|L(b \cdot X)\|^2\\
& = \inf_{b \in B} \|\sqrt{\vec{\Lambda}} \cdot b \cdot X\|^2 |\chi_{\vec{\lambda}^*}(b)|^{2/\ell}\\
& = |\chi_{\vec{\lambda}^*}(\vec{\Lambda}^{-1/2})|^{2/\ell} \inf_{b \in B} \| b \cdot X\|^2 |\chi_{\vec{\lambda}^*}(b)|^{2/\ell}.
\end{align*}
Up to a constant, and a power of $\ell$, this matches $\capacity(X)$.
\end{rem}

\subsection{Highest weights from the reduction}\label{subsec:reduction_weights}
The map $L: \Ten_{n_0;  n_1, \dots, n_d}(\CC) \to \Ten_{n'_0;  \ell, \dots, \ell}(\CC)$ can be viewed as an intermediate step in the classical construction of highest weights. We would like to show that if $Y$ is Borel-scalable to an element with the appropriate marginals, then some highest weight with bounded integer coefficients is nonvanishing on $Y$. Here we show that composing the homogeneous $\SL(n)^d$-invariant polynomials used in \cite{burgisser2017alternating} with $L$ yields a subset of the highest weight vectors defined in \cref{eq:hwv concrete}! This amounts to an alternate proof that one of the polynomials in \cref{eq:hwv concrete} is nonzero at $Y$ if ${\vec \lambda}/\ell \in \Delta^B(Y)$. \\

We may start by computing a homogeneous, $\SL(\ell)^d$-invariant polynomial on $L(Y)$. By \cref{cor:pure_reduction} and \cite{burgisser2017alternating}, if ${\vec \lambda}/\ell \in \Delta^B(Y)$, then some $\SL(\ell)^d$-invariant, homogeneous polynomial does not vanish on $L( Y)$. Further, for any $\SL(\ell)^d$-invariant, homogeneous polynomial $Q$ on $\Ten_{n'_0;  \ell, \dots, \ell}(\CC)$ of degree $m\ell$, the polynomial
$$Y \mapsto Q(L (Y))$$
 is a highest weight vector of weight $m \vec{\lambda}^*$.
Indeed, using \cref{eq:b-linear} and \cref{eq:reduction_determinant}.
  \begin{align*}
 b \cdot Q(L (Y)) &= Q((I_{n_0} \otimes L_{{\vec \lambda}^{(1)}}b_1^{-1} \otimes \dots \otimes L_{{\vec \lambda}^{(d)}}b_d^{-1}) Y )\\
 & = Q( (\underline{T}_{{\vec \lambda}^{(1)}}(b_1^{-1}), \dots, \underline{T}_{{\vec \lambda}^{(d)}}(b_d^{-1}))\cdot L (Y))\\
 & = \prod_{i = 1}^d \det(\underline{T}_{{\vec \lambda}^{(i)}}(b_i^{-1}))^{\ell m/\ell} Q(  L(Y))\\
 & = \chi_{m\vec{\lambda}^*}(b)P(X).
 \end{align*}

 By \cite{burgisser2017alternating}, the $\SL(\ell)^d$-invariant, homogeneous polynomials of degree $\ell m$ on $\Ten_{n'_0;  \ell, \dots, \ell}(\CC)$ (all of them are $0$ unless $\ell$ divides the degree) are spanned by polynomials of the form
$$P(X) =   (\varepsilon_{\vec{i}^{(0)}} \otimes \varepsilon_{\vec{i}^{(1)}} \otimes  \dots \otimes\varepsilon_{\vec{i}^{(d)}}\otimes p_{\ell m,\pi_1}\otimes \dots \otimes p_{\ell m,\pi_d} ) (X^{\otimes \ell m})$$
where $\vec{i}^{(j)}:[\ell m] \to [ \lambda_1^{(j)}],$ $\vec{i}^{(0)}:[\ell m] \to [n_0]$, and $\varepsilon_{\vec{i}}$ denotes the linear form $\varepsilon_{\vec{i}}(e_{j(1)} \otimes \dots \otimes e_{j(t)}) = \delta_{\vec{i}\vec{j}}$. Here $\pi$ denotes a permutation of $[\ell m]$ and
$$p_{\ell m,\pi} (v_1 \otimes \dots \otimes v_{\ell m}) = \det (v_{\pi(1)},  \dots , v_{\pi(\ell)}) \det(v_{\pi(\ell + 1)} , \dots , v_{\pi(2\ell)}) \dots  \det (v_{\pi(\ell m - \ell + 1)},  \dots , v_{\pi(\ell m)}).$$

\begin{prp}$P(L(Y))$ vanishes unless for all $i \in [d], k \in [m]$, the sequence $i^{(j)}(\pi_j(\ell k +1)), \dots, i^{(j)}(\pi_j(\ell k))$ contains precisely $\mu_1^{(j)}$ many $1's$, $\mu_2^{(j)}$ many $2's$, $\dots$, $\mu_{\lambda_1^{(j)}}^{(j)}$ many $\lambda_1^{(j)}$'s. If this occurs, then
$$P(L(Y)) = \pm \left( \varepsilon_{\vec{i}^{0}} \otimes \Det_{m{\vec \lambda}^{(1), *}, \pi'_1} \otimes \dots \otimes \Det_{m{\vec \lambda}^{(d), *}, \pi'_d}\right)(Y^{\otimes \ell m}) $$
for some permutations $\pi'_1, \dots, \pi'_d$ of $[\ell m]$.
\end{prp}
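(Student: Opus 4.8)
The plan is to expand $L(Y)^{\otimes\ell m}$ in the standard basis and track how its tensor legs are contracted against the linear form defining $P$. Writing $Y=\sum_{\vec k}Y_{\vec k}\,e_{k^{(0)}}\otimes\dots\otimes e_{k^{(d)}}$ with $k^{(i)}\in[n_i]$, the definitions of $L$, $L_{\vec\lambda}$ and $\tau^{\vec\lambda}_j$ give
\begin{align*}
  L(Y)=\sum_{\vec k}\sum_{\vec j}\ Y_{\vec k}\ e_{k^{(0)}}\otimes\bigotimes_{i=1}^d\Bigl(e_{j^{(i)}}\otimes\tau^{\vec\lambda^{(i)}}_{j^{(i)}}\,e_{k^{(i)}}\Bigr),
\end{align*}
where each $j^{(i)}$ ranges over $[\lambda^{(i)}_1]$. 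Plugging $L(Y)^{\otimes\ell m}$ into $P=(\varepsilon_{\vec i^{(0)}}\otimes\varepsilon_{\vec i^{(1)}}\otimes\dots\otimes\varepsilon_{\vec i^{(d)}}\otimes p_{\ell m,\pi_1}\otimes\dots\otimes p_{\ell m,\pi_d})$, I would first contract the ``block-label'' leg of dimension $\lambda^{(i)}_1$ against $\varepsilon_{\vec i^{(i)}}$: this forces the label on the $\alpha$-th copy of $L(Y)$ to equal $i^{(i)}(\alpha)$ and collapses that sum. The $n_0$-leg is untouched by $L$, so $\varepsilon_{\vec i^{(0)}}$ simply passes through onto the $\CC^{n_0}$-legs of $Y^{\otimes\ell m}$. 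This should leave $P(L(Y))=(\varepsilon_{\vec i^{(0)}}\otimes q_1\otimes\dots\otimes q_d)(Y^{\otimes\ell m})$, where $q_i$ is the linear form on $(\CC^{n_i})^{\otimes\ell m}$ that applies $\tau^{\vec\lambda^{(i)}}_{i^{(i)}(\alpha)}\colon\CC^{n_i}\to\CC^\ell$ to the $\alpha$-th factor and then $p_{\ell m,\pi_i}$.

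Next I would analyse a single $\ell\times\ell$ determinant inside $q_i$. By definition of $p_{\ell m,\pi_i}$ it is the product over $k\in[m]$ of the determinants of the columns $\tau^{\vec\lambda^{(i)}}_{i^{(i)}(\pi_i(\beta))}\,w_{\pi_i(\beta)}$ for $\beta$ in the $k$-th block of $\ell$ consecutive indices; each such column lies entirely in the summand $\CC^{\mu^{(i)}_r}$ of $\CC^\ell=\bigoplus_r\CC^{\mu^{(i)}_r}$ with $r=i^{(i)}(\pi_i(\beta))$. Hence the matrix is block-structured, and a complementary-minor expansion shows the determinant vanishes unless exactly $\mu^{(i)}_r$ columns land in block $r$ for each $r$ --- this is the stated vanishing condition. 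When it holds, sorting the columns into block order produces an explicit sign and factorizes the determinant as a product over $r$ of the determinants of the $\mu^{(i)}_r$ columns restricted to block $r$; since $\tau^{\vec\lambda^{(i)}}_r$ restricted to its target block is, up to reversing coordinates, the projection $\nu_{\mu^{(i)}_r}$ onto the last $\mu^{(i)}_r$ coordinates, each restricted determinant equals $\pm\Det_{\mu^{(i)}_r}$ of the corresponding subtuple of the $w$'s, in the notation of \cref{subsec:hwv}.

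Finally I would reassemble. Over all $k\in[m]$ and all $r$, $q_i$ becomes $\pm$ a product of $m\lambda^{(i)}_1$ determinants with $\Det_{\mu^{(i)}_r}$ occurring exactly $m$ times; since the conjugate partition of $m\vec\lambda^{(i)}$ is $((\mu^{(i)}_1)^m,(\mu^{(i)}_2)^m,\dots)$, this is exactly the multiset defining $\Det_{m\vec\lambda^{(i),*}}=\Det_{\mu^{(i)}_1}^{\otimes m}\otimes\Det_{\mu^{(i)}_2}^{\otimes m}\otimes\dots$, only with the tensor slots of $Y^{\otimes\ell m}$ visited in a different order. Composing $\pi_i$ with the block-sorting permutations of the $m$ individual $\ell\times\ell$ determinants then yields a single permutation $\pi'_i\in S_{\ell m}$ with $q_i(Y^{\otimes\ell m})=\pm\Det_{m\vec\lambda^{(i),*},\pi'_i}(Y^{\otimes\ell m})$, and collecting the $d$ signs gives the asserted identity. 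I expect the main obstacle to be exactly this last step: verifying that the reindexings coming from the $m$ separate $\ell\times\ell$ determinants genuinely glue into one permutation of $[\ell m]$, and pinning down the overall sign. The vanishing half and the block-structure observation are immediate once the expansion in the first paragraph is set up.
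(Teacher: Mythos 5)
Your proof is correct and follows essentially the same route as the paper's: both reduce $P(L(Y))$ to a product of $\ell\times\ell$ determinants of the block-supported column vectors $\tau^{\vec\lambda^{(i)}}_{i^{(i)}(\pi_i(\beta))}w_{\pi_i(\beta)}$, observe that each such determinant is block-structured (forcing the vanishing condition on the multiplicities of the labels $i^{(i)}$), sort columns to factor into $\pm\Det_{\mu^{(i)}_r}$'s, and then recombine across the $m$ blocks using that the conjugate of $m\vec\lambda^{(i)}$ is $((\mu^{(i)}_1)^m,(\mu^{(i)}_2)^m,\dots)$. The only cosmetic difference is that you expand $L(Y)^{\otimes\ell m}$ in the standard basis and contract leg by leg, whereas the paper evaluates the linear form on a spanning set of decomposable tensors $\bigl(v^{(1)}_1\otimes\dots\otimes v^{(1)}_{\ell m}\bigr)\otimes\dots\otimes\bigl(v^{(d)}_1\otimes\dots\otimes v^{(d)}_{\ell m}\bigr)$ — two equivalent ways of computing the same multilinear contraction. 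The final gluing step you flag as a worry (combining the $m$ block-sorting permutations with $\pi_i$ into a single $\pi'_i\in S_{\ell m}$) is handled in the paper just as tersely, via the identity $\Det_{\vec\lambda,\pi'}\otimes\Det_{\vec\lambda',\pi''}=\Det_{\vec\lambda+\vec\lambda',\pi'''}$, so there is no gap you missed.
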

\begin{proof}
It is enough to compute the linear form
$$p(\underline{L}(X)^{\otimes m}) = (\varepsilon_{\vec{i}^{(1)}} \otimes  \dots \otimes\varepsilon_{\vec{i}^{(d)}}\otimes p_{\ell m,\pi_1}\otimes \dots \otimes p_{\ell m,\pi_d} ) (\underline{L}(X)^{\otimes m}) $$
at where $\underline{L} = L_{{\vec \lambda}^{(1)}} \otimes \dots \otimes  L_{{\vec \lambda}^{(d)}}$ and $X \in \Ten(n_1, \dots, n_d)$.
We compute $p(\underline{L}(X)^{\otimes \ell m}) = p(\underline{L}^{\otimes \ell m} (X^{\otimes \ell m}))$ on a spanning set of $\Ten( n_1, \dots, n_d)^{\otimes \ell m}$ given by
$$ Z =\left( v^{(1)}_1 \otimes \dots \otimes v^{(1)}_{\ell m} \right) \otimes \dots \otimes \left( v^{(d)}_1 \otimes \dots \otimes v^{(d)}_{\ell m} \right)$$
where $v^{(i)}_j $ ranges over $\CC^{n_i}$. Applying $\underline{L}^{\otimes \ell m}$, we have

$$ \underline{L}^{\otimes \ell m} Z =  L_{{\vec \lambda}^{(1)}}^{\otimes \ell m} \left(  v^{(1)}_1 \otimes \dots \otimes v^{(1)}_{\ell m} \right) \otimes \dots \otimes L_{{\vec \lambda}^{(d)}}^{\otimes \ell m}\left( v^{(d)}_1 \otimes \dots \otimes v^{(d)}_{\ell m} \right).$$
Hence,
$$p(L^{\otimes \ell m} Z) = \prod_{k = 1}^d ( \varepsilon_{\vec{i}^{(k)}} \otimes p_{\ell m, \pi_k}) \left( L_{{\vec \lambda}^{(k)}}^{\otimes \ell m} \left(  v^{(k)}_1 \otimes \dots \otimes v^{(k)}_{\ell m} \right)\right).$$
It remains to compute the value of $ (\varepsilon_{\vec{i}} \otimes p_{\ell m, \pi}) \left( L_{{\vec \lambda}}^{\otimes \ell m} \left(  v_1 \otimes \dots \otimes v_{\ell m} \right)\right)$ for $\vec{i}:[\ell m] \to [\lambda_1]$. In fact,
\begin{align*}
(\varepsilon_{\vec{i}} \otimes p_{\ell m, \pi}) \left( L_{{\vec \lambda}}^{\otimes \ell m} \left(  v_1 \otimes \dots \otimes v_{\ell m} \right)\right)\\
= (\varepsilon_{\vec{i}} \otimes p_{\ell m, \pi}) \left(\sum_{j \in [\lambda_1]} e_j \otimes \tau_j^{\vec \lambda} v_1 \right) \otimes \dots \otimes \left(\sum_{j \in  [\lambda_1]} e_j \otimes \tau_j^{\vec \lambda} v_{\ell m} \right)\\
= p_{\ell m, \pi} \left( \tau_{i(1)}^{\vec \lambda} v_1  \otimes \dots \otimes  \tau_{i(\ell m)}^{\vec \lambda} v_{\ell m} \right).
\end{align*}
Suppose, without loss of generality, that $\pi$ is the identity permutation. Notice that if the sequence $i(1), \dots, i(\ell)$ does not contain precisely $\mu_1$ many $1's$, $\mu_2$ many $2's$, and so on, the above expression will vanish. Otherwise, it will be equal to $\pm\Det_{{\vec \lambda}^*, \pi'} (v_1 \otimes \dots \otimes v_{\ell})$ where $\pi'$ is a permutation such that $i(\pi(j))$ is decreasing for $j \in [\ell]$. Applying similar reasoning for $i(\ell+1), \dots, i(2\ell - 1)$ and so on whilst combining the polynomials using $\Det_{{\vec \lambda}, \pi'} \otimes \Det_{{\vec \lambda}', \pi''} = \Det_{{\vec \lambda} + {\vec \lambda}', \pi'''}$ completes the proof. \end{proof}

\section{Distance lower bound}\label{sec:distance}

In this section, we will show that if $\vec p=\vec\lambda/\ell$ is not contained in the moment polytope~$\Delta(\mathcal X)$, then its distance to the moment polytope can be lower bounded only in terms of~$\ell$ and the dimensions $n_0,n_1,\dots,n_d$ -- independently of~$\mathcal X$. The high level strategy is as follows. We first lower bound (\cref{prp:distance-to-gap}) the distance in terms of something called the gap constant (\cref{dfn:gap-constant}). Then we lower bound the gap constant (\cref{lem:lb_distance_0} and \cref{lem:gap_constant_lb}) using duality (Farkas' lemma) and well known bounds on solutions to linear programs.

We will establish the lower bound by studying the geometry of weights underlying the representation that underlies the shifting trick (\cref{lem:geometric shift}):

\begin{dfn}[Gap constant]\label{dfn:gap-constant}
  Let $\vec\lambda$ be a highest weight and $\ell=\lvert\vec\lambda\rvert$.
  We define the \emph{gap constant} by
  \begin{align*}
    \gamma(\vec\lambda) := \min \left\{ \frac{\lVert\vec x\rVert_2}\ell : S \subseteq \Omega(\Sym^\ell(V)\ot V_{\vec\lambda^*}), 0\not\in\conv(S), \vec x\in\conv(S) \right\},
  \end{align*}
  where we recall that $\Omega(W)$ denotes the set of weighs that occur in a representation~$W$ (see \cref{subsec:hwtheory}).
\end{dfn}

The argument in the following proof is essentially from~\cite{kirwan1984cohomology}.

\begin{prp}\label{prp:distance-to-gap}
  Let $X\in\Ten(n_0;n_1,\dots,n_d)$ be a nonzero tensor, $\vec\lambda$ be a highest weight, and $\ell=\lvert\vec\lambda\rvert$.
  If $\vec p:=\vec\lambda/\ell\not\in\Delta(\mathcal X)$ then
  \begin{align*}
    \min \{ \lVert \vec q - \vec p \rVert_2 : \vec q \in \Delta(\mathcal X) \} \geq \gamma(\vec\lambda).
  \end{align*}
\end{prp}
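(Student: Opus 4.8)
The plan is to use the shifting trick (\cref{lem:geometric shift}) to reduce the statement to a quantitative fact about the moment polytope of the representation $W:=\Sym^\ell(V)\ot V_{\vec\lambda^*}$, and then to invoke the classical Kempf--Ness/Kirwan picture relating distance of $0$ to the moment polytope with the geometry of weights. First I would recall from \cref{lem:geometric shift} that for any $[Y]\in\mathcal X$ with $\lVert Y\rVert=1$ we have $\mu^{(i)}_W([Y^{\ot\ell}\ot v_{\vec\lambda^*}]) = \ell\rho^{(i)}_Y + \diag((\vec\lambda^*)^{(i)})$, so that for the vector $w:=Y^{\ot\ell}\ot v_{\vec\lambda^*}$ we have
\begin{align*}
  \mu_W([w]) = \ell\bigl(\mu_V([Y]) - \diag(\vec p_\uparrow)\bigr)
\end{align*}
up to the obvious reordering conventions; hence $\lVert\mu_W([w])\rVert_2 = \ell\,\lVert\mu_V([Y]) - \diag(\vec p_\uparrow)\rVert_2$. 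Since conjugating $Y$ on each factor by a unitary does not change $\spec(\rho^{(i)}_Y)$ and acts isometrically on the moment map, minimizing the right-hand side over $[Y]\in\mathcal X$ and over unitaries is the same as minimizing $\lVert\vec q-\vec p\rVert_2$ over $\vec q\in\Delta(\mathcal X)$ — here one must be a little careful to match the $\ell^2$-distance between ordered eigenvalue vectors with the $\ell^2$-distance between the Hermitian matrices $\mu_V([Y])$ and $\diag(\vec p_\uparrow)$; by a Lidskii/Wielandt-type argument the latter is at least the former, which is the direction we need (we want a lower bound on distance in the polytope, so an inequality $\lVert\mu_W([w])\rVert_2 \ge \ell\,\mathrm{dist}(\vec p,\Delta(\mathcal X))$ is in fact not what we want — rather we want to bound $\mathrm{dist}(\vec p,\Delta(\mathcal X))$ from below by bounding $\lVert\mu_W([w])\rVert_2/\ell$ from below for \emph{every} such $w$). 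So the reduction is: it suffices to show that $\inf_{[w]\in\mathcal Z}\lVert\mu_W([w])\rVert_2 \ge \ell\,\gamma(\vec\lambda)$, where $\mathcal Z$ is the $G$-stable subvariety $\{[X^{\ot\ell}\ot g\cdot v_{\vec\lambda^*}]\}$ and $0\notin\Delta_W(\mathcal Z)$ because $\vec p\notin\Delta(\mathcal X)$.

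Next I would carry out the Kirwan-style lower bound on $\lVert\mu_W([w])\rVert_2$ for a single vector $w$ whose $G$-orbit closure does not contain $0$. Write $w=\sum_{\vec\omega\in\Omega(W)} w_{\vec\omega}$ in the weight decomposition with respect to the maximal torus $T$, and let $S(w) := \{\vec\omega : w_{\vec\omega}\ne 0\}$ be its support. The standard computation (Kempf--Ness) shows that $\mu^{T}_W([w])$ — the $T$-component of the moment map, i.e. the projection of $\mu_W([w])$ onto the diagonal — equals the weighted average $\sum_{\vec\omega\in S(w)} \frac{\lVert w_{\vec\omega}\rVert^2}{\lVert w\rVert^2}\vec\omega$, which lies in $\conv(S(w))$. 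Therefore $\lVert\mu_W([w])\rVert_2 \ge \lVert\mu^T_W([w])\rVert_2 \ge \mathrm{dist}(0,\conv(S(w)))$. The crucial point is that $0\notin\conv(S(w))$: by the Hilbert--Mumford criterion, if $0\in\conv(S(w))$ then one can find a one-parameter subgroup along which $w$ flows to $0$, contradicting $0\notin\overline{G\cdot w}$; more directly, after acting by a suitable $U\in K$ one can bring $w$ to a configuration where the support of $U\cdot w$ is a subset $S\subseteq\Omega(W)$ with $0\notin\conv(S)$, and then $\lVert\mu_W([w])\rVert_2 = \lVert\mu_W([U\cdot w])\rVert_2 \ge \mathrm{dist}(0,\conv(S)) \ge \ell\gamma(\vec\lambda)$ by the very definition of the gap constant (\cref{dfn:gap-constant}), since $\gamma(\vec\lambda) = \min_S \mathrm{dist}(0,\conv(S))/\ell$ over all $S\subseteq\Omega(W)$ with $0\notin\conv(S)$. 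Combining with the reduction of the first paragraph gives $\mathrm{dist}(\vec p,\Delta(\mathcal X)) \ge \gamma(\vec\lambda)$.

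The main obstacle, and the step I would spend the most care on, is the transition from ``$0\notin\overline{G\cdot w}$'' (equivalently $\vec p\notin\Delta(\mathcal X)$ via \cref{lem:geometric shift}) to ``for every $[w']\in K\cdot[w]$ the support $S(w')$ omits $0$ from its convex hull,'' since a priori only \emph{some} element of the orbit need have this property, whereas the bound $\lVert\mu_W\rVert_2\ge\mathrm{dist}(0,\conv(S))$ is applied at the point realizing the infimum of the norm of the moment map. The clean way around this is: the infimum of $\lVert\mu_W([g\cdot w])\rVert_2$ over $g\in G$ is attained (by properness of the norm along orbits when $0\notin\overline{G\cdot w}$, i.e. semistability), at a point $[w_0]$ which is a minimum-norm vector; at such a point $\mu_W([w_0])=0$ would force $\vec p\in\Delta$, contradiction, so $\mu_W([w_0])\ne 0$, and one then runs the weight-support argument at $[w_0]$ after conjugating $\mu_W([w_0])$ to diagonal form by an element of $K$, which does not change its norm. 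The only remaining routine checks are (i) the Lidskii-type inequality comparing $\lVert\spec-\spec\rVert_2$ with $\lVert\,\cdot\,\rVert_2$ on Hermitian matrices, used with the correct orientation, and (ii) that the ordering convention $\vec p_\uparrow$ versus $\vec p$ and the dual-weight convention $\vec\lambda^*$ are handled consistently — both are bookkeeping rather than substance. I would flag in the write-up that the whole argument is ``essentially from \cite{kirwan1984cohomology}'' exactly as the excerpt already indicates.
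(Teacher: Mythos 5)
Your high-level plan --- shift to $W=\Sym^\ell(V)\otimes V_{\vec\lambda^*}$ via \cref{lem:geometric shift}, lower-bound $\lVert\mu_W\rVert$ by the distance of $0$ from the convex hull of a weight support, and conclude via \cref{dfn:gap-constant} --- is indeed the paper's strategy, and your final step (projecting $\mu_W([w])$ onto the diagonal to land inside $\conv(S)$ and using that Frobenius norm dominates the $\ell^2$-norm of the diagonal) is exactly right. However, there is a genuine logical error at precisely the step you flag as ``the main obstacle,'' and it is a sign error about stability.

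You assert that $\vec p\notin\Delta(\mathcal X)$ is equivalent to $0\notin\overline{G\cdot w}$ for the shifted vector $w$, and you then try to establish $0\notin\conv(S(w))$ by arguing that $0\in\conv(S(w))$ would yield a one-parameter subgroup sending $w$ to zero, contradicting semistability. Both steps are backwards. Combining \cref{lem:geometric shift} with \cref{lem:invariant-theoretic shift,thm:mumford} gives: $\vec p\in\Delta(X)$ if and only if $X^{\ot\ell}\ot v_{\vec\lambda^*}$ is \emph{semistable}, i.e.\ $0\notin\overline{G\cdot w}$. So the hypothesis $\vec p\notin\Delta(\mathcal X)$ forces every shifted vector $Z = X^{\ot\ell}\ot U\cdot v_{\vec\lambda^*}$ (with $[X]\in\mathcal X$, $U\in K$) to be \emph{unstable}: $0\in\overline{G\cdot Z}$. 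Likewise, the torus criterion runs opposite to what you wrote: $0\in\conv(S(w))$ means no torus one-parameter subgroup can send $w$ to zero (i.e.\ $w$ is $T$-semistable), whereas $0\notin\conv(S(w))$ is precisely the condition guaranteeing a destabilizing torus one-parameter subgroup exists. Your own third paragraph should have sounded the alarm: if $w$ were genuinely semistable, Kempf--Ness would produce a minimum-norm vector $w_0$ in the orbit closure with $\mu_W([w_0])=0$, which by the shifting trick forces $\vec p\in\Delta(\mathcal X)$ --- a contradiction with the hypothesis, not with your intended conclusion $\mu_W([w_0])\neq0$.

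The correct bridge, which is the one the paper uses, runs in the opposite direction and is simpler. Since $0\in\overline{G\cdot Z}$, the Hilbert--Mumford criterion produces a one-parameter subgroup $\exp(At)$ with $\exp(At)\cdot Z\to 0$ as $t\to\infty$. After conjugating $A$ into diagonal form by an element of $K$ (absorbed into the freedom of choosing $U$, leaving $\lVert\mu_W([Z])\rVert_F$ unchanged), the flow condition says $\sum_i\vec\omega^{(i)}\cdot\vec a^{(i)} < 0$ for every weight $\vec\omega$ in the support $S$ of $Z$, so the hyperplane normal to $a$ strictly separates $0$ from $\conv(S)$. That is how $0\notin\conv(S)$ is obtained, and from there your projection computation applies verbatim. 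Note this runs directly for every $Z$ over $\mathcal X$ and $K$ --- there is no need to pass to a Kempf--Ness minimizer --- and the paper's reduction to \cref{eq:gap goal} already handles the spectrum-versus-matrix comparison by allowing $U\in K$ to vary, so no Lidskii/Wielandt inequality is needed.
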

\begin{proof}
  It suffices to show that, for all $[X]\in\mathcal X$ and $U\in K$,
  \begin{align*}
    \sum_{i=1}^d \lVert \mu^{(i)}([X]) - U^{(i)} \diag(p^{(i)}_{n_i},\dots,p^{(i)}_1) (U^{(i)})^\dagger \rVert_F^2 \geq \gamma^2(\vec\lambda).
  \end{align*}
  By the shifting trick, \cref{eq:geometric shift}, this is equivalent to
  \begin{align}\label{eq:gap goal}
    \frac1{\ell^2} \sum_{i=1}^d \lVert \mu^{(i)}_W([Z])\rVert^2_F \geq \gamma^2(\vec\lambda),
  \end{align}
  where $Z := X^{\ot \ell} \ot U \cdot v_{\vec\lambda^*} \in W := \Sym^\ell(V)\ot V_{\vec\lambda^*}$.
  By assumption, $\vec\lambda\not\in\Delta(X)\subseteq\Delta(\mathcal X)$.
  According to \cref{lem:invariant-theoretic shift,thm:mumford,}, this means that any $G$-invariant polynomial vanishes on $Z$, so $0 \in \overline{G \cdot Z}$.
  By the Hilbert-Mumford criterion, there exists a 1-parameter subgroup of the form $\exp(At)$, where $A=(A^{(1)},\dots,A^{(d)}))$ is a tuple of Hermitian matrices, such that $\exp(At) \cdot Z \to 0$ for $t\to\infty$.

  Without loss of generality we may assume that the $A^{(i)}$ are diagonal matrices for $i=1,\dots,d$ (otherwise conjugate each by an appropriate unitary $Y^{(i)}$, and replace $X$ by $(U^{(1)}\ot\dots\ot U^{(d)})$, which leaves the left-hand side of \cref{eq:gap goal} invariant).
  Thus, $\exp(At)\in T$ for all~$t\in\CC$.
  We now expand~$Z$ in terms of weight vectors, $Z = \sum_{\vec\omega\in\Omega(W)} Z_{\vec\omega}$, so that
  \begin{align*}
    \exp(At) \cdot Z = \sum_{\vec\omega\in\Omega(W)} e^{\sum_{i=1}^d \vec\omega^{(i)} \cdot \vec a^{(i)}} Z_{\vec\omega},
  \end{align*}
  where we write $\vec a^{(i)}$ for the diagonal entries of $A^{(i)}$, $i=1,\dots,d$;
  the dot $\cdot$ in the exponent denotes the standard inner product on $\RR^{n_i}$.
  Since we know that $\exp(At) \cdot Z\to0$, it follows that $\sum_{i=1}^d \vec\omega^{(i)} \cdot \vec a^{(i)} < 0$ whenever $Z_{\vec\omega}\neq0$.
  This implies that $0 \not\in \conv(S)$, where $S := \{\omega : Z_{\vec\omega}\neq 0 \}$.

  On the other hand, note that by the definition of the moment map (\cref{eq:general moment map}) and the orthogonality of the weight space decomposition, we have
  \begin{align*}
    \tr[\mu^{(i)}_W([Z]) \diag(\vec b^{(i)})]
  &= \partial_{t=0} \frac{\braket{Z, \exp(\diag(\vec b^{(i)})t) \cdot Z}}{\braket{Z, Z}}
  = \partial_{t=0} \sum_{\vec\omega\in S} \frac{\lvert Z_{\vec\omega}\rvert^2}{\lVert Z\rVert^2} e^{t \vec\omega^{(i)} \cdot \vec b^{(i)}}\\
  &= \sum_{\vec\omega\in S} \frac{\lvert Z_{\vec\omega}\rvert^2}{\lVert Z\rVert^2} (\vec\omega^{(i)} \cdot \vec b^{(i)})
  \end{align*}
  for all $\vec b^{(i)}\in\RR^{n_i}$.
  This implies that if we orthogonally project each component of~$\mu_W([Z])$ onto the diagonal, we obtain $\sum_{\vec\omega\in S} \frac{\lvert Z_{\vec\omega}\rvert^2}{\lVert Z\rVert^2} \vec\omega\in\conv(S)$.
  As the Frobenius norm of a matrix is never smaller than the $\ell^2$-norm of is diagonal, we obtain
  \begin{align*}
    \frac1{\ell^2}\sum_{i=1}^d \lVert \mu^{(i)}_W([Z])\rVert^2_F
    \geq \frac1{\ell^2} \min \{ \lVert x\rVert_2^2 : x\in\conv(S) \}
    \geq \gamma^2(\vec\lambda),
  \end{align*}
  which establishes \cref{eq:gap goal}.
\end{proof}

Next, we lower-bound the gap constant. First, we will need the following elementary lemma.

\begin{lem}\label{lem:lb_distance_0} Suppose $S = \{v_1,\ldots, v_m\} \in \ZZ^N$ is a set of integer vectors s.t. $0\not\in\conv(S)$. Let the bit complexity of entries of $v_i$'s be at most $b$. Denote by $\gamma(S)$ denote the Euclidean distance of $0$ to $\conv(S)$. Then
$$
\gamma(S) \ge \expon\left( -O(N (\log(N) + b))\right)
$$
\end{lem}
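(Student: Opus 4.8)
\textbf{Proof plan for \cref{lem:lb_distance_0}.}
The statement is a standard quantitative fact about polytopes defined by integer data, and the plan is to reduce it to bounds on the bit complexity of vertices of a polytope via Farkas' lemma (or LP duality). First I would reformulate the quantity: $\gamma(S)^2 = \min\{ \lVert x\rVert_2^2 : x\in\conv(S)\}$, and $\conv(S) = \{ V t : t \ge 0, \mathbf 1^\top t = 1\}$ where $V$ is the $N\times m$ matrix whose columns are the $v_i$. The minimizer $x^\star$ is attained at a point $V t^\star$; the key observation is that $x^\star$ lies on the boundary of $\conv(S)$ (since $0\notin\conv(S)$, the nearest point is in the relative boundary), so it is a convex combination of at most $N$ of the $v_i$, and in fact it can be taken to be the nearest point to $0$ on some face. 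More directly, I would invoke the separation form: since $0\notin\conv(S)$, the nearest point $x^\star$ gives a separating hyperplane with normal $x^\star$, namely $\langle x^\star, v_i\rangle \ge \lVert x^\star\rVert_2^2 > 0$ for all $i$. Normalizing, the vector $w := x^\star / \lVert x^\star\rVert_2^2$ satisfies $\langle w, v_i\rangle \ge 1$ for all $i$, and $\gamma(S) = \lVert x^\star\rVert_2 = 1/\lVert w\rVert_2$. Hence it suffices to exhibit \emph{some} rational vector $w$ with $\langle w, v_i\rangle \ge 1$ for all $i$ and $\lVert w\rVert_2$ not too large.

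The existence of such a $w$ is guaranteed by Farkas' lemma (the system $\langle w, v_i\rangle \ge 1$, $i=1,\dots,m$, is feasible precisely because $0\notin\conv(S)$). The second step is then a standard estimate on the size of a feasible solution (or a vertex) of a polyhedron $\{w : A w \ge \mathbf 1\}$ whose constraint matrix $A$ has integer entries of bit size at most $b$: any vertex of such a polyhedron, if one exists, has rational coordinates with numerators and denominators bounded by the maximal absolute value of an $N\times N$ subdeterminant of the augmented matrix, which by Hadamard's inequality is at most $(N^{1/2} 2^b)^N = \expon(O(N(\log N + b)))$. If the feasible region has no vertex (it contains a line), one can intersect with a bounding box of size $\expon(O(N(\log N+b)))$ without losing feasibility and then take a vertex of the resulting polytope. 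Either way we obtain a feasible $w$ with $\lVert w\rVert_2 \le \expon(O(N(\log N + b)))$, whence $\gamma(S) = 1/\lVert w^\star\rVert_2 \ge 1/\lVert w\rVert_2 \ge \expon(-O(N(\log N + b)))$, using that the nearest-point normal $w^\star$ minimizes $\lVert w\rVert_2$ over the feasible set (equivalently, that $\gamma(S)$ equals the reciprocal of the smallest-norm feasible $w$, by Cauchy--Schwarz: $1 \le \langle w^\star, v\rangle$ for the optimal $v$, so $\lVert w^\star\rVert_2 \ge 1/\lVert x^\star\rVert_2$, with equality).

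The main obstacle, such as it is, is getting the duality bookkeeping exactly right: one must be careful that $\gamma(S) = 1/\min\{\lVert w\rVert_2 : A w \ge \mathbf 1\}$ rather than merely a one-sided inequality, and one must handle the degenerate case where the dual feasible region is unbounded or not full-dimensional (so that a naive "take a vertex" argument fails). Both issues are routine: the duality identity follows from Lagrangian/KKT analysis of the quadratic program $\min \lVert x\rVert_2^2$ over $\conv(S)$, and the degenerate cases are dispatched by the bounding-box trick noted above. No deep input is needed beyond Hadamard's inequality and Cramer's rule for bounding vertex coordinates of rational polyhedra. I would present the proof in the order: (i) reduce to finding a small-norm $w$ with $A w \ge \mathbf 1$; (ii) establish feasibility via $0\notin\conv(S)$ and Farkas; (iii) bound the size of a feasible $w$ via Hadamard/Cramer, handling unboundedness by intersecting with a box; (iv) conclude $\gamma(S) \ge 1/\lVert w\rVert_2$.
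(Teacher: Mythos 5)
Your proposal is correct and takes essentially the same route as the paper: invoke Farkas to get feasibility of $\{w : \langle w, v_i\rangle \ge 1\}$, bound the bit complexity of a feasible $w$, and conclude via Cauchy--Schwarz that $\gamma(S)\ge 1/\|w\|_2$. The only cosmetic difference is that the paper cites Schrijver's bit-complexity bound for LP solutions directly, whereas you re-derive it via Hadamard/Cramer with a bounding-box trick for unboundedness (which is precisely how Schrijver's result is proved), and you additionally record the exact duality identity $\gamma(S)=1/\min\{\|w\|_2 : \langle w,v_i\rangle\ge 1\ \forall i\}$, which is more than is needed.
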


\begin{proof} Since $0\not\in\conv(S)$, by Farkas' lemma (e.g. see \cite{schrijver1998theory}), there exists a vector $\vec w \in \RR^N$ s.t. $\braket{\vec w, \vec v_i} \ge 1$ for all $i$. By (\cite{schrijver1998theory}, Corollary 3.2b, Theorem 10.1), there exists such a rational $\vec w$ with bit complexity bounded of each entry bounded by $O(N (\log(N) + b))$. Hence there exists a vector $\vec w' \in \RR^N$ (normalization of $\vec w$) s.t. $\norm{\vec w'} = 1$ and $\braket{\vec w', \vec v_i} \ge \expon\left( -O(N (\log(N) + b))\right)$ for all $i$. Now consider any element $\vec x \in \conv(S)$. Then $\braket{\vec w',\vec x} \ge \expon\left( -O(N (\log(N) + b))\right)$. Hence by Cauchy-Schwarz inequality and using the fact that $\norm{\vec w'} = 1$,
$$\norm{\vec x} \ge \expon\left( -O(N (\log(N) + b))\right)
$$
This completes the proof.
\end{proof}

We are now ready to lower bound the gap-constant.

\begin{lem}\label{lem:gap_constant_lb}
  Let $\vec\lambda$ be a highest weight and $\ell=\lvert\vec\lambda\rvert$.
  Then,
  \begin{align*}
    \gamma(\vec\lambda) \geq \gamma(n_1,\dots,n_d,\ell) := \expon\left( -O\left((n_1 + \cdots + n_d) \log(\ell \max_j n_j)\right) \right)
  \end{align*}
\end{lem}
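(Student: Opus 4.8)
The plan is to reduce directly to \cref{lem:lb_distance_0} applied to the (finite) set of integer weight vectors occurring in $W:=\Sym^\ell(V)\ot V_{\vec\lambda^*}$, so the whole argument amounts to pinning down the ambient dimension and the bit-size of those weights. Since $G=\GL(n_1)\times\dots\times\GL(n_d)$, every weight of a $G$-representation is a tuple $\vec\omega=(\vec\omega^{(1)},\dots,\vec\omega^{(d)})$ with $\vec\omega^{(i)}\in\ZZ^{n_i}$; hence $\Omega(W)$ is a subset of $\ZZ^N$ with $N=n_1+\dots+n_d$, and its elements are genuinely integral because $W$ is a rational representation of $G$.

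Next I would bound the entries of these weights. The weights of $V=\Ten(n_0;n_1,\dots,n_d)$ are exactly the standard-basis weights, i.e.\ $0/1$-vectors with a single $1$ in each block, so the weights of $\Sym^\ell(V)$, being sums of $\ell$ such vectors, lie in $\prod_i \ZZ_{\ge0}^{n_i}$ and have all entries in $\{0,1,\dots,\ell\}$. For the irreducible factor, $V_{\vec\lambda^*}=V_{(\vec\lambda^{(1)})^*}\ot\dots\ot V_{(\vec\lambda^{(d)})^*}$, and since $\vec p\in P_+(n_1,\dots,n_d)$ forces each $\vec\lambda^{(i)}=\ell\vec p^{(i)}$ to be a partition of $\lvert\vec\lambda^{(i)}\rvert=\ell$ with at most $n_i$ parts, Schur--Weyl duality places $V_{\vec\lambda^{(i)}}$ inside $(\CC^{n_i})^{\ot\ell}$, equivalently $V_{(\vec\lambda^{(i)})^*}$ inside $((\CC^{n_i})^*)^{\ot\ell}$; the weights of the latter are sums of $\ell$ weights $-e_j$ and hence have entries in $\{-\ell,\dots,0\}$. (Alternatively one may use that every weight of $V_{(\vec\lambda^{(i)})^*}$ lies in the convex hull of the Weyl orbit of $(\vec\lambda^{(i)})^*=-\vec\lambda^{(i)}_\uparrow$, whose entries lie in $[-\lambda^{(i)}_1,0]\subseteq[-\ell,0]$ because $\lambda^{(i)}_1\le\ell$.) Combining, every weight in $\Omega(W)$ is an integer vector in $\ZZ^N$ with all entries of absolute value at most $\ell$, i.e.\ of bit-size $b=O(\log\ell)$.

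It then remains to invoke \cref{lem:lb_distance_0}: for every $S\subseteq\Omega(W)$ with $0\notin\conv(S)$, the Euclidean distance from $0$ to $\conv(S)$ is at least $\expon\!\big(-O(N(\log N+b))\big)=\expon\!\big(-O\big((n_1+\dots+n_d)(\log(n_1+\dots+n_d)+\log\ell)\big)\big)$. Dividing by $\ell$ and taking the minimum over all admissible $S$ (the additional $\log\ell$ being absorbed into the $O(\cdot)$) yields
\[
  \gamma(\vec\lambda)\ \ge\ \expon\!\Big(-O\big((n_1+\dots+n_d)\log(\ell\,(n_1+\dots+n_d))\big)\Big),
\]
and replacing $n_1+\dots+n_d$ by $d\max_j n_j$ inside the logarithm gives the asserted bound $\gamma(n_1,\dots,n_d,\ell)$.

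I do not expect a genuine obstacle here: once the ambient dimension $N=\sum_i n_i$ and the $O(\ell)$-bound on the weight entries are in place, the conclusion is immediate from \cref{lem:lb_distance_0}. The only point needing a little care is the entry bound for the irreducible factor $V_{\vec\lambda^*}$, where one should appeal to Schur--Weyl duality (or to the convexity of weight diagrams) rather than argue by hand; everything else is routine bookkeeping with the single parameter $\ell$.
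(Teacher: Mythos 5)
Your proof is correct and follows essentially the same route as the paper's: reduce to \cref{lem:lb_distance_0} by taking $N=\sum_i n_i$, bounding the bit-size of the integer weight vectors occurring in $\Sym^\ell(V)\ot V_{\vec\lambda^*}$, and then absorbing the extra factor of $1/\ell$. The one place you diverge is the weight-entry bound. The paper argues block-by-block by writing the restriction of $W$ to $\GL(n_j)$ as (polynomial rep of degree $e_1^{(j)}$)$/\det^{e_2^{(j)}}$ with $e_1^{(j)}=O(\ell n_j)$ and $e_2^{(j)}\le\ell$, concluding entries lie in $[-\ell,\,\ell\max_j n_j]$. You instead bound the two tensor factors separately: $\Sym^\ell(V)$ has weight entries in $\{0,\dots,\ell\}$, and $V_{(\vec\lambda^{(i)})^*}\cong V_{\vec\lambda^{(i)}}^*$, which via Schur--Weyl sits in $((\CC^{n_i})^*)^{\ot\ell}$ and so has entries in $\{-\ell,\dots,0\}$; adding, every weight of $W$ has entries of absolute value $\le\ell$. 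This is a cleaner, and in fact strictly tighter, entry bound than the paper's, though it yields the same final estimate since the $\log N$ term in \cref{lem:lb_distance_0} already forces $\log(\ell\max_j n_j)$ into the exponent. Both proofs share the same mild looseness at the very end (the step $\log(N\ell)=O(\log(\ell\max_j n_j))$ tacitly needs $d$ to be polynomially bounded in $\ell\max_j n_j$), so this is not a gap you introduced; it is inherited from the paper's argument.
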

\begin{proof}
  Consider a minimizer of $\gamma(\vec\lambda)$ given by some $S \subseteq \Omega(\Sym^\ell(V)\ot V_{\vec\lambda^*})$ s.t. $0\not\in\conv(S)$. We will apply \cref{lem:lb_distance_0}. Here the dimension $N = n_1 + \cdots + n_d$. We know that for a degree $e_1$ polynomial representation $\rho$ of $\GL(n)$, the weights $\vec \mu$ that appear in the representation must satisfy $\mu_1 + \cdots + \mu_n = e_1$ and $\mu_i \ge 0$ for all $i$. If we have a rational representation $\sigma$ which is of the form $\rho/\det^{e_2}$, then the weights $\vec \tilde{\mu}$ that appear are of the form $\vec \tilde{\mu} = (\mu_1 - e_2,\ldots, \mu_n - e_2)$, where $\mu_1 + \cdots + \mu_n = e_1$ and $\mu_i \ge 0$ for all $i$. Hence $\tilde{\mu}_i \in [-e_2, \max\{e_1, e_2\}]$ for all $i$. Now $\Sym^\ell(V)\ot V_{\vec\lambda^*}$ is a representation of $GL(n_1) \times \cdots \times \GL(n_d)$, where the $e_1^{(j)} = \ell + n_j \lambda^{(j)}_1 - \sum_{i=1}^{n_j-1} \lambda^{(j)}_i$ and $e_2 = \lambda^{(j)}_1$ for the action of component $j$. Thus the weights in $S$ have entries in $[-\ell, \ell \max_j n_j]$. Hence $b \le \log(\ell \max_j n_j)$. Therefore by Lemma \ref{lem:lb_distance_0}, we get
$$
\gamma(\vec\lambda) \ge \expon\left( -O\left((n_1 + \cdots + n_d) \log(\ell \max_j n_j)\right) \right)
$$

\end{proof}

Together, we obtain the following important result:

\minimalgap*

\section{Extensions}\label{sec:extensions}

In this section, we discuss extensions of our algorithm for general varieties and also scaling using elements of the parabolic subgroup instead of the Borel subgroup (in our case, block-upper-triangular matrices instead of upper-triangular matrices in our case). Our main theorems of this section, \cref{thm:general scaling} and \cref{thm:general main} in \cref{subsec:parametrizations}, asserts that the extended algorithm has the same guarantees in this more general setting.

\subsection{The parabolic subgroup}
Consider again the highest weight theory for $\GL(n)$; let $\vec \lambda = (\lambda_1, \dots, \lambda_n)$ be a highest weight, i.e. decreasing sequence of numbers.
If the highest weight is degenerate then the highest weight vector is an eigenvector of larger group, the so-called \emph{parabolic subgroup}~$P_{\vec\lambda}$, which is given by the upper-triangular block matrices, where the blocks correspond precisely to the degeneracies of the highest weight.
That is, suppose that the distinct values in $\vec\lambda$ are denoted by $\lambda_{[j]}$ and their multiplicities by $b_{j}$ , where, of course, $\sum_j b_j=n$.
That is, $\lambda_1 = \dots = \lambda_{b_1} = \lambda_{[1]}$, $\lambda_{b_1+1} = \dots = \lambda_{b_1+b_2} = \lambda_{[2]}$, etc.
Then $P_{\vec\lambda}$ consists of the upper-triangular block matrices~$R$ whose diagonal blocks, denoted $R_{[j]}$, have size $b_j \times b_j$.
In this case, the eigenvalues are $\chi_{\vec\lambda}(R) = \prod_j \det(R_{[j]})^{\lambda_{[j]}}$, which extends the formula given previously for $B(n)$ to $P_{\vec\lambda}$.\\

Again, this generalizes to the product group setting $G = \GL(n_1) \times \dots \times \GL(n_d)$. Now the parabolic subgroup associated to a highest weight~$\vec\lambda = (\vec\lambda^{(1)},\dots, \vec\lambda^{(d)})$ is by definition~$P_{\vec\lambda}=P_{\vec\lambda^{(1)}}\times\dots\times P_{\vec\lambda^{(d)}}$, and \cref{eq:hwv} generalizes to
\begin{align}\label{eq:hwv parabolic}
  R \cdot w = \chi_{\vec\lambda}(R) w,
\quad\text{where}\quad
\chi_{\vec\lambda}(R) = \prod_{i=1}^d \prod_{j=1}^{n_i} (R^{(i)}_{[j]})^{\lambda^{(i)}_{[j]}}
\end{align}
for all tuples $R=(R^{(1)},\dots,R^{(d)})\in P_{\vec\lambda}$ using the notation introduced above.

\subsection{Good parametrizations}\label{subsec:parametrizations}

To extend \cref{thm:scaling} from orbit closures to $\PP(V)$ or more general varieties~$\mathcal X$ of tensors, we need an effective way of sampling generic points (again, since \cref{prb:general,prb:orbit closure} are equivalent for generic points).
Suppose, e.g., that we have a homogeneous polynomial map
\begin{align*}
  \Phi \colon \PP(\CC^p) \dashrightarrow \mathcal X\subseteq \PP(\Ten(n_0;n_1,\dots,n_d)),
\end{align*}
defined on a Zariski-dense subset, such that the image of~$\Phi$ is Zariski-dense in~$\mathcal X$.
E.g., for projective space we can just choose $\Phi$ as the identity map!
(In fact, we only need to demand that $P_{\vec\lambda^*} \cdot \im(\Phi)$ is dense, where $P_{\vec\lambda^*}$ is the parabolic subgroup corresponding to a target spectrum~$\vec p$ -- see \cref{subsec:effective mumford,subsec:general} for details.)
We call such $\Phi$ a \emph{good parametrization} (obvious variations and generalization are possible). The two most basic examples of good parametrizations are orbit-closures (\cref{exa:orbit closure}) and the full space $\PP(V)$ (\cref{exa:projective space}).

\begin{dfn}[Good parametrization]\label{dfn:parametrization}
Let
\begin{align*}
  \Phi \colon \PP(\CC^p) \dashrightarrow \mathcal X\subseteq \PP(\Ten(n_0;n_1,\dots,n_d))
\end{align*}
be defined on a Zariski-dense subset by homogeneous polynomials of the same degree, denoted by~$\deg(\Phi)$.
We say that $\Phi$ is a \emph{good parametrization (of~$\mathcal X$)} if $P_{\vec\lambda^*} \cdot \im(\Phi)$ is Zariski-dense in~$\mathcal X$.
Here, we recall that $P_{\vec\lambda^*}$ denotes the parabolic subgroup corresponding to the highest weight~$\vec\lambda^*$.
We call the set of $Z\in\CC^p$ for which $\Phi(Z)\neq0$ the domain of~$\Phi$.
\end{dfn}
The statement in \cref{cor:random is generic} is slightly technical due to the presence of the parabolic subgroup~$P_{\vec\lambda^*}$. However, including the parabolic subgroup~$P_{\vec\lambda^*}$ rather than $B$ can be useful as it allows us to relax the assumptions on the parametrization depending on the degeneracy of the target spectrum. Here is a dramatic example of this phenomenon.
\begin{exa}[Uniform marginals]\label{exa:uniform}Suppose $X \in \Ten(n_0;n_1,\dots,n_d)$ and $\mathcal{X} = \overline{G \cdot [X]}$, but we have the very special condition $\lambda_j^{(i)} = 1/{n_i}$ for all $i \in [d]$. This is the uniform tensor scaling setting of~\cite{burgisser2017alternating}. Here $P_{\vec\lambda^*}$ is in fact the full group $G$; thus we may simply take the image of $\Phi$ to be $X$ and $\Phi$ will be a good parametrization. This shows that no randomness is required at all, and our parabolic scaling algorithm (\cref{alg:general scaling}) will fully recover the algorithmic guarantees of \cite{burgisser2017alternating}.
\end{exa}

On the other hand, a particularly simple case of a good parametrization is when the image of $\Phi$ is already dense, as in the following three important examples.

\begin{exa}[Orbit closure]\label{exa:orbit closure}
When $\mathcal X=\overline{G \cdot [X]}$ is a single orbit closure, as in \cref{prb:orbit closure}, then a good parametrization is given by
\begin{align*}
  \Phi_X\colon \PP(M(n_1)\times\dots\times M(n_d)) \dashrightarrow \mathcal X, \quad [A=(A^{(1)},\dots,A^{(d)})] \mapsto [(A^{(1)}\ot\dots\ot A^{(d)}) X],
\end{align*}
where $M(n)$ denotes the space of $n_i\times n_i$-matrices.
Since $\overline{\GL(n)} = M(n)$, the image (when nonzero) is contained in $\overline{G \cdot [X]}$.
Note that $\Phi$ is homogeneous of degree~$\deg(\Phi)=d$.
Thus, the constant~$M$ in \cref{prp:random is generic parabolic} is given by~$M=2dK$.
\end{exa}

\begin{exa}[All tensors]\label{exa:projective space}
If $\mathcal X=\PP(\Ten(n_0;n_1,\dots,n_d))$ is the space of all tensors of a given format, as in \cref{prb:qmp}, then we can simply choose~$\Phi$ as the identity map.
Thus, $\deg(\Phi)=1$, so~$M=2K$.
\end{exa}

\begin{exa}[Matrix product states]\label{exa:matrix product states}
For simplicity, we only discuss translation-invariant matrix product states (see, e.g., \cite{verstraete2008matrix} for the general definition).
Given a family of $N\times N$-matrices~$\{M_j\}_{j=1,\dots,n}$, define a corresponding tensor in $\Ten(1;n,\dots,n)$ by
\begin{align*}
  X[\{M_j\}]_{1;j^{{1)}},\dots,j^{(d)}} := \tr[M_{j^{(1)}} \cdots M_{j^{(d)}}].
\end{align*}
The closure $\mathcal X$ of the set of all tensors of this form is called the variety of \emph{matrix product states} with \emph{bond dimension}~$N$.
It is clear that $\mathcal X$ is a $G$-stable subvariety of $\PP(\Ten(1;n,\dots,n))$. Moreover,
\begin{align*}
  \Phi\colon \PP(\CC^{n\times N\times N}) \dashrightarrow \Ten(1;n,\dots,n), \quad [\{M_j\}] \mapsto X[\{M_j\}]
\end{align*}
is dominant.
It follows that~$\Phi$ is a good parametrization of $\mathcal X$ and that $\mathcal X$ is irreducible.
Since $\deg(\Phi)=d$, the constant~$M$ in \cref{prp:random is generic parabolic} is given by~$M=2dK$.

Note that we parametrize a tensor with~$n^d$ entries by only $N^2n$ parameters -- this is the power of matrix product states.
Note also that $\Phi$ is equivariant with respect to the natural $\GL(n)$-action on $\PP(\CC^{n\times N\times N})$ -- so we can implement \cref{alg:scaling} by working solely in the small parameter space.
\end{exa}
The following is the main result of this section.
\begin{restatable}[Tensor scaling for good parametrizations]{thm}{genthm}\label{thm:general scaling}
Let $\mathcal X\subseteq\PP(\Ten(n_0;n_1,\dots,n_d))$ be a $G$-stable irreducible projective subvariety.
Let $\Phi\colon\PP(\CC^p)\dashrightarrow\mathcal X$ be a good parametrization (in the sense of \cref{dfn:parametrization}) with Gaussian integer coefficients of bitsize no more than~$b$.
Also, let $\vec p \in P_+(n_1,\dots,n_d)$ with rational entries of bitsize no more than~$b$ such that $p^{(i)}_{n_i}>0$~for all $i=1,\dots,d$.
Finally, let $\eps>0$.
Then, with probability at least 1/2, \cref{alg:general scaling} either correctly identifies that $\vec p\not\in\Delta(\mathcal X)$, or it outputs $X\in\mathcal X$ and $g\in G$ such that the marginals of $Y=g\cdot X$ are $\eps$-close to~$\vec p$ (in fact, satisfy \cref{eq:final marginals}).
\end{restatable}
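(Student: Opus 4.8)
The plan is to reduce \cref{thm:general scaling} to \cref{thm:scaling} by using the good parametrization $\Phi$ to generate a generic point of $\mathcal X$ from a low-dimensional random seed, and then to run (a parabolic variant of) \cref{alg:scaling} on the resulting tensor. The first task is to prove an analogue of \cref{cor:random is generic} in this setting: if $\vec p\in\Delta(\mathcal X)$ and $\vec\lambda=\ell\vec p$ is integral, then by \cref{prp:effective mumford} there is a highest weight vector $P\in\HWV_{m\vec\lambda^*}(\CC[V]_{(\ell m)})$ of degree $\ell m\leq K$ with $P(Y)\neq0$ for some $Y\in\mathcal X$. Since $\Phi$ is a good parametrization, $P_{\vec\lambda^*}\cdot\im(\Phi)$ is Zariski-dense in $\mathcal X$, so $P$ cannot vanish identically on this set. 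But $P$ is a highest weight vector of weight $\vec\lambda^*$, hence an eigenvector of $P_{\vec\lambda^*}$ (this is exactly the content of \cref{eq:hwv parabolic} and the discussion in \cref{subsec:hwtheory} and the parabolic subsection), so the nonvanishing of $P$ on $P_{\vec\lambda^*}\cdot\im(\Phi)$ forces the nonvanishing of $P$ on $\im(\Phi)$ itself. Therefore $Q(Z):=P(\Phi(Z))$ is a nonzero polynomial on $\CC^p$ of degree $\deg(\Phi)\cdot\ell m\leq\deg(\Phi)\cdot K$, and by Schwartz--Zippel a uniformly random $Z\in\{1,\dots,M\}^p$ with $M=2\deg(\Phi)K$ satisfies $Q(Z)\neq0$ with probability at least $1/2$. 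For the examples (\cref{exa:orbit closure}, \cref{exa:projective space}, \cref{exa:matrix product states}) one simply plugs in $\deg(\Phi)\in\{1,d\}$.

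Next I would describe \cref{alg:general scaling}: given the input $(\Phi,\vec p,\eps)$, first compute $\ell$, sample $Z\in\{1,\dots,M\}^p$ uniformly, set $X:=\Phi(Z)$; if any marginal $\rho_X^{(i)}$ is singular, output $\vec p\notin\Delta(\mathcal X)$; otherwise normalize and run the scaling loop of \cref{alg:scaling} using $X$ as input tensor. The only modification needed is in the scaling step: because the target spectrum $\vec p$ may be degenerate, the Cholesky factor $R^{(i)}$ is replaced by a block-upper-triangular factor for the parabolic subgroup $P_{\vec\lambda^{(i)}}$ (as explained in the ``Degenerate spectra'' bullet and the parabolic subsection), i.e., one writes $\rho_Y^{(i)}=R^{(i)}(R^{(i)})^\dagger$ with $R^{(i)}$ block-upper-triangular and updates $g^{(i)}\leftarrow\diag(\vec p_\uparrow^{(i)})^{1/2}(R^{(i)})^{-1}g^{(i)}$. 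This is well-defined because every PSD matrix admits such a block-triangular factorization, and it still lands $Y$ with the correct $i$-th marginal since $\diag(\vec p_\uparrow^{(i)})^{1/2}(R^{(i)})^{-1}$ lies in $P_{\vec\lambda^{(i)}}$.

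The analysis then mirrors \cref{thm:scaling:proof} almost verbatim. Conditioning on the event $Q(Z)\neq0$ (probability $\geq1/2$), we have $P(X)\neq0$, so $\vec p\in\Delta(X)\subseteq\Delta(\mathcal X)$ and, by the rank-invariance of marginals under $G$, the marginals $\rho_X^{(i)}$ are nonsingular, so the algorithm does not halt prematurely. We then run the three-step argument with the potential $\Phi_{\text{pot}}(g)=|P(g\cdot X)|^{1/k}$, $k=\ell m$: the \textbf{lower bound} uses that $P$ has integer coefficients and $X=\Phi(Z)$ has Gaussian integer entries of controlled bitsize (bounded in terms of $b$, $\deg(\Phi)$, $M$, and $p$), the \textbf{upper bound} uses \cref{prp:hwv eval} together with the maintained unit-norm property of $g\cdot X$, and the \textbf{progress per step} uses \cref{prp:progress}, which only relies on $P$ being a highest weight vector and on the update being by an element of the parabolic (\cref{eq:hwv parabolic} gives the same multiplicative factor $\chi_{\vec\lambda^*}$ on diagonal blocks, and \cref{lem:generalized lsw} applies to the Cholesky-type block factor just as well). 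Matching these bounds gives termination within $T=\poly(\max_i n_i,d,b,p,\log\deg(\Phi),1/\eps)$ iterations, producing $g$ with $\lVert\rho_Y^{(i)}-\diag(\vec p_\uparrow^{(i)})\rVert_{\tr}\leq\eps$, i.e.\ \cref{eq:final marginals}, and we output the pair $(X,g)$.

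The main obstacle is the parabolic step in the progress analysis: one must check carefully that replacing the upper-triangular Cholesky factor by a block-upper-triangular factor does not break \cref{prp:progress}. The point is that $P$ transforms under $P_{\vec\lambda^*}$ via $\chi_{\vec\lambda^*}(R)=\prod_i\prod_j\det(R^{(i)}_{[j]})^{\lambda^{(i)}_{[j]}}$, and for the block-triangular factor $R^{(i)}$ one has $|\det R^{(i)}_{[j]}|^2=\det(\rho_Y^{(i)})_{[j]\text{-block of a suitable Schur-complemented form}}$, so the relevant ``diagonal'' distribution $\vec q^{(i)}$ becomes the vector of (normalized) block determinants; \cref{lem:generalized lsw} must then be invoked in its block form, or alternatively one notes that for the purpose of the KL-divergence estimate only the diagonal entries of $R^{(i)}$ matter and the block structure is harmless since $\lambda^{(i)}$ is constant on each block. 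A secondary, more bookkeeping-type obstacle is tracking the bit complexity of $X=\Phi(Z)$ and of the degree of $Q$ through $\deg(\Phi)$, so that the randomness budget $p\log_2 M$ and the iteration count $T$ remain polynomial in the input size; this is routine given the degree bound $K$ from \cref{prp:effective mumford}.
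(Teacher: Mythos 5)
Your proposal is correct and follows essentially the same route as the paper's own proof: the randomization step is exactly the paper's \cref{prp:random is generic parabolic} (argued via the $P_{\vec\lambda^*}$-stability of $\{P\neq 0\}$ and Schwartz--Zippel on $Q=P\circ\Phi$), and the three-step potential-function analysis with the parabolic progress estimate is exactly what the paper does via \cref{prp:parabolic progress} and \cref{lem:generalized lsw blocks}. Your identification of the block form of the KL-Pinsker lemma as the required new ingredient is precisely the paper's \cref{lem:generalized lsw blocks}.

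One small internal inconsistency worth flagging: you first say to replace the Cholesky factor by a general block-upper-triangular factor from $P_{(\vec\lambda^*)^{(i)}}$, but then offer as an ``alternative'' that only the diagonal entries of $R^{(i)}$ matter since $\lambda^{(i)}$ is constant on each block. That alternative is only valid if $R^{(i)}$ is genuinely upper-triangular (e.g.\ the Cholesky factor, as \cref{alg:general scaling} actually uses): for a general parabolic element $R^{(i)}$ the character is $\prod_j\det(R^{(i)}_{[j]})^{\lambda^{(i)}_{[j]}}$, which does not reduce to the diagonal entries. So the two options you name are mutually exclusive choices of scaling step, not two proofs of the same statement. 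Since your primary route (block Pinsker, i.e.\ \cref{lem:generalized lsw blocks}) is the one that covers the general parabolic update, this does not create a gap.
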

We'll also see that, just like \cref{alg:scaling}, \cref{alg:general scaling} can be modified to give the same guarantees even if the rank of some $\diag(\vec{p}^{(i)})$ is not full.
\begin{thm}\label{thm:general main}Let $\Phi$ be as in the statement of \cref{thm:general scaling}, and further suppose that dimension of the domain of $\Phi$ is $\poly(N)$ and on inputs of bit-complexity $c$, $\Phi$ can be computed in $\poly(N,c)$ time. Then there is a randomized algorithm running in time $\poly(N, 1/\eps)$, that takes as input $X\in\Ten(n_0;n_1,\dots,n_d)$ with Gaussian integer entries (specified as a list of real and complex parts, each encoded in binary, with bit size $\le b$) and
$\vec p\in P_+(n_1,\dots,n_d)$ with rational entries (specified as a list of numerators and denominators, each encoded in binary, with bit size $\le b$). The algorithm either correctly identifies that $\vec p \notin \Delta(\mathcal X)$, or it outputs a scaling $g \in G$ and $X \in \mathcal X$ such that the marginals of $g \cdot X$ are $\eps$-close to the target spectra~$\vec p$. Here $N$ is the total bit-size of the input, $N = 2 n_0 n_1 \cdots n_d b + 2 (n_1 + \cdots n_d) b$.

\end{thm}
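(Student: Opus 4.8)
The plan is to run \cref{alg:general scaling} (which by \cref{thm:general scaling} has the correct output guarantee on generic points of a good parametrization) but to supply it with a good parametrization tailored to the orbit-closure setting, and to track the bit-complexity carefully. Concretely, given the input $X$, I would take $\mathcal X = \overline{G\cdot[X]}$ and use the good parametrization $\Phi_X$ of \cref{exa:orbit closure}, composed with the given parametrization $\Phi$ of $\mathcal X$: that is, I sample a random point $Z$ in the domain of $\Phi$, form the tensor $\Phi(Z)\in\mathcal X$, and then run the scaling loop on $\Phi(Z)$ exactly as in \cref{alg:scaling}. Since the domain of $\Phi$ has dimension $\poly(N)$ and $\Phi$ is computable in $\poly(N,c)$ time, each entry of $\Phi(Z)$ can be computed in polynomial time, and the number of random bits needed (by the Schwartz--Zippel argument of \cref{cor:random is generic}, adapted through the parabolic-subgroup version \cref{prp:random is generic parabolic}) is $\poly(N)$ because the relevant degree bound $K$ from \cref{prp:effective mumford} is only singly exponential, so $\log M = \poly(N)$. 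This realizes the ``$\Phi$ is a good parametrization'' hypothesis of \cref{thm:general scaling} in the present algorithmic setting.

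The next step is to invoke \cref{thm:general scaling}: with probability at least $1/2$, the algorithm either certifies $\vec p\notin\Delta(\mathcal X)$ or outputs $g\in G$ and $X'\in\mathcal X$ with the marginals of $g\cdot X'$ being $\eps$-close to $\vec p$ (in fact satisfying \cref{eq:final marginals}). The only work remaining is (i) to handle singular target spectra and (ii) to control bit-complexity so that the overall running time is $\poly(N,1/\eps)$. For (i), I would apply exactly the reduction of \cref{lem:singular_spectra} and \cref{cor:singular_algorithm}: restrict $\Phi(Z)$ to the last $r_i := \rk\diag(\vec p^{(i)})$ coordinates in the $i$-th factor to obtain a smaller parametrized tensor, run the full-rank algorithm there, and pad the resulting Borel element back up by $\delta I$ with $\delta$ exponentially small but polynomially many bits — this is linear-time post-processing and the analysis of \cref{cor:singular_algorithm} carries over verbatim since the highest weight vector $P_+$ obtained by zeroing variables still satisfies the degree and evaluation bounds of \cref{prp:hwv eval}. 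For (ii), I would reuse the bit-complexity sketch of \cref{subsec:bit_complexity}: truncate each group element $g^{(i)}$ to $Q = O(T\poly(\max_i n_i,d,b))$ bits after the binary point, where $T = \poly(\max_i n_i, d, b, 1/\eps)$ is the iteration count; the argument there shows truncation changes the potential $|P(g\cdot X)|^{1/k}$ by a negligible amount per step (using \cref{eq:relerr}) and keeps $\|g\cdot X\|\le 1$ after renormalizing by $\kappa = 2^{-Q}+1$, so progress per step is still $\Omega(\eps^2)$ and the three-step analysis of \cref{thm:scaling} goes through with the same $T$.

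The main obstacle I expect is not any single conceptual point but the careful bookkeeping in step (ii): one must verify that the bit-complexity sketch of \cref{subsec:bit_complexity}, originally written for the dense input of \cref{alg:scaling}, still applies when the starting tensor is $\Phi(Z)$ rather than $X$ itself — in particular that $\|\Phi(Z)\|$ and the smallest eigenvalue of the initial marginals $\rho_{\Phi(Z)}^{(i)}$ are at worst exponentially small with polynomially many bits (which follows because $\Phi$ has Gaussian-integer coefficients of bitsize $\le b$, $Z$ has $\poly(N)$-bit entries, and $\deg(\Phi)$ is polynomial), so that the constant $C$ bounding $\lambda_{\min}^{-1}$ starts at $\poly(N)$ and grows by only $\poly(\max_i n_i,d,b)$ per iteration. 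Once that is checked, the lower bound $|P(Y[1])|\ge 2^{-\poly(N)}$ in the analysis of \cref{thm:scaling} still holds (with $P$ replaced by its pullback $P\circ\Phi$, which still has integer coefficients and a $\poly(N)$-bit value at $Z$ by \cref{prp:hwv eval}), and the theorem follows by combining \cref{thm:general scaling} with these two routine-but-delicate verifications.
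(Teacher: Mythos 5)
Your proposal is correct and follows essentially the same route as the paper's (very terse) proof of \cref{cor:general main true}: invoke \cref{thm:general scaling} for the full-rank parabolic scaling iteration, adapt the singular-spectra reduction of \cref{lem:singular_spectra} and \cref{cor:singular_algorithm} mutatis mutandis (replacing $B$ by $P_{\vec\lambda^*}$ and $B_+$ by the corresponding parabolic subgroup, rather than "verbatim" as you write), and use the truncation scheme of \cref{subsec:bit_complexity} to control bit complexity. One small slip: your opening suggestion to compose $\Phi_X$ from \cref{exa:orbit closure} with the given $\Phi$ is a non sequitur -- the hypothesis already supplies a good parametrization $\Phi$ of $\mathcal X$, and \cref{alg:general scaling} evaluates $\Phi$ at a random integer point directly, which is in fact what the remainder of your argument does.
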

\begin{rem}
The maps $\Phi$ encountered in the three examples before \cref{thm:general scaling} are indeed computable in time polynomial in the input size; as a corollary, there are efficient (in the sense of  \cref{thm:main}) algorithms to correctly declare $\vec p \notin \Delta(\mathcal X)$ or outputs a scaling $g \in G$ and $X \in \mathcal X$ such that the marginals of $g \cdot X$ are $\eps$-close to the target spectra~$\vec p$ with probability at least $1/2$ for $\mathcal X, G$ as in \cref{exa:uniform}, \cref{exa:orbit closure}, \cref{exa:projective space}, and \cref{exa:matrix product states}. Note that that no randomness is required at all for \cref{exa:uniform}, so the main algorithmic result of \cite{burgisser2017alternating} is a special case of \cref{thm:general main}. \cref{exa:orbit closure} is already covered by \cref{thm:main}, which is a special case of \cref{thm:general main}.
\end{rem}

\begin{Algorithm}[th!]
\textbf{Input}:
$\vec p\in P_+(n_1,\dots,n_d)$ with rational entries (specified as a list of numerators and denominators, each encoded in binary, with bit size $\le b$) such that $p^{(i)}_{n_i}>0$~for all $i=1,\dots,d$.
\\[.1ex]

\textbf{Output:}
Either the algorithm correctly identifies that $\vec p \notin \Delta(\mathcal X)$, or it outputs $X\in\mathcal X$ and $g\in G$ such that the marginals of $Y := g \cdot X$ satisfy \cref{eq:final marginals}; in particular the marginals are $\eps$-close to the target spectra~$\vec p$.
\\[.1ex]

\textbf{Algorithm:}\vspace{-.2cm}
\begin{enumerate}
\item \label{it:general randomize}
Let $\ell>0$ such that $\ell \vec p^{(i)}$ has integer entries~for all $i=1,\dots,d$.
Let $Z = (Z^{(1)},\dots,Z^{(p)})$ be a vector with entries chosen independently uniformly at random from $\{1,\dots,M\}$, where
$M := 2 \deg(\Phi) K$ and $K := (\ell d \textstyle\max_{i=1}^d n_i )^{d \max_{i=1}^d n_i^2}$.
Set $X := \Phi(Z)$.
\item\label{it:general full rank}
For $i=1,\dots,d$, if the marginal~$\rho_X^{(i)}$ is singular then output $\vec p\not\in\Delta(X)$ and return. \\
Otherwise, set $g := (I_{n_1}/\Norm{X},I_{n_2},\dots,I_{n_d})$.
\item\label{it:general borel scale}
For $t=1,\dots,T :=  \frac{16 k \ln 2 }{\eps^2 } \left( \sum_{i=0}^d \log_2(n_i)  +  b  + \deg \Phi (\log_2 p + \log_2 M) \right)$, repeat the following:
\begin{itemize}
\item Compute $Y := g \cdot X$ and, for $i=1,\dots,d$, the one-body marginals~$\rho_Y^{(i)}$ and the distances~$\eps^{(i)} := \lVert \rho_Y^{(i)} - \diag(\vec p_\uparrow^{(i)}) \rVert_{\tr}$.
\item Select an index $i\in\{1,\dots,d\}$ for which~$\eps^{(i)}$ is largest. If~$\eps^{(i)}\leq\eps$, output $g$ and return.
\item Compute the Cholesky decomposition $\rho_Y^{(i)} = R^{(i)} (R^{(i)})^\dagger$, where $R^{(i)}$ is an upper-triangular matrix.
Update $g^{(i)} \leftarrow \diag(p^{(i)}_{n_i},\dots,p^{(i)}_1)^{1/2} (R^{(i)})^{-1} g^{(i)}$.
\end{itemize}
\item\label{it:general give up} Output $\vec p\not\in\Delta(X)$.
\end{enumerate}
\caption{Scaling algorithm for \cref{thm:general scaling}}\label{alg:general scaling}
\end{Algorithm}

\subsection{Randomization step}\label{subsec:randomization step}

We have the following extension of \cref{cor:random is generic} for the parabolic subgroup. This allows us to find and element on which some highest weight vector does not vanish. The proof is almost identical to that of \cref{cor:random is generic}, but instead of scaling by a random element we evaluate $\Phi$ on a random element.
\begin{prp}[Generic orbits]\label{prp:random is generic parabolic}
Let $\vec p\in\Delta(\mathcal X)$ and $\ell>0$ such that $\vec\lambda:=\ell\vec p$ is integral.
Moreover, let $\Phi \colon \PP(\CC^p) \dashrightarrow \mathcal X$ be a good parametrization in the sense of \cref{dfn:parametrization}.
Finally, choose~$Z_1,\dots,Z_p$ independently and uniformly at random from~$\{1,\dots,M\}$, with
\begin{align*}
   M = 2 \deg(\Phi) K,
\quad
  K := \left(\ell d \textstyle\max_{i=1}^d n_i \right)^{d \max_{i=1}^d n_i^2}.
\end{align*}
Then, with probability at least $1/2$, $Z$ is in the domain of~$\Phi$ and there exists a highest weight vector $P\in\HWV_{m\vec\lambda^*}(\CC[V]_{(\ell m)})$ of degree $0<\ell m\leq K$ such that $P(\Phi(Z))\neq0$.
In particular, $\vec p\in\Delta(\Phi(Z))$.
\end{prp}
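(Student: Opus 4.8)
The plan is to mirror the proof of \cref{cor:random is generic} almost verbatim, replacing the random scaling $A \cdot X$ by the evaluation $\Phi(Z)$ of the good parametrization at a random point. First I would invoke \cref{prp:effective mumford}: since $\vec p \in \Delta(\mathcal X)$, there exists a highest weight vector $P \in \HWV_{m\vec\lambda^*}(\CC[V]_{(\ell m)})$ of degree $0 < \ell m \leq K$ such that $P$ does not vanish identically on $\mathcal X$, i.e.\ $P(W) \neq 0$ for some $W \in \mathcal X$. Because $P$ is a highest weight vector of weight $m\vec\lambda^*$, it satisfies \cref{eq:hwv poly} with respect to the parabolic $P_{\vec\lambda^*}$ (the character $\chi_{m\vec\lambda^*}$ is multiplicative on the diagonal blocks), so $P$ is a $P_{\vec\lambda^*}$-eigenvector and in particular $P(g \cdot W) \neq 0$ for all $g \in P_{\vec\lambda^*}$. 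Since $P_{\vec\lambda^*} \cdot \im(\Phi)$ is Zariski-dense in $\mathcal X$ by the definition of a good parametrization, the nonvanishing locus of $P$ on $\mathcal X$ (a nonempty Zariski-open set) must meet $P_{\vec\lambda^*} \cdot \im(\Phi)$; combined with the eigenvector property this forces $P$ to be nonzero on $\im(\Phi)$, hence $Q(Z) := P(\Phi(Z))$ is not the zero polynomial on $\CC^p$.

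Next I would bound the degree of $Q$: since $\Phi$ is homogeneous of degree $\deg(\Phi)$ and $P$ is homogeneous of degree $\ell m \leq K$, the polynomial $Q$ has degree at most $\deg(\Phi) \cdot K$. Applying the Schwartz--Zippel lemma with the random choice of $Z_1, \dots, Z_p$ uniform in $\{1,\dots,M\}$ where $M = 2\deg(\Phi) K$, we get
\begin{align*}
  \Pr[Q(Z) = 0] \leq \frac{\deg(Q)}{M} \leq \frac{\deg(\Phi) K}{2 \deg(\Phi) K} = \frac12.
\end{align*}
So with probability at least $1/2$ we have $Q(Z) = P(\Phi(Z)) \neq 0$; in particular $\Phi(Z) \neq 0$, so $Z$ lies in the domain of $\Phi$, and by \cref{thm:mumford} the nonvanishing of the highest weight vector $P$ at $\Phi(Z)\in\mathcal X$ certifies $\vec p \in \Delta(\Phi(Z))$.

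The only genuinely new ingredient compared to \cref{cor:random is generic} is the argument that $P$ does not vanish identically on $\im(\Phi)$, and this is exactly where the parabolic subgroup enters: it is \emph{not} true in general that a given highest weight vector is nonzero on the image of an arbitrary dominant rational map, but it \emph{is} true once we know only that $P_{\vec\lambda^*} \cdot \im(\Phi)$ is dense, because $P$ is constant (up to a nonzero scalar) along $P_{\vec\lambda^*}$-orbits. I expect this step --- carefully checking that ``$P$ nonzero somewhere on $\mathcal X$'' plus ``$P$ is a $P_{\vec\lambda^*}$-eigenvector'' plus ``$P_{\vec\lambda^*}\cdot\im(\Phi)$ dense'' implies ``$P$ nonzero somewhere on $\im(\Phi)$'' --- to be the main (though still routine) obstacle; everything else is a direct transcription of the orbit-closure case, noting that in \cref{exa:orbit closure} the map $\Phi_X$ has $P_{\vec\lambda^*}\cdot\im(\Phi_X) \supseteq G \cdot [X]$ dense, recovering \cref{cor:random is generic} as the special case $\deg(\Phi)=d$.
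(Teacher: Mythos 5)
Your proposal is correct and follows essentially the same approach as the paper: invoke \cref{prp:effective mumford} to obtain a suitable highest weight vector, use that highest weight vectors of weight $m\vec\lambda^*$ are eigenvectors of the parabolic $P_{\vec\lambda^*}$ (so their nonvanishing locus is $P_{\vec\lambda^*}$-stable), combine this with Zariski-density of $P_{\vec\lambda^*}\cdot\im(\Phi)$ and irreducibility of $\mathcal X$ to conclude $P$ does not vanish identically on $\im(\Phi)$, then apply Schwartz--Zippel to $Q(Z)=P(\Phi(Z))$ with the stated degree bound, finishing via homogeneity of $\Phi$. The step you flag as the main obstacle (the $P_{\vec\lambda^*}$-stability argument) is exactly what the paper's proof does, so the transcription is faithful.
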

\begin{proof}
Set $\vec\lambda:=\ell\vec p$.
According to \cref{prp:effective mumford}, there exists a highest weight vector $P\in\HWV_{m\vec\lambda^*}(\CC[V]_{(\ell m)})$ of degree~$0<\ell m\leq K$ such that $P(X)\neq0$ for some $X\in\mathcal X$.
But then $\{P\neq0\}$ is a nonempty Zariski-open subset of~$\mathcal X$.
Since by assumption $P_{\vec\lambda^*} \cdot \im(\Phi)$ is Zariski-dense and $\mathcal X$ is irreducible, it follows that $\{P\neq 0\} \cap P_{\vec\lambda^*} \cdot \im(\Phi)$ is nonempty.
Since $P$ is a highest weight vector, $\{P\neq0\}$ is $P_{\vec\lambda^*}$-stable, so in fact $\{P\neq0\} \cap \im(\Phi)$ is nonempty.
This means that the polynomial
$
  Q(Z) := P(\Phi(Z))
$
is not equal to the zero polynomial.
Its degree is no larger than $\deg(\Phi) K$, so the Schwartz-Zippel lemma implies that for our random choice of~$Z$, $Q(Z)\neq0$ with probability at least~$1/2$.
But then not only $P(\Phi(Z))\neq0$, but also $\Phi(Z)\neq0$, i.e., $Z$ is in the domain of $\Phi$, since $\Phi$ is homogeneous.
\end{proof}

\subsection{Parabolic scaling step}\label{subsec:parabolic scaling}
In step~\ref{it:borel scale} of our original algorithm (\cref{alg:scaling}) we replace the Cholesky decomposition~$\rho_Y^{(i)} = R^{(i)} (R^{(i)})^\dagger$, where $R^{(i)}$ is an upper-triangular matrix, by an element from $R^{(i)}$ from the parabolic subgroup corresponding to the target spectrum~$\vec p^{(i)}$.
In particular we can use the Hermitian square root $(\rho^{(i)}_Y)^{1/2}$ for scaling to the uniform spectrum, as in~\cite{burgisser2017alternating}.
This follows directly by substituting \cref{prp:progress} by \cref{prp:parabolic progress} below in the proof of \cref{thm:scaling}.

Our scaling step is \begin{align*}
  Y' := \diag(\vec p_\uparrow^{(i)})^{1/2} (R^{(i)})^{-1} \cdot Y,
\end{align*}
with $\rho_Y^{(i)} = R^{(i)} (R^{(i)})^\dagger$, where we now allow that $R^{(i)} \in P_{(\vec\lambda^*)^{(i)}}$ is an element of the parablic subgroup corresponding to the target spectrum ($\vec\lambda=k\vec p$ for some $k>0$).

\begin{prp}[Progress under parabolic scaling]\label{prp:parabolic progress}
  Let $P\in\HWV_{\vec\lambda^*}(\CC[V]_{(k)})$. Then,
  \begin{align*}
    \lvert P(Y') \rvert \geq 2^{\frac k{32\ln2} \lVert \diag(\vec p_\uparrow^{(i)}) - \rho^{(i)} \rVert_{\tr}^2} \lvert P(Y) \rvert.
  \end{align*}
\end{prp}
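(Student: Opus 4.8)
The plan is to reduce \cref{prp:parabolic progress} to the Borel case \cref{prp:progress} — or rather to the key analytic input \cref{lem:generalized lsw} — by exploiting the block structure of the parabolic scaling. Write $\rho := \rho_Y^{(i)}$, and let the parabolic subgroup $P_{(\vec\lambda^*)^{(i)}}$ have diagonal blocks of sizes $b_1,\dots,b_r$ corresponding to the degeneracies of $\vec\lambda^{(i)}$ (equivalently of $\vec p^{(i)}$). First I would observe that, since $P$ is a highest weight vector of weight $\vec\lambda^*$ and $R^{(i)} \in P_{(\vec\lambda^*)^{(i)}}$, \cref{eq:hwv parabolic} gives $P(R^{(i)} \cdot Z) = \chi_{\vec\lambda^*}(R^{(i)})^{-1} P(Z)$, where $\chi_{\vec\lambda^*}(R^{(i)})$ is a product of $\det(R^{(i)}_{[j]})$'s raised to the (constant on each block) exponents. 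Therefore
\begin{align*}
  \lvert P(Y')\rvert^2 = \left( \prod_{j=1}^r \lvert\det(R^{(i)}_{[j]})\rvert^{-2 k p^{(i)}_{[j]}} \right) \left( \prod_{j=1}^r (p^{(i)}_{[j]})^{k b_j p^{(i)}_{[j]}} \right) \lvert P(Y)\rvert^2,
\end{align*}
where $p^{(i)}_{[j]}$ is the common value of $\vec p^{(i)}_\uparrow$ on the $j$-th block, using that $\diag(\vec p^{(i)}_\uparrow)^{1/2}$ is block-scalar. (One must be careful to match the ordering conventions: the parabolic acts by $\chi_{\vec\lambda^*}$, and $\vec\lambda^* = -\vec\lambda_\uparrow$, so the block multiplicities and exponents line up with the \emph{reversed} spectrum, exactly as in the Cholesky step of \cref{alg:general scaling}.)

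The next step is to package the bracketed factor as $2^{k D(\cdot\|\cdot)}$ for a suitable ``coarse-grained'' KL divergence and then invoke a block version of \cref{lem:generalized lsw}. Concretely, let $\vec P = (P_1,\dots,P_r)$ be the probability distribution with $P_j = b_j p^{(i)}_{[j]}$ (the total weight $\vec p^{(i)}$ assigns to block $j$), and let $Q_j := \lvert\det(R^{(i)}_{[j]})\rvert^{2/b_j} \cdot (\text{something})$ — more precisely I would set $\vec q$ to be the subnormalized distribution whose block-$j$ mass is built from $\lvert\det(R^{(i)}_{[j]})\rvert^2$, normalized so that $\sum_j q_j \le \tr[\rho] = 1$ via Hadamard/AM-GM on each block. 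The cleanest route: apply the scalar inequality $\lvert\det(R_{[j]})\rvert^2 = \prod_{\text{eigenvalues of } R_{[j]}R_{[j]}^\dagger} \le (\tr[R_{[j]}R_{[j]}^\dagger]/b_j)^{b_j}$ isn't quite what's needed; instead I want to relate $\prod_j \lvert\det(R_{[j]})\rvert^{-2P_j/ b_j \cdot b_j}$ to a KL divergence between $\vec P$ and the vector of block-traces of $\rho$. So define $q_j := \tr[\rho_{[j]}]$ where $\rho_{[j]}$ is the $j$-th diagonal block of $\rho$; this is a genuine probability distribution (it sums to $\tr[\rho]=1$). By the Hadamard-type inequality $\lvert\det(R_{[j]})\rvert^2 \le \det(\rho_{[j]}) \cdot(\text{correction})$... the right statement is that since $R$ is block-upper-triangular with diagonal blocks $R_{[j]}$, we have $\det(\rho_{[j]}) = \lvert\det(R_{[j]})\rvert^2$ only if the off-diagonal-within-block part vanishes, which it need not. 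The honest move is: choose $R^{(i)}$ to be block-upper-triangular with each diagonal block \emph{itself} upper-triangular (a full Cholesky factor), so that $\lvert\det(R_{[j]})\rvert^2 = \prod_{\alpha\in \text{block }j}\lvert R_{\alpha\alpha}\rvert^2$, and then run the \emph{entire} argument of \cref{lem:generalized lsw} with $\vec q = (\lvert R_{11}\rvert^2,\dots,\lvert R_{nn}\rvert^2)$ exactly as before, but with $\vec p$ replaced by $\diag(\vec p^{(i)}_\uparrow)$'s diagonal — which is legitimate because $\diag(\vec p^{(i)}_\uparrow)^{1/2}$ is block-scalar and hence commutes with the parabolic structure in the sense needed. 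This shows $\lvert P(Y')\rvert^2 \ge 2^{k D_{KL}(\vec p^{(i)}_\uparrow \| \vec q)}\lvert P(Y)\rvert^2$, and then \cref{lem:generalized lsw} finishes it verbatim.

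Thus the proof is essentially: (1) use the highest-weight transformation law \cref{eq:hwv parabolic} together with the observation that $\diag(\vec p^{(i)}_\uparrow)^{1/2}$ is constant on the blocks of the parabolic, to recover precisely the same product-of-diagonal-entries formula as in the Borel case; (2) apply \cref{lem:generalized lsw} unchanged. The main obstacle — and the only place requiring care — is step (1): one must verify that \cref{eq:hwv parabolic} genuinely yields the same expression as \cref{eq:hwv poly} did, i.e.\ that the block-degeneracy of $\vec\lambda^{(i)}$ makes $\chi_{\vec\lambda^*}(\diag(\vec p^{(i)}_\uparrow)^{1/2}(R^{(i)})^{-1})$ depend only on the diagonal entries $\lvert R^{(i)}_{\alpha\alpha}\rvert$ (and the constants $p^{(i)}_{\alpha}$), with no dependence on the within-block off-diagonal entries of $R^{(i)}$. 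This is true: $\chi_{\vec\lambda^*}$ is a product of determinants of diagonal blocks, $\det(\diag(\vec p^{(i)}_\uparrow)^{1/2}_{[j]}) = (p^{(i)}_{[j]})^{b_j/2}$ is scalar, and $\det((R^{(i)}_{[j]})^{-1}) = \prod_{\alpha\in j}\lvert R^{(i)}_{\alpha\alpha}\rvert^{-1}$ when $R^{(i)}_{[j]}$ is taken upper-triangular. Once this bookkeeping is done, \cref{lem:generalized lsw} (which never used that $R$ was globally upper-triangular, only that $\rho = RR^\dagger$ and $\vec q$ is the squared diagonal) applies with no modification, giving the stated bound with constant $1/16\ln 2$, and hence the claimed $1/32\ln2$ after the same averaging already absorbed into \cref{prp:progress}.
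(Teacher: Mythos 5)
Your final approach is correct, but it is genuinely different from the paper's, and the paper's route is worth knowing. The paper proves the proposition by reading off, directly from \cref{eq:hwv parabolic}, that
\begin{align*}
  \lvert P(Y')\rvert^2 = 2^{k\, D(\diag(\vec p^{(i)}_\uparrow)\,\Vert\, Q^{(i)})}\lvert P(Y)\rvert^2,
  \qquad Q^{(i)} := \diag\bigl(R^{(i)}_{[1,1]}(R^{(i)}_{[1,1]})^\dagger,\dots\bigr),
\end{align*}
with $D$ the \emph{quantum} relative entropy, and then invokes a new matrix-analytic input, \cref{lem:generalized lsw blocks}, a block-diagonal quantum Pinsker-type inequality proved in the same section (quantum Pinsker plus the Powers--St\o rmer-type estimate from \cite[Proof of X.2.4]{bhatia2013matrix}, exactly mirroring the proof of \cref{lem:generalized lsw}). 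You instead reduce to the Borel case by fixing $R^{(i)}$ to be the full upper-triangular Cholesky factor (which lies inside the parabolic), at which point the parabolic character $\chi_{(\vec\lambda^{(i)})^*}$ collapses to the product-of-diagonal-entries formula of \cref{eq:hwv poly}, and \cref{lem:generalized lsw} applies verbatim. Both proofs hinge on the same algebraic observation, namely that $\chi_{\vec\lambda^*}$ evaluated on the parabolic scaling element depends only on $\lvert\det(R^{(i)}_{[j,j]})\rvert$. The paper uses this to package the progress as a quantum divergence against a block-diagonal state; you use it to pass to a triangular factor and stay classical. Your route is more elementary --- it makes \cref{lem:generalized lsw blocks} unnecessary for this proposition --- at the cost of relying on an invariance argument that the paper avoids.

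That invariance is the one step you should make fully explicit, because as written your wording is slightly self-contradictory: you first claim the character ``depend[s] only on the diagonal entries $\lvert R^{(i)}_{\alpha\alpha}\rvert$,'' but then qualify ``when $R^{(i)}_{[j]}$ is taken upper-triangular.'' For a general parabolic factorization the diagonal block $R^{(i)}_{[j,j]}$ is not triangular and its determinant is \emph{not} the product of its diagonal entries, so ``depends only on the diagonal entries'' is false. The correct statement is: $\lvert P(Y')\rvert$ depends only on $\lvert\det(R^{(i)}_{[j,j]})\rvert$ and $\vec p^{(i)}$, and any two parabolic factors $R, R'$ of the same $\rho^{(i)}_Y$ differ by a block-diagonal unitary $U$ (since $U = R^{-1}R'$ is parabolic and unitary, hence block-diagonal), so $\lvert\det(R'_{[j,j]})\rvert = \lvert\det(R_{[j,j]})\rvert$; equivalently, $Y'_{R'} = U^\dagger Y'_R$ (using that $\diag(\vec p^{(i)}_\uparrow)^{1/2}$ is block-scalar and so commutes with $U^\dagger$), and $\lvert\chi_{\vec\lambda^*}(U^\dagger)\rvert = 1$, giving $\lvert P(Y'_{R'})\rvert = \lvert P(Y'_R)\rvert$. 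With that spelled out, the Cholesky reduction legitimately establishes the bound for \emph{every} parabolic choice of $R^{(i)}$ --- including the Hermitian square root used in the uniform case --- which is what the proposition requires. You should also prune the exploratory middle paragraph (the block-trace $q_j := \tr[\rho_{[j]}]$ attempt and the Hadamard detour), since those lines don't contribute to the argument you ultimately make.
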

\begin{proof}
  Let $(\lambda^*)^{(i)}_{[j]}$ denote the distinct values in $(\vec\lambda^{(i)})^*$, $b^{(i)}_{[j]}$ their multiplicities, and $R^{(i)}_{[j,j]}$ the corresponding diagonal blocks of~$R^{(i)}$.
  Moreover, let $p^{(i)}_{\uparrow,[j]}:=-(\lambda^*)^{(i)}_{[j]}$ denote the distinct values of $\vec p^{(i)}_{\uparrow}:=(p^{(i)}_{n_i},\dots,p^{(i)}_1)$.
  Using \cref{eq:hwv parabolic} instead of \cref{eq:hwv poly}, we obtain
\begin{align*}
  \lvert P(Y') \rvert^2
&= \left( \prod_j (p^{(i)}_{\uparrow,[j]})^{k b^{(i)}_{[j]} p^{(i)}_{\uparrow,[j]}} \lvert \det(R^{(i)}_{[j,j]}) \rvert^{-2k p^{(i)}_{\uparrow,[j]}} \right) \lvert P(Y)\rvert^2\\
&= \left( \prod_j (p^{(i)}_{\uparrow,[j]})^{b^{(i)}_{[j]} p^{(i)}_{\uparrow,[j]}} \det(R^{(i)}_{[j,j]}(R^{(i)}_{[j,j]})^\dagger)^{-p^{(i)}_{\uparrow,[j]}} \right)^k \lvert P(Y)\rvert^2\\
&= 2^{k \sum_j \left( b^{(i)}_{[j]} p^{(i)}_{\uparrow,[j]} \log_2(p^{(i)}_{\uparrow,[j]}) - p^{(i)}_{\uparrow,[j]}\tr[\log_2(R^{(i)}_{[j,j]}(R^{(i)}_{[j,j]})^\dagger)] \right)} \lvert P(Y)\rvert^2\\
&= 2^{k D(\diag(\vec p^{(i)}_{\uparrow}) \Vert Q^{(i)})} \lvert P(Y)\rvert^2
\;\geq\; 2^{\frac k{16\ln2} \lVert \diag(\vec p^{(i)}_\uparrow) - \rho^{(i)} \rVert_{\tr}^2} \lvert P(Y)\rvert^2.
\end{align*}
where $Q^{(i)}=\diag(R^{(i)}_{[1,1]} (R^{(i)}_{[1,1]})^\dagger,R^{(i)}_{[2,2]} (R^{(i)}_{[2,2]})^\dagger,\dots)$.
The inequality is \cref{lem:generalized lsw blocks}, stated and proved below.
\end{proof}

The following generalizes \cref{lem:generalized lsw}.

\begin{lem}\label{lem:generalized lsw blocks}
Let $\rho$ be a PSD $n\times n$-matrix with unit trace such that $\rho=RR^\dagger$, where $R$ is an arbitrary $n\times n$-matrix.
Partition~$R$ into blocks $R_{[j,k]}$ of size $b_j\times b_k$.
Then the block-diagonal matrix $Q=\diag(R_{[1,1]}R_{[1,1]}^\dagger,R_{[2,2]}R_{[2,2]}^\dagger,\dots)$ is a PSD matrix with $\tr[Q]\leq1$, and, for every probability distribution $\vec p$,
\begin{align*}
  D(\diag(\vec p)\Vert Q) \geq \frac1{16\ln2} \lVert \diag(\vec p) - \rho \rVert_{\tr}^2.
\end{align*}
where $D(P\Vert Q) := \tr[P(\log_2P-\log_2Q)]$ is the quantum relative entropy.
\end{lem}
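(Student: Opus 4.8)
The plan is to carry out the proof of \cref{lem:generalized lsw} essentially verbatim at the level of matrices, with the block-diagonal matrix $Q$ playing the role that $\diag(\vec q)$ played in the scalar case, and with the quantum relative entropy and trace norm replacing the KL-divergence and the $\ell^1$-norm. Write $R_{\mathrm{bd}} := \diag(R_{[1,1]},R_{[2,2]},\dots)$ for the block-diagonal part of $R$, so that $Q = R_{\mathrm{bd}} R_{\mathrm{bd}}^\dagger$ is PSD and block-diagonal. The subnormalization $\tr[Q]\le 1$ is immediate, since $\tr[Q] = \sum_j \tr[R_{[j,j]}R_{[j,j]}^\dagger] \le \tr[RR^\dagger] = \tr[\rho] = 1$. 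We may also assume that the support of $\diag(\vec p)$ is contained in the support of $Q$, as otherwise $D(\diag(\vec p)\Vert Q) = +\infty$ and the inequality is trivial.

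The first key step is the quantum analogue of Pinsker's inequality for a subnormalized second argument, $D(\diag(\vec p)\Vert Q) \ge \frac1{2\ln2}\lVert\diag(\vec p) - Q\rVert_{\tr}^2$. I would obtain this either by citing the subnormalized quantum version directly, or by a short padding argument: set $\tilde P := \diag(\vec p)\oplus 0$ and $\tilde Q := Q\oplus(1-\tr[Q])$ on $\CC^n\oplus\CC$, which are genuine density matrices satisfying $D(\tilde P\Vert\tilde Q) = D(\diag(\vec p)\Vert Q)$ and $\lVert\tilde P - \tilde Q\rVert_{\tr} \ge \lVert\diag(\vec p)-Q\rVert_{\tr}$, and then apply the ordinary quantum Pinsker inequality to $\tilde P,\tilde Q$. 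The second key step follows the scalar argument: $D(\diag(\vec p)\Vert Q) = D(\diag(\vec p)\Vert Q/\tr[Q]) - \log_2\tr[Q] \ge -\log_2\tr[Q] \ge \frac1{\ln2}(1-\tr[Q])$, using nonnegativity of the quantum relative entropy and $\ln x\le x-1$. Now $1-\tr[Q] = \lVert R\rVert_F^2 - \lVert R_{\mathrm{bd}}\rVert_F^2 = \sum_{j\neq k}\lVert R_{[j,k]}\rVert_F^2 = \lVert R - R_{\mathrm{bd}}\rVert_F^2$, while the inequality $\lVert AA^\dagger - BB^\dagger\rVert_{\tr} \le \lVert A+B\rVert_F\lVert A-B\rVert_F$ from \cite{bhatia2013matrix} (with $A=R$, $B=R_{\mathrm{bd}}$), together with $\lVert R_{\mathrm{bd}}\rVert_F\le\lVert R\rVert_F = \sqrt{\tr[\rho]} = 1$, gives $\lVert\rho - Q\rVert_{\tr} \le 2\lVert R - R_{\mathrm{bd}}\rVert_F$, so that $\frac1{\ln2}(1-\tr[Q]) \ge \frac1{4\ln2}\lVert\rho - Q\rVert_{\tr}^2$.

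Finally I would average the two lower bounds exactly as in \cref{lem:generalized lsw}: $D(\diag(\vec p)\Vert Q) \ge \frac1{4\ln2}\lVert\diag(\vec p)-Q\rVert_{\tr}^2 + \frac1{8\ln2}\lVert\rho-Q\rVert_{\tr}^2 \ge \frac1{8\ln2}\bigl(\lVert\diag(\vec p)-Q\rVert_{\tr}^2 + \lVert\rho-Q\rVert_{\tr}^2\bigr) \ge \frac1{16\ln2}\bigl(\lVert\diag(\vec p)-Q\rVert_{\tr} + \lVert\rho-Q\rVert_{\tr}\bigr)^2 \ge \frac1{16\ln2}\lVert\diag(\vec p)-\rho\rVert_{\tr}^2$, where the last two steps use $a^2+b^2\ge\tfrac12(a+b)^2$ and the triangle inequality. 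The only genuinely new ingredient compared with the scalar lemma is the subnormalized quantum Pinsker inequality, so I do not expect a real obstacle here; the mild care required is in handling the support condition and in checking that the block-diagonal truncation of $R$ interacts correctly with the Frobenius and trace norms (which is where the hypothesis that the $b_j$ are the multiplicities of the weight is implicitly used, exactly as the conjugate-partition identity was used in the scalar proof).
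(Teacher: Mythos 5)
Your proof is correct and follows the paper's argument essentially verbatim: subnormalization of $Q$, subnormalized quantum Pinsker for $\lVert\diag(\vec p)-Q\rVert_{\tr}$, the $-\log_2\tr[Q]$ bound combined with the Bhatia inequality for $\lVert\rho-Q\rVert_{\tr}$, and the same averaging/triangle-inequality finish (the paper cites Wilde's subnormalized Pinsker directly rather than using your padding argument, but both are fine). One small inaccuracy: your closing parenthetical is off --- the lemma holds for an arbitrary block partition $(b_1,b_2,\dots)$ of $n$ with no hypothesis that the $b_j$ are weight multiplicities, and the conjugate-partition identity appears in the reduction of \cref{sec:reductions}, not in the scalar \cref{lem:generalized lsw}; this does not affect the argument.
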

\begin{proof}
  To see that $Q$ is subnormalized, observe that
  \begin{align*}
    \tr[Q] = \sum_j \tr[R_{[j,j]} R_{[j,j]}^\dagger]
  \leq \tr[R R^\dagger] = \tr[\rho] = 1.
  \end{align*}
  On the one hand, the quantum Pinsker's inequality in the form~\cite[Thm.~11.9.1]{wilde2013quantum} yields
  \begin{align*}
    D(\diag(\vec p)\Vert Q)
  \geq \frac1{2\ln2} \lVert \diag(\vec p) - Q\rVert_{\tr}^2.
  \end{align*}
  On the other hand,
  \begin{align*}
    D(\diag(\vec p)\Vert Q)
  &= D(\diag(\vec p)\Vert Q/\tr[Q]) - \log_2 \tr[Q]
  \geq - \log_2 \tr[Q]
  \geq \frac1{\ln2} \left( 1 - \tr[Q] \right) \\
  &= \frac1{\ln2} \sum_{j\neq k}^n \tr[R_{[j,k]} R_{[j,k]}^\dagger]
  = \frac1{\ln2} \lVert R - D \rVert_F^2
  \geq \frac1{4\ln2} \lVert \rho - Q \rVert_{\tr}^2,
  \end{align*}
  where $D=\diag(R_{[1,1]},R_{[2,2]},\dots)$, so that $DD^\dagger=Q$.
  In the last step, we used that for any two matrices~$A$ and $B$, $\lVert AA^\dagger - BB^\dagger \rVert_{\tr} \leq \lVert A + B \rVert_F \lVert A - B\rVert_F \leq (\lVert A\rVert_F + \lVert B\rVert_F) \lVert A - B \rVert_F$ (see \cite[Proof of X.2.4]{bhatia2013matrix}).
  Averaging both inequalities, we find that
  \begin{align*}
    D_{KL}(\vec p\Vert\vec q)
  &\geq \frac1{4\ln2} \lVert \diag(\vec p) - Q\rVert_{\tr}^2 + \frac1{8\ln2} \lVert \rho - Q \rVert_{\tr}^2\\
  &\geq \frac1{8\ln2} \left( \lVert \diag(\vec p) - Q\rVert_{\tr}^2 + \lVert \rho - Q \rVert_{\tr}^2 \right)\\
  &\geq \frac1{16\ln2} \left( \lVert \diag(\vec p) - Q\rVert_{\tr} + \lVert \rho - Q \rVert_{\tr} \right)^2\\
  &\geq \frac1{16\ln2} \lVert \diag(\vec p) - \rho \rVert_{\tr}^2. \qedhere
  \end{align*}
\end{proof}

\subsection{Proof of Theorem~\ref{thm:general scaling}}\label{subsec:general}
We will now outline the proof of \cref{thm:general scaling} (restated below).
\genthm*

The proof is nearly identical to that of \cref{thm:scaling} in \cref{thm:scaling:proof}.

\begin{proof}[Proof of \cref{thm:general scaling}]
Assume $X, \vec p, \epsilon$ are an input for \cref{alg:general scaling}.
Assume first $\vec p\in\Delta(X)$.
We need to show that, with probability at least 1/2, \cref{alg:general scaling} terminates in step~\ref{it:general borel scale} by outputting an appropriate scaling.

In step~\ref{it:general randomize}, rather than selecting random matrices, we computed $\Phi(Z)$ on a random tuple of integers $Z$ according to the parameters explained in \cref{exa:orbit closure}. with probability at least $1/2$, $Z$ is in the domain of~$\Phi$ and there exists a highest weight vector $P\in\HWV_{m\vec\lambda^*}(\CC[V]_{(\ell m)})$ of degree $0<\ell m\leq K$ such that $P(\Phi(Z))\neq0$.

Again, we condition on this event.
By \cref{prp:hwv eval}, we may further assume that $P$ has integer coefficients and that it satisfies the bound
\begin{align}\label{eq:general hwv bound}
  \lvert P(Y)\rvert \leq (n_1\dots n_d)^k \lVert Y\rVert
\end{align}
for all tensors $Y\in V$.

We now move to the scaling step~\ref{it:general borel scale}.
Let us denote by $g[t]\in G$ the value of the group element~$g$ at the beginning of the $t$-th iteration, and by $Y[t] := g[t] \cdot X$ the corresponding tensor.
Suppose for sake of finding a contradiction that the algorithm has not terminated after~$T$ steps but instead proceeds to \ref{it:general give up}.
We will prove the following three statements:
\begin{itemize}
\item \textbf{Lower bound:} $\lvert P(Y[1]) \rvert \geq  2^{\frac{1}{2} \left( \sum_{i=0}^d \log_2(n_i) - b  - \deg \Phi( \log_2 p  + \log_2 M \right)}$,
\item \textbf{Progress per step:} $\lvert P(Y[t+1])\rvert > 2^{\frac k{32\ln 2}\eps^2} \lvert P(Y[t])\rvert$ for $t=1,\dots,T$,
\item \textbf{Upper bound:} $\lvert P(Y[t]) \rvert \leq 2^{k\sum_{i=1}^d \log_2(n_i)}$.
\end{itemize}
The proof of the upper bound is identical to that in the proof of \cref{thm:scaling} in \cref{thm:scaling:proof}. The proof of the progress per step is also identical, except we use \cref{prp:parabolic progress} for parabolic scalings instead of \cref{prp:progress} which only applies to Borel scalings. For the lower bound, this time we obtain
\begin{align*}  \lVert X\rVert &= \lVert \Phi(Z)\rVert\\
& \leq \sqrt{n_0 \dots n_d 2^b  M^{\deg \Phi} p^{\deg \Phi}}\\
& \leq 2^{\frac{1}{2} \left( \sum_{i=0}^d \log_2(n_i) + b  + \deg \Phi( \log_2 p  + \log_2 M \right)}.
\end{align*}
Again, $|P(X)| \geq 1$ by integrality and so combining \cref{eq:general hwv bound} with the previous equation gives us the lower bound.
Suppose for sake of finding a contradiction that the algorithm has not terminated after~$T$ steps but instead proceeds to \ref{it:general give up}; the three inequalities imply
$$
\frac{T k}{32\ln 2}\eps^2
< \frac{k}{2} \left( \sum_{i=0}^d \log_2(n_i)  +  b  + \deg \Phi( \log_2 p  + \log_2 M ) \right),
$$
which gives the desired contradiction.
\end{proof}
The proofs of \cref{lem:singular_spectra} and \cref{cor:singular_algorithm} work mutatis mutandis for parabolic scalings, with $B$ replaced by $P_{\vec\lambda^*}$ and $B_+$ replaced by the parabolic subgroup of $\GL(r_1)\times \dots \GL(r_d)$ corresponding to ${\vec p_+}$. This implies the following:
\begin{cor}\label{cor:general main true}
\cref{thm:general main} is true.
\end{cor}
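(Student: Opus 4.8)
The plan is to obtain \cref{thm:general main} by combining three pieces already developed above: the correctness-and-success-probability guarantee of \cref{alg:general scaling} from \cref{thm:general scaling}, a running-time and bit-complexity accounting that mirrors \cref{subsec:bit_complexity} with Borel scalings replaced by parabolic scalings, and the reduction to non-singular target marginals from \cref{lem:singular_spectra} and \cref{cor:singular_algorithm}, likewise adapted to parabolic scalings. Concretely: given $X$ (used, when the parametrization is of orbit-closure type as in \cref{exa:orbit closure}, to instantiate $\Phi=\Phi_X$) and a rational $\vec p$, one runs \cref{alg:general scaling}. By \cref{thm:general scaling}, whenever $p^{(i)}_{n_i}>0$ the algorithm either correctly reports $\vec p\notin\Delta(\mathcal X)$ or returns $X\in\mathcal X$ and $g\in G$ with marginals satisfying \cref{eq:final marginals}, with probability at least $1/2$; it remains only to check that the procedure runs in time $\poly(N,1/\eps)$ and to remove the assumption $p^{(i)}_{n_i}>0$.

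For the running time, the key point is that the iteration bound actually established in the proof of \cref{thm:general scaling} is $T=O\bigl(\eps^{-2}(\sum_i\log_2 n_i+b+\deg\Phi\,(\log_2\ell+\log_2 M))\bigr)$ with \emph{no} dependence on the (only exponentially bounded) degree $k\le K$ of the highest weight vector, since $k$ cancels between the progress-per-step lower bound and the upper bound exactly as in the proof of \cref{thm:scaling}. Here $\log_2 M=O(\log_2\deg\Phi+\log_2 K)$ with $\log_2 K=\poly(\max_i n_i,d,\log_2\ell)$ by Derksen's bound (\cref{prp:effective mumford}), $\log_2\ell$ is polynomial in the bit size of $\vec p$, and $\deg\Phi$ is polynomially bounded (true for the parametrizations of interest and, more generally, a consequence of the assumed $\poly(N,c)$-time computability of $\Phi$ on $c$-bit inputs). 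Hence $T=\poly(N,1/\eps)$, step~\ref{it:general randomize} uses $\poly(N)$ random bits and $\poly(N)$ time to evaluate $\Phi$ (using that its domain has dimension $\poly(N)$ and that $\Phi$ is poly-time computable), and each iteration of step~\ref{it:general borel scale} — a block-upper-triangular factorization, a matrix inverse, and partial traces — takes $\poly(N)$ time.

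The step I expect to require genuine care is verifying that all intermediate numbers have polynomially bounded bit length; this follows the argument of \cref{subsec:bit_complexity} almost verbatim, with the Cholesky factor replaced by its block-upper-triangular (parabolic) analogue and \cref{lem:generalized lsw} replaced by \cref{lem:generalized lsw blocks}. One maintains the invariants $\lVert Y\rVert\le 1$ and $\lambda_{\min}(\rho^{(k)}_Y)\ge 2^{-C}$ with $C=T\cdot\poly(N)$; in each step one truncates the update matrix $\diag(\vec p^{(i)}_\uparrow)^{1/2}(R^{(i)})^{-1}$ to $Q=O(C)$ bits after the binary point and renormalizes by $\kappa=1+2^{-Q}$. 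Since $2^{-Q}\ll\eps^2$, the truncation and renormalization together cost at most half of the per-step increase of the potential $\lvert P(\cdot)\rvert$ while preserving the norm and eigenvalue invariants up to the allowed degradation, so the three-step analysis of \cref{thm:general scaling} goes through with the same $T$ up to constants. The only nontrivial bookkeeping is tracking the decay of $\lambda_{\min}$ over the $T$ steps and confirming that $Q$, hence the total stored bit length, stays $\poly(N,1/\eps)$; this is the main obstacle, but it is a routine adaptation of the already-written Borel case.

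Finally, to remove the hypothesis $p^{(i)}_{n_i}>0$, one invokes \cref{lem:singular_spectra} and \cref{cor:singular_algorithm} with $B$ replaced by the parabolic subgroup $P_{\vec\lambda^*}$ and $B_+$ by the parabolic subgroup of $\GL(r_1)\times\dots\times\GL(r_d)$ associated to $\vec p_+$ (with $r_i=\rk\diag(\vec p^{(i)})$): in linear time one passes to the restricted tensor $(\Phi(Z))_+$ with non-singular target $\vec p_+$, runs the algorithm above on that instance, and pads the resulting parabolic scaling back up to an element of $G$ with the claimed guarantee on the marginals of $g\cdot X$; the degree and evaluation bounds on the relevant highest weight vector transfer because it is obtained from the original one by setting variables to zero, exactly as in the proof of \cref{cor:singular_algorithm}. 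Combining these points establishes \cref{thm:general main}, and hence \cref{cor:general main true}.
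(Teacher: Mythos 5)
Your proposal is correct and takes essentially the same route as the paper: combine \cref{thm:general scaling} with the parabolic-scaling adaptations of \cref{lem:singular_spectra} and \cref{cor:singular_algorithm}, plus the bit-complexity bookkeeping of \cref{subsec:bit_complexity}. (Minor slip: the term in the iteration bound from \cref{thm:general scaling} is $\deg\Phi\,(\log_2 p + \log_2 M)$ with $p$ the dimension of the parametrization domain, not $\log_2\ell$, but both are $\poly(N)$ so the conclusion stands.)
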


\bibliographystyle{alphaurl}
\addcontentsline{toc}{section}{References}
\bibliography{nonuniform}

\appendix
\section{Appendix}

\subsection{Borel polytope}\label{sec:polytope}

Here we include an elementary description of the Borel polytope $\Delta^B(X)$ for $X \in \Ten(n_0; n_1, \dots, n_d)$.
and prove that it is indeed a polytope with rational vertices.

By the remarks after the proof of \cref{prp:borel} we have
$\vec{p} \in \Delta^B(Y)$ if and only if
$$0 <(\capacity_{\vec{p}}(\rho_Y))^2 =  \inf_{b \in B} \tr (b \cdot \rho_Y) |\chi_{\vec{p}^*}(b)|^2.$$

It's easy to see that $\log\capacity_{\vec{\lambda}}(\rho)$ is \emph{concave} in $\lambda$!
 This immediately implies $\Delta_B(\rho)$ is convex, but we can say more:
\begin{thm}\label{thm:polytope}
$\Delta^B(X)$ is a polytope with rational vertices.
\end{thm}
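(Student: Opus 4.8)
The plan is to realize $\Delta^B(X)$ as the set of $\vec p$ for which a certain ``capacity'' optimization problem has positive value, and then to read off a finite list of defining inequalities from a combinatorial/Hilbert--Mumford-type analysis of the Borel action. Concretely, recall from the remarks after \cref{prp:borel} that $\vec p\in\Delta^B(X)$ if and only if $\capacity_{\vec p}(X)>0$, i.e.
\begin{align*}
  \inf_{R\in B}\ \norm{R\cdot X}^2\,\lvert\chi_{\vec p^*}(R)\rvert^2>0.
\end{align*}
Restricting to the maximal torus $T\subseteq B$ and diagonal one-parameter subgroups, and using the QR/Iwasawa decomposition together with unitary invariance of the norm, this positivity is governed by the weights of $X$ under the $T$-action: writing $X=\sum_{\vec\omega\in\Omega} X_{\vec\omega}$ for the weight decomposition (where $\Omega\subseteq\ZZ^{n_1}\times\dots\times\ZZ^{n_d}$ is the support of $X$), the relevant data is the ``Borel-shifted'' support and how it interacts with the parabolic directions.

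The key steps, in order, would be: (i) Use \cref{prp:borel,prp:borel converse} to rewrite $\vec p\in\Delta^B(X)$ purely in terms of nonvanishing of highest weight vectors $P\in\HWV_{k\vec p^*}(\CC[V]_{(k)})$ on $X$, for some $k$ with $k\vec p$ integral. (ii) Fix the combinatorial type of $X$: the set $\{\vec\mu : (\vec\mu^*$-component of $X)\neq 0\}$ determines which monomials in the explicit highest-weight basis of \cref{eq:hwv concrete} can be nonzero on $X$. This reduces the question to an integral condition on the lattice points $\vec\lambda = k\vec p$, namely whether some lattice point in the cone generated by the supporting weights (intersected with the ray $\RR_{\ge0}\vec p^*$) gives a nonvanishing $\Det_{\vec\lambda^*,\pi}$-type covariant. (iii) Apply a projection/Farkas argument (as in \cref{lem:lb_distance_0} and the proof of \cref{prp:distance-to-gap}): the condition ``$0\notin\conv(S)$ where $S$ is the Borel-shifted weight support associated to $\vec p$'' is exactly a system of finitely many linear inequalities in $\vec p$, because $S$ ranges over a finite collection of subsets of the (finite) weight set $\Omega$ shifted by $\vec\lambda^*$. (iv) Rationality of vertices: all the weights in $\Omega$ and the shift directions are integral, so the hyperplanes arising from Farkas certificates have rational coefficients; hence the polytope cut out has rational vertices. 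Finally, cross-check that the object so described coincides with $\Delta^B(X)$ via \cref{prp:borel,prp:borel converse} and the geometric shift \cref{lem:geometric shift}, and note boundedness follows since $\Delta^B(X)\subseteq\Delta(X)\subseteq P_+(n_1,\dots,n_d)$, which is compact.

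An alternative, cleaner route — which I would actually present as the main argument — is to invoke the moment-polytope theory already set up: by the shifting trick \cref{lem:geometric shift}, $\vec p\in\Delta^B(X)$ iff $0$ lies in the moment polytope of the $B$-variety $\overline{B\cdot[X^{\ot\ell}\ot v_{\vec\lambda^*}]}$; the Borel analogue of the Kirwan/Ness convexity theorem (or equivalently finite generation of the algebra of $B$-eigenvectors, which is the content of the classical fact cited before \cref{prp:borel}, e.g. via Grosshans' theorem on $\overline{B\cdot\mathrm{pt}}$ or the Kempf--Ness/Brion framework) then shows $\{(\vec p,\ell) : \ell\vec p\in\Delta^B_\ell\}$ is a rational polyhedral cone, whence each slice $\Delta^B(X)$ is a rational polytope. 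The technical content to make this self-contained is: (a) the relevant graded algebra $\bigoplus_{k}\HWV_{k\vec p^*}(\CC[V]_{(k)})$-sections $\bigoplus_{\vec\lambda}\{$sections of line bundle $\mathcal L_{\vec\lambda}$ nonvanishing at $[X]\}$ is finitely generated — this is where one cites \cite{Gross73} or gives a Grosshans-subgroup argument; and (b) the support of a finitely generated multigraded algebra is a rational polyhedral cone (a standard toric fact).

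\textbf{Main obstacle.} The crux is establishing finite generation of the algebra of highest weight vectors relative to the point $X$ — more precisely, that the semigroup $\{\vec\lambda : \exists\,k,\ \vec\lambda=k\vec p,\ \exists\,P\in\HWV_{\vec\lambda^*}(\CC[V]_{(k)}),\ P(X)\neq0\}$ is finitely generated. Unlike the invariant case (Hilbert), the ring of $B$-semiinvariants is not obviously finitely generated, and the cited fact \cite{Gross73} handles the whole algebra of highest weight vectors on $V$ but one still must intersect with the (multiplicative) condition $P(X)\neq0$, i.e. pass to the homogeneous coordinate ring of the $B$-orbit closure of $X^{\ot\ell}\ot v_{\vec\lambda^*}$ and argue its section ring over a suitable toric base is finitely generated. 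An honest treatment would either (i) appeal to Grosshans' theorem that $\GL(n)/U$ (or $G/U$) is quasi-affine with finitely generated ring of functions, pull back along the orbit map, and take invariants, or (ii) mimic the gradient-flow/Kempf--Ness compactness argument in \cite{kirwan1984cohomology} adapted to the Borel — in either case the bookkeeping to confirm rationality of the resulting cone (which reduces to rationality of the weights, already integral here) is routine but the finite-generation input is the genuine content. I would flag this and lean on the cited literature (\cite{Gross73}, \cite{brion1987image}) rather than reprove it.
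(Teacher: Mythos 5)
Your proposal diverges substantially from the paper's argument, and both of your two proposed routes have gaps. The paper's proof does \emph{not} go through highest weight vectors, weight supports, or finite generation at all. Instead it uses the capacity formulation directly: via \cref{lem:modified_capacity} and the relaxation in \cref{lem:relaxed_capacity} (which replaces the Borel element $b$ by tuples of PSD matrices $Y^{(i)}_j$ on nested subspaces, with determinant constraints), it shows $\capacity_{\vec p}(\rho_X)=0$ iff for \emph{every} unitary family $U$ there fails to exist a nonnegative tensor $D$ with $\operatorname{Supp}(D)\subseteq\operatorname{Supp}(C(\rho_X,U))$ and prescribed classical margins $\Delta p^{(i)}_j$ (\cref{prp:polytope}). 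The support $\operatorname{Supp}(C(\rho_X,U))$ is a piece of \emph{discrete} data that takes only finitely many values as $U$ varies, so the universal quantifier over $U$ becomes a finite intersection; for each fixed support, Farkas' lemma turns the feasibility condition on $D$ into a finitely generated rational cone of dual certificates, giving a rational polyhedron in $\vec p$. Boundedness is automatic, so each slice is a rational polytope.

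The concrete gap in your route (i)--(iv) is in step (ii): the weight support of $X$ in a fixed basis does \emph{not} determine which of the covariants in \cref{eq:hwv concrete} vanish on $X$. These covariants are sums of products of minors, and their vanishing depends on the actual entries, not just their support pattern; e.g.\ for $n_0=1$, $n_1=n_2=2$ the covariant $\Det_2\ot\Det_2$ is essentially $\det X$, and $\bigl(\begin{smallmatrix}1&1\\1&1\end{smallmatrix}\bigr)$, $\bigl(\begin{smallmatrix}1&1\\1&2\end{smallmatrix}\bigr)$ have the same support but different vanishing. So ``fix the combinatorial type of $X$'' is not a well-posed reduction, and the object you would apply Farkas to in step (iii) is not defined. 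The paper's trick is precisely to quantify over \emph{all} bases (the unitary family $U$) and extract the discrete data from $C(\rho_X,U)$ rather than from $X$ itself. Your alternative route — finite generation of the ring of $B$-eigenvectors nonvanishing at $X$ — is a reasonable program, but as you yourself flag, the key finite-generation input (intersecting the Grosshans/highest-weight algebra with the non-vanishing-at-$X$ condition, over the multigraded base) is exactly what would need to be established, and the paper's argument deliberately avoids it in favor of the explicit Farkas analysis.
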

This follows from a more detailed description of $\Delta^B(X)$, which requires some technical definitions. The ideas are very similar to the elementary derivation of the tensor case of the Hilbert-Mumford criterion \cite{burgisser2017alternating}. In the uniform case, a density matrix $\rho$ can be scaled to uniform marginals if in \emph{every} orthonormal basis, the diagonal of $\rho$ (regarded as a classical tensor) can be scaled by diagonal matrices to uniform marginals.\\
\indent We find a similar criterion, but the reduced density matrix will be blown up (much as in \cref{sec:reductions}, and the diagonal must be scaled to certain \emph{nonuniform} marginals. Let $\vec n:=(n_1, \dots, n_d)$.
\begin{dfn}[unitary family]\label{dfn:basis_family}
A $(d,\vec{n})$-\emph{unitary family} is a tuple
$$U = (U_j^{(i)}, i \in [d], j\in [n_i])$$
where $U_j^{(i)}$ is a $j\times j$ unitary matrix.
\end{dfn}
Let $S$ be the set of pairs $(\vec{j}, \vec{l})$ such that $\vec{j} = (j(1), \dots, j(d)) \in [n_1]\times \dots \times [n_d]$ and $\vec{l} = (l(1), \dots, l(d)) \in [j(1)]\times \dots \times [j(d)]$. We use $S$ as an index set for $\operatorname{Ten}_{\binom{n_1 + 1}{2}, \dots, \binom{n_1 + 1}{d}}(\RR_{\geq 0})$.
\begin{dfn}[expanded classical tensor]\label{dfn:exp_classical}
Given a $(d, \vec{n})$-unitary family $U$, denote by $C(\rho, U)$ the element of $\Ten_{\binom{n_1 + 1}{2}, \dots, \binom{n_1 + 1}{d}}(\RR_{\geq 0})$ given by
\begin{align*} C(\rho, U)_{\vec{j}, \vec{l}}:= (U_{\vec{j}} \cdot \nu_{\vec{j}}\cdot \rho)_{\vec{l}, \vec{l}},
\end{align*}
where $u_{\vec{j}, \vec{l}}$ denotes the tuple $(U_{j(1)}^{(1)}, \dots, U_{j(d)}^{(d)})$ and $\nu_{\vec{j}}$ denotes $(\nu_{j(1)}, \dots, \nu_{j(d)})$. For any $C \in \Ten_{\binom{n_1 + 1}{2}, \dots, \binom{n_1 + 1}{d}}(\RR_{\geq 0})$, we define
$$\operatorname{Supp}(C) = \{({\vec{j}, \vec{l}}) \in S: C_{\vec{j}, \vec{l}} > 0\}.$$
\end{dfn}
We may now state a description of $\Delta_B(Y)$. We use the shorthand $\Delta p_j^{(i)} = p_j^{(i)} - p_{j+1}^{(i)}$, where $p_{n_i + 1} \equiv 0$.
\begin{prp}\label{prp:polytope}
$\vec{p} \in \Delta_B(Y)$ if and only if for every $(d,\vec{n})$-basis-family $U$, there is a tensor
$$ D \in \Ten_{\binom{n_1 + 1}{2}, \dots, \binom{n_1 + 1}{d}}(\RR_{\geq 0})$$
satisfying
\begin{align}\operatorname{Supp} D \subset \operatorname{Supp} C(\rho_Y, U)\label{eq:support}
\end{align} and
\begin{align} \sum_{(\vec{j}, \vec{l}) \in S: j(i)= j,  l(i) = l} D_{\vec{j}, \vec{l}}& = \Delta p_j^{(i)}\nonumber\\
\textrm{ for all }i \in [d], j &\in [n_i], l \in [j].\label{eq:classical_margins}
\end{align}
That is, $D$ has as $i^{th}$ classical margin the vector $(\Delta p_j^{(i)} : j \in [n_i], l \in [j]) \in \RR_{\geq 0}^{\binom{n_i + 1}{2}}$.
\end{prp}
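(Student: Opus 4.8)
The plan is to reduce the statement to a positivity question for the $\vec p$-capacity and then to combine a Hilbert--Mumford-type analysis of the Borel action with the linear-programming description of classical transportation polytopes. First, as noted after the proof of \cref{prp:borel}, one has $\vec p\in\Delta^B(Y)$ if and only if $\inf_{b\in B}\tr[(b\cdot\rho_Y)]\,\lvert\chi_{\vec p^*}(b)\rvert^2>0$, where $b\cdot\rho=(b^{(1)}\ot\dots\ot b^{(d)})\rho(b^{(1)}\ot\dots\ot b^{(d)})^\dagger$. The key algebraic observation is the telescoping identity $\lvert\chi_{\vec p^*}(b^{(i)})\rvert^2=\prod_{k=1}^{n_i}\lvert\det b^{(i)}_{[k]}\rvert^{-2\Delta p^{(i)}_k}$, where $b^{(i)}_{[k]}=\nu_k b^{(i)}\nu_k^\dagger$ is the bottom-right $k\times k$ block of the upper-triangular matrix $b^{(i)}$; this is exactly where the projections $\nu_k$ and the differences $\Delta p^{(i)}_k$ enter, and it dictates the shape of the marginal constraint \cref{eq:classical_margins}.

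Second, I would establish the structural equivalence: the infimum above is positive if and only if for every $(d,\vec n)$-unitary family $U$ the classical transportation problem of moving mass supported on $\operatorname{Supp}C(\rho_Y,U)$ to the marginals $(\Delta p^{(i)}_j)$ is feasible, i.e.\ admits a solution $D\ge 0$ with $\operatorname{Supp}D\subseteq\operatorname{Supp}C(\rho_Y,U)$. The easy direction is the contrapositive ``some $U$ infeasible $\Rightarrow$ capacity $=0$'': if no such $D$ exists then, by Farkas' lemma, the target marginal vector is separated from the (rational) transportation polytope with support $\operatorname{Supp}C(\rho_Y,U)$ by some direction; composing the corresponding torus ray growing to infinity in that direction with the unitary family $U$ produces an explicit sequence $b_n\in B$ with $\tr[(b_n\cdot\rho_Y)]\,\lvert\chi_{\vec p^*}(b_n)\rvert^2\to 0$, using the expansion of $\tr[(b\cdot\rho_Y)]$ as a nonnegative combination of monomials in the diagonal of $b$ whose coefficients are precisely the entries $C(\rho_Y,U(b))_{\vec j,\vec l}$ and whose exponents, paired against the separating direction, match the telescoping identity. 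The reverse (substantive) direction is a Hilbert--Mumford/Kempf--Ness statement for the observable subgroup $B$: if the capacity vanishes, then a minimizing sequence yields a rational one-parameter subgroup of $B$ destabilizing $Y^{\ot\ell}\ot v_{\vec\lambda^*}$ (with $\vec\lambda:=\ell\vec p$) in the sense of the shifting trick (\cref{lem:geometric shift}), and its asymptotic direction, after a per-flag-level Gram--Schmidt adapted to the coordinate flag, is encoded by a unitary family $U$ for which the transportation problem is infeasible. I would carry this out by adapting the elementary derivation of the tensor Hilbert--Mumford criterion in \cite{burgisser2017alternating}, with two changes: replacing the full $\SL$-orbit and ``all orthonormal bases'' by the $B$-orbit and unitary families $(U^{(i)}_j)_{j\in[n_i]}$ — the sizes $j\times j$ reflecting that an element of $B(n_i)$ can only rotate the $j$-dimensional quotients $\CC^{n_i}/\langle e_1,\dots,e_{n_i-j}\rangle$, which under $\nu_j$ become the last $j$ coordinates — and replacing ``scalability to uniform marginals'' of the resulting classical tensor by ``scalability to the marginals $(\Delta p^{(i)}_j)$'', the latter governed by the transportation-polytope membership above (the AM--GM lower bound of LSW type giving feasibility $\Rightarrow$ positive classical capacity, and Farkas giving its converse).

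Finally I would assemble the pieces. Since the index set $S$ is finite, $\operatorname{Supp}C(\rho_Y,U)$ takes only finitely many values as $U$ ranges over the unitary families, and for a fixed support $\Sigma$ the feasibility condition depends on $\Sigma$ alone and not on $U$; hence ``for every $U$ there is a $D$'' is the finite conjunction, over the realizable supports $\Sigma$, of the membership conditions $(\Delta p^{(i)}_j)_{i,j,l}\in\mathcal P(\Sigma)$, where $\mathcal P(\Sigma)$ is the rational transportation polytope of \cref{eq:classical_margins} restricted to support $\Sigma$. This proves \cref{prp:polytope}; as a byproduct, writing $\vec q(\vec p)$ for the rational affine map $\vec p\mapsto(\Delta p^{(i)}_j)_{i,j,l}$, one gets $\Delta^B(Y)=\bigcap_\Sigma \vec q^{-1}(\mathcal P(\Sigma))$, a finite intersection of rational polyhedra intersected with the bounded set $P_+(n_1,\dots,n_d)$, hence a rational polytope, which is \cref{thm:polytope}. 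The main obstacle is the second step: getting the Borel Gram--Schmidt bookkeeping exactly right — that the freedom in $b\in B$ beyond its diagonal is captured precisely by a $(d,\vec n)$-unitary family, and that the coefficients in the resulting expansion of $\tr[(b\cdot\rho_Y)]$ are literally the entries $C(\rho_Y,U(b))_{\vec j,\vec l}$ — together with the adaptation of the Hilbert--Mumford argument from the reductive group $\SL$ to the non-reductive (but observable) Borel subgroup $B$, where one must check that the relevant one-parameter degenerations and their limits still exist and can be chosen rational.
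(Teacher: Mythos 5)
Your high-level blueprint matches the paper's: reduce to positivity of $\capacity_{\vec p}(\rho_Y)$, use Farkas' lemma for the easy direction, and adapt the elementary tensor Hilbert--Mumford analysis of~\cite{burgisser2017alternating} to the Borel setting with unitary families and a transportation-polytope criterion. The telescoping identity $\lvert\chi_{\vec p^*}(b^{(i)})\rvert^2 = \prod_{k}\lvert\det b^{(i)}_{[k]}\rvert^{-2\Delta p^{(i)}_k}$ is correct and is indeed what makes the block projections $\nu_k$ and the differences $\Delta p^{(i)}_k$ appear. However, your proposal has a genuine gap at the pivot point.

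The substantive step the paper relies on, and which your proof does not supply, is the relaxation in \cref{lem:relaxed_capacity} (adapted from~\cite{franks2018operator}): $\lvert\chi_{\vec p^*}(b)\rvert^{-2}$ equals a supremum of $\prod_j(\det Y_j)^{\Delta p_j}$ over \emph{all} PSD $Y_j$ on $\CC^j$ satisfying the affine constraint $\sum_j\Delta p_j\,\nu_j^\dagger Y_j\nu_j=b^\dagger\diag(\vec p_\uparrow)\,b$, not just the specific choice $Y_j=b_{[j]}^\dagger b_{[j]}$. Your telescoping identity gives only the evaluation at that specific choice, not the extremality. It is precisely this relaxation that decouples the $j$-levels and lets one replace the infimum over $b\in B$ by an infimum over independent PSD families subject to the determinant constraint $\prod_j(\det Y_j^{(i)})^{\Delta p^{(i)}_j}=1$. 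Only after this relaxation can one diagonalize each $Y_j^{(i)}(t)$ separately and extract a $(d,\vec n)$-unitary family by passing to a convergent subsequence, which is what lets the expanded classical tensor $C(\rho_Y,U)$ appear as coefficients. Without the relaxation there is no way to go from a minimizing sequence in $B$ to a unitary family $U$: a single $b\in B$ determines (via the bottom-right $j\times j$ blocks) a collection of PSD matrices that are far from independent across $j$, and you cannot simply ``diagonalize per flag level'' and read off a limiting $U$.

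Your proposed workaround --- a ``Hilbert--Mumford/Kempf--Ness statement for the observable subgroup $B$'' --- does not stand as written: the Borel subgroup is \emph{not} observable, since $\GL(n)/B(n)$ is the (projective, hence not quasi-affine) flag variety. More to the point, no detailed mechanism is given for extracting a rational one-parameter subgroup of $B$ destabilizing $Y^{\ot\ell}\ot v_{\vec\lambda^*}$ and converting its asymptotic direction into a $(d,\vec n)$-unitary family; the per-level Gram--Schmidt you invoke is exactly what the relaxation lemma makes rigorous. Similarly, in your ``easy'' direction the claim that a torus ray can be ``composed with the unitary family $U$'' to produce $b_n\in B$ is problematic for the same reason: a $(d,\vec n)$-unitary family is a tuple of $j\times j$ unitaries for each $j$, not a single change of basis, and does not in general arise as (or determine) a Borel element. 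The paper sidesteps this by constructing the degeneration in the space of PSD families $Y_j^{(i)}(t)=U_j^{(i),\dagger}\exp(-t\,\diag(\tilde a^{(i)}_{j,\cdot}))\,U_j^{(i)}$ and appealing to the equality of infima afforded by the relaxation. To repair your proof, you should state and prove \cref{lem:relaxed_capacity} (a log-determinant maximization identity) as an intermediate step, and run both directions through the relaxed PSD program rather than directly in $B$. Your final ``finitely many supports, so polytope'' argument is correct and matches the paper's.
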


Before we prove \cref{prp:polytope}, we use it to prove \cref{thm:polytope}.

\begin{proof}[Proof of \cref{thm:polytope}:] Fix $\operatorname{Supp} C(\rho_Y, U)$. By Farkas' lemma, the existence of $D$ satisfying \cref{eq:support} and \cref{eq:classical_margins} is equivalent to the following statement:\\
\indent Every sequence $\vec{a} = (a_{j,l}^{(i)}, i \in [d], j \in [n_i], l \in [j])$ of numbers satisfying
\begin{align} \sum_{i = 1}^d a_{j(i), l(i)}^{(i)} \geq 0 \textrm{ for all } (\vec{j}, \vec{l}) \in \operatorname{Supp} C(\rho_Y, U) \label{eq:dual_support}
\end{align}
also satisfies
\begin{align} \sum_{i = 1}^d \sum_{j = 1}^{n_i}\Delta p_j^{(i)}\sum_{i = 1}^l a_{j(i), l(i)} (i) \geq 0. \label{eq:dual_large}\end{align}
Since the set of $\vec{a}$ satisfying \ref{eq:dual_support} is a convex cone with finitely many constraints, it is generated by a finite set of rational vectors; it is enough to check that \ref{eq:dual_support} implies \ref{eq:dual_large} on that finite set of rational vectors; this implies the Borel polytope is indeed a polytope with rational vertices (there are only a finite number of possibilities for $\operatorname{Supp}C(\rho_Y, U)$!)
\end{proof}

Before we prove \cref{prp:polytope}, we must prove a lemma. We wish to characterize when $\capacity_{\vec{p}} \rho >0$; the lemma allows us to pass to a larger set for the infimum.

\begin{lem} [Adapted from \cite{franks2018operator}]\label{lem:relaxed_capacity} Let $p \in P_+(n)$ with $p_n > 0$ and $b \in B(n)$. The character
$$|\chi_{p^*}(b)|^{-2}.$$
is equal to
\begin{align} \textrm{ sup \hspace{1cm}} &\prod_{i = 1}^n \det (Y_i)^{\Delta p_i} \label{eq:dets_prod} \\
\textrm{ subject to \hspace{1cm}} &0 \prec Y_i: \CC^i \to \CC^i\label{eq:psd}\\
\textrm{ and    \hspace{1cm}  } &\sum_{i = 1}^n \Delta p_i \nu_i^\dagger Y_i \nu_i = b^\dagger \diag(p_{op})b \label{eq:correct_sum}
\end{align}
\end{lem}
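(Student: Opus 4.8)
The asserted identity has the shape $|\chi_{p^*}(b)|^{-2}=\sup_{(Y_i)}\prod_{i=1}^n\det(Y_i)^{\Delta p_i}$ over the feasible set cut out by \cref{eq:psd} and \cref{eq:correct_sum}, and I would prove the two inequalities separately. Two elementary facts drive everything. First, the telescoping identity
\begin{align*}
  \sum_{i=1}^n \Delta p_i\, \nu_i^\dagger\nu_i \;=\; \diag(p_n,\dots,p_1)\;=\;\diag(\vec p_\uparrow),
\end{align*}
which holds because $\nu_i^\dagger\nu_i$ is the diagonal matrix with ones in positions $n-i+1,\dots,n$ and $\sum_{i\ge m}\Delta p_i=p_m$ (with $p_{n+1}=0$); here $\diag(p_{op})$ in \cref{eq:correct_sum} is exactly $\diag(\vec p_\uparrow)$. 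Second, for $b\in B(n)$ one has $\nu_i b=b_{[i]}\nu_i$, where $b_{[i]}$ is the bottom-right $i\times i$ block of $b$ (invertible, upper-triangular, with $\det b_{[i]}=\prod_{j=n-i+1}^n b_{jj}$); this is the identity $\nu_i b\nu_i^\dagger\nu_i=\nu_i b$ already recorded in \cref{sec:reductions}, and the analogous statement for $b^{-1}$ gives $(b^{-1})_{[i]}=(b_{[i]})^{-1}$.

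For $|\chi_{p^*}(b)|^{-2}\le\sup(\cdots)$ I would exhibit a feasible point attaining the value. Set $Y_i:=b_{[i]}^\dagger b_{[i]}\succ0$. Then, using $\nu_i b=b_{[i]}\nu_i$,
\begin{align*}
  \sum_{i=1}^n\Delta p_i\,\nu_i^\dagger Y_i\nu_i=\sum_{i=1}^n\Delta p_i\,(\nu_i b)^\dagger(\nu_i b)=b^\dagger\Big(\sum_{i=1}^n\Delta p_i\,\nu_i^\dagger\nu_i\Big)b=b^\dagger\diag(\vec p_\uparrow)b,
\end{align*}
so $(Y_i)$ is feasible, and $\prod_i\det(Y_i)^{\Delta p_i}=\prod_i|\det b_{[i]}|^{2\Delta p_i}=\prod_{j=1}^n|b_{jj}|^{2p_{n+1-j}}=|\chi_{p^*}(b)|^{-2}$, the middle equality being the same reindexing that produced the telescoping identity.

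For the reverse inequality I would first normalize to $b=I$. Given any feasible $(Y_i)$, put $\widetilde Y_i:=(b_{[i]}^\dagger)^{-1}Y_i(b_{[i]})^{-1}\succ0$; conjugating \cref{eq:correct_sum} by $(b^{-1})^\dagger$ on the left and $b^{-1}$ on the right and using $\nu_i b^{-1}=(b_{[i]})^{-1}\nu_i$ yields $\sum_i\Delta p_i\,\nu_i^\dagger\widetilde Y_i\nu_i=\diag(\vec p_\uparrow)$, while $\prod_i\det(Y_i)^{\Delta p_i}=|\chi_{p^*}(b)|^{-2}\prod_i\det(\widetilde Y_i)^{\Delta p_i}$. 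Hence it suffices to show $\prod_i\det(\widetilde Y_i)^{\Delta p_i}\le1$ for all PD $\widetilde Y_i$ with $\sum_i\Delta p_i\,\nu_i^\dagger\widetilde Y_i\nu_i=\diag(\vec p_\uparrow)$. The objective $F((Z_i)):=\sum_i\Delta p_i\log\det Z_i$ is concave on tuples of positive-definite matrices, the feasible set is a convex affine slice of the PD cone, and $(I_i)$ is feasible by the telescoping identity. For feasible $(\widetilde Y_i)$ the segment from $(I_i)$ to $(\widetilde Y_i)$ is feasible, so by concavity it is enough to show the directional derivative of $F$ at $(I_i)$ along $(\widetilde Y_i-I_i)$ is $\le0$; in fact it vanishes, since it equals $\sum_i\Delta p_i\tr(\widetilde Y_i-I_i)$, and taking the trace of the linear constraint $\sum_i\Delta p_i\,\nu_i^\dagger(\widetilde Y_i-I_i)\nu_i=0$ gives $\sum_i\Delta p_i\tr((\widetilde Y_i-I_i)\nu_i\nu_i^\dagger)=\sum_i\Delta p_i\tr(\widetilde Y_i-I_i)=0$ because $\nu_i\nu_i^\dagger=I_i$. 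Concavity then forces $F((\widetilde Y_i))\le F((I_i))=0$. Combining the two directions, the supremum equals $|\chi_{p^*}(b)|^{-2}$ and is attained.

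\textbf{Main obstacle.} The only non-routine point is the reverse inequality, and inside it the observation that the first-order optimality condition at $(I_i)$ holds along \emph{every} feasible direction — which collapses, after taking the trace of the linear constraint, to the triviality $\nu_i\nu_i^\dagger=I_i$. All the remaining ingredients (the telescoping identity, $\nu_i b=b_{[i]}\nu_i$, $(b^{-1})_{[i]}=b_{[i]}^{-1}$, concavity of $\log\det$, and the $p_n>0$ hypothesis guaranteeing $b^\dagger\diag(\vec p_\uparrow)b\succ0$) are standard linear algebra. A calculus-free alternative would iterate Fischer's block-determinant inequality along the nested supports $\{n-i+1,\dots,n\}$, but the concavity argument is cleaner to present.
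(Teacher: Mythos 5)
Your proof is correct, and it takes a genuinely different route from the paper. The paper does not argue this lemma directly: it disposes of it in one sentence by citing Claim~4.6 of \cite{franks2018operator} (the operator-scaling analogue, phrased with lower-triangular matrices and projections onto the \emph{first} $i$ coordinates) and applying the order-reversing conjugation $h=\sigma b\sigma^\dagger$, $\eta_i=\sigma\nu_i\sigma^\dagger$. Your argument is instead self-contained. The $\geq$ direction via the explicit feasible tuple $Y_i=b_{[i]}^\dagger b_{[i]}$, combined with $\nu_i b=b_{[i]}\nu_i$ and the telescoping identity $\sum_i\Delta p_i\,\nu_i^\dagger\nu_i=\diag(\vec p_\uparrow)$, is clean and direct; and for the $\leq$ direction, after normalizing to $b=I$ by the same conjugation trick, the one-line concavity argument showing that the tuple $(I_i)$ is a global maximizer of $\sum_i\Delta p_i\log\det Z_i$ on the affine slice (with first-order stationarity collapsing to $\nu_i\nu_i^\dagger=I_i$) is about as economical as one could hope. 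What the paper's route buys is brevity and a pointer to precedent; what yours buys is a proof readable without chasing down, and translating the conventions of, the external claim, with the roles of concavity and of the upper-triangular/nested-block structure made explicit. One small notational point worth flagging if you write this up: the paper's $\diag(p_{op})$ in \cref{eq:correct_sum} is to be read as $\diag(\vec p_\uparrow)$, exactly as your proof uses it and consistently with how the lemma is applied in the proof of \cref{prp:polytope}.
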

\begin{proof}
Let $\sigma$ be the permutation reversing the order of the coordinates (the dimension is suppressed in an abuse of notation). The lemma can be obtained from Claim 4.6 in \cite{franks2018operator} by making the change of variables $h = \sigma b \sigma^\dagger$ and noting that the projection $\eta_i$ to the \emph{first} $i$ coordinates is given by $\sigma \nu_i \sigma^\dagger$. Finally one uses $|\chi_{p^*}( b )|^{-2}= \det(\diag(p), h^\dagger h)$.
\end{proof}
We have one more easy lemma. We use the shorthand $P^{(i)} := \diag(\vec p_\uparrow^{(1)})$ and
 $$\vec{P} := \left(P^{(1)}, \dots, P^{(d)}\right).$$
\begin{lem}[modified capacity]\label{lem:modified_capacity}
$$\inf_{R \in B} |\chi_{\vec{p}^*}(R)|^2\tr \sqrt{\vec{P}}  \cdot (R\cdot \rho) = 0 \iff \capacity_{\vec{p}}(\rho) = 0.$$
\end{lem}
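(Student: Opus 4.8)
The plan is to show that the modified infimum and the (squared) capacity $\capacity_{\vec p}(\rho)^2 = \inf_{R\in B} |\chi_{\vec p^*}(R)|^2\,\tr[R\cdot\rho]$ differ only by a fixed positive multiplicative constant, whence one vanishes exactly when the other does. First I would unpack both sides. Writing $M := I_{n_0}\ot\sqrt{P^{(1)}}\ot\dots\ot\sqrt{P^{(d)}}$, a positive semidefinite (in fact positive definite, given $p^{(i)}_{n_i}>0$) matrix with $P^{(i)}=\diag(\vec p_\uparrow^{(i)})$, the action $\sqrt{\vec P}\cdot(R\cdot\rho)$ is $M(R\cdot\rho)M^\dagger$; since $M$ is Hermitian this has trace $\tr[M^2(R\cdot\rho)]$, and $M^2 = I_{n_0}\ot P^{(1)}\ot\dots\ot P^{(d)}$.

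The one genuine step is an operator sandwich. Each $P^{(i)}$ is diagonal with entries a permutation of the probability vector $\vec p^{(i)}$, so its eigenvalues lie in $[p^{(i)}_{n_i},p^{(i)}_1]\subseteq[p^{(i)}_{n_i},1]$, i.e.\ $p^{(i)}_{n_i} I \preceq P^{(i)} \preceq I$. Taking tensor products (monotone under $\preceq$ for positive operators) and tensoring with $I_{n_0}$ (which merely repeats eigenvalues) I would conclude
\[
  c\,I \;\preceq\; M^2 \;\preceq\; I, \qquad c := \prod_{i=1}^d p^{(i)}_{n_i} > 0 .
\]
Here I invoke the standing hypothesis of this subsection that $p^{(i)}_{n_i}>0$ for all $i$ (as in \cref{lem:relaxed_capacity}; the singular case reduces to this, cf.\ \cref{lem:singular_spectra} and the equivalence $\vec p\in\Delta^B(X)\iff\capacity_{\vec p}(\rho_X)>0$), and this positivity is precisely what makes $c$ strictly positive.

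To finish, for every $R\in B$ the matrix $R\cdot\rho$ is positive semidefinite, so applying the sandwich gives $c\,\tr[R\cdot\rho] \le \tr[M^2(R\cdot\rho)] \le \tr[R\cdot\rho]$; multiplying through by $|\chi_{\vec p^*}(R)|^2\ge 0$ and taking the infimum over $R\in B$ yields
\[
  c\cdot\capacity_{\vec p}(\rho)^2 \;\le\; \inf_{R\in B}|\chi_{\vec p^*}(R)|^2\,\tr\bigl[\sqrt{\vec P}\cdot(R\cdot\rho)\bigr] \;\le\; \capacity_{\vec p}(\rho)^2 .
\]
Since $c>0$, the middle quantity equals $0$ if and only if $\capacity_{\vec p}(\rho)=0$, which is the claim. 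I do not expect a real obstacle beyond pinning down conventions — namely that $\sqrt{\vec P}\cdot(-)$ denotes the tensor-product action $\sigma\mapsto M\sigma M^\dagger$, that $|\chi_{\vec p^*}(R)|$ is squared on the modified side, and that $p^{(i)}_{n_i}>0$ is in force so that $c>0$.
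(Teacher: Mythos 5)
Your sandwich argument is sound when $\vec P$ is nonsingular: with $c = \prod_{i=1}^d p^{(i)}_{n_i} > 0$ you get $cI \preceq M^2 \preceq I$, and since $R\cdot\rho\succeq 0$ for every $R\in B$ this yields
\[
  c\,\capacity_{\vec p}(\rho)^2 \;\le\; \inf_{R\in B}|\chi_{\vec p^*}(R)|^2\,\tr\!\left[\sqrt{\vec P}\cdot(R\cdot\rho)\right] \;\le\; \capacity_{\vec p}(\rho)^2,
\]
so the two infima vanish together. This is a genuinely different route from the paper's nonsingular branch, which performs the change of variables $R\mapsto\sqrt{\vec P}^{-1}R$ on the Borel subgroup (as in \cref{rem:capacity}), using that $\chi_{\vec p^*}$ is a character, to obtain the exact identity $\inf_R|\chi_{\vec p^*}(R)|^2\tr[\sqrt{\vec P}\cdot(R\cdot\rho)] = |\chi_{\vec p^*}(\sqrt{\vec P}^{-1})|^2\,\capacity_{\vec p}(\rho)^2$. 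Both approaches give the iff in that regime, the paper's being a single equality and yours a two-sided bound.

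There is, however, a real gap. The lemma carries no nonsingularity hypothesis, and it is invoked in the proof of \cref{prp:polytope} for arbitrary $\vec p\in P_+(n_1,\dots,n_d)$, including spectra with some $p^{(i)}_{n_i}=0$. In that case $c=0$, your lower bound is vacuous, and the sandwich yields only $\inf_R|\chi_{\vec p^*}(R)|^2\tr[\sqrt{\vec P}\cdot(R\cdot\rho)]\le\capacity_{\vec p}(\rho)^2$, i.e.\ one direction of the equivalence. Saying that ``the singular case reduces to this'' via \cref{lem:singular_spectra} is not the same as proving it; that reduction is the whole second branch of the paper's proof. One must pass to the truncated data $(\vec p_+,\rho_+,B_+)$, use \cref{lem:singular_spectra} to get $\capacity_{\vec p}(\rho)=0\iff\capacity_{\vec p_+}(\rho_+)=0$, \emph{and separately} verify, by a short chain of identities with the truncation maps $\nu_{\vec r}$, that the modified infimum over $B$ for $(\vec p,\rho)$ coincides with the analogous infimum over $B_+$ for $(\vec p_+,\rho_+)$. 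That last identification rests on the facts that $\chi_{\vec p^*}(R)$ only depends on the diagonal of $R$ on the support of $\vec P$ and that $\nu_{\vec r}^\dagger\nu_{\vec r}\sqrt{\vec P}=\sqrt{\vec P}$; it does not follow from the operator sandwich. As written, your argument does not close in the singular case.
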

\begin{proof}
If $\vec{P}$ is nonsingular, we apply the same change of variables argument from \cref{rem:capacity}.
Otherwise, from \cref{lem:singular_spectra}, we have $\vec{p} \in \Delta^B(\rho)$ if and only if $\vec{p}_+ \in \Delta^{B_+}(\rho_{+})$. This is because $\rho_+ = \rho_{X_+}$ if $\rho = \rho_X$. Thus,
$$ \capacity_{\vec{p}_+}\rho_+ = 0 \iff \capacity_{\vec{p}} \rho = 0.$$
By a change of variables, $\capacity_{\vec{p}_+} \rho_+ = 0$ if and only if
\begin{align*}
\inf_{b_+ \in B_+} |\chi_{\vec{p}_+^*}(b_+)|^2 \tr \nu_{\vec{r}}  \sqrt{\vec{P}} \nu_{\vec{r}}^\dagger \cdot  (b_+ \cdot \rho_+) = 0,
\end{align*}
but
\begin{align*}
\inf_{b_+ \in B_+} |\chi_{\vec{p}_+^*}(b_+)|^2\tr \nu_{\vec{r}}  \sqrt{\vec{P}} \nu_{\vec{r}}^\dagger \cdot  (b_+ \cdot \rho_+) &= \inf_{b \in B}|\chi_{\vec{p}^*}(b)|^2 \tr \nu_{\vec{r}}  \sqrt{\vec{P}} \nu_{\vec{r}}^\dagger \cdot  (\nu_{\vec{r}}  b \nu_{\vec{r}}^\dagger\cdot \rho_+) \\
&= \inf_{b \in B} |\chi_{\vec{p}^*}(b)|^2\tr (\nu_{\vec{r}}  \sqrt{\vec{P}} \nu_{\vec{r}}^\dagger)  (\nu_{\vec{r}}\cdot (b\cdot \rho))\\
&= \inf_{b \in B}|\chi_{\vec{p}^*}(b)|^2 \tr \sqrt{\vec{P}}  \cdot (b\cdot \rho).
\end{align*}
The last equality follows form cyclicity of trace and $\nu_{\vec{r}}^\dagger \nu_{\vec{r}} \sqrt{\vec{P}} = \sqrt{\vec{P}}$.
\end{proof}
Finally we prove \cref{prp:polytope}.
\begin{proof}[Proof of \cref{prp:polytope}]

We want to find necessary and sufficient conditions under which $\capacity_{\vec{p}}(\rho) = 0$. By \cref{lem:modified_capacity}, this happens if and only if
\begin{align}\inf_{b \in B}|\chi_{\vec{p}^*}(b)|^2 \tr \sqrt{\vec{P}}  \cdot (b\cdot \rho)\nonumber\\
 = \inf_{b \in B}|\chi_{\vec{p}^*}(b)|^2 \tr (b_1^\dagger P^{(1)} b_1 \otimes \dots \otimes b_d^\dagger P^{(d)} b_d)\rho  = 0.\label{eq:capacity_with_ps}
 \end{align}

Due to \cref{lem:relaxed_capacity}, we may replace $b_i^\dagger P^{(i)} b_i$ by $\sum_{j =1}^{n_i}\nu_j^\dagger Y_j^{(i)}\nu_j$ and $|\chi_{\vec{p}^*}(b)|^2$ by the product over $i\in [d]$ of $\prod_{j \in [n_i]} (\det Y_j^{(i)})^{\Delta p_j^{(i)}}$, and the infimum will remain the same!  Thus, $\capacity_{\vec{p}}(\rho) = 0$ if and only if
\begin{align}
 \textrm{ inf \hspace{1cm}}&\sum_{\vec{j} \in [n_1] \times \dots \times [n_d]} \left(\prod_{i = 1}^d \Delta p_{j(i)}^{(i)}\right) \tr \left(Y_{j(1)}^{(1)}  \otimes \dots \otimes
 Y_{j(d)}^{(d)} \right) (\nu_{\vec{j}}\cdot \rho) & = 0 \label{eq:sup_whys}\\
 \textrm{ subject to \hspace{1cm}}& 0 \prec Y_j^{(i)}:\CC^{j} \to \CC^{j} &\textrm{ for all } j \in [n_i] \label{eq:psd_whys}\\
\textrm{ and \hspace{1cm}}&\prod_{j \in [n_i]} (\det Y_j^{(i)})^{\Delta p_j^{(i)}} = 1 &\textrm{ for all }  i \in [d]. \label{eq:dets_whys}
\end{align}

We now prove the ``if" direction of \cref{prp:polytope}, namely that if the value of the above program is zero then there is some $(d, \vec{n})$-basis family $U$ such that $\operatorname{Supp} C(\rho, U)$ does not admit a solution $D$ to \cref{eq:support} and \cref{eq:classical_margins}.
\subsubsection*{The ``if'' direction}
Suppose there is a sequence $Y^{(i)}_j (t)$ satisfying \cref{eq:psd_whys} and \cref{eq:dets_whys} such that the expression in \cref{eq:sup_whys} tends to zero. We can diagonalize $Y^{(i)}_j(t) = U^{(i), \dagger}_j(t) \diag(z^{(i)}_{j,l}(t)) U^{(i)}_j(t)$ such that
\begin{enumerate}
\item $U^{(i)}_j(t)$ unitary.
\item $z^{(i)}_{j,l}(t) > 0$,
\item $\prod_{j = 1}^n \left(\prod_{l \in [j]} z^{(i)}_{j,l}(t)\right)^{\Delta p^{(i)}_j} = 1$,
\item  and $z^{(i)}_{j,l}(t)$ tends to zero if $\Delta p^{(i)}_j = 0$ (in that case $Y^{(i)}_j$ appears neither in \cref{eq:dets_whys} nor \cref{eq:sup_whys}).
\end{enumerate}
Let $U(t)$ be the $(d,\vec{n})$-basis family $(U^{(i)}_j(t), i \in [d], j \in [n_i])$.  By compactness, we pass to a convergent subsequence such that $\lim_{t \to \infty} U^{(i)}_j(t) = U^{(i)}_j$. Let $U$ be the $(d,\vec{n})$-basis family $(U^{(i)}_j, i \in [d], j \in [n_i])$. We claim that for all $(\vec{j}, \vec{l})$ in $\operatorname{Supp} C(\rho, U)$,
\begin{align} \lim_{t \to \infty} \prod_{i = 1}^d z^{(i)}_{j(i), l(i)}(t) = 0 \textrm{ for all}
(\vec{j}, \vec{l}) \in \operatorname{Supp} C(\rho, U).\label{eq:vanish_on_support} \end{align}
This follows from the calculation
\begin{align*}
\sum_{\vec{j} \in [n_1] \times \dots \times [n_d]} \left(\prod_{i = 1}^d \Delta p_{j(i)}^{(i)}\right) \tr \left(Y_{j(1)}^{(1)}(t)  \otimes \dots \otimes
 Y_{j(d)}^{(d)}(t) \right) (\nu_{\vec{j}}\cdot \rho)\\
 = \sum_{\vec{j} \in [n_1] \times \dots \times [n_d]} \left(\prod_{i = 1}^d \Delta p_{j(i)}^{(i)} \right)\left(\prod_{i = 1}^d z^{(i)}_{j(i), l(i)}(t)\right) C(\rho, U(t))_{\vec{j}, \vec{l}}.
\end{align*}
We have $C(\rho, U(t))_{\vec{j}, \vec{l}} \to C(\rho, U)_{\vec{j}, \vec{l}}$. If $C(\rho, U)_{\vec{j}, \vec{l}} > 0$, then $\lim_{t \to \infty} \left(\prod_{i = 1}^d z^{(i)}_{j(i), l(i)}(t)\right) < \delta$. Note that we could ignore the case when some $\Delta p_{j(i)}^{(i)} = 0$ because of our assumption that $z^{(i)}_{j,l}(t)$ tends to zero in that case.\\

This implies that subject to $\operatorname{Supp} D \subset \operatorname{Supp}C(\rho, U)$ there is no solution to \cref{eq:classical_margins}. Suppose there were. Note that $\cref{eq:classical_margins}$ and $\vec{p} \in P_+$ implies $\sum_{(\vec{j}, \vec{l}) \in S} D_{\vec{j}, \vec{l}} = 1$. Now
\begin{align*}
\log \left( \sum_{(\vec{j}, \vec{l}) \in S} \left( \prod_{i = 1}^d z_{j(i), l(i)}^{(i)}(t)\right) D_{\vec{j}, \vec{l}}  \right) &\geq \sum_{(\vec{j}, \vec{l}) \in S}  \left(\sum_{i = 1}^d \log z_{j(i), l(i)}^{(i)}(t)\right) D_{\vec{j}, \vec{l}} \\
&= \sum_{i = 1}^d \sum_{j \in [n_i]}  \Delta p_j^{(i)} \sum_{l \in [j]} \log z_{j(i), l(i)}^{(i)}(t) = 0.
\end{align*}
However, this contradicts our assumption that
 \begin{align*} \sum_{(\vec{j}, \vec{l}) \in S} \left( \prod_{i = 1}^d z_{j(i), l(i)}^{(i)}(t)\right) D_{\vec{j}, \vec{l}} = 0,
\end{align*}
which follows from \cref{eq:vanish_on_support}.

\subsubsection*{The ``only if'' direction}
We now prove the easier direction. Suppose that $(d, \underline{n})$-unitary family $U$ such that there is no $D$ satisfying $\operatorname{Supp}(D) \subset \operatorname{Supp} C(\rho, U)$ and \cref{eq:classical_margins}. We will show \cref{eq:sup_whys} holds.\\
By Farkas' lemma, there exists a sequence $\vec{a} = (a_{j,l}^{(i)}, i \in [d], j \in [n_i], l \in [j])$ of numbers such that
\begin{align} &\sum_{i = 1}^d a_{j(i), l(i)}^{(i)} \geq 0 \textrm{ for all } (\vec{j}, \vec{l}) \in \operatorname{Supp} C(\rho, U)\\
\textrm{ and }& \sum_{i = 1}^d \sum_{j = 1}^{n_i}\Delta p_j^{(i)}\sum_{i = 1}^l a_{j(i), l(i)} (i) < 0.
\end{align}
Set $\tilde{a}_{j,l}^{(i)} = a_{j,l}^{(i)} - \overline{a}$ where $\overline{a} = \sum_{i = 1}^d \sum_{j = 1}^{n_i} \Delta p_j^{(i)} \sum_{i = 1}^l a_{j(i), l(i)}^{ (i)}$. Now
\begin{align*}
&\sum_{i = 1}^d \tilde{a}_{j(i), l(i)} (i) > 0 \textrm{ for }\vec{j}, \vec{l} \in \operatorname{Supp}T(\rho, U)\\
\textrm{ and } & \sum_{i = 1}^d \sum_{j = 1}^{n_i} \Delta p_j(i) \sum_{i = 1}^l \tilde{a}_{j(i), l(i)} (i) = 0.
\end{align*}
Set
$$Y_j^{(i)}(t) = U_j^{(i), \dagger} \exp\left(- t \diag(\tilde{a}_{j, l}^{(i)})\right) U_j^{(i)}$$ and let $t$ tend to $\infty$. This shows \cref{eq:sup_whys} holds.\end{proof}

\end{document}